\theoremstyle{definition}
\newtheorem{exmp}{Example}
\newtheorem{rem}{Remark}[section]
\newtheorem{prop}{Proposition}[section]
\newdimen\defpicwidth
\newdimen\defepswidth
\def\SKparam#1#2{}
\newcommand{\IF}{\boldsymbol{1}}    
\long\def\@makecaption#1#2{%
  \vskip\abovecaptionskip
  \sbox\@tempboxa{#1. #2}%
  \ifdim \wd\@tempboxa >\hsize
    {\centerline{\renewcommand{\baselinestretch}{0.8}%
\small\normalsize\parbox{0.8\textwidth}{#1. #2}}}\par
  \else
    \global \@minipagefalse
    \hbox to\hsize{\hfil\box\@tempboxa\hfil}%
  \fi
  \vskip\belowcaptionskip}
\providecommand{\pdf}[1]{}
\def\definitionname{Definition}
\def\summarycontname{Summary (continued)}
\def\summaryname{Summary}
  \newcommand{\sfb}{}
\newcommand{\E}{\mathop{\mbox{\sf E}}} 
\newcommand{\argmax}{\mbox{arg}\max}
\def\defeq{\stackrel{\mathrm{def}}{=}}  
\newtheorem{theorem}{Theorem}[section]
\newtheorem{lemma}{Lemma}[section]
\newtheorem{definition}{\definitionname}[section]
\newcommand{\vps}{{\varepsilon}}
\newtheorem{assumption}{Assumption}[section]
\def\defeq{\stackrel{\mathrm{def}}{=}}  
\def\beq{\begin{equation}}
	\def\eeq{\end{equation}}
\def\beqq{\begin{equation*}}
	\def\eeqq{\end{equation*}}
\def\bea{\begin{eqnarray}}
	\def\eea{\end{eqnarray}}
\def\beaa{\begin{eqnarray*}}
	\def\eeaa{\end{eqnarray*}}
\renewcommand{\baselinestretch}{1.3}
\title{\Large{\textbf{Policy Choice in Time Series by Empirical Welfare Maximization}\Large \thanks{%
We are grateful to Karun Adusumilli, Isaiah Andrews, Joshua Angrist, Giuseppe Cavaliere, Yoosoon Chang, Wei Cui, Whitney Newey, Joon Park, Mikkel Plagborg-M{\o}ller, Ashesh Rambachan, Barbara Rossi, Frank Schorfheide, Neil Shephard, Christian Wolf, and  the participants at the EEA-ESEM 2021, IAAE Annual Meeting 2022, and seminars at CeMMAP, Glasgow, Harvard/MIT, Indiana, KU Leuven, LMU Munich, Michigan, Notre Dame, OCIS, Oxford, Philadelphia Fed, SFU, TEDS, and UC3M for beneficial comments. We are grateful to Thomas Carr for his excellent research assistance.  Financial support from
the ESRC through the ESRC CeMMAP
(grant number RES-589-28-0001), the ERC (grant number 715940), and the NSF (grant number FAIN 234361) is gratefully acknowledged.
}}}
\author{Toru Kitagawa\thanks{Department of Economics, Brown University and Department of Economics, University College London. Email: toru\_kitagawa@brown.edu} \hspace{0.5cm} Weining Wang\thanks{Department of Economics, University of Bristol. Email: weining.wang@bristol.ac.uk} \hspace{0.5cm} Mengshan Xu\thanks{Department of Economics, University of Mannheim. Email: mengshan.xu@uni-mannheim.de}}
\begin{document}

\maketitle

\begin{abstract}
This paper develops a novel method for policy choice in a dynamic setting where the available data is a multivariate time series. Overcoming challenges unique to the time-series setting such as time-varying environments, history-dependent welfare, dynamic causal effects, and statistical dependence, we propose Time-series Empirical Welfare Maximization (T-EWM) methods. We characterize conditions for T-EWM to consistently learn optimal policies conditional or unconditional on the time-series history, and derive
nonasymptotic upper bounds for the  welfare regret. We illustrate the use of T-EWM for optimal restriction rules against Covid-19. 
\end{abstract}
\textit{Keywords}:  Causal inference, potential outcome time series, treatment choice, regret bounds, concentration inequalities.
\thispagestyle{empty}
\pagebreak

\section{Introduction}
\setcounter{page}{1}
\onehalfspacing

A central topic in economics is the nature of the causal relationships between economic outcomes and government policies, both within and across time periods. To investigate this, empirical research makes use of time-series data, with the aim of finding  desirable policy rules. 
For instance, a monetary policy authority may wish to use past and current macroeconomic data  to learn an interest rate policy that is optimal in terms of a social welfare criterion. 
Building on the recent development in potential outcome time series (\cite{white2010granger}, \cite{angrist2018semiparametric}, \cite{bojinov2019time}, and \cite{rambachan2021when}), this paper proposes a novel method to inform policy choice when the available data is a multivariate time series.

In contrast to the structural and semi-structural approaches that are common in macroeconomic policy analysis, such as dynamic stochastic general equilibrium (DSGE) models and structural vector autoregressions (SVAR), we set up the policy choice problem from the perspective of the statistical treatment choice proposed by \cite{manski2004statistical}. The existing statistical treatment choice literature typically focuses on microeconomic applications in a static setting, and the applicability of these methods to a time-series setting has yet to be explored. In this paper, we propose a novel statistical treatment choice framework for time-series data and study how to learn an optimal policy rule. Specifically, we consider extending the conditional empirical success (CES) rule of \cite{manski2004statistical} and the empirical welfare maximization rule of \cite{kitagawa2018should} to time-series policy choice, and characterize the conditions under which these approaches can inform welfare-optimal policy. These conditions do not require functional form specifications for structural equations or the exact temporal dependence of the time-series observations, but can be 
connected to the structural approach under certain conditions.

In the standard microeconometric setting considered in the treatment choice literature, the planner has access to a random sample of cross-sectional units, and it is often assumed that the populations from which the sample was drawn and to which the policy will be applied are the same. These assumptions are not feasible or credible in the time-series context, which leads to  several non-trivial challenges. First, the economic environment and the economy's causal response to it may be time-varying. Assumptions are required to make it possible to learn an optimal policy rule for future periods based on available past data.
In addition, 
the outcomes and policies observed in the available data can be statistically and causally dependent in a complex manner, 
and accordingly, the identifiability of social welfare under counterfactual policies becomes non-trivial and requires some conditions on how past policies were assigned over time. 
Second, in defining an optimal policy in a time-series context, it is natural to consider both unconditional welfare and conditional  welfare given the history of observables available at the time the policy is chosen. The conditional perspective differs from the unconditional one in that it evaluates welfare given the realized history, whereas the unconditional criterion averages these conditional measures across all possible histories. These welfare criteria are unique in the time-series setting, contrasting with the standard settings with cross-sectional microdata. Whether the unconditional or the conditional welfare criterion is more relevant depends on applications and the data environment. Third, when past data are used to inform policy, we have only a single realization of a time series in which the observations are dependent across periods and possibly nonstationary. Such statistical dependence complicates the characterization of the statistical convergence of the welfare performance of an estimated policy. Fourth, if the planner wishes to learn a dynamic assignment policy, which prescribes a policy in each period over multiple periods on the basis of observable information available at the beginning of every period, the policy learning problem becomes substantially more involved. This is because a policy choice in the current period may affect subsequent policy choices through the current policy assignment and a realized outcome under the assigned treatment. 

Taking into account these challenges, we propose time-series empirical welfare maximization (T-EWM) methods that construct an empirical welfare criterion based on a historical average of the outcomes and obtain a policy rule by maximizing the empirical welfare criterion over a class of policy rules. We then clarify the conditions on the causal structure and data-generating process under which T-EWM methods consistently estimate a policy rule that is optimal in terms of unconditional or conditional welfare. Extending the regret bound analysis of \cite{manski2004statistical} and \cite{kitagawa2018should} to time-series dependent observations, we obtain a finite-sample uniform bound for welfare regret. Moreover, we characterize the convergence of welfare regret and establish the nonasymptotic upper bounds  of the T-EWM rule.

Our development of T-EWM builds on the recent potential outcome time-series literature including \cite{white2010granger}, \cite{angrist2018semiparametric}, \cite{bojinov2019time}, and \cite{rambachan2021when}. In particular, to identify the counterfactual welfare criterion, we employ the sequential exogeneity restriction considered in \cite{bojinov2019time}. Our framework of T-EWM is also closely related to the literature of switchback experiments in which experimenters repeatedly and randomly assign treatments on a fixed unit over multiple periods; see \cite{Bojinov_etal2023}, \cite{Hu_Wager2023}, \cite{Liang_Recht2025}, and references therein.     This line of literature  focuses on either retrospective evaluation of the causal impact of policies observed in historical data or estimation and inference for various average treatment effects taking into account dynamic causal (carryover) effects. In contrast, our focus is on how to perform ex ante policy choices based on the historical evidence.

Since the seminal works of \cite{manski2004statistical} and \cite{Dehejia2005}, statistical treatment choice and empirical welfare maximization have been active topics of research, e.g., \cite{stoye2009minimax, stoye2012minimax}, \cite{QianMurphy2011}, \cite{tetenov2012statistical}, \cite{BhattacharyaDupas2012}, \cite{Zhao2012JASA}, \cite{kitagawa2018should, kitagawa2021equality}, \cite{kallus2021more}, \cite{athey2021policy}, \cite{MT17}, \cite{KST21}, among others. These works focus on a setting where the available data is a cross-sectional random sample obtained from an experimental or observational study with randomized treatment, possibly conditional on observable characteristics. \cite{Viviano21} and \cite{Abhishek2020OptimalTreatment} consider EWM approaches for treatment allocations where the training data features cross-sectional dependence due to network spillovers, while to our knowledge, this paper is the first to consider policy choice with time-series data. As a related but distinct problem, there is a large literature on the estimation of dynamic treatment regimes, \cite{murphy2003optimal}, \cite{zhao2015new}, \cite{han2021optimal},  \cite{Sakaguchi21}, and \cite{ida2024dynamic}, among others. The problem of dynamic treatment regimes assumes that training data is a short panel in which treatments have been randomized both among cross-sectional units and across time periods. Recently, \cite{adusumilli2019dynamically} consider an optimal policy in a dynamic treatment assignment problem with a budget constraint where the planner allocates treatments to subjects arriving sequentially. 
The T-EWM framework, in contrast, assumes observations are drawn from a single time series as is common in empirical macroeconomics. 

A large literature on multi-arm bandit algorithms analyzes learning and dynamic allocations of treatments when there is a trade-off between exploration and exploitation. See \cite{lattimore2020bandit} and references therein,  and \cite{Dimakopoulou_et_al_2017}, \cite{Kock2020functional}, \cite{kasy2019adaptive}, \cite{adusumilli2021risk}, and \cite{Kitagawa_Rowley_2024} for recent works in econometrics. The setting in this paper differs from the standard multi-arm bandit setting in the following three respects. First, our framework treats the available past data as a training sample and focuses on optimizing short-run welfare. We are hence concerned with the performance of the method in terms of short-term regret rather than cumulative regret over a long horizon. 
Second, in the standard multi-arm bandit problem, subjects to be treated are assumed to differ across rounds, which implies that the outcome-generating process is independent over time. This is not the case in our setting, and we include the realization of outcomes and policies in the past periods as contextual information for the current decision. {Third, even if bandit algorithms can be adjusted to take into account the dependence of observations, our method is then analogous to the ``pure exploration'' class, 
involving a long exploration phase followed by a one-period exploitation at the very end. However, a major difference is that the bandit algorithm concerns data in random experiments while our method is aimed at data in quasi-random experiments. }

The analysis of welfare regret bounds is similar to the derivation of risk bounds in empirical risk minimization, as reviewed by \cite{Vapnik1998Book} and \cite{Lugosi2002}. Risk bounds studied in the empirical risk minimization literature typically assume independent and identically distributed (i.i.d.) training data. A few exceptions, \cite{jiang2010risk}, \cite{brownlees2021performance}, and \cite{brownlees2021empirical}, obtain risk bounds for empirical risk minimizing predictions with time-series data, but they do not consider welfare regret bounds for causal policy learning.

The rest of the paper is organized as follows. Section \ref{model_example}  describes the setting using a simple illustrative model with a single discrete covariate. Section \ref{continuous} discusses the general model with continuous covariates and presents the main theorems. In Section \ref{ext}, we discuss extensions to our proposed framework, including  a case of multi-period welfare functions, how T-EWM is related to the Lucas critique, as well as T-EWM's links with structural vector autoregressive models, Markov Decision Processes, and reinforcement learning. In Section \ref{applica}, we present  an empirical application. Technical proofs, simulation studies, and other details are presented in Appendices.

\section{Model and illustrative example} \label{model_example}

In this section, we introduce the basic setting, notation, the welfare criterion we aim to maximize, and conditions on the data-generating process that are important for the learnability of an optimal policy. Then, we illustrate the main analytical tools used to bound welfare regret through a heuristic model with a simple dynamic structure. 


\subsection{Notation, timing, and welfare} \label{sec:notation_timing_welfare}
We suppose that the social planner is at the beginning of time $T$. Let $W_t \in \{0, 1 \}$ denote a treatment or policy (e.g., nominal interest rate) implemented at time $t = 0,1,2,\dots$ To simplify the analysis, we assume that $W_t$ is binary (e.g., a high or low interest rate). The planner sets $W_T \in \{0, 1\}$, $T \geq 1$, making use of the history of observable information up to period $T-1$ to inform her decision. This observable information consists of an economic outcome (e.g., GDP, unemployment rate, etc.), $Y_{0:T-1}=(Y_0,Y_1,Y_2, \cdots, Y_{T-1})$, the history of implemented policies, $W_{0:T-1}=(W_0,W_1,W_2, \cdots, W_{T-1})$, contextual covariates (e.g., inflation), $Z_{0:T-1}=(Z_0, Z_1,Z_2, \cdots, Z_{T-1})$, and potential mediators between treatment and outcome, $M_{0:T-1}=(M_0, M_1,M_2, \cdots, M_{T-1})$. $Z_t$ and $M_t$ can be  multidimensional vectors, but $Y_t$ and $W_t$ are assumed to be univariate. 


Following \cite{bojinov2019time}, we refer to a sequence of policies $w_{0:t} = (w_0, w_1,$ $ \dots, w_t) \in \{0,1\}^{t+1}$, $t \geq 0$, as a treatment path. A realized treatment path observed in the data $0 \leq t \leq T-1$ is a stochastic process $W_{0:T-1}$ drawn from the data-generating process. 
The timing of realizations is therefore: 
\begin{figure}[H]
	\centering
	\includegraphics[scale=0.5]{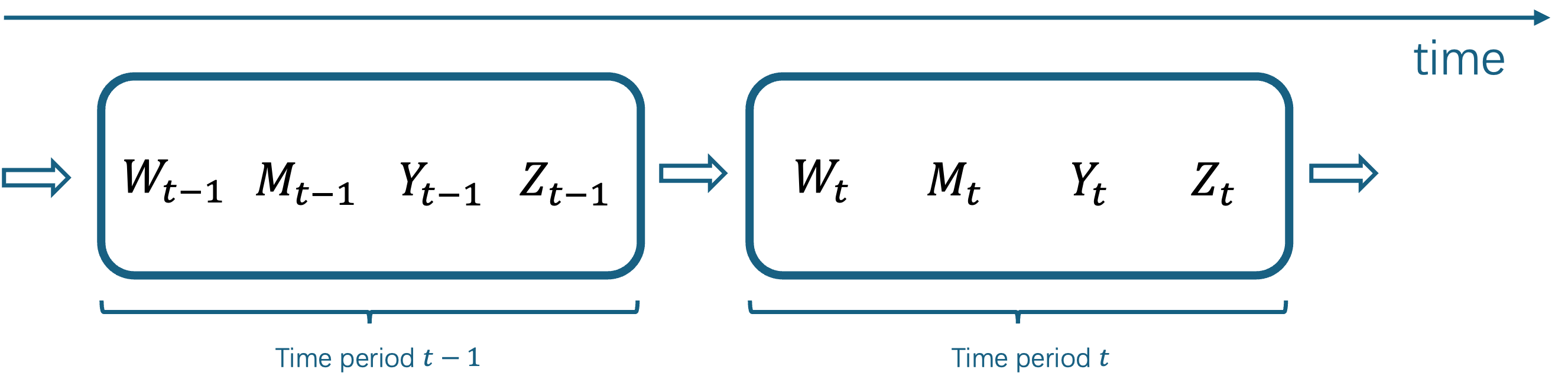}
\end{figure}
  \noindent That is, period $t$ starts after  $Z_{t-1}$ is realized and before $W_t$ is realized.
Furthermore, let
\begin{equation} \label{covariate}
X_{t}=\{W_{t},M_{t}^{\prime}, Y_{t}, Z_{t}^{\prime}\}^{\prime}
\end{equation}
collect the observable variables for period $t$. For $t = 0,1,2,\dots$, let $\mathcal{X}$ denote the sample space of $X_t$. Furthermore,  we define the filtration
$$\mathcal{F}_{t-1}=\sigma (X_{0:t-1}),$$ 
where $\sigma(\cdot)$ denotes the Borel $\sigma$-algebra generated by the variables specified in the argument. The filtration $\mathcal{F}_{T-1}$ corresponds to the planner's information set at the time of making her decision in period $T$. 

Following the framework of \cite{bojinov2019time}, we introduce potential outcome time series. At each $t = 0, 1, 2, \dots,$ and for every treatment path $w_{0:t} \in \{0 ,1 \}^{t+1}$, let $Y_t(w_{0:t}) \in \mathbb{R}$ be the realized period $t$ outcome, and the treatment path from $0$ to period $t$ is $w_{0:t}$. Hence, we have a collection of potential outcome paths indexed by treatment path, 
$$
\{ Y_t(w_{0:t}) : w_{0:t} \in \{0, 1 \}^{t+1}, t=0,1,2,\dots \},
$$
which defines $2^{t+1}$ potential outcomes in each period $t$. This is an extension of the Neyman-Rubin causal model originally developed for cross-sectional causal inference. As maintained in \cite{bojinov2019time}, the potential outcomes for each $t$ are indexed by the current and past treatments $w_{0:t}$ only. This imposes the restriction that any future treatment $w_{t+p}$, $p \geq 1$, does not causally affect the current outcome, i.e., an exclusion restriction for future treatments. 

For a realized treatment path $W_{0:t}$, the observed outcome $Y_t$ and the potential outcomes satisfy
$$
Y_t = \sum_{w_{0:t} \in \{ 0, 1\}^{t+1} } 1 \{W_{0:t} = w_{0:t} \} Y_t(w_{0:t})  
$$
for all $t \geq  0$.

The baseline setting of the current paper considers the choice of policy $W_T$ for a single period $T$.\footnote{Section \ref{sec_multi} discusses how to extend the single-period policy choice problem to multi-period settings.} We denote the policy choice based on observations up to period $T-1$ by 
\beq \label{eq:policy_f} 
g: \mathcal{X}^{T} \to \{0,1\},
\eeq
where $\mathcal{X}^T$ denotes the sample space of $X_{0:T-1}$. The period-$T$ treatment is $W_T = g(X_{0:T-1})$, and we refer to $g(\cdot)$ as a \emph{decision rule}.

We assume that the planner's preferences for policies in period $T$ are embodied in a social welfare criterion. In particular, we define one-period \emph{welfare conditional on $\mathcal{F}_{T-1}$} (conditional welfare, for short) to be\footnote{Throughout the paper, the expectation, $\E$, and the probability, $\Pr$, are understood as the outer measure whenever a measurability issue arises.} 
\bea \label{wel:condi}
\mathcal{W}_T(g|\mathcal{F}_{T-1})
= \E \left[ Y_{T}(W_{0:T-1},1) g(X_{0:T-1})+ Y_{T}(W_{0:T-1},0)(1- g(X_{0:T-1}))|\mathcal{F}_{T-1} \right].
\eea


As an alternative welfare criterion, we can define the \emph{unconditional welfare} by averaging out the conditioning information in \eqref{wel:condi},
\bea \label{eq:uncon_g}
    \mathcal{W}_{T}(g) = \E \left[ Y_{T}(W_{0:T-1},1) g(X_{0:T-1})+ Y_{T}(W_{0:T-1},0)(1- g(X_{0:T-1}))\right].
\eea
Whether the planner's objective function should be conditional or unconditional welfare depends on applications and the goal of analysis. If the planner's goal is to perform policy choice in a potentially nonstationary environment given the realized path of scenarios, it is natural to set the conditional welfare as the criterion to maximize. This case contrasts with the setting of the cross-sectional treatment choice with micro data where the unconditional welfare that averages out the observable characteristics of a unit is common.\footnote{\cite{manski2004statistical} also considers a conditional welfare criterion in the cross-sectional setting.}

 On the other hand, there are cases in which studying optimal policy from the perspective of unconditional welfare is suitable. First, an optimal policy in terms of the unconditional welfare can inform an optimal policy action that the planner would choose under a counterfactual history. Such analysis could be of interest in the practice of \textit{ex post} counterfactual policy analysis.
 Second, in settings where multiple units exist, learning an unconditionally optimal policy from the time-series data of a particular unit informs the planner of optimal policies for other similar units but with different histories. 
 Third, an unconditionally optimal policy can also inform an optimal policy in the setting where the planner implements the same Markovian policy over multiple time periods, assuming the data-generating process is stationary.
 
Note that the T-EWM based on unconditional welfare yields the same empirical average as the EWM in cross-sectional data. The two, however, differ markedly in substance. In time-series settings, observed outcomes and policies are subject to statistical and causal dependence across periods, whereas cross-sectional data are typically assumed to be i.i.d. across the units. Due to this difference, external-validity type assumptions become more debatable in the time-series setting than in the cross-sectional setting. 

 Learning conditionally optimal policies is generally more challenging than learning unconditionally optimal ones since estimation of the unconditional welfare can use all the observations in the sample, whereas estimation of the conditional welfare may have to rely on a subset of observations. 
In Section \ref{sec:uncon_con}, however, we show that under certain assumptions, the unconditional welfare regret can bound the conditional welfare regret so that one can learn the conditionally optimal policy as fast as the unconditionally optimal one.

For the welfare targets \eqref{wel:condi} and \eqref{eq:uncon_g}, define the planner's optimal  conditional and unconditional policies that maximize her one-period welfare as
\beaa
g^{\ast}_{X_{0:T-1}} &\in & \arg \max_g \mathcal{W}_T(g| \mathcal{F}_{T-1}),\label{p:opti_p}\\
g^{\ast} &\in & \arg \max_g \mathcal{W}_T(g),\nonumber
\eeaa
respectively. The former, $g^{\ast}_{X_{0:T-1}}$, can be regarded as the optimal policy $g^{\ast}$ in the latter evaluated at the conditioning value of $X_{0:T-1}$. The planner does not know $g^{\ast}$, so she instead seeks a statistical treatment choice rule (\citeauthor{manski2004statistical}, \citeyear{manski2004statistical}) $\hat{g}$, which is a decision rule selected on the basis of the available data $X_{0:T-1}$.

Our goal is to develop a way of obtaining $\hat{g}$ that performs well in terms of the conditional and unconditional welfare criteria (\ref{wel:condi}) and \eqref{eq:uncon_g}. Specifically, we assess the statistical performance of an estimated policy rule $\hat{g}$ in terms of the convergence of conditional and unconditional welfare regrets,
\beaa 
\mathcal{R}(\hat{g}|x_{0:T-1})&=&\mathcal{W}_T(g^{\ast}|X_{0:T-1} = x_{0:T-1}) - \mathcal{W}_T(\hat{g}|X_{0:T-1} = x_{0:T-1}), \\
\mathcal{R}(\hat{g})&=&\mathcal{W}_T(g^{\ast}) -\mathcal{W}_T(\hat{g}),\label{cond_regret_general}
\eeaa
as well as their convergence rate with respect to the sample size $T$. When examining convergence, we accommodate statistical uncertainty over $\hat{g}$ by focusing on convergence with probability approaching one uniformly over a class of sampling distributions for $X_{0:T-1}$.

\subsection{An illustrative model with a discrete covariate} \label{one-p}
We begin our analysis with a simple illustrative  model. Our aim is to provide  a heuristic exposition of the main idea of T-EWM and its statistical properties, rather than to focus on realistic scenarios. We will cover more general settings and extensions in Sections \ref{continuous} and \ref{ext}. 

Suppose that the data consist of a bivariate time series $X_{0:T-1} = ((Y_t,W_t) \in \mathbb{R} \times \{0, 1 \} : t=0,1, \dots, T-1)$ with no other covariates. To simplify exposition for the illustrative model, we impose the following restrictions on the dynamic causal structure and dependence of the observations.

\begin{assumption}  [Markov properties] \label{ass:toy_example}The time series of potential outcomes and observable variables satisfy the following conditions:

(i) \emph{Markovian exclusion}: for $t=2, \dots, T$ and for arbitrary treatment paths $(w_{0:t-2},w_{t-1},w_{t})$ and $(w_{0:t-2}^{\prime},w_{t-1},w_{t})$, where $ w_{0:t-2} \neq w_{0:t-2}^{\prime}$, 
\bea \label{exclu}
Y_t(w_{0:t-2}, w_{t-1},w_{t} ) = Y_t(w_{0:t-2}^{\prime}, w_{t-1},w_{t}) := Y_t(w_{t-1},w_t)
\eea
holds with probability one.

(ii) \emph{Markovian exogeneity}: for $t=1, \dots, T$ and any treatment path $w_{0:t}$,
\bea
Y_t(w_{0:t}) \perp X_{0:t-1} | W_{t-1},
\eea
and for $t=1, \dots, T-1$,
\bea
W_t \perp X_{0:t-1} | W_{t-1}.
\eea
\end{assumption}

These assumptions significantly simplify the dynamic structure of the problem. Markovian exclusion, Assumption \ref{ass:toy_example}(i), says that only the current treatment $W_{t}$ and treatment in the previous period $W_{t-1}$ can have a causal impact on the current outcome. This allows the indices of the potential outcomes to be compressed to the latest two treatments $(w_{t-1},w_t)$, as in (\ref{exclu}). Markovian exogeneity, Assumption \ref{ass:toy_example}(ii), states that once one conditions on the policy implemented in the previous period $W_{t-1}$, the potential outcomes and treatment for the current period are statistically independent of any other past variables. 

It is important to note that these assumptions do not impose stationarity: we allow the distribution of potential outcomes to vary across time periods. In addition, under Assumption \ref{ass:toy_example},
we can reduce the class of policy rules to those that map from $W_{T-1} \in \{0,1 \} $ to $W_T \in \{0,1 \}$, i.e., $$g : \{0 , 1\} \to \{0, 1\}.$$
Consequently, welfare conditional on $\mathcal{F}_{T-1}$ can be simplified to
\bea 
\mathcal{W}_T(g|\mathcal{F}_{T-1})
&=& \E  [ Y_{T}(W_{T-1},1) g(X_{0:T-1})+ Y_{T}(W_{T-1},0)(1- g(X_{0:T-1}))|X_{0:T-1} ]. \nonumber \\
&=& \E [ Y_{T}(W_{T-1},1) g(W_{T-1})+ Y_{T}(W_{T-1},0)(1- g(W_{T-1}))|W_{T-1} ]\nonumber\\ &=:&\mathcal{W}_T(g|W_{T-1}),\label{obj_simple}
\eea
where the first equality follows from the definition of $\mathcal{F}_{T-1}$ and Assumption \ref{ass:toy_example}(i); the second equality follows from Assumption \ref{ass:toy_example}(ii).

To make sense of Assumption \ref{ass:toy_example} and illustrate the relationship between the potential outcome time series and the standard structural equation modeling, we provide a toy example.

\begin{exmp} \label{example_q1_Markov}
Suppose the planner (monetary policy authority) is interested in setting a low or high interest rate at period $T$. Let $W_t$ denote the indicator for whether the interest rate in period $t$ is high ($W_t = 1$) or low ($W_t = 0$). $Y_t$ denotes a measure of social welfare, which can be a function of aggregate output, inflation, and other macroeconomic variables. 
Let $\varepsilon_t$ be an i.i.d.\ shock that is statistically independent of $X_{0:t-1}$, and we assume the following structural equation for the causal relationship of $Y_t$ on $W_t$ (and its lag) and the  dependence of $W_t$ on its lag.\footnote{The distribution of $W_{t}$ follows \cite{hamilton1989new}. However, the latent regime variable in \cite{hamilton1989new} is unobserved, which contrasts with our example where $W_t$ is observed.} 
\begin{align}
Y_{t} & =\beta_0 + \beta_{1}W_{t}+\beta_{2}W_{t-1}+\varepsilon_{t},\label{eq:Yt}\\
W_{t} & =(1-q)+\lambda W_{t-1}+V_{t},\quad \lambda  =p+q-1,\label{eq:Wt}\\
 \varepsilon_t & \perp (W_t, X_{0:t-1}) \qquad \forall 1 \leq t \leq T-1  \qquad \text{and} \qquad  \varepsilon_T  \perp X_{0:T-1}.& \label{eq:et}
 \end{align}
 \begin{align}
\text{If }W_{t-1} & =1,\text{ }\begin{cases}
V_{t}=1-p & \text{with probability }p\\
V_{t}=-p & \text{with probability }1-p.
\end{cases} \label{eq:Vt_1} \\
\text{If }W_{t-1} & =0,\text{ }\begin{cases}
V_{t}=-(1-q) & \text{with probability }q\\
V_{t}=q & \text{with probability }1-q \label{eq:Vt_2}.
\end{cases}
\end{align}
\end{exmp}
Compatibility with Assumption \ref{ass:toy_example} can be seen as follows. Assumption \ref{ass:toy_example}(i) is implied by \eqref{eq:Yt}, where the structural equation of $Y_t$ involves only $(W_t,W_{t-1})$ as the factors of direct cause. Assumption \ref{ass:toy_example}(ii) is implied by \eqref{eq:Yt}, \eqref{eq:et}, and the fact that the distribution of $V_{t}$ depends solely on $W_{t-1}$, i.e., under \eqref{eq:Wt},\eqref{eq:Vt_1}, and \eqref{eq:Vt_2}, we have $\Pr(W_t|\mathcal{F}_{t-1})=\Pr(W_t|W_{t-1}).$\\


To examine the learnability of the optimal policy rule, we further restrict the data-generating process. First, we impose a strict overlap condition on the propensity score.
\begin{assumption}[Strict overlap]  \label{bound1} Let $e_t(w) := \Pr(W_t = 1|W_{t-1} = w)$ be the period-$t$ propensity score. There exists a constant $\kappa \in (0,1/2)$, such that for any $t = 1, 2, \dots, T-1$ and $w\in \{0,1\}$,
        \begin{equation*}
                \kappa \leq e_t(w) \leq 1-\kappa.
        \end{equation*}
\end{assumption}
The next assumption imposes an unconfoundedness condition on observed policy assignment.
\begin{assumption} [Sequential unconfoundedness]\label{unconf} 
        For any $t = 1,2,\dots, T-1$ and $w\in \{0,1\}$,
        \[
        Y_{t}(W_{t-1},w)\perp W_t|X_{0:t-1}.
        \]
\end{assumption}
This assumption states that the treatments observed in the data are sequentially randomized conditional on lagged observable variables. This is a key assumption to make unbiased estimation of the welfare feasible in each period in the sample, as employed in \cite{bojinov2019time} and others. It is worth noting that the above assumption together with Assumption \ref{ass:toy_example}(ii) implies $Y_{t}(W_{t-1},w)\perp W_t|W_{t-1}.$

Under Assumption \ref{ass:toy_example}, we have that, for any measurable function $f$ of the outcome $Y_{t}(W_{0:t})$ and treatment $W_t$, it holds
\bea \label{simple_cond_exp}
\E (f(Y_{t}(W_{0:t}),W_t)|\mathcal{F}_{t-1})=\E (f(Y_{t}(W_{t-1},W_t),W_t)|W_{t-1}) = \E (f(Y_{t},W_t)|W_{t-1}).
\eea

\noindent {\bf Example \ref{example_q1_Markov} continued.} Assumption \ref{bound1} is satisfied if $0<p<1$ and $0<q<1$; Assumption \ref{unconf} is implied by \eqref{eq:Yt} and \eqref{eq:et}.\\

Imposing Assumption \ref{bound1} and \ref{unconf} and assuming propensity scores are known, we consider constructing a sample analogue of \eqref{obj_simple} conditional on $W_{T-1}=w$ based on the historical average of the inverse propensity score weighted outcomes,
\bea \label{sample_simple}
\widehat{\mathcal{W}}(g|W_{T-1}=w) 
= \frac{1}{T(w)}\sum_{1 \leq t \leq T-1: W_{t-1} = w}\left [\frac{Y_{t}W_{t}g(W_{t-1})}{e_t(W_{t-1})}+ \frac{Y_{t}(1-W_{t})\{1- g(W_{t-1})\}}{1- e_t(W_{t-1})} \right],
\eea
where $T(w) = \#\{1\leq t \leq T-1: W_{t-1} = w\} $ is the number of observations where the policy in the previous period took value $w$, i.e., the subsample corresponding to $W_{t-1} = w$. Unlike the microeconometric setting considered in, e.g., \cite{kitagawa2018should}, we do not necessarily have $\widehat{\mathcal{W}}(g|W_{T-1}=w)$ as a direct sample analogue for the planner's social welfare objective, since we allow a nonstationary environment in which the historical average of the conditional welfare criterion can diverge from the conditional welfare in the current period. Nevertheless, we refer to $\widehat{\mathcal{W}}(g|W_{T-1}=w)$ as the empirical welfare of the policy rule $g$.

Denoting $(\cdot|W_{T-1}=w)$ by $(\cdot|w)$, we define the true optimal conditional policy and its empirical analogue to be,
\bea 
g^*(w) &\in& \mbox{argmax}_{g: \{w\}\to  \{0,1\}} \mathcal{W}_T(g|w),\label{best_g} \\
\hat{g}(w) &\in& \mbox{argmax}_{g: \{w\}\to  \{0,1\}} \widehat{\mathcal{W}}(g|w),\label{best_g_hat}
\eea
where $\hat{g}$ is constructed by maximizing empirical welfare over a class of policy rules (four policy rules in total). We call a policy rule constructed in this way the \textit{Time-series Empirical Welfare Maximization} (T-EWM) rule. The construction of the T-EWM rule $\hat{g}$ is analogous to the conditional empirical success rule with known propensity scores considered by \cite{manski2004statistical} in the i.i.d. cross-sectional setting. In the time-series setting, however, the assumptions imposed so far do not guarantee that $\widehat{\mathcal{W}}(g|w)$ is an unbiased estimator of the true conditional welfare $\mathcal{W}_{T}(g|w)$.

\subsection{Bounding the conditional welfare regret of the T-EWM rule} \label{discrete_bound}
A major contribution of this paper is characterizing conditions that justify the T-EWM rule $\hat{g}$ in terms of the convergence of conditional welfare. This section clarifies these points in the context of our illustrative example. 
 
To bound conditional welfare regret, our strategy is to decompose empirical welfare $\widehat{\mathcal{W}}(g|w)$ into a conditional mean component and a deviation from it. The deviation is the sum of a martingale difference sequence (MDS), and this allows us to apply concentration inequalities for sums of MDS.  Define an intermediate welfare function,
\bea \label{E_sum_c}
\bar{\mathcal{W}}(g|w)
=T(w)^{-1}\sum_{1 \leq t \leq T-1 : W_{t-1} = w}\E\left[ Y_{t}(W_{t-1},1)g(W_{t-1})+Y_{t}(W_{t-1},0)\left[1-g(W_{t-1})\right]|W_{t-1}\right]. 
\eea
Under the strict overlap and unconfoundedness assumptions (i.e., Assumptions \ref{bound1} and \ref{unconf}), the difference between empirical welfare and  \eqref{E_sum_c} is a sum of MDS. A concentration inequality for an average of MDS then implies that the empirical welfare concentrates around $\bar{\mathcal{W}}(g|w)$. Since $\bar{\mathcal{W}}(g|w)$ is not guaranteed to inform an optimal policy in terms of $\mathcal{W}_T(g|w)$, we impose the assumption:

 \begin{assumption}  [Invariance of the welfare ordering]  \label{equiv_W}
      Given $w\in \{0,1\}$, let $g^*=g^*(w)$ as defined in \eqref{best_g}.   There exists a positive constant $c>0$, such that for any  $g\in\{0,1\}$,
                \bea \label{eq:invariance_toy}
        \mathcal{W}_T(g^*|w)- \mathcal{W}_T(g|w) \leq c\big[\bar{\mathcal{W}}(g^*|w)-\bar{\mathcal{W}}(g|w)\big],
        \eea
        with probability one, i.e., $P_T \left( \text{inequality (\ref{eq:invariance_toy}) holds} \right) = 1$, 
        where $P_T$ is the probability distribution for $X_{0:T-1}$.
\end{assumption}

Noting that the left-hand side of (\ref{eq:invariance_toy}) is nonnegative for any $g$ by construction and $c>0$, this assumption implies that $\bar{\mathcal{W}}(g^*|w)-\bar{\mathcal{W}}(g|w) > 0$ must hold whenever $\mathcal{W}_T(g^*|w)- \mathcal{W}_T(g|w) > 0$. That is, the optimality of $g^{\ast}$ in terms of the conditional value of welfare at $T$ is maintained in the historical average of the conditional values of welfare. Under this assumption, having an estimated policy $\hat{g}$ that attains a convergence of $\bar{\mathcal{W}}(g^*|w)-\bar{\mathcal{W}}(\hat{g}|w)$ to zero guarantees that $\hat{g}$ is also consistent for the optimal policy $g^{\ast}$ in terms of the conditional welfare at $T$. 

\begin{rem} \label{rem:invar_order}
Assumption \ref{equiv_W} can be restrictive in a situation where the dynamic causal structure of the current period is believed to be different from the past, but is weaker than stationarity. In the current example, Assumption \ref{equiv_W} is implied by the following condition.
        \textit{\begin{description}
                        \item [{A\ref{equiv_W}' }] The stochastic process
        $$S_t(w) \equiv Y_{t}(W_{t-1},1)g(W_{t-1})+Y_{t}(W_{t-1},0)\left[1-g(W_{t-1})\right]|_{W_{t-1}=w}$$ is weakly stationary.
        \end{description}}
   Under A\ref{equiv_W}',  $\E\left[ Y_{t}(W_{t-1},1)g(W_{t-1})+Y_{t}(W_{t-1},0)\left[1-g(W_{t-1})\right]|W_{t-1}=w\right] $
    is invariant for $2\leq t\leq T$. Then, Assumption \ref{equiv_W} will hold naturally.
                
  Furthermore,  Assumption \ref{equiv_W} can be satisfied by many classic nonstationary processes in linear time-series models, including series with deterministic or stochastic time trends.
  \end{rem}

    \noindent{\bf  Example \ref{example_q1_Markov} continued.}\\ 
    $\varepsilon_{t}$ remains an i.i.d. noise in the following settings.\\
(i) By Remark \ref{rem:invar_order}, Assumption \ref{equiv_W} holds
for Example 1 since
$S_{t}(w)=\beta_{0}+\beta_{1}\cdot g+\beta_{2}\cdot w+\varepsilon_{t}$
is weakly stationary.

\noindent(ii) If we replace \eqref{eq:Yt} by
\[
Y_{t}=\delta_{t}+\beta_{1}W_{t}+\beta_{2}W_{t-1}+\varepsilon_{t},
\]
where $\delta_{t}$ is an arbitrary deterministic time trend. The
process $Y_{t}$ is trend stationary (and hence nonstationary), but Assumption
\ref{equiv_W} still holds with $c=1$ since those
deterministic trends are canceled out by differences, i.e.,
\[
\mathcal{W}_{T}(g^{*}|w)-\mathcal{W}_{T}(g|w)=\beta_{1}\left(g^{*}-g\right)=\bar{\mathcal{W}}(g^{*}|w)-\bar{\mathcal{W}}(g|w).
\]

\noindent(iii) If we replace \eqref{eq:Yt} by
\[
Y_{t}=\beta_0 + \beta_{1}W_{t}+\beta_{2}W_{t-1}+\sum_{i=0}^{t}\varepsilon_{i},
\]
the process $Y_{t}$ is nonstationary with stochastic trends, but
Assumption \ref{equiv_W} still holds with $c=1$
since the stochastic trends are canceled out by differences.

\noindent(iv) If we replace \eqref{eq:Yt} by
\[
Y_{t}=\delta_t + \beta_{1,t}W_{t}+\beta_{2,t}W_{t-1}+\varepsilon_{t}
\]
to allow heterogeneous treatment effect, then
\begin{align*}
\mathcal{W}_{T}(g^{*}|w)-\mathcal{W}_{T}(g|w) & =\beta_{1,T}\left(g^{*}-g\right),\\
\bar{\mathcal{W}}(g^{*}|w)-\bar{\mathcal{W}}(g|w) & =\bar{\beta}_{w}\left(g^{*}-g\right),
\end{align*}
where $\bar{\beta}_{w}:=\frac{1}{T(w)}\sum_{1\leq t\leq T-1:W_{t-1}=w}\beta_{1,t}$.
Since $\mathcal{W}_{T}(g^{*}|w)-\mathcal{W}_{T}(g|w)$ is non-negative
by the definition of $g^{*}$ in  \eqref{best_g}, Assumption \ref{equiv_W}
holds if $\beta_{1,T}$ and $\bar{\beta}_{w}$ have the same
sign and $\frac{|\beta_{1,T}|}{|\bar{\beta}_{w}|}\leq c.$

Without loss of generality, we can assume that both $\beta_{1,T}$
and $\bar{\beta}_{w}$ are positive. Then, a sufficient condition for Assumption \ref{equiv_W} is: There
are positive numbers $l$ and $u$, such that $0<l\leq\beta_{1,t}\leq u$
holds for all $t$. In this case, $c=\frac{u}{l}$.

Assumption \ref{equiv_W} implies that $g^*$ also maximizes $\bar{\mathcal{W}}$. Hence, if empirical welfare $\widehat{\mathcal{W}}(\cdot|w)$ can approximate $\bar{\mathcal{W}}(\cdot|w)$ well, intuitively the T-EWM rule $\hat{g}$ should converge to $g^{\ast}$.  The motivation for Assumption \ref{equiv_W} is to create a bridge between $\mathcal{W}_{T}(g^*|w)-\mathcal{W}_{T}(\hat{g}|w)$, the population regret,  and $\widehat{\mathcal{W}}(\cdot|w) - \bar{\mathcal{W}}(\cdot|w) $, which is a sum of MDS  with respect to filtration $\{ \mathcal{F}_{t-1} : t = 1,2, \dots, T \}$.\footnote{Note that by \eqref{simple_cond_exp},  $\E(\cdot|\mathcal{F}_{t-1})=\E(\cdot|W_{t-1} )$.}
 Specifically, 
  \begin{align} \label{ineq_s}
        & \mathcal{W}_{T}(g^*|w)-\mathcal{W}_{T}(\hat{g}|w)\nonumber
        \leq  c\left[\bar{\mathcal{W}}(g^*|w)-\bar{\mathcal{W}}(\hat{g}|w)\right]\nonumber\\
        = & c\left[\bar{\mathcal{W}}(g^*|w)-\widehat{\mathcal{W}}(\hat{g}|w)+\widehat{\mathcal{W}}(\hat{g}|w)-\bar{\mathcal{W}}(\hat{g}|w)\right]\nonumber
        \leq  c\left[\bar{\mathcal{W}}(g^*|w)-\widehat{\mathcal{W}}(g^*|w)+\widehat{\mathcal{W}}(\hat{g}|w)-\bar{\mathcal{W}}(\hat{g}|w)\right]\nonumber\\
        \leq & 2c\sup_{g:\{w\}\to  \{0,1\}}|\bar{\mathcal{W}}(g|w)-\widehat{\mathcal{W}}(g|w)|,
 \end{align}
 where the first inequality follows from Assumption \ref{equiv_W}. The second inequality follows from the definition of T-EWM rule $\hat{g}$ in \eqref{best_g_hat}.

 To bound the right-hand side of \eqref{ineq_s}, define
 \begin{align*}
 \widehat{\mathcal{W}}_t(g|w)=&1(W_{t-1}=w)\left[\frac{Y_{t}W_{t}g(W_{t-1})}{e_t(W_{t-1})}+ \frac{Y_{t}(1-W_{t})\{1- g(W_{t-1})\}}{1- e_t(W_{t-1})}\right],\\
\bar{\mathcal{W}}_t(g|w)=&1(W_{t-1}=w)\E \{ Y_{t}(W_{t-1},1) g(W_{t-1})+ Y_{t}(W_{t-1},0)(1- g(W_{t-1}))|W_{t-1}=w\}.
\end{align*}
 Then, we can express \eqref{sample_simple} and \eqref{E_sum_c} as
 \beaa
 \widehat{\mathcal{W}}(g|w)
 =&\frac{T-1}{T(w)}\cdot \frac{1}{T-1}\sum_{t  = 1}^{T-1}\widehat{\mathcal{W}}_t(g|w),\\
\bar{\mathcal{W}}(g|w)
 =&\frac{T-1}{T(w)}\cdot \frac{1}{T-1}\sum_{t  = 1}^{T-1}\bar{\mathcal{W}}_t(g|w),
 \eeaa
 and
 \bea
 \widehat{\mathcal{W}}(g|w) - \bar{\mathcal{W}}(g|w) = \frac{T-1}{T(w)} \cdot \frac{1}{T-1} \sum_{t=1}^{T-1} \left[ \widehat{\mathcal{W}}_t(g|w) - \bar{\mathcal{W}}_t(g|w) \right] \notag
 \eea
 follows. Next, we impose
 \begin{assumption} [Bounded outcomes]\label{s_bounded_y}
		There exists $M<\infty$ such that the support of outcome variable
	$Y_t$ is contained in $[-M/2,M/2]$.\footnote{The boundedness condition can be replaced by weaker
moment or tail assumptions.
For example,  sub-Gaussian or sub-exponential tails
for the error term, or 
finite higher-order moment conditions.}
\end{assumption}
 \begin{prop} \label{prop:simple_mds}
 	Under Assumptions \ref{ass:toy_example}-\ref{unconf} and \ref{s_bounded_y}, the sequence $\{\widehat{\mathcal{W}}_t(g|w)-\bar{\mathcal{W}}_t(g|w)\}_{t=1}^{T-1}$ is an MDS.
 	\end{prop} 
  The proof of this proposition can be found in Appendix \ref{app:proof_simple_mds}. With Proposition \ref{prop:simple_mds}, we can apply a concentration inequality for sums of MDS to obtain a high-probability bound for $\widehat{\mathcal{W}}(g|w) - \bar{\mathcal{W}}(g|w)$ that is uniform in $g$.  
 \begin{theorem}\label{thm:discrete_bound}
Under Assumptions \ref{ass:toy_example} to   \ref{s_bounded_y}, it holds that
 \begin{equation} \label{simp_rate}
        \sup_{g: \{0,1\}\to \{0,1\}}\E|\widehat{\mathcal{W}}(g|w) - \bar{\mathcal{W}}(g|w)| \leq \frac{C}{\sqrt{T-1}},
 \end{equation}
where $C$ is a constant defined in Appendix \ref{simple_mds_proof}.\end{theorem}
The proof of Theorem \ref{thm:discrete_bound} can be found in Appendix \ref{simple_mds_proof}.  Combining (\ref{ineq_s}) (an implication of Assumption \ref{equiv_W}) and (\ref{simp_rate}), we can conclude that the convergence rate of expected regret $\E[\mathcal{W}_{T}(g^*|w)-\mathcal{W}_{T}(\hat{g}|w)]$ is upper-bounded by $2c\cdot\frac{C}{\sqrt{T-1}}$ uniformly in $w\in \{0,1\}$.
  \begin{rem} {[}Higher Markov orders{]} \label{rem:higher/inf Markov}
	Theorem \ref{thm:discrete_bound}
	presents our main results of welfare regret upper bounds under the simple
	first-order Markovian structure outlined in Assumption \ref{ass:toy_example}.
	We can extend our analysis to a higher or infinite order Markovian structure as summarized below. We defer the detailed discussion  to Appendix \ref{app_multi_markov}.
	
	If current observations can depend causally or statistically on the realized treatment over the preceding $q$ periods (for some $1<q<T$), we can define $\mathcal{W}_{T}(g|w_{T-q:T-1})$, $\widehat{\mathcal{W}}(\hat{g}|w_{T-q:T-1})$, $\bar{\mathcal{W}}(g|w_{T-q:T-1})$, $g^*$, and $\hat{g}$,  similar to the definitions in \eqref{obj_simple}, \eqref{sample_simple}, \eqref{E_sum_c}, \eqref{best_g}, and \eqref{best_g_hat}, respectively. Let $w_{T-q:T-1}\in \{0,1\}^q$ be a realization of the treatment path spanning from time $T-q$ to $T-1$. Given the modified assumptions detailed in Appendix \ref{app_finite_markov}, we can apply similar reasoning as in Theorem \ref{thm:discrete_bound} to establish a convergence rate of $\frac{1}{\sqrt{T-q}}$.
	
	\end{rem}
 \begin{rem}{[}Comparison with \cite{bojinov2019time}{]}\label{Com_B&S}
	The major distinction
	between our work and \cite{bojinov2019time} is that we focus on policy decisions and future welfare, while \cite{bojinov2019time} study estimation and inference on the
	retrospective causal effects. An estimand of interest in \cite{bojinov2019time} is the temporal (zero-lag)
	average treatment effect (ATE) defined as
	\begin{equation}
		\bar{\tau}_{0} :=\frac{1}{T-1}\sum_{t=1}^{T-1}\left[Y_{t}\left(W_{0:t-1},1\right)-Y_{t}\left(W_{0:t-1},0\right)\right].\label{eq:BS_ATE_maintext}
	\end{equation}
	In contrast,  our paper focuses on maximizing 
	\begin{equation}
		\mathcal{W}_{T}(g|\mathcal{F}_{T-1}) =\mathsf{E}\left[ \tau_{T}(\mathcal{F}_{T-1}) g(X_{0:T-1})+Y_{T}(W_{0:T-1},0) |\mathcal{F}_{T-1}\right]
	\end{equation}
	where $\tau_{T}(\mathcal{F}_{T-1})$ is the conditional ATE (CATE) at time $T$,
	\begin{equation}
		\tau_{T}(\mathcal{F}_{T-1})=\mathsf{E}\left[Y_{T}(W_{0:T-1},1)-Y_{T}(W_{0:T-1},0)|\mathcal{F}_{T-1}\right]. \label{eq:T-EWM-ATE_maintext}
	\end{equation}
	
	$\tau_{T}(\mathcal{F}_{T-1})$ is the conditional ATE at an upcoming time
	period of $T$, while $\bar{\tau}_0$ sums up the causal
	effects of the past realized time periods from $0$ to $T-1$. This distinction necessitates different sets of assumptions between our work and \cite{bojinov2019time}. Specifically, the sequential
	unconfoundedness assumption (Assumption \ref{unconf}) is sufficient for unbiased estimation for $\bar{\tau}_0$, whereas it
	falls short for $\tau_{T}(\mathcal{F}_{T-1})$. Consequently,  Assumptions \ref{ass:toy_example}
	(Markov properties) and \ref{equiv_W} (Invariance of welfare ordering) are not imposed in \cite{bojinov2019time}. 
	
	Analogous to the construction of our empirical welfare criterion, we can consider the following estimator for $\tau_{T}(\mathcal{F}_{T-1})$:
	\begin{equation}
		\hat{\tau}_{T}(w)=T(w)^{-1}\sum_{1\leq t\leq T-1:W_{t-1}=w}\left[\frac{Y_{t}W_{t}}{e_{t}(W_{t-1})}-\frac{Y_{t}(1-W_{t})}{1-e_{t}(W_{t-1})}\right].\label{eq:e_ATE_maintext}
	\end{equation}
	To validate $\hat{\tau}_T(w)$ as an estimator for $\tau_{T}(\mathcal{F}_{T-1})$, the crucial
	step is to link the past CATEs (or the welfare in our context) from past periods to the future ones. This
	motivates our introduction of Assumption \ref{equiv_W}, the invariance of welfare ordering. 
	
	
	The construction of $\hat{\tau}_T(w)$ is model-free and 
	selects a subset of past periods that
	share the same conditioning with time $T$. To guarantee the availability of observations sharing the conditioning states, we impose Assumption \ref{ass:toy_example}, limiting the persistence of carryover effects of treatment. 
\end{rem}

  \begin{rem} We construct the empirical welfare $\widehat{\mathcal{W}}(g|W_{T-1} = w)$ in (\ref{sample_simple}) by the average with uniform weights. To generalize, we can specify $\widehat{\mathcal{W}}(g|W_{T-1} = w)$ as a weighted average:
\bea \label{sample_weighted}
\widehat{\mathcal{W}}(g|W_{T-1}=w) 
= \sum_{1 \leq t \leq T-1: W_{t-1} = w} a_t \left [\frac{Y_{t}W_{t}g(W_{t-1})}{e_t(W_{t-1})}+ \frac{Y_{t}(1-W_{t})\{1- g(W_{t-1})\}}{1- e_t(W_{t-1})} \right],
\eea  
where $a_t \geq 0$ is a prespecified weight assigned to period $1 \leq t \leq T-1$. Modifying intermediate welfare $\bar{\mathcal{W}}(g|w)$ accordingly by
\bea \label{E_weighted_sum_c}
\bar{\mathcal{W}}(g|w)
=\sum_{1 \leq t \leq T-1 : W_{t-1} = w}a_t\E\left[ Y_{t}(W_{t-1},1)g(W_{t-1})+Y_{t}(W_{t-1},0)\left[1-g(W_{t-1})\right]|W_{t-1}\right] 
\eea
and imposing Assumption \ref{equiv_W} with the modified intermediate welfare, we can study a condition for the weights that generalizes the welfare regret convergence of Theorem \ref{thm:discrete_bound}. The specification of nonuniform weights can reflect the planner's belief or knowledge on how the period $T$ potential outcome distribution differs from those of the past periods or which past period observations are more informative for the current period decision making. See, e.g., \cite{ishihara2024} for an optimal weighting of pieces of evidence for policy choice when the population in which the policy is implemented  differs from the populations in which pieces of evidence were collected. We leave a formal investigation for the current time-series setting for future research.
\end{rem}

\subsection{Infinite Markov order}
\label{sec:infi_markov}
Time-series models commonly used in macroeconomic policy analysis imply an infinite order Markovian structure. For example, we consider the following modification to  Example \ref{example_q1_Markov}.
\begin{exmp} \label{Example_inf_Markov}
Replace \eqref{eq:Yt} in  Example \ref{example_q1_Markov} with a structural MA($\infty$) model: 
\begin{equation}
Y_{t} =\alpha+\sum_{i=0}^{\infty}\beta_{i}W_{t-i}+\sum_{i=0}^{\infty}\gamma_{i}\varepsilon_{t-i},\label{eq:Yt_inf}
\end{equation}
while keeping the conditions \eqref{eq:Wt} to \eqref{eq:Vt_2} unchanged. 
This type of $\text{MA}(\infty)$ process underlies the causal impulse response analysis of structural vector autoregressions; see, e.g., \cite{kilian2017structural}.  For the effect of shocks to diminish, both $\gamma_i$ and $\beta_i$ must decay in absolute value with $i\to \infty$. In particular, if $Y_t$ is an AR(1) process, then $\beta_i$ and $\gamma_i$ are powers of the AR coefficient.
\end{exmp}

Without requiring stationarity or functional form restrictions, we can extend our framework to infinite Markovian order and obtain convergence of welfare regret conditional on a treatment path of infinite length, $\mathcal{W}_{T}(g^{*}|w_{-\infty:T-1})-\mathcal{W}_{T}(\hat{g}|w_{-\infty:T-1})$.
Let $1< m < T$ and $\hat{g}$ be a policy that maximizes the empirical welfare $\widehat{\mathcal{W}}(g|w_{T-m:T-1})$ with conditioning policy path truncated to $w_{T-m:T-1}$. Appendix \ref{rem: Inf_order} shows the following upper bound for the welfare regret of infinite order Markov models:
\begin{align*}
	\mathcal{W}_{T}(g^{*}|w_{-\infty:T-1})-\mathcal{W}_{T}(\hat{g}|w_{-\infty:T-1})& \leq 2c\sup_{g:\{w_{T-m:T-1}\}\to\{0,1\}}|\bar{\mathcal{W}}(g|\mathcal{F}_{t-1})-\widehat{\mathcal{W}}(g|w_{T-m:T-1})|\\
	& +2\cdot \widetilde{\text{w-bias}}_{\infty}\left(m\right),
\end{align*}
where $\bar{\mathcal{W}}(g|\mathcal{F}_{t-1})$ is the intermediate welfare defined by \eqref{eq:wel_bar_F_m} in Appendix \ref{rem: Inf_order}, around which the empirical welfare is expected to concentrate, and $\widetilde{\text{w-bias}}_{\infty}\left(m\right)$ is the welfare bias due to truncation of the empirical welfare, which is defined by \eqref{eq:inf_markov_bias} in Appendix \ref{rem: Inf_order}. 
The first term on the right-hand side represents the average of an MDS. Under the regularity conditions presented in Appendix \ref{rem: Inf_order}, we can show that this term converges at a rate of $\frac{1}{\sqrt{T-m}}$. For the second term on the right-hand side, we shall have $\text{plim}_{m\to\infty}\widetilde{\text{w-bias}}_{\infty}\left(m\right)=0$ under additional conditions that ensure the decay of temporal dependence. See Appendix \ref{rem: Inf_order} for more details. Furthermore, in Appendix \ref{App:exmple_2}, we show that the infinite order T-EWM proposed in Appendix \ref{rem: Inf_order} can be applied to  the policy choice problem specified in Example \ref{Example_inf_Markov}.
\section{Continuous covariates} \label{continuous}
This section extends the illustrative example of Section \ref{model_example} by allowing $X_t$ to contain continuous variables. We first introduce the continuous setting in Section \ref{sec:conti_setting}. Then,  Section \ref{sec:conti_kernel} presents a natural approach for handling the continuous policy variable, which employs a kernel function to construct an analogue of the  conditional empirical welfare function \eqref{sample_simple} presented in Section \ref{model_example}. 
Finally, having introduced the unconditional welfare in the time-series setting in Section \ref{sec:notation_timing_welfare}, we formally present the \emph{unconditional} T-EWM method in Section \ref{sec:uncon}, developed in parallel with the kernel-based approach. 




\subsection{Setting} \label{sec:conti_setting}

In addition to $(Y_t,W_t)$, we incorporate general covariates $Z_t$ and $M_t$ into $X_t \in \mathcal{X}$, which can be continuous. Now, $X_t=(W_t,M_t^{\prime}, Y_t, Z_t^{\prime})^{\prime}$.  For simplicity of exposition, we maintain the first-order Markovian structure similarly to the illustrative example, while modifying Assumptions \ref{ass:toy_example}, \ref{bound1}, and \ref{unconf} as follows. It is straightforward to incorporate a higher-order Markovian structure. 

\begin{assumption}  [Markov properties] \label{ass:continuous_Markov}The time series of potential outcomes and observable variables satisfy the following conditions:

(i) \emph{Markovian exclusion}: the same as Assumption \ref{ass:toy_example}(i).

(ii) \emph{Markovian exogeneity}: for $t=1, \dots, T$ and any treatment path $w_{0:t}$,
\bea
Y_t(w_{0:t}) \perp X_{0:t-1} | X_{t-1},
\eea
and for $t=1, \dots, T-1$,
\bea
W_t \perp X_{0:t-1} | X_{t-1}.
\eea
\end{assumption}
Similarly to (\ref{obj_simple}) in the illustrative example, Assumption \ref{ass:continuous_Markov} implies that we can reduce the conditioning information of $\mathcal{F}_{t-1}$ to only $X_{t-1}$ and reduce the policy to a binary map of $X_{t-1}$ without any loss of conditional welfare. 
Following these reductions and considering the planner's focus on the policy choice in period $T$, we can formulate the planner's objective function as follows:
\bea \label{conti_obj_kernel}
\mathcal{W}_{T}(g|X_{T-1}) 
=\E\left[ Y_{T}(W_{T-1},1)g(X_{T-1})+Y_{T}(W_{T-1},0)[1-g(X_{T-1})]|X_{T-1}\right].
\eea
Note that under Assumption \ref{ass:continuous_Markov},  the policy function is reduced to $g$: $\mathcal{X} \to \{0,1\}$.
We assume the strict overlap and unconfoundedness restrictions under the general covariates as follows.
\begin{assumption}[Strict overlap]\label{bound_c}
         Let $e_t(x)=\Pr(W_{t}=1|X_{t-1}=x)$ be the propensity score at time $t$. There exists $\kappa \in (0,1/2)$, such that
        \begin{equation*}
                \kappa \leq e_t(x) \leq 1-\kappa
        \end{equation*}
        holds for every $t = 1, \dots, T-1$ and each $x\in\mathcal{X}$.
        \end{assumption}
{ 
\begin{assumption}[Unconfoundedness] \label{unconf_c} 
        For any $t=1,2,\dots,T-1$ and $w\in \{0,1\}$,
\[
Y_{t}(W_{t-1},w)\perp W_t|X_{0:t-1}.
\]
\end{assumption}

Under Assumptions \ref{ass:continuous_Markov} and \ref{unconf_c}, we can generalize (\ref{simple_cond_exp}) by including the set of covariates in the conditioning variables: for any measurable function $f$, 
\bea \label{equiv_markov}
\E(f(Y_t(W_{0:t}),W_t)| \mathcal{F}_{t-1})=\E(f(Y_t(W_{t-1},W_t),W_t)|X_{t-1})= \E (f(Y_t,W_t)|X_{t-1}). 
\eea 

Similar to our approach with the simple model described in Section \ref{model_example},  we assume that $Y_t$, which is reintroduced in Section \ref{sec:conti_setting}, has a bounded support.

\begin{assumption}[Bounded outcome] \label{ass bounded y}
	
	There exists $M<\infty$ such that the support of outcome variable
	$Y_t$ is contained in $[-M/2,M/2]$.
	
\end{assumption}

\subsection{Kernel approach for conditional welfare} \label{sec:conti_kernel}
For continuous conditioning covariates $X_{T-1}$, a simple sample analogue of the objective function is not available due to the lack of multiple observations at any single conditioning value of $X_{T-1}$. One approach is to use nonparametric smoothing to construct an estimate for conditional welfare. 

For simplicity, we let $X_{T-1}\in \mathbb{R}$. Then, with a kernel function $K(\cdot)$ and a bandwidth $h$, the empirical conditional welfare can be rewritten as\footnote{If the set of conditioning variables $X_{T-1}$ contains both continuous and discrete components, we can adopt a hybrid method to construct a valid sample analogue combining kernel-smoothing (for continuous variables) and subsamples (for discrete variables). In this section, we focus on the case where the target welfare function is conditional on a univariate continuous variable.}
{\small \bea \label{eq:sample_kernel}
        \widehat{\mathcal{W}}(g|x)
        =\frac{\sum_{t=1}^{T-1}K_h(X_{t-1},x)\widehat{\mathcal{W}}_t(g)}{\sum_{t=1}^{T-1}K_h(X_{t-1},x)},
        \eea}
where $\widehat{\mathcal{W}}_t(g) := \frac{Y_{t}W_{t}}{e_t(X_{t-1})} g(X_{t-1})+ \frac{Y_{t}(1-W_{t})}{1- e_t(X_{t-1})}[1-g(X_{t-1})]$, and $K_h(a,b):=\frac{1}{h}K(\frac{a-b}{h})$.

Using the same notation, we can define $\mathcal{W}_{t}(g|x) 
=\E[ Y_{t}(W_{t-1},1)g(X_{t-1})+Y_{t}(W_{t-1},0)[1-g(X_{t-1})]|X_{t-1}=x]$. 
Then, similar to \eqref{best_g} and \eqref{best_g_hat}, we define
\beaa
g_x^* &\in& \mbox{argmax}_{g}\mathcal{W}_T(g|x),\\
\hat{g}_x &\in& \mbox{argmax}_{g}\widehat{\mathcal{W}}(g|x),
\eeaa
to be the maximizers of $\mathcal{W}_T(g|x)$  and $\widehat{\mathcal{W}}(g|x)$. Moreover, 
define an intermediate welfare function:
\bea \label{eq:kernel_bar}
\bar{\mathcal{W}}_{h}(g|x)
&=&\frac{\sum_{t=1}^{T-1}K_h(X_{t-1},x)\mathcal{W}_t(g|x)}{\sum_{t=1}^{T-1}K_h(X_{t-1},x)}.
\eea

The invariance of welfare ordering assumption is modified to:
\begin{assumption}[Invariance of ordering for the conditional welfare] \label{equiv_W_c}
        For any  $g:\mathcal{X}\to \{0,1\}$ and $x\in \mathcal{X}$, there exists some positive constant $c$, such that
        \beq \label{uncon_con_kernel}
        \mathcal{W}_{T}(g_x^*|x)- \mathcal{W}_{T}(g|x) \leq c\big[\bar{\mathcal{W}}_h(g_x^*|x)-\bar{\mathcal{W}}_h(g|x)\big].
        \eeq
\end{assumption}
Similar to Assumption  \ref{equiv_W}, Assumption  \ref{equiv_W_c} holds if the stochastic process $S_t(x):=Y_{t}(W_{t-1},1)$ $g(X_{t-1})+Y_{t}(W_{t-1},0)[1-g(X_{t-1})]|_{X_{t-1}=x}$ is weakly stationary.

Let $P_T$ be a joint probability distribution of a sample path of length $(T-1)$, $\mathcal{P}_T(M,\kappa)$ be the class of $P_T$, which satisfies Assumptions \ref{ass:continuous_Markov} to \ref{equiv_W_c} and \ref{kernel}  specified in Appendix \ref{max_kernel}; let ${\E}_{P_{T}}$ be the expectation taken over different realizations of random samples. The following theorem shows an upper bound for conditional regret in the one-dimensional covariate case (i.e., $X_t\in \mathbb{R}^1$). This result can be readily extended to the multiple-covariate case.  
\begin{theorem}  \label{thm: kernel}
Under Assumptions \ref{ass:continuous_Markov} to \ref{equiv_W_c} and \ref{kernel}  specified in Appendix \ref{max_kernel}, 
        \begin{equation*}
        \underset{P_T\in\mathcal{P}_T(M,\kappa)}{\sup}\E{_{P_T}}[\mathcal{W}_{T}(g_x^*|x)-\mathcal{W}_{T}(\hat {g}_x|x)]  \leq c_1\left[\left(\sqrt{(T-1)h}\right)^{-1} +h^{2}\right].
\end{equation*}
\end{theorem}
Setting $h = O(T^{-1/5})$, the right-hand side bound becomes $O(T^{-2/5})$. A proof is presented in Appendix \ref{max_kernel}.

\subsection{Unconditional T-EWM}\label{sec:uncon}
In this section, we shift the focus to maximizing unconditional welfare. For ease of exposition with the unconditional T-EWM, given a policy function $g:\mathcal{X}^{T} \to \{0,1\}$, we define the corresponding region in the space of the covariate vector for which the decision rule chooses $W_T =1$ to be 
 \beq \label{gG}
G = \{ X_{0:T-1} : g(X_{0:T-1}) = 1 \} \subset \mathcal{X}^{T}.
\eeq
We refer to $G$ as a \emph{decision set}. Under Assumption \ref{ass:continuous_Markov}, the decision set is reduced to $G = \{ X_{T-1} : g(X_{T-1}) = 1 \} \subset \mathcal{X}$, and  the unconditional welfare under policy $G$ is
\begin{equation}\label{wel:uncondi}
        \mathcal{W}_{T}(G) =\mathsf{E}\left[ Y_{T}(W_{T-1},1)1(X_{T-1}\in G)+Y_{T}(W_{T-1},0)1(X_{T-1}\notin G)
        \right].
\end{equation}
The sample analogue of $\mathcal{W}_T(G)$ can be expressed as
\bea \label{eq:unconditional_sample}
\widehat{\mathcal{W}}(G)
=\frac{1}{T-1}\sum_{t=1}^{T-1} \left[\frac{Y_{t}W_{t}}{e_t(X_{t-1})}1(X_{t-1}\in G)+ \frac{Y_{t}(1-W_{t})}{1- e_t(X_{t-1})}1(X_{t-1}\notin G) \right].
\eea
For $\mathcal{G}$ denoting a class of decision sets, we define
\bea 
G_*&\in& \mbox{argmax}_{G\in\mathcal{G}}\mathcal{W}_{T}(G),\label{maxG}\\ 
\hat{G}  &\in&  \mbox{argmax}_{G\in\mathcal{G}}\widehat{\mathcal{W}}(G).\label{sample:maxG}
\eea
 In addition, define two intermediate welfare functions, 
\begin{align} \label{2intermediate_welfare}
        \bar{\mathcal{W}}(G) &=\frac{1}{T-1}\sum_{t=1}^{T-1}\mathsf{E}\left[ Y_{t}(W_{t-1},1)1(X_{t-1}\in G)+Y_{t}(W_{t-1},0)1(X_{t-1}\notin G)
        |\mathcal{F}_{t-1}\right], \nonumber \\
        \widetilde{\mathcal{W}}(G) &=\frac{1}{T-1}\sum_{t=1}^{T-1}\mathsf{E}\left[ Y_{t}(W_{t-1},1)1(X_{t-1}\in G)+Y_{t}(W_{t-1},0)1(X_{t-1}\notin G)\right].
\end{align}
The need for an additional intermediate welfare function, $\widetilde{\mathcal{W}}(G)$, arises because we aim to bound the unconditional welfare. After centering the empirical welfare around its conditional mean, $\bar{\mathcal{W}}(G) $, we are left with another difference, $\bar{\mathcal{W}}(G)-\widetilde{\mathcal{W}}(G)$, which we will control below.

To obtain a regret bound for unconditional welfare, Assumption \ref{equiv_W} is modified to:
\begin{assumption} [Invariance of ordering for the unconditional welfare]\label{equiv_W_uc}
For any $G\in \mathcal{G}$, there exists some constant $c$, such that
\beq \label{uncon_con2}
\mathcal{W}_{T}(G_*)- \mathcal{W}_{T}(G) \leq c\big[\widetilde{\mathcal{W}}(G_*)-\widetilde{\mathcal{W}}(G)\big]
\eeq
holds with probability one, i.e., $P_T \left( \text{inequality (\ref{uncon_con2}) holds} \right) = 1$, 
where $P_T$ is the probability distribution for $X_{0:T-1}$.
\end{assumption}
Note that by Assumption \ref{equiv_W_uc}, we have
\begin{equation} \label{welfare_inequality}
        \mathcal{W}_{T}(G_{*})-\mathcal{W}_{T}(\hat{G})
         \leq c\left[\mathcal{\widetilde{W}}(G_{*})-\mathcal{\widetilde{W}}(\hat{G})\right]
         \leq 2c\sup_{G\in \mathcal{G}}\left|\widehat{\mathcal{W}}(G)-\mathcal{\widetilde{W}}(G)\right|,
\end{equation}
where the first inequality follows from \eqref{uncon_con2}, and the second inequality follows from an argument similar to the one for \eqref{ineq_s}.

Note that $\widehat{\mathcal{W}}(G)-\widetilde{\mathcal{W}}(G)$
is \emph{not} a sum of MDS. Instead, it can be decomposed as
\bea \label{decompose_uncon}
\widehat{\mathcal{W}}(G)-\widetilde{\mathcal{W}}(G) & =& \bar{\mathcal{W}}(G)-\widetilde{\mathcal{W}}(G) +( \widehat{\mathcal{W}}(G)-\bar{\mathcal{W}}(G))
= I+II,
\eea
where $I:=\bar{\mathcal{W}}(G)-\widetilde{\mathcal{W}}(G)$ and $II:=\widehat{\mathcal{W}}(G)-\bar{\mathcal{W}}(G)$. Subject to assumptions specified later, Theorem \ref{thm:mds_bound} below shows that $II$, which is a sum of MDS,  converges at $\frac{1}{\sqrt{T-1}}$-rate, and Theorem \ref{mean_bound} below shows that $I$ converges at the same rate.

\eqref{decompose_uncon} reveals that our proof strategy is considerably more complicated than the proof for the EWM model with i.i.d. observations of \cite{kitagawa2018should}, although the rates are similar. Specifically, we need to derive a  bound for the tail probability of the sum of martingale difference
sequences. In addition, we need to handle complex functional classes induced by nonstationary processes.  
For the EWM model, the main task is to show the convergence rate of a sample analogue of $II$, which can be achieved with standard empirical process theory for i.i.d.\  samples. In comparison, we not only have to treat our $II$ more carefully due to time-series dependence, but we also have to deal with $I$.

\subsubsection{Bounding II}

Define empirical welfare at time $t$ and its population conditional expectation as follows, 
\begin{align*}
\widehat{\mathcal{W}}_t(G)&=\frac{Y_{t}W_{t}}{e_t(X_{t-1})}1(X_{t-1}\in G)+ \frac{Y_{t}(1-W_{t})}{1- e_t(X_{t-1})}1(X_{t-1}\notin G),\\
\bar{\mathcal{W}}_t(G)&=\mathsf{E}\left[ Y_{t}(W_{t-1},1)1(X_{t-1}\in G)+Y_{t}(W_{t-1},0)1(X_{t-1}\notin G)|\mathcal{F}_{t-1}\right].
\end{align*}
Then, we examine two summations:
\beaa
	\widehat{\mathcal{W}}(G)=\frac{1}{T-1}\sum_{t=1}^{T-1}\widehat{\mathcal{W}}_t(G), \quad
	\bar{\mathcal{W}}(G) =\frac{1}{T-1}\sum_{t=1}^{T-1} \bar{\mathcal{W}}_t(G).
	\eeaa

For each $t=1, \dots, T-1$, define a function class indexed by $G \in \mathcal{G}$,
\begin{equation} \label{H_class}
	\mathcal{H}_t=\{h_t(\cdot;G)=\widehat{\mathcal{W}}_t(G)-\bar{\mathcal{W}}_t(G):G\in\mathcal{G}\},
\end{equation}
where the arguments of the function $h_t(\cdot;G)$ are $Y_{t}$, $W_t$, and $X_{t-1}$. In the following, we use $n$ to represent the number of summands since the endpoints of summation may vary across different settings in this section, subsequent sections, and appendices. For example, in the case of multi-period welfare functions, the endpoint of a sample is no longer fixed at $T-1$. Given the class of functions $\mathcal{H}_t$, we consider a martingale difference array $\{h_t(Y_{t},W_t,X_{t-1};G)\}_{t=1}^{n}$,
and denote its average by
$${\E}_{n} h\defeq \frac{1}{n}\sum_{t=1}^{n} h_t(Y_{t},W_t,X_{t-1}; G),$$
where $h \defeq \left\{ h_{1}(\cdot; G),h_{2}(\cdot;G),\dots,h_{n}(\cdot;G)\right\} $, and we suppress $n$ and $G$ if there is no confusion in the context. 

 Since we do not restrict $X_t$ to be stationary, we shall handle a vector of function classes that possibly vary over $t$. To this end, we define the following set of notations. Let $H_t$ denote the envelope for the function class $\mathcal{H}_t$, and $\overline{H}_{n}=(H_{1},H_{2},\cdots,H_{n})^{\prime}$, and $\mathbf{H_{n}}=\mathcal{\mathcal{H}}_{1}\times \mathcal{\mathcal{H}}_{2}\times \dots \times \mathcal{H}_{n}.$ 
 For a function $f$ supported on $\mathcal{X}$, define $\|f\|_{Q,r} \defeq (\int_{x\in\mathcal{X}} |f(x)|^r dQ(x))^{1/r} $, and for an $n$-dimensional vector $v=\left\{ v_{1},\dots,v_{n}\right\} $,
its $l_{2}$ norm is denoted by $|v|_{2}\overset{\text{def}}{=}\left(\sum_{i=1}^{n}v_{i}^{2}\right)^{1/2}$.
The covering number of a function class $\mathcal{H}$ w.r.t. a metric $\rho$ is denoted by $\mathcal{N}(\varepsilon, \mathcal{H}, \rho(.))$. For two series of functions $f=\{f_t
\}_{i=1}^n$ and $g=\{g_t
\}_{i=1}^n$, define the metrics $\rho_{2,n}(f,g) = (n^{-1}\sum_t|f_t-g_t|^2)^{1/2}$ and   $\sigma_n(f,g) = \left( n^{-1} \sum_t \E[(f_t-g_t)^2| \mathcal{F}_{t-1}]\right)^{1/2}$. Let $\alpha_{n}$ denote an $n$-dimensional vector in $\mathbb{\mathbb{R}}^{n}$ and $\circ$ denote the element-wise product. 
In the next assumption, we want to bound the covering number of $\mathcal{N}(\delta|\overline{H}{}_{n}|_{2}, \mathbf{H_{n}}, \rho_{2,n})$ by the covering number of all its one-dimensional projection. 
 Now, let $A\lesssim B$ denote that there exists some constant $c_0$ such that $A \leq c_0\cdot B$.
\begin{assumption}[Function classes]\label{cover1}
Let $n=T-1$. For  any discrete measures $Q$, any  $\alpha_{n}\in\mathbb{R}_{+}^{n}$, and all $\delta>0$, we have 
\begin{align}\label{cover}
 	\mathcal{N}(\delta|\tilde{\alpha}_{n}\circ\overline{H}{}_{n}|_{2},\mathbf{\tilde{\alpha}_{n}\circ H_{n}},|.|_{2})
\leq \max_t\sup_Q \mathcal{N} ( \delta \|H_t\|_{Q,2}, \mathcal{H}_t, \|.\|_{Q,2}) \lesssim K (v+1) (4e)^{v +1}(\frac{2}{\delta})^{crv},
\end{align}
where $K$, $v$, $c$, and $e$ are positive constants; $r$ is a positive integer and $\tilde{\alpha}_{n,t}=\frac{\sqrt{\alpha_{n,t}}}{\sqrt{\sum_{t}\alpha_{n,t}}}$.
\end{assumption}

Assumption \ref{cover1} restricts the complexity of the function class to be of polynomial discrimination type, and the complexity index $v$ appears in the derived regret bounds. See Appendix \ref{justify_entropy} for a justification for this assumption.

\begin{assumption}[Empirical sum]\label{norm}
{
There exists a constant $L>0$ such that  $\Pr (\sigma_n(f,g)/$ $ \rho_{2,n}(f,g) >L) \to 0$ as $n \to \infty$.}
Also, $\Pr (\left( n^{-1} \sum_t \E[(f_t-g_t)^2| \mathcal{F}_{t-2}]\right)^{1/2}/ \rho_{2,n}(f,g) >L) \to 0$ as $n \to \infty$.
\end{assumption}

$\rho_{2,n}(f,g)^2$ is the quadratic variation difference and $\sigma_n(f,g)^2$ is its conditional equivalent. It is evident that $\rho_{2,n}(f,g)^2- \sigma_n(f,g)^2$ involves  martingale difference sequences. In the special case of i.i.d. observations, $\sigma_n(f,g)^2$ is equivalent to the sample average of unconditional expectations. Assumption \ref{norm} can thus be viewed as specifying that $\rho_{2,n}(f,g)^2$ and $\sigma_n(f,g)^2$ are asymptotically equivalent in a probability sense. A similar condition can be seen, for example, in Theorem 2.23 of \cite{hall2014martingale}.

Now, let $A\lesssim_p B$ denote $A=O_p(B)$. Then, we have for $II$:

\begin{theorem}\label{thm:mds_bound}
Under Assumptions \ref{ass:continuous_Markov} to \ref{ass bounded y}, and \ref{cover1} to \ref{norm}, 
\begin{equation*}
\sup_{G\in \mathcal{G}}|\widehat{\mathcal{W}}(G)-\bar{\mathcal{W}}(G)|
\lesssim_p C\sqrt{\frac{v}{T-1}},
\end{equation*}
where $C$ is a constant that depends only on $M$ and $\kappa$.
\end{theorem}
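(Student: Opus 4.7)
The plan is to establish a uniform concentration bound for the normalized martingale process $\{(T-1)^{-1}\sum_{t=1}^{T-1} h_t(\cdot; G) : G \in \mathcal{G}\}$ by combining (i) an exponential tail inequality for sums of bounded martingale differences applied pointwise in $G$, and (ii) a generic chaining argument driven by the polynomial covering numbers supplied by Assumption \ref{cover1}. The first step is to verify the MDS structure and obtain an envelope. Under Assumptions \ref{ass:continuous_Markov}, \ref{unconf_c} and equation \eqref{equiv_markov}, we have $\mathsf{E}[\widehat{\mathcal{W}}_t(G)\mid\mathcal{F}_{t-1}] = \bar{\mathcal{W}}_t(G)$, so $h_t(\cdot;G)$ is adapted to $\mathcal{F}_t$ with $\mathsf{E}[h_t(\cdot;G)\mid\mathcal{F}_{t-1}]=0$. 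Combining Assumption \ref{ass bounded y} with the strict overlap in Assumption \ref{bound_c}, the inverse-propensity-weighted terms are bounded in absolute value by $M/(2\kappa)$, so $|h_t(\cdot;G)|\leq M/\kappa$ almost surely and $H_t\equiv M/\kappa$ serves as a uniform envelope.

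For a fixed $G$, I would invoke a martingale Bernstein or Freedman inequality to obtain a tail bound of the form $\Pr(|\mathsf{E}_n h(\cdot;G)|>u)\leq 2\exp\{-c(T-1)u^2/(M/\kappa)^2\}$. To upgrade this pointwise bound to a bound uniform in $G\in\mathcal{G}$, I would run a chaining argument along dyadic scales $\delta_k=2^{-k}\|\overline{H}_n\|$. At each scale, Assumption \ref{cover1} provides a $\delta_k$-net of $\mathbf{H_n}$ in the $\rho_{2,n}$ metric whose cardinality is at most $K(v+1)(4e)^{v+1}(2/\delta_k)^{rv}$; a union bound over the net together with Freedman's inequality controls the maximal increment at each level. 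Summing the resulting deviations across the chain yields a Dudley-type estimate $\int_0^{M/\kappa}\sqrt{\log \mathcal{N}(\varepsilon,\mathcal{H}_n,\rho_{2,n})}\,d\varepsilon \lesssim \sqrt{v}\,(M/\kappa)$, which after normalization by $(T-1)^{-1/2}$ produces the claimed rate $C\sqrt{v/(T-1)}$ with $C$ depending only on $M$, $\kappa$ and absolute constants.

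The main obstacle is reconciling the two different notions of variance that appear in this construction. The polynomial entropy furnished by Assumption \ref{cover1} is stated in the empirical $L^2$ metric $\rho_{2,n}$, but the variance proxy governing the martingale Bernstein bound at each link of the chain is naturally the conditional second moment $\sigma_n^2$. This is precisely the role of Assumption \ref{norm}: it guarantees that, with probability tending to one, $\sigma_n(f,g)\leq L\,\rho_{2,n}(f,g)$ uniformly, so that a covering in $\rho_{2,n}$ also controls the conditional variance needed by Freedman's inequality. The analogous bound for the one-step-further conditioning $\mathcal{F}_{t-2}$ handles the quadratic variation of the predictable increments when the Freedman bound is iterated. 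A subtlety is that the classes $\mathcal{H}_t$ may vary with $t$ because we do not assume stationarity; Assumption \ref{cover1} circumvents this by dominating the entropy of the product class $\mathbf{H_n}$ by $\max_t\sup_Q \mathcal{N}(\delta\|H_t\|_{Q,2},\mathcal{H}_t,\|\cdot\|_{Q,2})$, which collapses the array complexity back to the single VC-type index $v$ that appears in the final rate.

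Once the high-probability supremum bound is in place, the statement $\sup_{G\in\mathcal{G}}|\widehat{\mathcal{W}}(G)-\bar{\mathcal{W}}(G)|\lesssim_p C\sqrt{v/(T-1)}$ follows by choosing the confidence level so that the union and chaining error terms are of the same order as the leading Dudley integral, with the constant $C$ absorbing the envelope $M/\kappa$, the constant $L$ from Assumption \ref{norm}, and the universal constants appearing in Freedman's inequality and in the polynomial entropy bound.
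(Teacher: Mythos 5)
Your proposal is correct and follows essentially the same route as the paper's proof in Appendix \ref{proof_MDS_t}: verify the martingale-difference structure via \eqref{equiv_markov}, bound the envelope using Assumptions \ref{bound_c} and \ref{ass bounded y}, and then run a dyadic chaining argument in the $\rho_{2,n}$ metric where each link is controlled by Freedman's inequality over the nets supplied by Assumption \ref{cover1}, with Assumption \ref{norm} converting the empirical metric into the conditional-variance proxy that Freedman requires. The only cosmetic difference is that the paper splits the argument into an expectation bound (Lemma \ref{bound}, via the Freedman-based maximal inequality) plus a concentration-around-the-mean tail bound (Lemma \ref{cont}), whereas you go directly for a high-probability supremum bound; both yield the same $C\sqrt{v/(T-1)}$ rate.
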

The proof of Theorem \ref{thm:mds_bound} is presented in Appendix \ref{proof_MDS_t}.

\subsubsection{Bounding I}

Here, we complete the process of bounding unconditional regret. Let us define
 \beqq
S_{t}(G) = \begin{array}{c}
	Y_{t}(W_{t-1},1)1(X_{t-1}\in G)
		+Y_{t}(W_{t-1},0)1(X_{t-1}\notin G),
	\end{array}
\eeqq
and
\beq \label{eq:bar_St}
\overline{S}_{t}(G)= \E(S_{t}(G)| \mathcal{F}_{t-1}) - \E(S_{t}(G)| \mathcal{F}_{t-2}),
\eeq
\beq \label{eq:tilde_St}
\tilde{S}_{t}(G) =  \E(S_{t}(G)| \mathcal{F}_{t-2}) -  \E(S_{t}(G)).
\eeq
We can apply a similar technique in Theorem \ref{thm:mds_bound} (See Lemma \ref{bound} in Appendix \ref{proof_MDS_t}) to bound the sum of $\overline{S}_{t}(G)$. The second term $\tilde{S}_{t}(G)$ is handled below.
Recalling \eqref{2intermediate_welfare} and \eqref{decompose_uncon}, we have $I=\frac{1}{T-1}\sum_{t=1}^{T-1}\tilde{S}_{t}(G)+ \frac{1}{T-1}\sum_{t=1}^{T-1}\overline{S}_{t}(G)$. Define the function classes,
\beaa
\overline{\mathcal{S}}_{t} &=&\{ f_{t} = \E(S_{t}(G)| \mathcal{F}_{t-1}) - \E(S_{t}(G)|\mathcal{F}_{t-2}): G \in \mathcal{G}  \},\\
\tilde{\mathcal{S}}_{t} &=&\{ f_{t} = \E(S_{t}(G)| \mathcal{F}_{t-2}) - \E(S_{t}(G)): G \in \mathcal{G}  \}.
\eeaa
Note that by Assumption \ref{ass:continuous_Markov}, we have $\E(S_{t}(G)| \mathcal{F}_{t-1}) = \E(S_{t}(G)| X_{t-1})$, and $\E(S_{t}(G)| \mathcal{F}_{t-2})=\E(S_{t}(G)| X_{t-2})$.
\begin{definition}  Let $\{\vps_t\}_{t=-\infty}^{\infty}$ be a sequence of i.i.d.\ random variables, and $\{g_t\}_{t=-\infty}^{\infty}$ is a sequence of measurable functions of $\vps$'s, which might vary with time $t$. For a process $\xi_{.}\defeq\{\xi_t\}_{t=-\infty}^{\infty}$ with $\xi_t \defeq g_t(\vps_t, \vps_{t-1},\cdots)$ and integers $l,q\geq 0$,  
we define the dependence adjusted norm for an arbitrary process $\xi_t $ as
\beq
\theta_{\xi,q} =  \sum_{l = 0}^{\infty} \max_t  \|\xi_t - \xi_{t,l}^*\|_{q},
\eeq
where $\|\cdot\|_q$ denotes $(\E|\cdot|^q)^{1/q}$, and $\xi_{t,l}^*=g_t(\vps_t,\cdots, \vps^{\prime}_{t-l},\cdots)$ is the random variable $\xi_{t}$ with its $l$-th lag replaced by $\varepsilon^{\prime}_{t-l}$,  an independent copy of $\varepsilon_{t-l}$. The sub-exponential/Gaussian dependence adjusted norm is given by:
\beq
    \Phi_{\phi_{\tilde{v}}} (\xi_{.})= \sup_{q\geq 2}(\theta_{\xi,q}/q^{\tilde{v}}),\label{eq:DA_norm}
\eeq
where  $\tilde{v} = 1/2$ (resp. 1) corresponds to the case that the process $\xi_{i}$ is sub-Gaussian (resp. sub-exponential). \end{definition}
\begin{assumption}
[Data generating process] \label{D.1} $X_t={g}{}_{t}(\vps_t, \vps_{t-1}, \cdots),$ where $\vps_t$ is a sequence of i.i.d. random variables, and $g_t$  is a measurable function of $\vps$'s. 
\end{assumption}

{It shall be noted that Assumption \ref{D.1} implies  $\tilde{S}_{t}(G) = \tilde{g}_{t}(\vps_t,\vps_{t-1}, \cdots )$, where $\tilde{g}$ is another measurable function of $\vps$'s.}

\begin{assumption}
\label{D.2}
(i) (Markov exogeneity for $Z$) $Z_t \perp X_{0:t-1} | X_{t-1}$.
(ii) (Envelope functions)\footnote{
Under Assumption~\ref{ass bounded y}, Assumption~\ref{D.2}(ii) is
satisfied with $\tilde{F}_{t}\equiv M$. We state it as a separate
assumption to make the envelope's role in the proofs of
Theorem~\ref{thm:mds_bound} and Theorem~\ref{mean_bound} explicit.}
$ \tilde{S}_{t}(G) $ has an envelope $\tilde{F}_{t}(\cdot)$, 
i.e., $\sup_{G\in \mathcal{G}}  |\E(S_{t}(G)| X_{t-2}=x) -  \E(S_{t}(G))|\leq |\tilde{F}_t(x)|$ for every $t$  and $x\in\mathcal{X}$; for every $t$ and any discrete measures $Q$, it holds that $\mbox{max}_t\mbox{sup}_Q     \|\tilde{F}_{t}\|_{Q,2}< \infty$.
\end{assumption}

The next assumption pertains to the tail of the underlying innovations of time series.
   \begin{assumption}[Tail assumption] \label{D.3}For $\tilde{v} =1/2$ or $1$, it holds that $\sup_{G\in \mathcal{G}}\Phi_{\phi_{\tilde{v}}}(\tilde{S}_{.}(G))   < \infty.$  
   \end{assumption}
   

}
        \begin{assumption}[Further function classes]
        \label{D.4}  Suppose that $F_{t}$ (resp. $\tilde{F}_{t}$) is the envelope of the function class $ \overline{\mathcal{S}_t}$ (resp. $\tilde{\mathcal{S}_t}$).
        Define $\overline{F}_n = (F_{1}, F_2, \cdots,F_{n})$ (resp. $\overline{\tilde{F}}_n = (\tilde{F}_{1}, \tilde{F}_2, \cdots,\tilde{F}_{n})$), and $\mathbf{F}_{n} = \{\overline{\mathcal{S}}_{1},\overline{\mathcal{S}}_{2}, \cdots, \overline{\mathcal{S}}_{n}\}$ (resp. $\tilde{\mathbf{F}}_{n} = \{\tilde{\mathcal{S}}_{1},\tilde{\mathcal{S}}_{2}, \cdots, \tilde{\mathcal{S}}_{n}\}$).
 Let $Q$ denote a discrete measure over a finite number of $n$ points. For $n=T-1$  and all $\delta>0$,
{
there exist positive constants $v$, $V$, and $c$, such that, }
\beaa
\mathcal{N}(\delta|\tilde{\alpha}_{n}\circ\overline{F}_{n}|_{2},\mathbf{\tilde{\alpha}_{n}\circ F_{n},|.|_{2}})&\leq&\max_t \sup_Q \mathcal{N}(\delta \|F_{t}\|_{Q,2}, \overline{\mathcal{S}}_t, \|.\|_{Q,2}) \lesssim (1/\delta)^{cV},\\
\mathcal{N}(\delta|\tilde{\alpha}_{n}\circ\overline{\tilde{F}}{}_{n}|_{2},\mathbf{\tilde{\alpha}_{n}\circ \tilde{F}_{n},|.|_{2}})&\leq&\max_t \sup_Q \mathcal{N}(\delta \|\tilde{F}_{t}\|_{Q,2}, \tilde{\mathcal{S}}_t, \|.\|_{Q,2}) \lesssim (1/\delta)^{cv}.
\eeaa
\end{assumption}
Assumption \ref{D.1} imposes that the time series $\tilde{S}_{t}(G)$ and $X_t$ can be expressed as measurable functions of i.i.d.\ innovations $\vps_t$. Assumption \ref{D.2}(i) completes the Markov exogeneity (Assumption \ref{ass:continuous_Markov}) in the presence of the covariate $Z_t$, which is introduced at the beginning of Section \ref{continuous}.  Assumption \ref{D.2}(ii) is a standard envelope assumption on the function class, and it states that the function  of interest is enveloped by a function of $X_{t-2}$. 
Assumption \ref{D.3} implies that $\Phi_{\phi_{\tilde{v}}}(\tilde{S}_.(G))< \infty$  for $\tilde{v} = 1/2$ or 1. Assumption \ref{D.4} restricts the complexity of the function classes. Based on these assumptions, we have the following rate:


\begin{theorem} \label{mean_bound}

Under Assumptions \ref{ass:continuous_Markov} to \ref{ass bounded y}, and  \ref{cover1} to \ref{D.4}, 
\begin{equation*} 
\underset{G\in \mathcal{G}}{\sup} \left|\bar{\mathcal{W}}(G)-\widetilde{\mathcal{W}}(G)\right|\lesssim_p \frac{c_T [2V(\log T)e\gamma ]^{1/\gamma}  \sup_{G\in \mathcal{G}}\Phi_{\phi_{\tilde{v}}}(\tilde{S}_{.}(G))}{\sqrt{T-1}}+ C\sqrt{\frac{v}{T-1}},
\end{equation*}
where $c_T$ is a large enough constant; $\tilde{v} = 1/2$ or 1, and $\gamma = 1/(1+2\tilde{v})$; $V$ and $v$ are the constants defined in Assumption \ref{D.4}; and $C$ is a constant analogous to that in Theorem \ref{thm:mds_bound}, which depends only on $M$ and $\kappa$. 
\end{theorem}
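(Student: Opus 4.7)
The plan is to split the left-hand side into a martingale piece and a memory piece by inserting $\E[S_t(G)\mid\mathcal{F}_{t-2}]$ between the two conditioning levels, yielding
\[
\bar{\mathcal{W}}(G) - \widetilde{\mathcal{W}}(G) = \underbrace{\frac{1}{T-1}\sum_{t=1}^{T-1}\overline{S}_t(G)}_{A_n(G)} + \underbrace{\frac{1}{T-1}\sum_{t=1}^{T-1}\tilde{S}_t(G)}_{B_n(G)}.
\]
By construction $\{\overline{S}_t(G)\}$ is a martingale difference sequence with respect to $\{\mathcal{F}_{t-1}\}$, while under Assumptions \ref{ass:continuous_Markov} and \ref{D.2}(i) the memory term collapses to $\tilde{S}_t(G) = \E[S_t(G)\mid X_{t-2}] - \E[S_t(G)]$, which by Assumption \ref{D.1} is a smooth functional of the i.i.d.\ innovations $\{\vps_s : s \le t-2\}$. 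The two pieces will be bounded separately and summed.

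For $A_n$, the envelope of $\overline{\mathcal{S}}_t$ is of order $M/\kappa$ by Assumptions \ref{bound_c} and \ref{ass bounded y}, and Assumption \ref{D.4} supplies a polynomial covering number for $\mathbf{F}_n$. I would mimic the argument behind Theorem \ref{thm:mds_bound} (Lemma \ref{bound}): symmetrization with Rademacher variables, conditional Bernstein combined with Assumption \ref{norm} applied at the backward filtration $\mathcal{F}_{t-2}$, and Dudley-type chaining against the entropy bound. This produces $\sup_{G}|A_n(G)| \lesssim_p C/\sqrt{T-1}$ of the parametric form captured by the second term $C\sqrt{v/(T-1)}$ of the theorem.

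For $B_n$, the strategy is a Nagaev--Wu-type concentration for weakly dependent functionals combined with generic chaining. At a fixed $G$, Assumption \ref{D.3} bounds the functional-dependence measure of $\{\tilde{S}_t(G)\}$ uniformly in $G$ by $\Phi_{\phi_{\tilde{v}},F}$, so concentration for sums of functionals of i.i.d.\ innovations (sub-Gaussian when $\tilde{v}=1/2$, sub-exponential when $\tilde{v}=1$) delivers a sub-Weibull tail
\[
\Pr\!\bigl(|B_n(G)| > \lambda\bigr) \le 2\exp\!\Bigl(-c_1 \bigl(\sqrt{T-1}\,\lambda/\Phi_{\phi_{\tilde{v}},F}\bigr)^{\gamma}\Bigr), \qquad \gamma = \frac{1}{1+2\tilde{v}}.
\]
To lift this to a uniform bound over $\mathcal{G}$, generic chaining is applied against the polynomial covering number in Assumption \ref{D.4}: summing the sub-Weibull deviations across chaining levels turns a $\sqrt{V}$ entropy factor into $(V\log T)^{1/\gamma}$, which is exactly the multiplicative constant $[2V(\log T)e\gamma]^{1/\gamma}$ appearing in the statement, and the $\Phi_{\phi_{\tilde{v}},F}/\sqrt{T-1}$ base rate is preserved by homogeneity.

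The main obstacle is the chaining for $B_n$. Unlike the i.i.d.\ EWM analysis of \cite{kitagawa2018should}, where a single application of Talagrand's inequality closes the empirical process in one shot, here the per-link tail is only sub-Weibull of index $\gamma \in \{1/2, 1/3\}$, so each chaining increment contributes a logarithmic factor that must be summed carefully. Propagating the derivative envelope $F_{t,l}$ of Assumption \ref{D.2}(ii) through the chain, and verifying that $\Phi_{\phi_{\tilde{v}},F}$ controls every link increment and not merely the pointwise partial sum at a fixed $G$, is where the bulk of the technical bookkeeping sits; once that is in place, the two bounds combine additively into the stated rate.
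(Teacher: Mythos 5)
Your decomposition is exactly the paper's: $\bar{\mathcal{W}}(G)-\widetilde{\mathcal{W}}(G)=\frac{1}{T-1}\sum_t\overline{S}_t(G)+\frac{1}{T-1}\sum_t\tilde{S}_t(G)$, with the MDS piece handled by the Freedman-plus-entropy machinery of Lemmas \ref{bound}--\ref{cont} (giving $C\sqrt{v/(T-1)}$) and the memory piece handled by a Nagaev-type inequality for the dependence-adjusted norm --- the paper invokes Theorem 3 of \cite{wu2016performance}, which is precisely the ``Wu-type concentration'' you describe, yielding the sub-Weibull tail $\exp[-(x/(\sqrt{T-1}\,\Phi_{\phi_{\tilde{v}},F}))^{\gamma}/(2e\gamma)]$ at fixed $G$. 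The one place you diverge is the uniformity step for $B_n$: you propose full multi-scale generic chaining and correctly identify its sore point, namely that Assumption \ref{D.3} controls $\Phi_{\phi_{\tilde{v}}}(\tilde{S}_{\cdot}(G))$ only via the envelope $F_{t,l}$, uniformly in $G$, and does not directly control the dependence-adjusted norm of increments $\tilde{S}_t(G)-\tilde{S}_t(G')$ link by link. The paper avoids this entirely by using a \emph{single} $\delta$-net at scale $\delta=1/T$: the discretization error is bounded crudely by $\delta\max_t\sup_Q\|\tilde{F}_t\|_{Q,2}$, which is negligible at that scale, and the finite-class union bound over the net of cardinality $\lesssim(1/\delta)^V=T^V$ delivers the $[2V(\log T)e\gamma]^{1/\gamma}$ factor directly from $\log\tilde{M}=V\log T$. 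So your ``main obstacle'' dissolves if you replace generic chaining with a one-shot net; the sub-Weibull (rather than sub-Gaussian) per-point tail is exactly why a single coarse net plus a union bound is the right tool here, at the acceptable cost of a $(\log T)^{1/\gamma}$ factor rather than a $\sqrt{\log}$ one. Your treatment of $A_n$ via symmetrization and conditional Bernstein also differs cosmetically from the paper's direct Freedman argument, but leads to the same $\sqrt{v/(T-1)}$ term.
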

A proof is presented in Appendix \ref{Proof_of_mean_bound}.
The bound depends on the complexity of the function class, $V$ and $v$, and the time-series dependency, $\mbox{sup}_{G}\Phi_{\phi_{\tilde{v}}}(\tilde{S}_{.}(G))$. As we consider the sample analogue of unconditional welfare, all the observations are utilized, resulting in a $\frac{1}{\sqrt{T-1}}-$rate of convergence.

\subsubsection{Regret bound}

Now, we can obtain the overall bound for unconditional welfare, using \eqref{welfare_inequality}. Let $P_T$ be a joint probability distribution of a sample path of length $(T-1)$, $\mathcal{P}_T(M,\kappa)$ be the class of $P_T$, which satisfies Assumptions \ref{ass:continuous_Markov} to \ref{unconf_c}, and \ref{equiv_W_uc} to  \ref{D.4}; let ${\E}_{P_{T}}$ be the expectation taken over different realizations of random samples. 
\begin{theorem} \label{thm:mds_mean_bound_unconditional}
Under Assumptions \ref{ass:continuous_Markov} to \ref{ass bounded y}, and \ref{equiv_W_uc} to  \ref{D.4},
{\small 
\begin{equation} \label{eq:main_Theorem}
\underset{P_{T}\in\mathcal{P}_T(M,\kappa)}{\sup} {\E}_{P_T}[\mathcal{W}_{T}(G_*)-\mathcal{W}_{T}(\hat{G})]\lesssim C \sqrt{\frac{v}{T-1}} + \frac{c_T [2V(\log T)e\gamma]^{1/\gamma} \sup_{G\in \mathcal{G}} \Phi_{\phi_{\tilde{v}}}(\tilde{S}_{.}(G))}{\sqrt{T-1}},
\end{equation}}
where $G_*$ is the optimal policy defined in \eqref{maxG}.
\end{theorem}
This theorem follows from \eqref{welfare_inequality}, Theorems \ref{thm:mds_bound} and \ref{mean_bound}, and analogous reasoning presented in Appendix \ref{simple_mds_proof}.


\subsubsection{Using unconditional T-EWM to bound conditional regret}\label{sec:uncon_con} 

The kernel method for the conditional regret discussed in Section \ref{sec:conti_kernel} is a direct way to estimate an optimal policy with the conditional welfare criterion. However, the localization by bandwidth slows down the speed of learning; the regret of conditional welfare can only achieve a $\frac{1}{\sqrt{(T-1)h}}$-rate of convergence rather than a $\frac{1}{\sqrt{T-1}}$-rate.  
This slow convergence rate may limit its practicality when the covariates $X_{t-1}$ are multi-dimensional or when we extend the order of Markovian dependence to multiple periods.  Therefore, in what follows, we instead pursue a novel approach that estimates a conditional optimal policy rule by maximizing an empirical analogue of unconditional welfare over a specified class of decision sets, $\mathcal{G}$. We show that under additional assumptions, this approach can lead to the convergence rate of the conditional welfare that is free from the curse of dimensionality. To this end, we rewrite the conditional welfare \eqref{conti_obj_kernel} with the notation for the decision set $G$:
\bea \label{conti_obj}
\mathcal{W}_{T}(G|x) 
=\E\left[ Y_{T}(W_{T-1},1)1(X_{T-1}\in G)+Y_{T}(W_{T-1},0)1(X_{T-1}\notin G)|X_{T-1}=x\right].
\eea
The relationship between $g$ and $G$ is shown in \eqref{gG}. We first clarify how a maximizer of conditional welfare  \eqref{conti_obj} can be linked to a maximizer of unconditional welfare. With this result in hand, we can focus on estimating unconditional welfare and choosing a policy by maximizing it.  In our setup, the complexity of the functional class needs to be specified by the user.
 
 We will show that this approach can attain a $\frac{1}{\sqrt{T-1}}$ rate of convergence. Faster convergence relative to the kernel approach comes at the cost of imposing an additional restriction on the data-generating process, as we spell out in the next assumption.  


\begin{assumption}[Correct specification] \label{ass:correct_specify}

Consider the conditional welfare  defined in \eqref{conti_obj} and
the unconditional welfare under policy $G$ defined in \eqref{wel:uncondi}. At every $x\in \mathcal{X}$, it holds that
\[ \mbox{argmax}_{G\in\mathcal{G}} \mathcal{W}_{T}(G)\subset \mbox{argmax}_{G\in\mathcal{G}} \mathcal{W}_{T}(G|x). \]
\end{assumption}
This assumption ensures that maximizing unconditional welfare corresponds to maximizing conditional welfare over $\mathcal{G}$. A sufficient condition for the equivalence of maximizing conditional and unconditional welfare is that the specified class of policy rules, $\mathcal{G}$, includes the first best policy $G^*_{FB}$ for the unconditional problem, where
\begin{equation} \label{eq:first_best}
G^*_{FB}:=\left\{x\in \mathcal{X}:\mathsf{E}[Y_{T}(W_{T-1},1)-Y_{T}(W_{T-1},0)|X_{T-1}=x]\geq 0\right\}.
\end{equation}
This sufficient condition states that the class of policy rules over which unconditional empirical welfare is maximized contains the set of points in $\mathcal{X}$ where the conditional average treatment effect $\mathsf{E}[Y_{T}(W_{T-1},1)-Y_{T}(W_{T-1},0)|X_{T-1}=x]$ is positive. This assumption thus restricts the distribution of potential outcomes at $T$ and its dependence on $X_{T-1}$. We refer to Assumption \ref{ass:correct_specify} as `correct specification'.\footnote{\cite{kitagawa2018should}, \cite{KST21}, and \cite{Sakaguchi21} consider correct specification assumptions exclusively for unconditional welfare criteria. These assumptions correspond to $G_{FB}^{\ast} \in \mathcal{G}$.} The planner can be confident about Assumption \ref{ass:correct_specify} if, for instance, $\mathsf{E}[Y_{T}(W_{T-1},1)-Y_{T}(W_{T-1},0)|X_{T-1}=x]$ is believed to be monotonic in $x$ (element-wise) and the class $\mathcal{G}$ consists of decision sets with monotonic boundaries \citep{MT17,KST21}.


With Assumption \ref{ass:correct_specify}, we can shift the focus to maximizing unconditional welfare, even when the planner's ultimate objective function is conditional welfare. 
We also note several side-benefits of considering optimal policy in terms of unconditional welfare.
The following proposition directly results from Assumptions \ref{ass:continuous_Markov} (i), Assumption \ref{ass:correct_specify}, and the definition of $G^*_{FB}$. 

\begin{prop} \label{first_b_equal}
        Under Assumptions \ref{ass:continuous_Markov} (i) and \ref{ass:correct_specify}, an optimal policy rule $G_* \in \mathcal{G}$ defined by \eqref{maxG} maximizes the conditional welfare function,
    $G_*  \in \mbox{argmax}_{G\in\mathcal{G}}\mathcal{W}_{T}(G|X_{T-1})$.
        Furthermore, if the first best solution belongs to the class of feasible policy rules,  $G^*_{FB}\in\mathcal{G}$, then we have
\beaa
G^*_{FB} \in \mbox{argmax}_{G\in\mathcal{G}}\mathcal{W}_{T}(G|X_{T-1}).
\eeaa
\end{prop}

Having assumed the relationship of optimal policies between the two welfare criteria, we now show how the unconditional welfare function can bound the conditional function. For $G \in \mathcal{G}$ and $X_{T-1}=x$, define conditional regret as
\[
R_{T}(G|x)=\mathcal{W}_{T}(G_{*}|x)-\mathcal{W}_{T}(G|x).
\]
 Note that the unconditional regret can be expressed as an integral of the conditional regret,
\begin{align*}
\mathcal{W}_{T}(G)&=\int\mathcal{W}_{T}(G|x)dF_{X_{T-1}}(x),\\
R_{T}(G)&=\mathcal{W}_{T}(G_*)-\mathcal{W}_{T}(G)=\int R_{T}(G|x)dF_{X_{T-1}}(x).
\end{align*}
For $x' \in \mathcal{X}$, define
\begin{align}
A(x^{\prime},G)&=\{x \in\mathcal{X} : R_{T}(G|x)\geq R_{T}(G|x^{\prime})\}, \notag \\
p_{T-1}(x^{\prime},G)&=\Pr\left(X_{T-1}\in A(x^{\prime},G)\right)=\int_{x\in A(x^{\prime},G)}d F_{X_{T-1}}(x), \notag
\end{align}
and let $x^{obs}$ denote  the observed value of $X_{T-1}$. We assume the following:
\begin{assumption} \label{lower_p_dis_con}
        For  $x^{obs}\in \mathcal{X} $ and any $G\in \mathcal{G}$, there exists a positive constant $\underline{p}$ such that
        \begin{equation}
                p_{T-1}(x^{obs},G)\geq\underline{p}>0.
        \end{equation}
\end{assumption}
\begin{rem}
This assumption is satisfied if $X_{T-1}$ is a discrete random variable taking a finite number of different values. In this case, $p_{T-1}(x^{obs},G)\geq \min_{x\in \mathcal{X}}\Pr\left(X_{T-1}= x\right)>0$, so we can set $\underline{p}=\min_{x\in \mathcal{X}}\Pr\left(X_{T-1}= x\right)$.  If $X_t$ is continuous, then we need to exclude a set of points around the maximum of the function $R_{T}(G|x)$ for the assumption to hold.
Namely, we can assume that we focus on $x$ belonging to a compact subset $\tilde{\mathcal{X}}\subset \mathcal{X}$ such that $\argmax_{x\in\mathcal{X} }R_{T}(G|x)\notin \tilde{\mathcal{X}}$. 
If we would like to include the whole support of $X_t$, we can modify the proof by imposing an additional uniform continuity condition on $R_{T}(G|\cdot)$.  
\end{rem}
The following lemma provides a bound for conditional regret $ R_{T}(G|x^{obs})$ using  unconditional regret $ R_{T}(G)$.

\begin{lemma} \label{lem:lower_RT(G)} Under Assumptions \ref{ass:correct_specify} and \ref{lower_p_dis_con},
\begin{align}
    R_{T}(G|x^{obs})  \leq  \frac{1}{\underline{p}} R_{T}(G). 
\end{align}
\end{lemma}
The proof of this lemma can be found in Appendix \ref{app:proof_uncon_con_bound}. Using Assumption \ref{lower_p_dis_con} and Lemma \ref{lem:lower_RT(G)}, we proceed to bound the regret for conditional welfare $ \mathcal{W}_{T}(G_{*}|x^{obs})-\mathcal{W}_{T}(\hat{G}|x^{obs})$ by regret for unconditional welfare $\left[\mathcal{W}_{T}(G_{*})-\mathcal{W}_{T}(\hat{G})\right]$, and further by $\left[\mathcal{\widetilde{W}}(G)-\widehat{\mathcal{W}}{}(G)\right]$ (up to constant factors):
\begin{align} \label{welfare_inequality2}
        \mathcal{W}_{T}(G_{*}|x^{obs})-\mathcal{W}_{T}(\hat{G}|x^{obs})
        & \leq\frac{1}{\underline{p}}\left[\mathcal{W}_{T}(G_{*})-\mathcal{W}_{T}(\hat{G})\right]\nonumber\\
        & \leq\frac{c}{\underline{p}}\left[\mathcal{\widetilde{W}}(G_{*})-\mathcal{\widetilde{W}}(\hat{G})\right]
         \leq\frac{2c}{\underline{p}}\sup_{G\in \mathcal{G}}\left|\widehat{\mathcal{W}}(G)-\mathcal{\widetilde{W}}(G)\right|.
\end{align}
The first inequality
follows from Lemma \ref{lem:lower_RT(G)} and Assumption \ref{lower_p_dis_con}.
The second inequality follows from Assumption \ref{equiv_W_uc}.
The last inequality follows from an argument similar to the one below \eqref{ineq_s}. As a result, the bound of the conditional regret can be further established using Theorems \ref{thm:mds_bound} and \ref{mean_bound}; recall that $x_{obs}$ is defined to be the observed value of $X_{T-1}$.







\begin{theorem} \label{thm:mds_mean_bound}
Under Assumptions \ref{ass:continuous_Markov} to \ref{ass bounded y}, and \ref{equiv_W_uc} to \ref{lower_p_dis_con},
{\small 
\begin{equation} \label{eq:main_Theorem}
\underset{P_{T}\in\mathcal{P}_T(M,\kappa)}{\sup} {\E}_{P_T}[\mathcal{W}_{T}(G_*|x_{obs})-\mathcal{W}_{T}(\hat{G}|x_{obs})]\lesssim \frac{1}{\underline{p}}\left(C \sqrt{\frac{v}{T-1}} + \frac{c_T [2V(\log T)e\gamma]^{1/\gamma} \sup_{G\in \mathcal{G}} \Phi_{\phi_{\tilde{v}}}(\tilde{S}_{.}(G))}{\sqrt{T-1}}\right).
\end{equation}}
\end{theorem}

\begin{rem}
 [on Assumptions \ref{ass:correct_specify} and \ref{lower_p_dis_con}] 
 In this theorem, the conditional regret is compared with the conditional welfare achieved under $G_*$, which is the optimal policy within the class of feasible unconditional decision sets, $\mathcal{G}$. Under Assumption \ref{ass:correct_specify}, $G_*$ can be substituted with $G^*_{FB}$ without altering the regret bound. However, if Assumption \ref{ass:correct_specify} is violated, the regret bound relative to $G^*_{FB}$ becomes
 {\small 
 \beaa&&\underset{P_{T}\in\mathcal{P}_T(M,\kappa)}{\sup} {\E}_{P_T}[\mathcal{W}_{T}(G^*_{FB}|x_{obs})-\mathcal{W}_{T}(\hat{G}|x_{obs})]
\\&&\lesssim \frac{1}{\underline{p}}\left(C \sqrt{\frac{v}{T-1}} + \frac{c_T [2V(\log T)e\gamma]^{1/\gamma} \sup_{G\in \mathcal{G}} \Phi_{\phi_{\tilde{v}}}(\tilde{S}_{.}(G))}{\sqrt{T-1}}\right)
+\left[\mathcal{W}_{T}(G^*_{FB}|x_{obs})-\mathcal{W}_{T}(G_*|x_{obs})\right],
\eeaa}
 where the last term represents the cost to social welfare incurred by adhering to policy restrictions (such as ethical, moral, or legislative considerations) that are implied by $\mathcal{G}$.
 
 The factor $\frac{1}{\underline{p}}$ arises from Assumption \ref{lower_p_dis_con}.  A violation of this assumption, i.e., setting $\underline{p}=0$, could result in unbounded conditional regret.  Under Assumption \ref{lower_p_dis_con}, Theorem \ref{thm:mds_mean_bound}  shows that the regret bound of our proposed policy choice converges at a rate of $\frac{1}{\sqrt{T-1}}$.
\end{rem}

\section{Extensions and Discussion}\label{ext}
In this section, we discuss a few possible extensions. 
Section \ref{sec_multi} introduces a multi-period policy-making framework. 
Section \ref{sec:lucas critique} concerns the ability of the current methods to handle Lucas critique. Section \ref{sec:connection} discusses connections of T-EWM to the literature of reinforcement learning with Markov decision processes and structural impulse response analysis. More extensions are discussed in Appendix \ref{App_2}. 


\subsection{Multi-period welfare} \label{sec_multi}
So far, we have considered only the case of a one-period welfare function.
This subsection discusses how to extend  the current setting to a multiple-period policy framework. In the interest of space, we focus on cases with discrete covariates
and extend the simple model in Section \ref{one-p} to a two-period welfare function. Extending this model beyond two periods is straightforward using similar reasoning.
Recall Assumption  \ref{ass:toy_example}, which imposes Markov properties on the data-generating processes,
\beaa
Y_{t}(w_{0:t}) &=& Y_t(w_{t-1},w_t),\nonumber\\
Y_{t}(w_{0:t}) &\perp& X_{0:t-1}|W_{t-1},\nonumber\\
W_{t} &\perp& X_{0:t-1}|W_{t-1}.
\eeaa

The planner chooses policy rules for two periods, $g_1(\cdot)$ and $g_2(\cdot): \{0,1\}\to  \{0,1\}$, to maximize aggregate welfare over periods $T$ and $T+1$. 
By Assumption \ref{ass:toy_example}, the period $T+1$ welfare is determined only by the treatment choice in the period $T$, i.e., $W_T=g_1(W_{T-1})$.
The  two-period welfare function can be written as
\begin{align} \label{two-p-two-g}
	\mathcal{W}_{T:T+1}(g_1(\cdot), g_2(\cdot)|\mathcal{F}_{T-1}) & =\mathcal{W}_{T:T+1}(g_1(\cdot),g_2(\cdot)|W_{T-1}) \nonumber \\
	& =\mathcal{W}_{T}(g_1(\cdot)|W_{T-1})+\mathcal{W}_{T+1}\left(g_2(\cdot)|W_{T}=g_1\left(W_{T-1}\right)\right),
\end{align}
where the last equality follows from Assumption \ref{ass:toy_example} and \eqref{simple_cond_exp}.
To simplify notation, we suppress the $(\cdot)$ in $g_i(\cdot)$, when the meaning is clear from the context.
\bea \label{eq:multi_welfare}
&& \mathcal{W}_{T:T+1}(g_1, g_2|W_{T-1}=w)\nonumber\\
& &=\E\left[ Y_{T}(W_{T-1},1)g_{1}(W_{T-1}) +Y_{T}(W_{T-1},0)\left( 1-g_{1}(W_{T-1}) \right)|W_{T-1}=w\right] \nonumber \\
& &+\E\left[ Y_{T+1}(W_T,1)g_{2}(W_{T}) +Y_{T+1}(W_T,0)(1-g_{2}(W_{T}) )|W_T=g_{1}(w) \right].
\eea
The second term of \eqref{eq:multi_welfare} follows from
\beaa
&&\E\left[ Y_{T+1}(g_{1}(W_{T-1}) ,1)g_{2}(W_{T}) +Y_{T+1}(g_{1}(W_{T-1}) ,0)(1-g_{2}(W_{T}) )|W_{T-1}=w\right] \\
&&=\E\left[ Y_{T+1}(W_T,1)g_{2}(W_{T}) +Y_{T+1}(W_T,0)(1-g_{2}(W_{T}) )|W_T=g_{1}(W_{T-1}) , W_{T-1}=w\right]\\
&&=\E\left[ Y_{T+1}(W_T,1)g_{2}(W_{T}) +Y_{T+1}(W_T,0)(1-g_{2}(W_{T}) )|W_T=g_{1}(w) \right],
\eeaa
where the last equality follows from Assumption \ref{ass:toy_example}. To estimate the above welfare function, we recall the definition of $T(w)=\#\{1\leq t\leq T-1:W_{t-1}=w\}$, and we define $T(g_{1}(w) )$ similarly. Then
the empirical analogue of \eqref{two-p-two-g} can be written as,
\begin{align} \label{sample_multi}
	\widehat{\mathcal{W}}_{T:T+1}(g_{1},g_{2}|w)
	= & \frac{1}{T(w)}\sum_{t:W_{t-1}=w}\left\{ \frac{Y_{t}W_{t}g_{1}(W_{t-1}) }{e_{t}(W_{t-1})}+\frac{Y_{t}\left(1-W_{t}\right)\left(1-g_{1}(W_{t-1}) \right)}{1-e_{t}(W_{t-1})}\right\} \nonumber \\
	+ & \frac{1}{T(g_{1}(w))}\sum_{t:W_{t-1}=g_{1}(w)}\left\{ \frac{Y_{t}W_{t}g_{2}(W_{t-1}) }{e_{t}(W_{t-1})}+\frac{Y_{t}\left(1-W_{t}\right)\left(1-g_{2}(W_{t-1}) \right)}{1-e_{t}(W_{t-1})}\right\}.
\end{align}

The maximizer of \eqref{sample_multi}, $\hat{g}_1$, $\hat{g}_2$,  can be obtained by backward induction, a technique widely applied in the Markov decision process (MDP) and dynamic treatment regime literature. See Section \ref{MDP} and Appendix \ref{MDP_A} for more discussion on the relationship between T-EWM and MDP.
To derive the theoretical property of the estimator, we also define
\begin{align*}
	\bar{\mathcal{W}}_{T:T+1}(g_1,g_2|w)
	& =\frac{1}{T(w)}\sum_{t:W_{t-1}=w}\E\left[ Y_{t}(1)g_{1}(W_{t-1}) +Y_{t}(0)\left[1-g_{1}(W_{t-1}) \right]|W_{t-1}=w\right] \\
	& +\frac{1}{T(g_{1}(w))}\sum_{t:W_{t-1}=g_{1}(w)}\E\left[ Y_{t}(1)g_{2}(W_{t-1}) +Y_{t}(0)\left[1-g_{2}(W_{t-1}) \right]|W_{t-1}=g_{1}(w) \right].
\end{align*}
Similarly to the derivation in the previous sections,   $\widehat{\mathcal{W}}_{T:T+1}(g_1,g_2|w)- \bar{\mathcal{W}}_{T:T+1}(g_1,g_2|w)$ is a (weighted) sum of MDS. Its upper bound can be shown by the method of Section \ref{discrete_bound}.
We show in Appendix \ref{appendmulti} the extension to multi-period welfare with continuous conditioning covariates. 

\subsection{Accounting for  Lucas critique} \label{sec:lucas critique}

How can the framework of T-EWM handle the Lucas critique? In this section, we clarify a link between T-EWM and the  SVAR approach based on a three-equation New Keynesian model where a choice of a policy regime can take into account the economy's policy response.


We start with a three-equation New Keynesian model. (See, e.g., Chapter 8 of \cite{walshmonetary}.) At time $t$, let $\pi_{t}$ denote inflation, $x_{t}$ the output gap, and $i_{t}$ the interest rate.
\bea \label{eq:three_equation}
&\text{Phillips curve: }&\pi_{t}=\beta {\E}_{t}\pi_{t+1}+\kappa x_{t}+\varepsilon_{t},\nonumber \\
&\text{IS curve: }&x_{t}={\E}_{t}x_{t+1}-\sigma^{-1}(i_{t}-{\E}_{t}\pi_{t+1}), \nonumber\\
&\text{Taylor rule: }&i_{t}=\delta\pi_{t}+v_{t},
\eea
Here $v_{t}$ is the monetary policy shock (baseline target rate), which can be viewed as a policy variable that the planner can manipulate. In the process of generating the sample, it is often assumed to follow an AR(1) process $ v_{t}=\rho v_{t-1}+e_{t}$. The AR coefficient of $v_t$, $\rho$, represents a monetary policy regime.
We also assume that the shocks in the Phillips curve follow an AR(1), $\varepsilon_t=\gamma \varepsilon_{t-1}+\delta_t$. Define $
d_{t}=\left( \begin{array}{c}
	v_{t}\\
	\varepsilon_{t}
\end{array}\right) \text{, } F=\left( \begin{array}{cc}
	\rho & 0\\
	0 & \gamma
\end{array}\right) $, and a vector of noises $\eta_t=\left( \begin{array}{c}
e_{t}\\
\delta_{t}
\end{array}\right) $. Then, the process for $d_t$ can be written as $d_t=Fd_{t-1}+\eta_t$.

Define the outcome variables of the system \eqref{eq:three_equation} to be $\tilde{Y}_{t}=\left( \begin{array}{c}
	x_t\\
	\pi_t
\end{array}\right)$. At the end of time $T-1$, the goal of the planner is to minimize (or maximize) the expectation of some function of $\tilde{Y}_{T}$. For example,  an objective function is the welfare cost that penalizes the time $T$ output gap and inflation, $Y_{T}= |x_{T}|^2+|\pi_{T}-\pi_0|^2$, where $\pi_0$ is the inflation target.

Appendix \ref{MDP_B} shows that the
VAR-reduced form of the system \eqref{eq:three_equation} can be expressed as:
\begin{equation}
	\tilde{Y}_{t}=M(\rho)d_{t},\label{eq:Yt_dt2}
\end{equation}
where  $M(\rho)$ is a non-random matrix defined in Appendix \ref{MDP_B}. If the model \eqref{eq:three_equation} is correctly specified, the solution to \eqref{eq:Yt_dt2} takes the Lucas critique into account since it solves for a \emph{deep} parameter $\rho$. A change in $\rho$ incorporates both the direct effect of the policy regime (through $\rho$ in $d_t$ equation) and private agents' anticipation of the policy change ($\rho$ in $M(\rho)$).

Now we show how this is related to the T-EWM framework. The treatment $v_t$ in \eqref{eq:three_equation} corresponds to $W_t$ in the previous sections.
We can write $M(\rho)=\left( \begin{array}{cc}
	m_{11}(\rho) & m_{12}(\rho)\\
	m_{21}(\rho) & m_{22}(\rho)
\end{array}\right) $. When $v_t$ is a binary variable (e.g., high target rate and low target rate), we can define the potential outcomes by setting $v_t = 1$ or $0$ in (\ref{eq:Yt_dt2}): $\tilde{Y}_t(1)=\left( \begin{array}{c} m_{11}(\rho)+m_{12}(\rho) \varepsilon_{t}\\m_{21}(\rho)+m_{22}(\rho) \varepsilon_{t}  \end{array}\right) $ and $\tilde{Y}_t(0)=\left( \begin{array}{c}m_{12}(\rho) \varepsilon_{t}\\m_{22}(\rho) \varepsilon_{t}  \end{array}\right) $. Both $\tilde{Y}_t(1)$ and $\tilde{Y}_t(0)$ depend on $\rho$, so we can write them as $\tilde{Y}_t(1;\rho)$ and $\tilde{Y}_t(0;\rho)$.  
Transforming $\tilde{Y}_t=(x_t,\pi_t)^{\prime}$ into $Y_t=|x_t|^2 + |\pi_t - \pi_0|^2$, we can define the potential outcomes for the welfare $(Y_t(1; \rho), Y_t(0; \rho))$, which also depend on $\rho$. The policy choice problem for $v_T \in \{0, 1\}$ can be set to minimize the expected welfare cost:
\begin{equation}
	 \E \left[ Y_{T}(1;\rho) g(v_{T-1}) + Y_T(0 ; \rho) (1-g(v_{T-1})) | \mathcal{F}_{T-1} \right]\label{eq:vT_lucas}
\end{equation}
in $g(v_{T-1}) \in \{0, 1\}$, assuming that the one-time policy choice of $v_T$ does not change the value of $\rho$ governing the outcome-generating process (\ref{eq:Yt_dt2}). Thus, one can view the framework of T-EWM in the previous sections concerns a choice of binary policy shock $v_T$, assuming the fixed deep parameter of $\rho$. 

In contrast, consider the case where the policy choice of interest concerns regime $\rho$ instead of $v_T$. 
We can modify the T-EWM framework as follows in order to handle this case. 
Assume that the monetary policy shock $v_t$ is continuously distributed and let $f_{v_{t}|\mathcal{F}_{t-1}}( v;\rho)$ be the conditional density of $v_t$ given the filtration $\mathcal{F}_{t-1}$ evaluated at $v$ and $\rho$. In the outcome process of $t=T-1$ and earlier, the distribution $ f_{v_{t}|\mathcal{F}_{t-1}}( v;\rho)$ at the value of $\rho$ generates the outcomes up to $T-1$. In the counterfactual policy scenario of period $t=T$ and later, the planner manipulates the distribution of $v_T$ by changing the value of $\rho$ instead of directly setting a particular value of $v_T$. 
For such an intervention, we specify the planner's problem as a choice of $\rho$ to minimize
\begin{equation}
	\E \left[Y_{T}(v_{T};\rho)|\mathcal{F}_{T-1}\right]=\int \E\left[Y_{T}(v;\rho)|\mathcal{F}_{T-1}\right]f_{v_{T}|\mathcal{F}_{T-1}}(v;\rho)dv\label{eq:continuous_lucas}.
\end{equation}
This objective function differs from the welfare objective function of  \eqref{eq:vT_lucas} in the following two aspects. 
First, in (\ref{eq:vT_lucas}), the policy rule for $v_T$ is deterministic, whereas in (\ref{eq:continuous_lucas}), the planner chooses $\rho$ to change the conditional probability density function of the treatment $v_T$, i.e., a randomized policy for $v_T$. 
Second, we see that in (\ref{eq:vT_lucas}), the policy $v_T$ affects the outcome $Y_T$ only through $v_{T}$ with $\rho$ fixed, whereas in (\ref{eq:continuous_lucas}),
the policy $\rho$ affects the outcome through both the distribution of $v_T$
and the deep policy parameter $\rho$ underlying the potential outcomes.

Despite these differences, we can pursue a T-EWM approach for a statistical policy choice of $\rho$ as long as one can construct a sample analogue of the welfare criterion of (\ref{eq:continuous_lucas}). 
For instance,  let $\rho_t$ be the policy regime in the sampling period $t=1, \dots, T-1$, and assume $\rho_t$ is observable or estimable.\footnote{$\rho_t$ can be estimated by the method proposed in \cite{schorfheide2005learning}, assuming that monetary policy follows a nominal interest rate rule that is subject to regime shifts.} 
For the sake of illustration, we will use a simplified model by  assuming that $\rho_t$ takes values in a finite set $\Omega$ and $v_t\in\{0,1\}$. 
 For a given $\rho\in\Omega$, we assume that $\rho_t$ is independent of $Y_{t}(v_t;\rho)$  conditional on $\mathcal{F}_{t-1}$, and we let
 Assumption \ref{ass:toy_example} hold with $W_{t} = v_{t}$, such that $\E(\cdot|\mathcal{F}_{t-1})=\E(\cdot|v_{t-1})$. 
Then, for a policy function $g_{\rho}:\{0,1\}\to \Omega$ and $v\in\{0,1\}$, we can define the empirical welfare estimating \eqref{eq:continuous_lucas} as
 \begin{equation}
\widehat{\mathcal{W}}\left(g_{\rho}|v\right)
		=  T(v)^{-1}\sum_{t:v_{t-1}=v}\left[\sum_{\rho\in\Omega}\frac{Y_{t}\IF(\rho_t=\rho)}{\Pr[\rho_t=\rho|v_{t-1}]}\IF[g_{\rho}(v_{t-1})=\rho]\right], \label{eq:emp_w_lucas}
 \end{equation}
where $T(v):=\# \{ v_{t-1}=v\}$. We then maximize this empirical criterion with respect to $v=v_{T-1}$ to estimate an optimal policy regime. 


It is worth noting that 
as long as $\rho_t$ is observable (or estimable), 
the empirical analogue of the welfare criterion is free from functional form restrictions of $Y_t(v; \rho)$ or a distributional restriction of $f_{v_t|\mathcal{F}_{t-1}}(v; \rho)$. 
On the other hand, whether the empirical analogue can estimate the welfare criterion (\ref{eq:continuous_lucas}) well relies on whether the agents in the economy have correct knowledge of $\rho_t$ and behave in response to the shift of $\rho_t$.  
We leave for future research a rigorous characterization of the learnability of an optimal policy regime along this T-EWM approach and its implementability in practice.

\subsection{Connections to other policy choice models in the literature}\label{sec:connection}
In this section, we discuss T-EWM's relation to the literature on treatment and optimal policy analysis.

\subsubsection{Connection to MDP and Reinforcement learning} \label{MDP}

Markov Decision Process (MDP) is the standard framework for reinforcement learning algorithms commonly applied to decision-making problems in dynamic environments. See, e.g.,  \cite{kallenberg2016markov} for a comprehensive introduction. 
We can relate the current T-EWM model to a special case of MDP with a finite horizon. 
The conditioning variables $X_{t-1}$ correspond to the Markov state at time $t$, and the welfare outcome $Y_{t}$ corresponds to the reward at $t$. 
Before the planner intervenes at time $T$, the Markov state transitions follow the data-generating process in which the transitions of the policies are prescribed by the propensity score. 
After the planner intervenes, the transition of policies is governed by a deterministic rule described by the (estimated) optimal policy function \eqref{eq:policy_f}. 
As we show in Appendix \ref{MDP_A}, the population conditional welfare of T-EWM with a finite-horizon welfare target can be viewed as the value function of a finite-horizon MDP with a nonstationary solution. See Chapter 2 of  \cite{kallenberg2016markov} for more details about finite-horizon MDPs.

The reinforcement learning (RL) literature is vast and provides a rich toolbox to solve MDPs given the knowledge of state transition and reward generating processes. Treating the state transition and reward generating processes as unknown, the recent works of off-policy evaluation for MDPs including \cite{Hu_Wager2023_AnnStat}, \cite{Kallus_Uehara2020}, \cite{ Liao_etal2021_JASA}, \cite{Shi_etal2023_JASA}, among others, study estimation and inference for the welfare values under (counterfactual) Markov policies using training data. Our proposal and analysis of T-EWM contrast with these works since our focus is on estimation of an optimal policy and its short-run welfare regret performance rather than statistical inference for the welfare values.

As a closely-related work to ours, \cite{Liao_etal2022_AnnStat} studies estimation of an optimal policy in MDP within a parametrically specified class of policies. 
However, our T-EWM and their MDP approaches differ in several aspects. First, T-EWM allows for a nonstationary environment in both causal effects and data-generating processes without requiring modeling them explicitly, while optimal policies in MDP assume time homogeneous transitions of the underlying states.
Second, T-EWM obtains a policy by optimizing an empirical analogue of a short-run welfare criterion, whereas the MDP approach estimates an optimal policy by maximizing an estimate of the value function in the Bellman equation.

\subsubsection{Comparison with impulse response functions}

In empirical macroeconomics, a common practice is to measure the causal effects of policies using the impulse response functions (IRF) to the structural policy shocks; see \cite{sims1980macroeconomics}, \cite{ramey2016macroeconomic}, \cite{plagborg2016essays}, \cite{stock2017twenty}, among others. This approach is based on the representation that the endogenous outcome variables are expressed as a weighted sum of the structural shocks, in which the coefficients of the structural shocks correspond to IRFs. With our notation, it can be expressed as
\begin{equation} \label{Structural MA}
Y_t = \sum_{h=0}^{\infty} ( \theta_{h} W_{t-h} + \phi_h \epsilon_{t-h} ),
\end{equation}
where $W_t$ is the policy shock and $\epsilon_t$ is a non-policy structural shock that the planner cannot control. Setting the value of structural shocks exogenously to $w_{-\infty:t}$ without intervening in the joint distribution of the other shocks $\epsilon_{-\infty:t}$ defines the potential outcome $Y_t(w_{-\infty:t})$. The coefficients $(\theta_h, \phi_h)$, $h = 0, 1, 2, \dots,$ are IRFs of the $h$-period ahead outcome variable to the unit changes in policy and non-policy shocks, respectively.

Our approach of T-EWM based on the potential outcome time series differs from the approach of causal IRF analysis in several aspects. First, we assume that the policy shocks of interest $\{W_t: t=0, \dots, T-1 \}$ are directly observable in the available time-series data, while the common framework of IRF analysis such as SVAR considers settings where the policy shocks are not observable and need to be identified through identification of a linear simultaneous equation system. Second, our framework imposes little restrictions on the functional form of how the current and past shocks affect the outcome, while the IRF analysis based on (\ref{Structural MA}) assumes that the structural equation for $Y_t$ is linear in the structural shocks and additive over the horizons. Third, our framework allows unrestricted heterogeneity (i.e., nonstationarity) of the causal effects, whereas the shock-based causal models (\ref{Structural MA}) assume either stationary IRFs that depend only on horizon $h$ or nonstationary IRFs with explicit modeling of how they evolve over time.\footnote{See \cite{bojinov2019time}
and \cite{rambachan2021when} for comparative discussions between the IRFs and the average treatment effects with potential outcome time series.}  Fourth, the existing IRF analysis focuses mainly on estimating and inferring the IRFs, whereas the literature has not formulated how to perform statistical policy choice based on the estimated IRFs. Our T-EWM approach, in contrast, explicitly formulates the policy choice problem using potential outcome time series and analyzes the welfare performance of the statistical policy choice.

\subsection{\label{sec: unknown propensity-score}Estimation with unknown propensity score}
To this point, we have treated the propensity score function as known, but this is infeasible in many applications. Here we consider the case where the propensity score at each time $t$,
$e_{t}(\cdot)$, is unknown. Estimation can be
either parametric or non-parametric. Let $\hat{e}_{t}(\cdot)$ denote the estimator of the propensity score function, and $\hat{G}_{\hat{e}}$ denote the optimal policy obtained using $\hat{e}_t$.

\subsubsection{The convergence rate with estimated propensity scores}

In this subsection, we adapt Theorem 2.5 of \citet{kitagawa2018should}
to our setting and obtain a new regret bound with estimated propensity scores. We show that, with estimated propensity scores,
the convergence rate is determined by the slower one of the rate of convergence of
$\hat{e}_{t}(\cdot)$ and the rate of convergence of the estimated welfare loss (given
in Theorem \ref{thm:mds_mean_bound_unconditional}).
\begin{theorem} \label{thm:e_propensity}
	 Let $\hat{e}_{t}(\cdot)$ be an estimated
	propensity score of $e_{t}(\cdot)$, and
		$\hat{\tau}_{t}=\frac{Y_{t}W_{t}}{\hat{e}_t(W_{t-1})}- \frac{Y_{t}(1-W_{t})}{1- \hat{e}_t(W_{t-1})}$
	be a feasible estimator for
$\tau_{t}=\frac{Y_{t}W_{t}}{e_t(W_{t-1})}- \frac{Y_{t}(1-W_{t})}{1- e_t(W_{t-1})}$.
	Given a class of data-generating processes $\mathcal{P}_{T}(M,\kappa)$ defined above equation \eqref{eq:main_Theorem}, we assume
	that there exists a sequence $\phi_{T}\to\infty$ such that the series
	of estimators $\hat{\tau}_{t}$ satisfy
		\begin{equation}
		\text{lim}\underset{T\to\infty}{\text{sup}}\underset{P_T\in\mathcal{P}_{T}(M,\kappa)}{\text{sup}}\phi_{T}{\E}_{P_{T}}\left[(T-1)^{-1}\sum_{t=1}^{T-1}|\hat{\tau}_{t}-\tau_{t}|\right]<\infty.
		\label{eq:ass phi}
	\end{equation}
	
	Then under Assumptions \ref{ass:continuous_Markov} to \ref{ass bounded y}, and \ref{equiv_W_uc} to  \ref{D.4}, we have
	\[
	\underset{P_T\in\mathcal{P}_{T}(M,\kappa)}{\text{sup}}{\E}_{\mathcal{P}_{T}}[\mathcal{W}(G_*)-\mathcal{W}(\hat{G}_{\hat{e}})]\lesssim
	(\phi_{T}^{-1}\vee \frac{1}{\sqrt{T-1}}).
	\]
		
\end{theorem}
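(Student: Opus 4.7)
The plan is to adapt the known-propensity regret analysis by inserting an additional bridge through the feasible empirical welfare
\begin{equation*}
\widehat{\mathcal{W}}_e(G) \defeq \frac{1}{T-1}\sum_{t=1}^{T-1}\left[\frac{Y_t W_t}{\hat{e}_t(X_{t-1})}\,1(X_{t-1}\in G) + \frac{Y_t(1-W_t)}{1-\hat{e}_t(X_{t-1})}\,1(X_{t-1}\notin G)\right],
\end{equation*}
which $\hat{G}_e$ maximizes by construction. The argument runs in parallel with the oracle case, with $\widehat{\mathcal{W}}_e$ replacing $\widehat{\mathcal{W}}$, and the additional error produced by this substitution is exactly what the rate hypothesis \eqref{eq:ass phi} controls.

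First, on an event of $P_T$-probability tending to one, Assumption \ref{equiv_W_uc} gives
\begin{align*}
\mathcal{W}(G_*) - \mathcal{W}(\hat{G}_e) \leq c\left[\widetilde{\mathcal{W}}(G_*) - \widetilde{\mathcal{W}}(\hat{G}_e)\right].
\end{align*}
Adding and subtracting $\widehat{\mathcal{W}}_e(G_*)$ and $\widehat{\mathcal{W}}_e(\hat{G}_e)$, using $\widehat{\mathcal{W}}_e(\hat{G}_e) \geq \widehat{\mathcal{W}}_e(G_*)$, and then inserting $\widehat{\mathcal{W}}$ via a triangle inequality yields
\begin{align*}
\widetilde{\mathcal{W}}(G_*) - \widetilde{\mathcal{W}}(\hat{G}_e)
&\leq 2\sup_{G\in\mathcal{G}}\left|\widehat{\mathcal{W}}_e(G) - \widetilde{\mathcal{W}}(G)\right| \\
&\leq 2\sup_{G\in\mathcal{G}}\left|\widehat{\mathcal{W}}_e(G) - \widehat{\mathcal{W}}(G)\right| + 2\sup_{G\in\mathcal{G}}\left|\widehat{\mathcal{W}}(G) - \widetilde{\mathcal{W}}(G)\right|.
\end{align*}
The second term on the right is the known-propensity fluctuation: combining Theorems \ref{thm:mds_bound} and \ref{mean_bound} through another triangle inequality at $\bar{\mathcal{W}}$ bounds it by $O_p((T-1)^{-1/2})$ uniformly over $\mathcal{P}_T(M,\kappa)$, up to the logarithmic and complexity factors that are absorbed into the constant.

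The key new step is the first term. Because $W_t \in \{0,1\}$, the two summands defining $\widehat{\mathcal{W}}_e(G) - \widehat{\mathcal{W}}(G)$ are supported on the disjoint events $\{W_t=1\}$ and $\{W_t=0\}$, so $G$ drops out of a pathwise bound:
\begin{align*}
\sup_{G\in\mathcal{G}}\left|\widehat{\mathcal{W}}_e(G) - \widehat{\mathcal{W}}(G)\right|
&\leq \frac{1}{T-1}\sum_{t=1}^{T-1}\left\{\left|\frac{Y_t W_t}{\hat{e}_t(X_{t-1})} - \frac{Y_t W_t}{e_t(X_{t-1})}\right| + \left|\frac{Y_t(1-W_t)}{1-\hat{e}_t(X_{t-1})} - \frac{Y_t(1-W_t)}{1-e_t(X_{t-1})}\right|\right\} \\
&= \frac{1}{T-1}\sum_{t=1}^{T-1}\left|\hat{\tau}_t - \tau_t\right|,
\end{align*}
where the final equality uses that at most one of $W_t,\,1-W_t$ is nonzero per $t$. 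Taking expectations and invoking \eqref{eq:ass phi} bounds this contribution by $O(\phi_T^{-1})$ uniformly over $\mathcal{P}_T(M,\kappa)$.

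Assembling these pieces and taking $\E_{P_T}$ yields the claimed $\phi_T^{-1}\vee(T-1)^{-1/2}$ rate. The main obstacle I foresee is routine but requires care: converting the in-probability bound of Theorems \ref{thm:mds_bound}--\ref{mean_bound} into a uniform expectation bound, and absorbing the $o(1)$-probability event on which Assumption \ref{equiv_W_uc} may fail. Both steps are straightforward because the welfare summands are bounded by $M/\kappa$ under Assumptions \ref{bound_c} and \ref{ass bounded y}, but the uniformity in $P_T\in \mathcal{P}_T(M,\kappa)$ inherited from the earlier theorems must be preserved throughout.
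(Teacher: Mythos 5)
Your proposal is correct and follows essentially the same route as the paper's proof: both bridge through the feasible empirical welfare $\widehat{\mathcal{W}}^{\hat{e}}$, use $\widehat{\mathcal{W}}^{\hat{e}}(\hat{G}_e)\geq\widehat{\mathcal{W}}^{\hat{e}}(G_*)$, and split the regret into a propensity-estimation term controlled by \eqref{eq:ass phi} plus the known-propensity fluctuation $\sup_{G}|\widehat{\mathcal{W}}(G)-\widetilde{\mathcal{W}}(G)|$ from Theorems \ref{thm:mds_bound} and \ref{mean_bound}. The only cosmetic difference is that you bound the propensity-error term by a symmetric supremum (picking up a harmless factor of $2$), whereas the paper exploits the cancellation in $(\tau_t-\hat{\tau}_t)\left(1\{X_{t-1}\in G_*\}-1\{X_{t-1}\in\hat{G}^{\hat{e}}\}\right)$ to get the constant $1$; the rate is unaffected.
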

To ensure that $\hat{e}_t(W_{t-1})$ is a valid estimator of ${e}_t(W_{t-1})$ in the sense of satisfying (\ref{eq:ass phi}), we need to restrict the dynamics of $W_t$. It is, however, important to note that the condition of (\ref{eq:ass phi}) does not require stationarity of the outcome variable $Y_t(.)$.

		A proof is presented in Appendix \ref{sec:proof_e_propensity}. This theorem shows that if the propensity score is estimated with sufficient accuracy (a rate of $\phi_{T}^{-1} \lesssim \sqrt{T}^{-1}$),  we obtain a similar regret bound to the previous sections. It is not surprising to see that the rate is affected by the
		estimation accuracy of the propensity score, and it is the maximum of $\frac{1}{\sqrt{T-1}}$ and $\phi_{T}^{-1}$.
\begin{rem}
In the cross-sectional setting,  \cite{athey2021policy} show an improved rate of welfare convergence when propensity scores are unknown and estimated. It is possible to extend their analysis to our time-series setting and assess whether or not the rate shown in Theorem \ref{thm:e_propensity} can be improved. However, this is not a trivial extension, and we leave it for further research.  \end{rem}

\subsubsection{Estimation of propensity scores} \label{sec:estimated_prop}
In this subsection, we briefly review various methods of estimating propensity score functions. The propensity score function $e_{t}(\cdot)$ can be estimated parametrically
or nonparametrically.
An example of a parametric estimator is the (ordered) probit model,
which is employed by \cite{hamilton2002model}, \cite{scotti2018bivariate},
and \cite{angrist2018semiparametric}. Under Assumption \ref{ass:continuous_Markov}, $e_{t}$ can be expressed as a function of
$X_{t-1}$. Then, the structure of the propensity score can be given by a probit model,
\begin{align*}
	e_{t}(X_{t-1}) & \equiv P(W_{t}=1|X_{t-1})=\Phi(\beta^{\prime}X_{t-1}),\\
	1-e_{t}(X_{t-1}) & \equiv P(W_{t}=0|X_{t-1})=1-\Phi(\beta^{\prime}X_{t-1}).
\end{align*}

A more complicated structure, such as the dynamic
probit model (\cite{eichengreen1985bank}; \cite{davutyan1995operations}) can also be employed.

We can also use a nonparametric estimator to estimate $e_{t}(\cdot)$.
For example, \cite{frolich2006non} and \cite{park2017nonparametric} extend the
local polynomial regression of \cite{fan1995data} to a dynamic setting. Their methods can be employed here. For simplicity, we assume that the propensity score function is invariant across times, i.e., $e_t(\cdot)=e(\cdot)$ for any $t$, and $e(\cdot)$ is continuous, and we set $\mathcal{X}\subset \mathbb{R}^1$.
For a local polynomial of order $p=1$ and any $x\in\mathcal{X}$, a local likelihood logit model can be specified as
$$\text{log}\left[\frac{e(x)}{1-e(x)}\right]=\alpha_x,$$
for a local parameter $\alpha_x$.
By the continuity of the propensity score function $e(\cdot)$, for
some $X_{t-1}$ close to $x$, we can find a local parameter $\beta_{x}$, such that
$\text{log}\left[\frac{e(X_{t-1})}{1-e(X_{t-1})}\right]\approx\alpha_{x}+\beta_{x}(X_{t-1}-x)$.
The estimated propensity score evaluated at $x$, $\hat{e}(x)$, can be obtained by solving
\begin{align*}
(\hat{\alpha}_{x},\hat{\beta}_{x}) & =\text{argmax}_{\alpha,\beta}\frac{1}{T-1}\sum_{t=1}^{T-1}\bigg\{\bigg[W_{t}\log\left(\frac{\exp\left(\alpha+\beta\left(X_{t-1}-x\right)\right)}{1+\exp\left(\alpha+\beta\left(X_{t-1}-x\right)\right)}\right)\\
 & +(1-W_{t})\log\left(\frac{1}{1+\exp\left(\alpha+\beta\left(X_{t-1}-x\right)\right)}\right)\bigg]K\left(\frac{X_{t-1}-x}{h}\right)\bigg\}.
\end{align*}


where $K(\cdot)$ is a kernel function, and $h$ is the bandwidth. Then, we have 
$\hat{e}(x)=\frac{\exp(\hat{\alpha}_{x})}{1+\exp(\hat{\alpha}_{x})}.$

\section{Application} \label{applica}

 During the Covid-19 pandemic, policymakers
 around the world faced the problem of making effective policies. In this section, we illustrate the usage of T-EWM with an application to choosing the stringency of restrictions imposed in the United States during the pandemic. Let  the treatment $W_{t}$ be a binary indicator for whether the government maintains restrictions at week $t$: $W_{t}=1$ indicates that the stringency of restrictions is maintained or increased from week $t-1$, while $W_{t}=0$ indicates that restrictions are relaxed. The stringency of restrictions is measured by the Oxford Stringency Index,
	which is described in Section \ref{subsec:Data-description}. We assume that a change in the stringency of restrictions at time $t$ will have a lagged effect on deaths, and set 
	the outcome variable $Y_{t}$ to be  \emph{$-1\times$two-week-ahead deaths}. The  information set at time $t$ is,
	\begin{align}
		\ensuremath{}\ensuremath{X_{t}=} & (\text{cases}_{t},\text{deaths}_{t},\text{change in cases}_{t},\text{change in deaths}_{t},\nonumber \\
		& \text{restriction stringency}_{t},\text{vaccine coverage}_{t},\text{economic conditions}_{t}).\label{eq:app_X}
	\end{align}
	The variables in $X_{t}$ are chosen to include the most important factors considered by  policymakers when deciding the stringency of restrictions.
	The inclusion of the economic conditions reflects policymakers'
    concerns over the economic effect of restrictions. Finally, the propensity score is estimated as a logit model with a linear
index,
\begin{equation}
	\text{log}\left(\frac{\Pr(\text{keeping or increasing restrictions at week }t)}{\Pr(\text{decreasing restriction at week }t)}\right)=\alpha+\beta X_{t-1}.\label{eq:p_score_app}
\end{equation}

\subsection{Data\label{subsec:Data-description}}

The dataset consists of weekly data for the United States. It runs from April 2020 to January 2022 and contains 92 observations. Data on cases, deaths, and vaccine coverage are downloaded from the website
of the Centers for Disease Control and Prevention (CDC). These series are
plotted in Figures \ref{fig:Cases-and-deaths} and \ref{fig:Vaccine} in Appendix \ref{app_app}. We use the Oxford Stringency Index as a measure
of the stringency of restrictions. This index is taken from the Oxford Covid-19 Government
Response Tracker (OxCGRT), which tracks policy interventions and constructs a suite of composite indices that measure
governments' responses.\footnote{The index is calculated based on a dataset
that is updated by a professional team of over one hundred students,
alumni, staff, and project partners. It is a composite index covering
different types of restriction, such as school and workplace
closures, restrictions on the size of gatherings, internal and international
movements, facial covering policies, etc. See \cite{hale2020variation} for
more details.} The left panel of Figure \ref{fig:Restriction-econ} in Appendix \ref{app_app} plots the time series of the stringency index.

The economic impact of restrictions is a crucial factor that every policymaker has
to consider. To measure economic conditions, we use the Lewis-Mertens-Stock weekly economic index (WEI). \footnote{The WEI is an index of ten indicators of real economic activity. It represents the common component of series covering consumer behavior, the labor market, and production. See \cite{lewis2021measuring} for more details. }The right panel of Figure \ref{fig:Restriction-econ} in Appendix \ref{app_app} plots the WEI.
In general, none
of these time series seems stationary over the sample period. They exhibit both seasonal fluctuations and changes with respect to different stages of the pandemic.

\subsection{On T-EWM assumptions}\label{checkas}
Before we proceed with the analysis, we discuss the validity of some of the T-EWM assumptions in this context. The Markov properties (Assumption \ref{ass:continuous_Markov}) require that (i) the potential
outcome at time $t$ depends only on the time $t$ and time $t-1$ treatments; (ii) conditional on $X_{t-1}$, the potential outcomes
and treatment at time $t$ are not affected by the path $X_{0:t-2}$.
(i) can be justified here as the treatment is the 
\emph{direction of change} in the stringency of restrictions. While the \emph{level}
before $t-1$ may affect the outcome at $t$, its directional
change may not have such a long-existing impact. (ii) requires that conditional on current cases and deaths, cases and deaths
one week ago (as $X_{t}$ includes the first difference of
cases and deaths, it includes the level of cases and deaths from one week ago), current economic
conditions, the stringency of restrictions, vaccine coverage, and deaths in two weeks
will be independent of lags of these variables if the lag is greater than one. For this assumption to hold, it must be that current infections are unrelated to
deaths in more than three weeks. In general, it is unclear
whether this is true, although there is some support
in the literature. For example, based on data collected during the
early stage of the pandemic in China, \cite{verity2020estimates} calculate
that the posterior mean time from infection to death is 17.8 days, with
a 95\%-credible interval of {[}16.9, 19.2{]}.



Strict overlap (Assumption \ref{bound_c}) requires that, for all $x\in\mathcal{X}$,
the propensity score is strictly larger than 0 and smaller than
1. Figure \ref{fig:PS_app} in Appendix \ref{app_app} shows the histogram of estimated propensity
scores. The strict overlap assumption can be verified by visual inspection: the smallest (estimated)
propensity score is larger than 0.3, and the largest is smaller than
0.9. Sequential unconfoundedness (Assumption  \ref{unconf_c}) requires that conditional
on $X_{t-1}$, treatment assignment at time $t$ is quasi-random. This assumption cannot be tested in general. However, in the first two years of pandemic, policymakers have had very limited knowledge of Covid-19 beyond the observable data. After controlling for the observables
in (\ref{eq:app_X}), it seems reasonable that the remaining
random factors in two-week-ahead deaths and changes in
the stringency of restrictions are conditionally independent.



\subsection{Estimation results and policy recommendations}
In this subsection, we summarize the estimation results and discuss the policy recommendations. We show that T-EWM leads to sensible and robust policy decisions.
After observing $X_{T-1}$, we aim to maximize expected welfare,
$\E(-1\cdot\text{deaths}_{T+1})$, over a set of quadrant policies. A class of quadrant treatment rules with $k$ policy variables, $x=(x_{1}\dots x_k)^{\prime}$, is defined as
\begin{equation*}
	\mathcal{G}\equiv\left\{ \begin{array}{c}
		\left\{x:s_{1}(x_{1}-b_{1})>0\;\&\;\dots\;\&\;  s_{k}(x_{k}-b_{k})>0\right\};\\
		s_{1},\dots,s_{k}\in\{-1,1\},b_{1},\dots,b_{k}\in \mathbb{R}
	\end{array}\right\}. 
\end{equation*}
Let $X_{T-1}^{P}$
denote a vector of variables for policy choice. This can be any
subvector of $X_{T-1}$.
For our first set of results, we use $X_{T-1}^{P}=\left(\text{change in deaths}_{T-1},\text{restriction stringency}_{T-1}\right).$
Figure \ref{fig:Policy_k=00003D2} presents the estimated treatment region.
The estimated optimal decision rule states that restrictions should not be relaxed (i.e., $W_{T}=1$), if the weekly fall in deaths is below $745$, and the current level of restrictions is lower
than 62.6.

\begin{figure}[h]
	\caption{\label{fig:Policy_k=00003D2} Estimated optimal policy based on $X_{T-1}^{P}=(\text{change in deaths}_{T-1},\text{restriction stringency}_{T-1})$}
	
	\centering
	
	\includegraphics[scale=0.35]{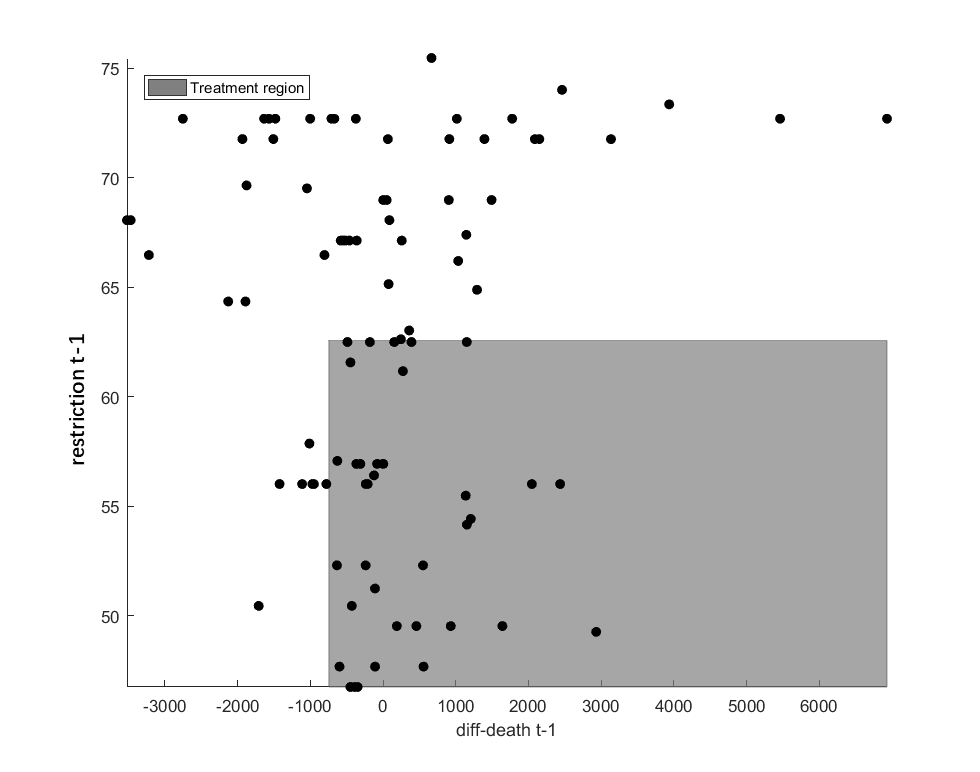}
		\noindent\begin{minipage}[t]{1\columnwidth}%
		\centering{\footnotesize{}The $x$-axis is the change in deaths at
			$T-1$, and the $y$-axis is the stringency of restrictions  at $T-1$.}%
	\end{minipage}
\end{figure}

We can examine the robustness of this result by expanding the set of variables for policy choice. We first add vaccine coverage: $X_{T-1}^{P}=(\text{change in deaths}_{T-1},$ $\text{restriction stringency}_{T-1},$ $\text{vaccine coverage}_{T-1})$.
The optimal estimated quadrant policy is then a 3d-quadrant. Figure
\ref{fig:Policy_k=00003D3}  in Appendix \ref{app_app} shows the projection of this 3d-quadrant
onto the 2d-planes of $(\text{change in deaths}_{T-1},$ $\text{restriction stringency}_{T-1})$
and $(\text{restriction stringency}_{T-1}$,\\  $\text{vaccine coverage}_{T-1}$).
We further expand the set of variables for policy choice by adding the change in cases, so that $X_{T-1}^{P}=(\text{change in deaths}_{T-1},$
$\text{restriction stringency}_{T-1},$ \\$\text{vaccine coverage}_{T-1},\text{ change in cases}_{T-1})$.
The optimal estimated quadrant policy in this case is a 4d-quadrant. Figure
\ref{fig:Policy_k=00003D4} in Appendix \ref{app_app} shows projections of this 4d-quadrant
onto the 2d-planes of $(\text{change in deaths}_{T-1}$, $\text{restriction stringency}_{T-1})$
and $(\text{vaccine coverage}_{T-1},$ $\text{change in cases}_{T-1}$).

Table \ref{tab: policy choice} summarizes the estimation results of Figures \ref{fig:Policy_k=00003D2}, \ref{fig:Policy_k=00003D3},
and \ref{fig:Policy_k=00003D4}. It shows that the thresholds of the T-EWM policies are insensitive to the additions of the variables that the quadrant policies depend on. As the number of variables increases, the threshold of the change in weekly deaths 
decreases slightly from $-745$ to $-1007.5$, while the other thresholds
remain stable. During the sample period, the mean of weekly cases is 594344.3. Therefore, the threshold in
the last column and the last row corresponds to an approximate $10\%$ fall relative to the
average number of cases during the sample period. In sum, T-EWM suggests that the policymaker should not relax restrictions
($W_{T}=1$) if there are no significant drops in deaths and cases, current stringency is comparatively low, and vaccine coverage is comparatively low.

\begin{table}[H]
	\caption{\label{tab: policy choice}Estimated optimal quadrant rules}
	
	\centering{}{\small{}}%
	\begin{tabular}{cccccc}
		\toprule 
		&  \#vars  & change in deaths & restriction & vaccine ($\%$) & change in cases\tabularnewline
		\midrule
		Figure \ref{fig:Policy_k=00003D2} & 2  & $>-745$ & $<62.6$ & -- & --\tabularnewline
		Figure \ref{fig:Policy_k=00003D3} & 3 & $>-1007.5$ & $<62.6$ & $<115.2$ & --\tabularnewline
		Figure \ref{fig:Policy_k=00003D4} & 4 & $>-1007.5$ & $<62.6$ & $<115.2$ & $>-50842.5$\tabularnewline
		\bottomrule
	\end{tabular}{\small\par}
\end{table}

Including more than four policy variables presents a significant computational challenge for the grid-search method we have used to optimize quadrant treatment rules. \footnote{When five variables are included in $X_{T-1}^{P}$, the estimated running time is 12 hours; with six variables, the estimated running time skyrockets to 4459 hours. (These estimates were made on a computer with a 12th Gen Intel(R) Core(TM) i7-1265U 2.7 GHz processor and 32GB of RAM.)} To overcome this issue, we employ decision trees to search for an optimal policy based on time-series empirical welfare.

The decision tree approach (\cite{breiman1984classification}) sequentially searches for policy variables and their threshold-based splits to maximize welfare. \cite{athey2021policy} and \cite{zhou2023offline} study the properties and implementation of this approach in maximizing the doubly robust empirical welfare criterion with brute force searches for tree partitions. \cite{ida2022choosing} implement a decision tree with a heuristic two-step optimization in the context of estimating a rebate assignment policy in an electricity market. We implement a classification tree algorithm based on \cite{hastie2009elements}. See Appendix \ref{app:emp_tree_algorithm} for the details of our implementation.


Figure  \ref{T-EWM tree} illustrates a policy rule obtained by a T-EWM decision tree, which considers all the seven policy variables within $X_{T-1}$, i.e., $ X_{T-1}^{P}=X_{T-1}=(\text{cases}_{T-1},\,\text{deaths}_{T-1},$  $\text{change in cases}_{T-1},\,\text{change in deaths}_{T-1},\,\text{restriction stringency}_{T-1},\,\text{vaccine coverage}_{T-1},$\\  $\text{economic conditions}_{T-1}).$ Due to the relatively small sample size of 92, we only generate trees with a depth of three.  
\begin{figure}[H]
	\caption{\label{T-EWM tree} T-EWM decision tree with seven policy variables}
	
	\centering
	
	\includegraphics[width=85mm]{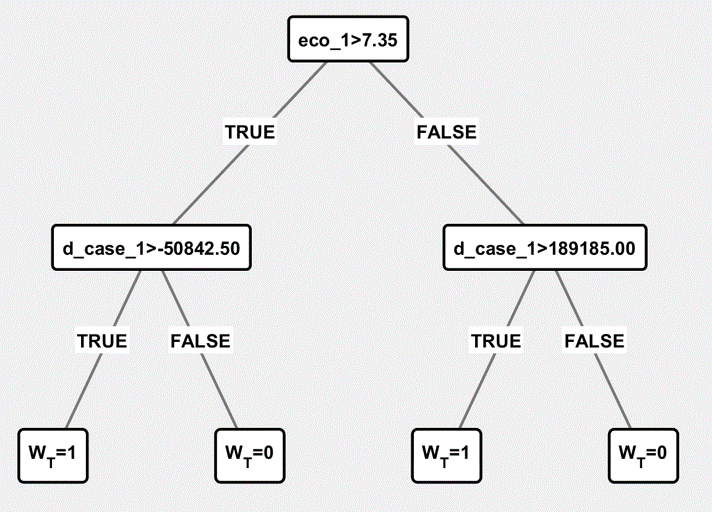}
	\noindent\begin{minipage}[t]{1\columnwidth}%
	\end{minipage}
\end{figure}

The result from the  decision tree indicates that if the economy performs relatively well, policymakers should adopt a more aggressive approach in maintaining or tightening Covid restrictions. 
(Note that the decision given by the left node of the second level aligns with the last one in the last row of Table \ref{tab: policy choice}, which considers four policy variables.) Conversely, if the economy performs relatively poorly, policymakers should refrain from further tightening Covid restrictions unless the increase in cases is significant. 

 To wrap up, we illustrate the use and implementation of the proposed T-EWM approach in making policy decisions. Furthermore, utilizing decision trees with the T-EWM approach allows for the inclusion of more policy variables while keeping computation time manageable.  More results from the T-EWM decision trees including a specification with a higher order of the Markovian structure of $q=2$ can be found in Appendix \ref{app_app}. 

\section{Conclusion}
This article proposes T-EWM, a framework and method for choosing optimal policies based on time-series data. We characterize assumptions under which this method can learn an optimal policy. We evaluate its statistical properties by deriving nonasymptotic upper bounds of the unconditional and conditional regret. We discuss its connections to the existing literature, including the Markov decision process and impulse response analysis. We present simulation results and empirical applications to illustrate the computational feasibility and applicability of T-EWM. As a benchmark formulation, this paper mainly focuses on a one-period social welfare function as a planner's objective. Extensions of the analysis to policy choices for the middle-run and long-run cumulative social welfare are left for future research.

\begin{spacing}{1.0} 
\bibliographystyle{abbrvnat}
\bibliography{biball}
\end{spacing}
\newpage
\appendix
\numberwithin{equation}{section}
\section{Supplemental  Appendix A}

\subsection{Proof of Proposition \ref{prop:simple_mds}} \label{app:proof_simple_mds}
\begin{proof}
We have 
\begin{align}
 & \mathsf{E}\left[\widehat{\mathcal{W}}_{t}(g|w)-\bar{\mathcal{W}}_{t}(g|w)|\mathcal{F}_{t-1}\right]\nonumber \\
 & =\mathsf{E}\left[1(W_{t-1}=w)\left[\frac{Y_{t}W_{t}g(W_{t-1})}{e_{t}(W_{t-1})}+\frac{Y_{t}(1-W_{t})\{1-g(W_{t-1})\}}{1-e_{t}(W_{t-1})}\right]|\mathcal{F}_{t-1}\right]\nonumber \\
 & -\mathsf{E}\left[1(W_{t-1}=w)\mathsf{E}\{Y_{t}(W_{0:t-1},1)g(W_{t-1})+Y_{t}(W_{0:t-1},0)(1-g(W_{t-1}))|W_{t-1}=w\}|\mathcal{F}_{t-1}\right]\nonumber \\
 & =:I_{t}-II_{t},\label{eq:MDS_check}
\end{align}

Under Assumption \ref{s_bounded_y}, it remains to show $I_{t}-II_{t}=0$.
By the Law of iterated expectation,
\begin{align}
I_{t} & =\mathsf{E}\left[1(W_{t-1}=w)\left[\frac{Y_{t}W_{t}g(W_{t-1})}{e_{t}(W_{t-1})}+\frac{Y_{t}(1-W_{t})\{1-g(W_{t-1})\}}{1-e_{t}(W_{t-1})}\right]|\mathcal{F}_{t-1}\right]\nonumber \\
 & =1(W_{t-1}=w)\left\{ \frac{g(W_{t-1})}{e_{t}(W_{t-1})}\mathsf{E}\left[Y_{t}W_{t}|\mathcal{F}_{t-1}\right]+\frac{1-g(W_{t-1})}{1-e_{t}(W_{t-1})}\mathsf{E}\left[Y_{t}(1-W_{t})|\mathcal{F}_{t-1}\right]\right\} .\label{eq:MDS_hat}
\end{align}
Note that 
\begin{align}
\mathsf{E}\left[Y_{t}W_{t}|\mathcal{F}_{t-1}\right] & =\mathsf{E}\left[Y_{t}(W_{0:t-1},W_{t})W_{t}|\mathcal{F}_{t-1}\right]\nonumber  =\mathsf{E}\left[Y_{t}(W_{t-1},W_{t})W_{t}|\mathcal{F}_{t-1}\right]\nonumber \\
 & =\mathsf{E}\left[Y_{t}(W_{t-1},1)W_{t}|\mathcal{F}_{t-1}\right]\nonumber  =\mathsf{E}\left[Y_{t}(W_{t-1},1)|\mathcal{F}_{t-1}\right]\mathsf{E}\left[W_{t}|\mathcal{F}_{t-1}\right]\nonumber \\
 & =\mathsf{E}\left[Y_{t}(W_{t-1},1)|\mathcal{F}_{t-1}\right]e_{t}(W_{t-1}),\label{eq:MDS_hat_y1}
\end{align}
where the second equality follows from Assumption \ref{ass:toy_example}(i),
the fourth equality follows from Assumption \ref{unconf}, and the
last equality follows from $\mathsf{E}\left[W_{t}|\mathcal{F}_{t-1}\right]=\mathsf{E}\left[W_{t}|W_{t-1}\right]=e_{t}(W_{t-1})$
by the second statement of Assumption \ref{ass:toy_example}(ii). Applying the same arguments,
we have 
\begin{equation}
\mathsf{E}\left[Y_{t}(1-W_{t})|\mathcal{F}_{t-1}\right]=\mathsf{E}\left[Y_{t}(W_{t-1},0)|\mathcal{F}_{t-1}\right]\left[1-e_{t}(W_{t-1})\right].\label{eq:MDS_hat_y2}
\end{equation}
Combining (\ref{eq:MDS_hat}), (\ref{eq:MDS_hat_y1}),  (\ref{eq:MDS_hat_y2}), and the Law of iterated expectation, we obtain
\[
I_{t}=1(W_{t-1}=w)\mathsf{E}\left[Y_{t}(W_{t-1},1)g(W_{t-1})+Y_{t}(W_{t-1},0)(1-g(W_{t-1}))|\mathcal{F}_{t-1}\right].
\]
On the other hand, 
\begin{align*}
II_{t} & =\mathsf{E}\left[1(W_{t-1}=w)\mathsf{E}\{Y_{t}(W_{0:t-1},1)g(W_{t-1})+Y_{t}(W_{0:t-1},0)(1-g(W_{t-1}))|W_{t-1}=w\}|\mathcal{F}_{t-1}\right]\\
 & =1(W_{t-1}=w)\mathsf{E}\left[Y_{t}(W_{t-1},1)g(W_{t-1})+Y_{t}(W_{t-1},0)(1-g(W_{t-1}))|W_{t-1}=w\right]\\
 & =1(W_{t-1}=w)\mathsf{E}\left[Y_{t}(W_{t-1},1)g(W_{t-1})+Y_{t}(W_{t-1},0)(1-g(W_{t-1}))|W_{t-1}\right]\\
 & =1(W_{t-1}=w)\mathsf{E}\left[Y_{t}(W_{t-1},1)g(W_{t-1})+Y_{t}(W_{t-1},0)(1-g(W_{t-1}))|\mathcal{F}_{t-1}\right],
\end{align*}
where the second equality follows from $\sigma\left(W_{t-1}\right)\subseteq\mathcal{F}_{t-1}$,
the third equality follows from the definition of the indicator function,
and the last equality follows from (\ref{simple_cond_exp}), which
is an implication of Assumption \ref{ass:toy_example}. Consequently,
we have $I_{t}-II_{t}=0$.\end{proof}

\subsection{Proof of Theorem \ref{thm:discrete_bound} } \label{simple_mds_proof}
\subsubsection{Preliminary Lemmas}
The following two lemmas will be used in the proofs of the main results.
\begin{lemma}[Freedman's inequality]\label{free} Let $\xi_{a,i}$
	be a martingale difference sequence indexed by $a \in \mathcal{A}$, $i=1, \dots, n$, $\mathcal{F}_{i}$ be the filtration,
 $V_{a}=\sum_{i=1}^{n}\E(\xi_{a,i}^{2}|\mathcal{F}_{i-1})$, and
	$M_{a}=\sum_{i=1}^{n}\xi_{a,i}$. {For positive numbers $A$ and $B$}, we have
	\begin{equation} \label{freedman}
		\Pr(\max_{a\in\mathcal{A}}|M_{a}|\geq z)\leq\sum_{i=1}^{n} \Pr(\max_{a\in\mathcal{A}}\xi_{a,i}\geq
		{A})+2\Pr(\max_{a\in\mathcal{A}}V_{a}\geq {B})+2|\mathcal{A}|e^{-z^{2}/{(2zA+2B)}}.
	\end{equation}
\end{lemma}
The proof can be found in \cite{freedman1975tail}. Next, let $a \lesssim b$ indicate that there is a positive constant $C$, such that $a\leq C\cdot b$.

\begin{lemma}[Maximal inequality based on Freedman's inequality]\label{max}
Let $\{\xi_{a,i}\}_{i=1}^{n}$ be a martingale difference sequence with
respect to a filtration $\{\mathcal{F}_{i}\}_{i=0}^{n}$ for each
$a\in\mathcal{A}$, where $\mathcal{A}$ is a finite index set. Set
$M_{a}=\sum_{i=1}^{n}\xi_{a,i}$ and $V_{a}=\sum_{i=1}^{n}\E(\xi_{a,i}^{2}\mid\mathcal{F}_{i-1})$.
Suppose that, for some positive constants $A$ and $B$,
$\max_{a\in\mathcal{A}}|\xi_{a,i}|\leq A$ for all $i$ and
$\max_{a\in\mathcal{A}}V_{a}\leq B$ almost surely. Then
\begin{equation} \label{Maximal_Freedman}
\E\!\left[\max_{a\in\mathcal{A}}|M_{a}|\right]
\lesssim A\log(1+|\mathcal{A}|)+\sqrt{B}\sqrt{\log(1+|\mathcal{A}|)}.
\end{equation}
\end{lemma}
\begin{proof}
This follows from Lemma~19.33 of \citet{van2000asymptotic} and
Lemma~\ref{free}. From Freedman's inequality, we have for each
$a\in\mathcal{A}$ and any $z>0$,
\begin{equation}\label{eq:two-regime}
\Pr\!\left(|M_a| \geq z\right)\leq
\begin{cases}
2\exp(-z/(4A)) & \text{if } z\geq B/A,\\
2\exp(-z^{2}/(4B)) & \text{if } z<B/A.
\end{cases}
\end{equation}
We decompose $M_a = C_a + D_a$ where
\[
C_a = M_a\mathbf{1}\{|M_a|>B/A\},\qquad
D_a = M_a\mathbf{1}\{|M_a|\leq B/A\}.
\]

By the same argument in \citet{van2000asymptotic}, 
\[
\mathrm{P}\left(\left|C_{a}\right|>z\right)\leq2\exp\left(\frac{-z}{4A}\right),\quad\mathrm{P}\left(\left|D_{a}\right|>z\right)\leq2\exp\left(\frac{-z^{2}}{4B}\right).
\]

Setting $\phi_{p}(x)=\exp\left(x^{p}\right)-1\text{ for }p=1,2$,
we have
\[
\mathsf{E}\phi_{1}\left(\left|C_{a}/12A\right|\right) \leq 1,\qquad\mathsf{E}\,\psi_{2}\!\left(\left|D_{a}/\sqrt{12B}\right|\right)\leq 1.
\]

Moreover, using $\phi^{-1}_{1}\left(u\right)=\log\left(1+u\right)$
and $\phi^{-1}_{2}\left(u\right)=\sqrt{\log\left(1+u\right)}$ and
rearranging yields
\[
\mathsf{E}\max_{a\in\mathcal{A}}\left|C_{a}\right|\leq12A\cdot\log\left(|\mathcal{A}|+1\right),\qquad\mathsf{E}\max_{a\in\mathcal{A}}|D_{a}|\le\sqrt{12B}\cdot\sqrt{\log(|\mathcal{A}|+1)}.
\]
Finally, with the triangle inequality $|M_{a}|\le|C_{a}|+|D_{a}|$,
we have
\begin{align*}
\mathsf{E}\max_{a\in\mathcal{A}}|M_{a}| & \le\mathsf{E}\max_{a\in\mathcal{A}}|C_{a}|+\mathsf{E}\max_{a\in\mathcal{A}}|D_{a}|\lesssim A\log(1+|\mathcal{A}|)+\sqrt{B}\sqrt{\log(1+|\mathcal{A}|)}.
\end{align*}
\end{proof}

\subsubsection{Proof of Theorem \ref{thm:discrete_bound}}
\begin{proof}
Define $\tilde{p}_w \defeq \frac{1}{T-1}\sum_{t=1}^{T-1}\Pr(W_{t-1}=w|\mathcal{F}_{t-2})$.
 Writing
$\frac{T(w)}{T-1}=\left(\frac{T(w)}{T-1}-\tilde{p}_w\right)+\tilde{p}_w$,
\begin{align}
\sup_{g:\{0,1\}\to\{0,1\}}\left|\widehat{\mathcal{W}}(g|w)-\bar{\mathcal{W}}(g|w)\right|
&=\sup_{g}\left(\frac{T(w)}{T-1}\right)^{-1}(T-1)^{-1}\left|\sum_{t=1}^{T-1}\left[\widehat{\mathcal{W}}_t(g|w)-\bar{\mathcal{W}}_t(g|w)\right]\right|.
\label{mds_decomp}
\end{align}
Throughout this section, we set $\widehat W(g|w)=0$ on the event $T(w)=0$. Since this event has exponentially small probability under strict overlap, this convention does not affect the rate. Define
\[
\frac{T(w)}{T-1}-\tilde{p}_w
=\frac{1}{T-1}\sum_{t=1}^{T-1}\xi_t,\qquad
\xi_t\defeq\mathbf{1}(W_{t-1}=w)-\Pr(W_{t-1}=w|\mathcal{F}_{t-2}),
\]
which is an average of a martingale difference sequence with respect
to $\{\mathcal{F}_{t-2}\}$. Since $|\xi_t|\leq1$, applying Freedman's
inequality (Lemma~\ref{free}) with $|\mathcal{A}|=1$, $A=1$, and
$B=\sum_{t}e(w|\mathcal{F}_{t-2})[1-e(w|\mathcal{F}_{t-2})]<T-1$,
\begin{align}\label{conv_weight}
\Pr\!\left(\left|\frac{1}{T-1}\sum_{t=1}^{T-1}\xi_t\right|\geq z\right)
&\leq 2\exp\!\left[\frac{-z^2(T-1)^2}{2(T-1)z+2(T-1)}\right],
\end{align}
which holds for every fixed $z>0$ and tends to zero as $T\to\infty$.
(The first term of \eqref{freedman} vanishes since
$|\xi_t|\leq1-\kappa<1=A$ by Assumption~\ref{bound1}.)
Define the event
\[
\mathcal{E}_T=\left\{\left|\frac{T(w)}{T-1}-\tilde{p}_w\right|\leq\kappa/2\right\}.
\]
By \eqref{conv_weight} with $z=\kappa/2$, $\Pr(\mathcal{E}_T^c)\leq 2\exp(-c(T-1))$
for some $c>0$. On $\mathcal{E}_T$, since $\tilde{p}_w\geq\kappa$ by
Assumption~\ref{bound1},
\begin{equation}\label{kappa_bound}
\left(\frac{T(w)}{T-1}\right)^{-1}
\leq\left(\tilde{p}_w-\left|\frac{T(w)}{T-1}-\tilde{p}_w\right|\right)^{-1}
\leq(\kappa-\kappa/2)^{-1}=\frac{2}{\kappa}.
\end{equation}

 Let
$\xi_{g,t}\defeq\widehat{\mathcal{W}}_t(g|w)-\bar{\mathcal{W}}_t(g|w)$.
By Proposition~\ref{prop:simple_mds}, $\{\xi_{g,t}\}$ is a martingale
difference sequence with respect to $\{\mathcal{F}_{t-1}\}$, i.e.,
$\E(\xi_{g,t}|\mathcal{F}_{t-1})=0$. With
$\mathcal{G}\equiv\{g:\{0,1\}\to\{0,1\}\}$ (a finite set), Assumptions
\ref{bound1} and \ref{s_bounded_y} give a constant $C_A$ with
$\sup_{g\in\mathcal{G}}|\xi_{g,t}|\leq M+M/\kappa<C_A$ for all $t$,
and a constant $C_B$ (depending only on $M,\kappa$) with
$V_g=\sum_{t=1}^{T-1}\E(\xi_{g,t}^2|\mathcal{F}_{t-1})<C_B(T-1)$.
Applying Lemma~\ref{max} with $A=C_A$ and $B=C_B(T-1)$,
\begin{align}\label{e_bound}
\E\!\left(\sup_{g\in\mathcal{G}}\left|\sum_{t=1}^{T-1}\xi_{g,t}\right|\right)
&\lesssim C_A\log(1+|\mathcal{G}|)+\sqrt{C_B(T-1)}\sqrt{\log(1+|\mathcal{G}|)}\nonumber\\
&\lesssim\sqrt{(T-1)\,C_B\log(1+|\mathcal{G}|)}.
\end{align}

 On $\mathcal{E}_T$, combining
\eqref{mds_decomp}, \eqref{kappa_bound}, and \eqref{e_bound},
\[
\E\!\left[\sup_{g\in\mathcal{G}}\left|\widehat{\mathcal{W}}(g|w)-\bar{\mathcal{W}}(g|w)\right|\mathbf{1}_{\mathcal{E}_T}\right]
\leq\frac{2}{\kappa}\frac{1}{T-1}\,\E\!\left(\sup_{g}\left|\sum_{t=1}^{T-1}\xi_{g,t}\right|\right)
\leq\frac{C}{\sqrt{T-1}},
\]
where $C$ depends only on $\kappa$, $|\mathcal{G}|$, and $M$. On the
complement $\mathcal{E}_T^c$, the boundedness
$|\widehat{\mathcal{W}}_t(g|w)|\leq M/\kappa$ and $|\bar{\mathcal{W}}_t(g|w)|\leq M$
give the deterministic bound
$\sup_g|\widehat{\mathcal{W}}(g|w)-\bar{\mathcal{W}}(g|w)|\leq 2M/\kappa$, so
\[
\E\!\left[\sup_{g}\left|\widehat{\mathcal{W}}(g|w)-\bar{\mathcal{W}}(g|w)\right|\mathbf{1}_{\mathcal{E}_T^c}\right]
\leq\frac{2M}{\kappa}\Pr(\mathcal{E}_T^c)
\leq\frac{4M}{\kappa}\exp(-c(T-1)),
\]
which is $o((T-1)^{-1/2})$. Adding the two contributions yields
\[
\E\!\left[
\sup_{g:\{0,1\}\to\{0,1\}}
\left|\widehat{\mathcal{W}}(g|w)-\bar{\mathcal{W}}(g|w)\right|
\right]
\leq\frac{C}{\sqrt{T-1}}.
\]
Consequently, since $\sup_g\E|X_g|\leq\E\sup_g|X_g|$, we have
\[
\sup_{g:\{0,1\}\to\{0,1\}}
\E\!\left|\widehat{\mathcal{W}}(g|w)-\bar{\mathcal{W}}(g|w)\right|
\leq\frac{C}{\sqrt{T-1}},
\]
which is the claimed bound \eqref{simp_rate}.
\end{proof}

\subsection{Higher/infinite Markov order} \label{app_multi_markov}
\subsubsection{The $q$-th ($q > 1$) Markov order}\label{app_finite_markov}
We modify Assumption \ref{ass:toy_example} to:\\
\textbf{Assumption \ref{ass:toy_example}{*}} ($q$-th order Markov
properties). For an integer $q>1$, the time series of potential outcomes
and observable variables satisfy the following conditions:

(i) $q$-th order Markovian exclusion: for $t=q+1,q+2,\dots,T$ and
for arbitrary treatment paths $(w_{0:t-q-1},w_{t-q:t})$ and $(w_{0:t-q-1}^{\prime},w_{t-q:t})$,
where $w_{0:t-q-1}\neq w_{0:t-q-1}^{\prime}$, 
\[
Y_{t}(w_{0:t-q-1},w_{t-q:t})=Y_{t}(w_{0:t-q-1}^{\prime},w_{t-q:t}):=Y_{t}(w_{t-q:t})
\]
 holds with probability one.

(ii) $q$-th order Markovian exogeneity: for $t=q,q+1,\dots,T$ and
any treatment path $w_{0:t}$, 
\[
Y_{t}(w_{0:t})\perp X_{0:t-1}|W_{t-q:t-1},
\]
 and for $t=q,q+1,\dots,T-1$, 
\[
W_{t}\perp X_{0:t-1}|W_{t-q:t-1}.
\]

Under these modified assumptions, the policy function $g$ is now
defined on $\{0,1\}^{q}$, mapping to $\{0,1\}$. For a vector $w\in\{0,1\}^{q}$,
the propensity score can be defined as $e_{t}(w)=\Pr(W_{t}=1|W_{t-q:t-1}=w)$.
Furthermore, we can redefine:{\footnotesize
\begin{align}
\mathcal{W}_{T}(g|w) & =\mathsf{E}\{Y_{T}(W_{T-q:T-1},1)g(W_{T-q:T-1})+Y_{T}(W_{T-q:T-1},0)(1-g(W_{T-q:T-1}))|W_{T-q:T-1}=w\},\nonumber \\
\widehat{\mathcal{W}}(g|w) & =\frac{1}{T(w)}\sum_{q\leq t\leq T-1:W_{t-q:t-1}=w}\left[\frac{Y_{t}W_{t}g(W_{t-q:t-1})}{e_{t}(W_{t-q:t-1})}+\frac{Y_{t}(1-W_{t})\{1-g(W_{t-q:t-1})\}}{1-e_{t}(W_{t-q:t-1})}\right],\nonumber \\
\bar{\mathcal{W}}(g|w) & =\frac{1}{T(w)}\sum_{q\leq t\leq T-1:W_{t-q:t-1}=w}\mathsf{E}\left[ Y_{t}(W_{t-q:t-1},1)g(W_{t-q:t-1})+Y_{t}(W_{t-q:t-1},0)\left[1-g(W_{t-q:t-1})\right]|W_{t-q:t-1}\right] .\label{eq:redef.welfare}
\end{align}
}These welfare functions will also be used in the subsequent Subsection \ref{rem: Inf_order}.

Moreover,
Assumption \ref{unconf} (sequential unconfoundedness) can be modified
to: for any $t=1,2,\dots,T-1$ and $w\in\{0,1\}$, 
\[
Y_{t}(W_{t-q:t-1},w)\perp W_{t}|X_{0:t-1}.
\]
Then, a convergence rate of $\frac{1}{\sqrt{T-q}}$ can be established
by similar arguments used for the proof of Theorem \ref{thm:discrete_bound}
(cf. Appendix \ref{simple_mds_proof}).

\subsubsection{Infinite Markov order} \label{rem: Inf_order}

Under the infinite Markov order, the current variables might depend on historical values that precede the first observations of the sample time series. To accommodate this concept, we change the notation from $X_{0:t}$ to $X_{-\infty:t}$ in this subsection.

If we allow all the historical treatments to affect the current
outcome, we can drop Assumption \ref{ass:toy_example}(i) completely.
For a concise and consistent discussion of the policy function, we maintain the first statement of Assumption \ref{ass:toy_example}(ii) and adjust it to accommodate the case of infinite Markov order:
\begin{equation}
Y_{t}(w_{-\infty:t})\perp X_{-\infty:t-1}|W_{-\infty:t-1}.\label{eq:infinite_A2.1(ii)}
\end{equation}
It means that for simplicity, we continue to exclude the past outcomes,
$Y_{-\infty:t-1}$, from the variables influencing the current outcome
$Y_{t}$. As a result, we can omit $Y_{-\infty:t-1}$ from the arguments
of the policy function $g$, thereby defining $g$ as a function mapping
from $\{0,1\}^{\infty}$ to $\{0,1\}$.

If the Markovian process has an infinite order, a valid empirical
welfare function should take the form of $\widehat{\mathcal{W}}(g|w_{-\infty:T-1})$,
which is infeasible. In practice, the planner can only use a truncated
treatment path to construct the empirical welfare. Let the truncation
be implemented at the period $T-m\in\{0:T-1\}$, and the policy is
given by 
\begin{equation} \label{eq:g_hat_inf}
\hat{g}\in\mbox{argmax}_{g:\{w_{T-m:T-1}\}\to\{0,1\}}\widehat{\mathcal{W}}(g|w_{T-m:T-1}), \end{equation}
i.e., the optimizer is based on the truncated Markov order $m$.

Let us (re)define the optimal policy as $g^*\in\mbox{argmax}_{g:\{w_{-\infty:T-1}\}\to\{0,1\}}\mathcal{W}_T(g|w_{-\infty:T-1}).$ Then, the population-level regret can be decomposed as:
\begin{align}
 & \mathcal{W}_{T}(g^{*}|w_{-\infty:T-1})-\mathcal{W}_{T}(\hat{g}|w_{-\infty:T-1})\nonumber  =\mathcal{W}_{T}(g^{*}|w_{T-m:T-1})-\mathcal{W}_{T}(\hat{g}|w_{T-m:T-1})\nonumber \\
 & +\mathcal{W}_{T}(g^{*}|w_{-\infty:T-1})-\mathcal{W}_{T}(g^{*}|w_{T-m:T-1})+\mathcal{W}_{T}(\hat{g}|w_{T-m:T-1})-\mathcal{W}_{T}(\hat{g}|w_{-\infty:T-1})\nonumber \\
 & \leq\mathcal{W}_{T}(g^{*}|w_{T-m:T-1})-\mathcal{W}_{T}(\hat{g}|w_{T-m:T-1})+2\sup_{g:\{w_{-\infty:T-1}\}\to\{0,1\}}|\mathcal{W}_{T}(g|w_{-\infty:T-1})-\mathcal{W}_{T}(g|w_{T-m:T-1})|\nonumber \\
 & \leq2c\sup_{g:\{w_{T-m:T-1}\}\to\{0,1\}}|\bar{\mathcal{W}}(g|w_{T-m:T-1})-\widehat{\mathcal{W}}(g|w_{T-m:T-1})|+2\cdot\text{w-bias}_{\infty}\left(m\right),\label{eq:infinite_decomp1}
\end{align}
where $\text{w-bias}_{\infty}\left(m\right):=\sup_{g:\{w_{-\infty:T-1}\}\to\{0,1\}}|\mathcal{W}_{T}(g|w_{-\infty:T-1})-\mathcal{W}_{T}(g|w_{T-m:T-1})|$. The last inequality follows by imposing an assumption similar
to Assumption \ref{equiv_W}. For the second term of the last row, we should have $\lim_{m\to\infty}\text{w-bias}_{\infty}\left(m\right)=0$
under mild regularity conditions. Thus, we can focus on the conditions required
for the convergence of $|\bar{\mathcal{W}}(g|w_{T-m:T-1})-\widehat{\mathcal{W}}(g|w_{T-m:T-1})|$.

Note that in the case of infinite Markov order, we must drop Assumption
\ref{ass:toy_example}(i) completely. As a result, $\bar{\mathcal{W}}(g|w_{T-m:T-1})-\widehat{\mathcal{W}}(g|w_{T-m:T-1})$
is no longer an average of an MDS since conditioning on $W_{t-m:t-1}$
is no longer equivalent to conditioning on $\mathcal{F}_{t-1}$ for
any $m<\infty$.

To address this issue, we define, for a treatment path $w_{T-m:T-1}\in\{0,1\}^{m}$,
{\scriptsize
\begin{equation}
\bar{\mathcal{W}}(g|\mathcal{F}_{t-1}) :=\frac{1}{T(w_{T-m:T-1})}\sum_{\substack{m\leq t\leq T-1:\\W_{t-m:t-1}=w_{T-m:T-1}}}\mathsf{E}\left[ Y_{t}(W_{-\infty:t-1},1)g(W_{t-m:t-1})+Y_{t}(W_{-\infty:t-1},0)\left[1-g(W_{t-m:t-1})\right]|\mathcal{F}_{t-1}\right].\label{eq:wel_bar_F_m}
\end{equation}
}Now, we can further decompose the first term of \eqref{eq:infinite_decomp1} into:{\footnotesize
\begin{align}
 & \sup_{g:\{w_{T-m:T-1}\}\to\{0,1\}}|\bar{\mathcal{W}}(g|w_{T-m:T-1})-\widehat{\mathcal{W}}(g|w_{T-m:T-1})|\nonumber \\
 & \leq\sup_{g:\{w_{T-m:T-1}\}\to\{0,1\}}|\bar{\mathcal{W}}(g|\mathcal{F}_{t-1})-\widehat{\mathcal{W}}(g|w_{T-m:T-1})|+\sup_{g:\{w_{T-m:T-1}\}\to\{0,1\}}|\bar{\mathcal{W}}(g|\mathcal{F}_{t-1})-\bar{\mathcal{W}}(g|w_{T-m:T-1})|\nonumber \\
 & =\sup_{g:\{w_{T-m:T-1}\}\to\{0,1\}}|\bar{\mathcal{W}}(g|\mathcal{F}_{t-1})-\widehat{\mathcal{W}}(g|w_{T-m:T-1})|+\overline{\text{w-bias}}_{\infty}\left(m\right),\label{eq:infinite_decomp2}
\end{align}
} where $\overline{\text{w-bias}}_{\infty}\left(m\right):=\sup_{g:\{w_{T-m:T-1}\}\to\{0,1\}}|\bar{\mathcal{W}}(g|\mathcal{F}_{t-1})-\bar{\mathcal{W}}(g|w_{T-m:T-1})|$.
Summarizing (\ref{eq:infinite_decomp1}) and (\ref{eq:infinite_decomp2}),
we obtain
\begin{align*}
 \mathcal{W}_{T}(g^{*}|w_{-\infty:T-1})-\mathcal{W}_{T}(\hat{g}|w_{-\infty:T-1})& \leq 2c\sup_{g:\{w_{m:T-1}\}\to\{0,1\}}|\bar{\mathcal{W}}(g|\mathcal{F}_{t-1})-\widehat{\mathcal{W}}(g|w_{T-m:T-1})|\\
 & +2\cdot \widetilde{\text{w-bias}}_{\infty}\left(m\right),
\end{align*}
where \begin{equation} \label{eq:inf_markov_bias}
\widetilde{\text{w-bias}}_{\infty}\left(m\right):=
 c\cdot\overline{\text{w-bias}}_{\infty}\left(m\right)+\text{w-bias}_{\infty}\left(m\right),
\end{equation}
In the following, we discuss the scenarios under which this bias term converges to zero. Let us strengthen Assumption \ref{unconf} to: for $w\in\{0,1\}$,
\begin{equation}
Y_{t}(W_{-\infty:t-1},w)\perp W_{t}|X_{-\infty:t-1}.\label{eq:infinite_unconf}
\end{equation}
As mentioned above, the first statement of Assumption \ref{ass:toy_example}(ii)
is maintained and modified for simplicity. The second statement of
Assumption \ref{ass:toy_example}(ii) is modified to
\begin{equation}
W_{t}\perp X_{-\infty:t-1}|W_{t-m:t-1}.\label{eq:ps_order_m}
\end{equation}
This requires that conditioning on $W_{t-m:t-1}$, the truncated history,
$X_{-\infty:t-m-1}$, does not have direct influences on $W_{t}$.
In other words, the propensity score has limited overlaps: $e_{t}(\mathcal{F}_{t-1})=\Pr(W_{t}=1|\mathcal{F}_{t-1})=\Pr(W_{t}=1|W_{t-m:t-1})$.


To summarize. First, under conditions (\ref{eq:infinite_unconf}) 
and (\ref{eq:ps_order_m}), the difference $\bar{\mathcal{W}}(g|\mathcal{F}_{t-1})-\widehat{\mathcal{W}}(g|w_{T-m:T-1})$
is an average of an MDS. For a given $m$, it converges at $\frac{1}{\sqrt{T-m}}$
rate. Second, under additional regularity conditions of decay temporal dependence, we have $\lim_{m\to\infty}$ $\text{w-bias}_{\infty}\left(m\right)=0$. 
Third, under the condition (\ref{eq:infinite_A2.1(ii)}), we have
 {\small\begin{align*}
		&\bar{\mathcal{W}}(g|\mathcal{F}_{t-1})=\frac{1}{T(w_{T-m:T-1})} \\
		 &\sum_{\substack{m\leq t\leq T-1:\\W_{t-m:t-1}=w_{T-m:T-1}}}\mathsf{E}\left[ Y_{t}(W_{-\infty:t-1},1)g(W_{t-m:t-1})+Y_{t}(W_{-\infty:t-1},0)\left[1-g(W_{t-m:t-1})\right]|W_{-\infty:t-1}\right].
	\end{align*}}
This equality leads to $\text{plim}_{m\to\infty}\overline{\text{w-bias}}_{\infty}\left(m\right)=0$, given that  $T$, $T(w_{T-m:T-1})$, and $m$ diverge at appropriate rates, as well as some other additional regularity conditions of decay temporal dependence are met. Consequently, $\text{plim}_{m\to\infty}\widetilde{  \text{w-bias}}_{\infty}\left(m\right)=0$.


\subsubsection{Example \ref{Example_inf_Markov} continued} \label{App:exmple_2}

Now we verify whether the model specified in Example 2 satisfies the
adjusted assumptions discussed in Appendix \ref{rem: Inf_order}. The assumptions adjusted for the case of infinite Markov
order are summarized below:
\begin{enumerate}
	
\item The condition \eqref{eq:infinite_unconf} of sequence unconfoundedness, $Y_{t}(W_{-\infty:t-1},w)\perp W_{t}|X_{-\infty:t-1}$, which represents the adjusted Assumption \ref{unconf}.

\item The condition \eqref{eq:ps_order_m}, $W_{t}\perp X_{-\infty:t-1}|W_{t-m:t-1}$, which
represents the second statement of the adjusted Assumption \ref{ass:toy_example}(ii).

\item The modified Assumption \ref{equiv_W} of the invariance of the welfare ordering, $\mathcal{W}_T(g^*|w_{T-m:T-1})- \mathcal{W}_T(g|w_{T-m:T-1}) \leq c\big[\bar{\mathcal{W}}(g^*|w_{T-m:T-1})-\bar{\mathcal{W}}(g|w_{T-m:T-1})\big]$.
\item The condition \eqref{eq:infinite_A2.1(ii)}, $Y_{t}(w_{-\infty:t})\perp X_{-\infty:t-1}|W_{-\infty:t-1}$,
which represents the first statement of the adjusted Assumption \ref{ass:toy_example}(ii).
\end{enumerate}
 Firstly, the condition \eqref{eq:infinite_unconf} is guaranteed by the independence between
$\varepsilon_{t}$ and $W_{t}$, conditional on $X_{-\infty:t-1}$. Secondly, the condition \eqref{eq:ps_order_m} is satisfied for any $m\geq 1$, as $W_{t}$ is an AR(1) process
and its distribution depends solely on $W_{t-1}$, as presented by \eqref{eq:Wt}.

Thirdly, regarding the invariance of welfare ordering, it holds that 
{\footnotesize\begin{align*}
&\mathcal{W}_{T}(g^{*}|w_{T-m:T-1})-\mathcal{W}_{T}(g|w_{T-m:T-1}) \\ &=\mathrm{E}\left[\alpha+\beta_{0}g^{*}+\sum_{i=1}^{\infty}\beta_{i}W_{T-i}+\sum_{i=0}^{\infty}\gamma_{i}\varepsilon_{T-i}|W_{T-m:T-1}=w_{T-m:T-1}\right]\\
 & -\mathrm{E}\left[\alpha+\beta_{0}g+\sum_{i=1}^{\infty}\beta_{i}W_{T-i}+\sum_{i=0}^{\infty}\gamma_{i}\varepsilon_{T-i}|W_{T-m:T-1}=w_{T-m:T-1}\right] =\beta_{0}\left(g^{*}-g\right),
\end{align*}}
 and {\footnotesize
\begin{align*}
 & \bar{\mathcal{W}}(g^{*}|w_{T-m:T-1})-\bar{\mathcal{W}}(g|w_{T-m:T-1})\\
 & =\frac{1}{T(w_{T-m:T-1})}\sum_{m\leq t\leq T-1:W_{t-m:t-1}=w_{T-m:T-1}}\mathrm{E}\left[\alpha+\beta_{0}g^{*}+\sum_{i=1}^{\infty}\beta_{i}W_{t-i}+\sum_{i=0}^{\infty}\gamma_{i}\varepsilon_{t-i}|W_{t-m:t-1}=w_{T-m:T-1}\right]\\
 & -\frac{1}{T(w_{T-m:T-1})}\sum_{m\leq t\leq T-1:W_{t-m:t-1}=w_{T-m:T-1}}\mathsf{\mathrm{E}}\left[\alpha+\beta_{0}g+\sum_{i=1}^{\infty}\beta_{i}W_{t-i}+\sum_{i=0}^{\infty}\gamma_{i}\varepsilon_{t-i}|W_{t-m:t-1}=w_{T-m:T-1}\right]\\
 & =\frac{1}{T(w_{T-m:T-1})}\sum_{m\leq t\leq T-1:W_{t-m:t-1}=w_{T-m:T-1}}\beta_{0}\left(g^{*}-g\right) =\beta_{0}\left(g^{*}-g\right),
\end{align*}}
where the second equality follows from the fact that the sequence $\{Y_{t},W_{t}\}$, which is generated from \eqref{eq:Yt_inf} and \eqref{eq:Wt} to \eqref{eq:Vt_2}, remains stationary.
Consequently, Assumption \ref{equiv_W} holds with $c=1$.

Finally, as mentioned at the beginning of Appendix \ref{rem: Inf_order}, the purpose of the condition \eqref{eq:infinite_A2.1(ii)}, which  excludes the effects of past values of $Y_t$, is to ensure a simple and consistent discussion throughout Section \ref{model_example}. Although it may not be satisfied by the outcome equation \eqref{eq:Yt_inf}, the linear structure of \eqref{eq:Yt_inf} ensures that under additional assumptions, our  T-EWM approach for the case of infinite Markov order, as presented in Appendix \ref{rem: Inf_order}, remains valid. 

 To see this point, note that in Appendix \ref{rem: Inf_order}, the condition \eqref{eq:infinite_A2.1(ii)} is only used for showing
$\text{plim}_{m\to\infty}\overline{\text{w-bias}}_{\infty}\left(m\right)=0$.
Recall that $\overline{\text{w-bias}}_{\infty}\left(m\right):=\sup_{g:\{w_{T-m:T-1}\}\to\{0,1\}}|\bar{\mathcal{W}}(g;m|\mathcal{F}_{t-1})-\bar{\mathcal{W}}(g|w_{T-m:T-1})|$.
In Example 2, this difference becomes 
{\footnotesize\begin{align*}
	& \bar{\mathcal{W}}(g;m|\mathcal{F}_{t-1})-\bar{\mathcal{W}}(g|w_{T-m:T-1})\\
	& =\frac{1}{T(w_{T-m:T-1})}\sum_{m\leq t\leq T-1:W_{t-m:t-1}=w_{T-m:T-1}}\mathsf{E}\left[\alpha+\beta_{0}g+\sum_{i=1}^{\infty}\beta_{i}W_{t-i}+\sum_{i=0}^{\infty}\gamma_{i}\varepsilon_{t-i}|\mathcal{F}_{t-1}\right]\\
	& -\frac{1}{T(w_{T-m:T-1})}\sum_{m\leq t\leq T-1:W_{t-m:t-1}=w_{T-m:T-1}}\mathsf{E}\left[\alpha+\beta_{0}g+\sum_{i=1}^{\infty}\beta_{i}W_{t-i}+\sum_{i=0}^{\infty}\gamma_{i}\varepsilon_{t-i}|W_{t-m:t-1}=w_{T-m:T-1}\right]\\
	& =\frac{1}{T(w_{T-m:T-1})}\sum_{m\leq t\leq T-1:W_{t-m:t-1}=w_{T-m:T-1}}\left[\sum_{i=m+1}^{\infty}\beta_{i}W_{t-i}+\sum_{i=1}^{\infty}\gamma_{i}\varepsilon_{t-i}\right]\\
	& -\frac{1}{T(w_{T-m:T-1})}\sum_{m\leq t\leq T-1:W_{t-m:t-1}=w_{T-m:T-1}}\mathsf{E}\left(\sum_{i=m+1}^{\infty}\beta_{i}W_{t-i}+\sum_{i=1}^{\infty}\gamma_{i}\varepsilon_{t-i}|W_{t-m:t-1}=w_{T-m:T-1}\right)\\
	& =\frac{1}{T(w_{T-m:T-1})}\sum_{m\leq t\leq T-1:W_{t-m:t-1}=w_{T-m:T-1}}\left[\sum_{i=m+1}^{\infty}\beta_{i}W_{t-i}-\mathsf{E}\left(\sum_{i=m+1}^{\infty}\beta_{i}W_{t-i}|W_{t-m:t-1}=w_{T-m:T-1}\right)\right]\\
	& +\frac{1}{T(w_{T-m:T-1})}\sum_{m\leq t\leq T-1:W_{t-m:t-1}=w_{T-m:T-1}}\left[\sum_{i=1}^{\infty}\gamma_{i}\varepsilon_{t-i}-\mathsf{E}\left(\sum_{i=1}^{\infty}\gamma_{i}\varepsilon_{t-i}|W_{t-m:t-1}=w_{T-m:T-1}\right)\right]\\
	& =:I_{m}+II_{m},
\end{align*}}
where the first term of the second equality follows from that the sum is over only those observations that satisfy $W_{t-m:t-1}=w_{T-m:T-1}$, and thus the path $W_{t-m:t-1}$ is fixed on $w_{T-m:T-1}$ within both $\bar{\mathcal{W}}(g;m|\mathcal{F}_{t-1})$ and $\bar{\mathcal{W}}(g|w_{T-m:T-1})$. 

Given that $\sum_{i=1}^{\infty}\left|\beta_{i}\right|<\infty$
holds and  $T$, $T(w_{T-m:T-1})$, and $m$ diverge at appropriate rates, we shall
have $\text{plim}_{m\to\infty}I_{m}=0$. For $II_{m}$, note that
it is an average of an MA($\infty$) process, centered around
its conditional mean given $W_{t-m:t-1}=w_{T-m:T-1}$. This should
converge to zero if 
$\max_{l\geq 1} \sum_{i\geq l }|\gamma_i|\leq k\cdot l^{-\alpha}$ holds for some positive $k$ and $\alpha$,
and the noise $\varepsilon_{t}$ is i.i.d.\ with zero mean and has a finite second
moment. 

\subsection{Proof of Theorem \ref{thm: kernel}} \label{max_kernel}
For simplicity, we maintain Assumption \ref{ass:continuous_Markov}, and one of its implications is equation \eqref{equiv_markov}.

In addition to \eqref{eq:sample_kernel} and \eqref{eq:kernel_bar}, we define
{\small \beaa
	\widetilde{\mathcal{W}}_{h}(g, x) 
	& =& \frac{\sum_{t=1}^{T-1}K_h(X_{t-1},x)\E\big[\frac{Y_{t}W_{t}}{e_{t}(X_{t-1})}g(X_{t-1})+\frac{Y_{t}(1-W_{t})}{1-e_{t}(X_{t-1})}[1-g(X_{t-1})]|\mathcal{F}_{t-1}\big]}{\sum_{t=1}^{T-1}K_h(X_{t-1},x)}\\
	&=&\frac{\sum_{t=1}^{T-1}K_h(X_{t-1},x)\mathcal{W}_t(g|\mathcal{F}_{t-1})}{\sum_{t=1}^{T-1}K_h(X_{t-1},x)},
\eeaa}
where $K(\cdot)$ is a bounded kernel with a bounded support, $K_h(a,b):=\frac{1}{h}K(\frac{a-b}{h})$. The second equality follows from Assumption \ref{unconf_c}.

Our strategy is to show for any $x\in \mathcal{X}$ and any $g:\mathcal{X}\to \{0,1\}$, such that
\begin{align*}
	&\mathcal{W}_{T}(g|x)-\mathcal{W}_{T}(\hat{g}_x|x)  \\
	&\leq c\big[\bar{\mathcal{W}}_{h}(g|x)-\bar{\mathcal{W}}_{h}(\hat{g}_x|x)\big]\\
	& \leq c\big[	\widetilde{\mathcal{W}}_{h}(g,x)-\widetilde{\mathcal{W}}_{h}(\hat{g}_x,x)\big]+O_{p}(h^{2})+O_{p}(c_{w}^{-1}(\sqrt{(T-1)h})^{-1})\\
	& \leq \sup_{g:\mathcal{X} \to \{0,1\}} 2c|	\widetilde{\mathcal{W}}_{h}(g,x)-\widehat{\mathcal{W}}(g|x) |+O_{p}(h^{2})+O_{p}(c_{w}^{-1}(\sqrt{(T-1)h})^{-1})\\
	&=O_{p}(h^{2})+O_{p}(c_{w}^{-1}(\sqrt{(T-1)h})^{-1}),  
\end{align*}
where the first inequality follows from Assumption \ref{equiv_W_c}. The second inequality follows from Lemma \ref{lem:kernel_bias} below. The third inequality follows from similar arguments to \eqref{ineq_s}. The last equality follows from Lemma \ref{lem:kernel_variance} stated below.

\begin{assumption}[Kernel]\label{kernel}
$K(\cdot)$ is a symmetric bounded kernel function with bounded support and
$\int K(v)\,dv = 1$. Let $K_h(u,x):=h^{-1}K((u-x)/h)$. There exist
constants $c_w, c_m, c'_m > 0$ such that for all $x\in\mathcal{X}$,
all $t$, and almost surely:
\begin{itemize}
\item[(i)] $\E[K_h(X_{t-1},x)\mid\mathcal{F}_{t-2}] \geq c_w$;
\item[(ii)] $\E[K_h(X_{t-1},x)^2\mid\mathcal{F}_{t-2}] \leq c_m\, h^{-1}$;
\item[(iii)] $\E[K_h(X_{t-1},x)^4\mid\mathcal{F}_{t-2}] \leq c'_m\, h^{-3}$.
\end{itemize}
The bandwidth satisfies $h\to 0$ and $(T-1)h\to\infty$ as $T\to\infty$.
\end{assumption}

\begin{lemma} \label{lem:kernel_bias}
Under Assumption \ref{kernel}, for any  $g:\mathcal{X}\to \{0,1\}$ and $x \in \mathcal{X}$, 
\[
\widetilde{\mathcal{W}}_{h}(g, x) -\bar{\mathcal{W}}_{h}(g|x)=O_{p}(h^{2})+O_{p}\left(c_{w}^{-1}(\sqrt{(T-1)h})^{-1}\right).
\]
\end{lemma}
\begin{proof}
Under Assumption \ref{kernel}, 
\begin{align*}
	&	\bigg[\widetilde{\mathcal{W}}_{h}(g, x) -\bar{\mathcal{W}}_{h}(g|x)\bigg]\bigg[\frac{1}{T-1}\sum_{t=1}^{T-1}K_h(X_{t-1},x)\bigg]  \\ &=\frac{1}{T-1}\sum_{t=1}^{T-1}K_h(X_{t-1},x)(\mathcal{W}_t(g|X_{t-1})-\mathcal{W}_t(g|x))\\
	& =\frac{1}{T-1}\sum_{t=1}^{T-1}\bigg\{K_h(X_{t-1},x)(\mathcal{W}_t(g|X_{t-1})-\mathcal{W}_t(g|x))\\
	&-\E[K_h(X_{t-1},x)(\mathcal{W}_t(g|X_{t-1})-\mathcal{W}_t(g|x))|\mathcal{F}_{t-2}]\bigg\}\\
	& +\frac{1}{T-1}\sum_{t=1}^{T-1}\bigg\{\E[K_h(X_{t-1},x)(\mathcal{W}_t(g|X_{t-1})-\mathcal{W}_t(g|x))|\mathcal{F}_{t-2}]\bigg\}.
	\end{align*}
Rearranging the equation, we have
\beaa
\widetilde{\mathcal{W}}_{h}(g, x) -\bar{\mathcal{W}}_{h}(g|x)=O_{p}(c_{w}^{-1}(\sqrt{(T-1)h})^{-1})+O_{p}(h^{2}),
\eeaa
where the first term on the right-hand side follows from similar arguments in  the proof for Theorem \ref{thm:mds_bound} and the second term follows from the standard result concerning the bias of the kernel estimator.
\end{proof}

\begin{lemma} \label{lem:kernel_variance}
Suppose $(T-1)h\to\infty$. Under Assumption \ref{kernel} and the
assumption that $\mathcal{G}$ has finite VC dimension $v_{\mathcal{G}}$,
for each fixed $x\in\mathcal{X}$,
\begin{equation}
\sup_{g\in\mathcal{G}}
\left|\widehat{\mathcal{W}}(g|x) - \widetilde{\mathcal{W}}_h(g, x)\right|
= O_p\!\left(c_w^{-1}\sqrt{\frac{v_{\mathcal{G}}}{(T-1)h}}\right).
\end{equation}
\end{lemma}
\begin{proof}
Fix $x\in\mathcal{X}$. We have
\[
\widehat{\mathcal{W}}(g|x) - \widetilde{\mathcal{W}}_h(g,x)
=\frac{\sum_{t=1}^{T-1} K_h(X_{t-1},x)\,\xi_t(g)}{\sum_{t=1}^{T-1} K_h(X_{t-1},x)},
\qquad
\xi_t(g):=\widehat{\mathcal{W}}_t(g)-\mathcal{W}_t(g|\mathcal{F}_{t-1}),
\]
and bound the numerator and denominator separately.

Define $Q_T(x):=\sum_{t=1}^{T-1}K_h(X_{t-1},x)^2$. The increments
$K_h(X_{t-1},x)^2-\E[K_h(X_{t-1},x)^2|\mathcal{F}_{t-2}]$ form a
martingale difference sequence bounded by $h^{-2}\|K\|_\infty^2$ with
conditional second moment $\lesssim h^{-3}$ (since $K$ is bounded with
bounded support). Since $x$ is fixed, Lemma~\ref{max} gives
\[
\left|Q_T(x)-\sum_{t=1}^{T-1}\E[K_h(X_{t-1},x)^2|\mathcal{F}_{t-2}]\right|
=O_p\!\left(h^{-2}+\sqrt{(T-1)h^{-3}}\right)=o_p\!\left((T-1)/h\right),
\]
using $(T-1)h\to\infty$, which makes both $h^{-2}$ and
$\sqrt{(T-1)h^{-3}}$ negligible relative to $(T-1)/h$. Since
$\sum_t\E[K_h(X_{t-1},x)^2|\mathcal{F}_{t-2}]\leq (T-1)c_m h^{-1}$
(Assumption~\ref{kernel}), the event
$\mathcal{A}_T:=\{Q_T(x)\leq 2c_m(T-1)h^{-1}\}$ satisfies
$\Pr(\mathcal{A}_T)\to1$.

For fixed $g$,
$\zeta_t(g):=K_h(X_{t-1},x)\,\xi_t(g)$ is a martingale difference
sequence with respect to $\{\mathcal{F}_{t-1}\}$, with increments
$|\zeta_t(g)|\leq h^{-1}\|K\|_\infty(M/\kappa)\lesssim h^{-1}$ and
predictable quadratic variation
$\sum_t\E[\zeta_t(g)^2|\mathcal{F}_{t-1}]\leq C_\xi\, Q_T(x)$,
where $C_\xi$ bounds $\E[\xi_t(g)^2|\mathcal{F}_{t-1}]$ under
Assumptions \ref{ass bounded y} and \ref{bound_c}. On
$\mathcal{A}_T$ the predictable quadratic variation is at most
$B:=2C_\xi c_m(T-1)h^{-1}$. Applying Freedman's inequality
(Lemma~\ref{free}) on $\mathcal{A}_T$ and adding $\Pr(\mathcal{A}_T^c)=o(1)$
gives the fixed-$g$ bound with increment bound $A\asymp h^{-1}$ and
$B\asymp (T-1)h^{-1}$. Since $\mathcal{G}$ has finite VC dimension
$v_{\mathcal{G}}$, a standard VC-type maximal inequality for martingale
arrays then yields the uniform bound over $g$,
\[
\sup_{g\in\mathcal{G}}
\left|\sum_{t=1}^{T-1} K_h(X_{t-1},x)\,\xi_t(g)\right|
=O_p\!\left(h^{-1}v_{\mathcal{G}}+\sqrt{\frac{(T-1)\,v_{\mathcal{G}}}{h}}\right)
=O_p\!\left(\sqrt{\frac{(T-1)\,v_{\mathcal{G}}}{h}}\right),
\]
where the second equality uses $(T-1)h\to\infty$, which makes the linear
term $h^{-1}v_{\mathcal{G}}$ negligible relative to
$\sqrt{(T-1)v_{\mathcal{G}}/h}$.

The summands
$\eta_t(x):=K_h(X_{t-1},x)-\E(K_h(X_{t-1},x)|\mathcal{F}_{t-2})$
form a martingale difference sequence with $|\eta_t(x)|\leq 2h^{-1}\|K\|_\infty$
and $\E[\eta_t(x)^2|\mathcal{F}_{t-2}]\leq h^{-1}c_m$. Since $x$ is fixed,
Lemma~\ref{max} with $|\mathcal{A}|=1$ gives
\[
\left|(T-1)^{-1}\sum_{t=1}^{T-1}\eta_t(x)\right|
=O_p\!\left(\frac{1}{(T-1)h}+\frac{1}{\sqrt{(T-1)h}}\right)=o_p(1),
\]
under $(T-1)h\to\infty$. By Assumption~\ref{kernel},
$\E(K_h(X_{t-1},x)|\mathcal{F}_{t-2})\geq c_w$ uniformly in $t$, so
\[
(T-1)^{-1}\sum_{t=1}^{T-1} K_h(X_{t-1},x)\geq c_w/2
\]
with probability approaching one.

On the intersection of $\mathcal{A}_T$ and
the denominator event (which has probability approaching one), dividing
the numerator bound by the denominator's lower bound,
\[
\sup_{g\in\mathcal{G}}
\left|\widehat{\mathcal{W}}(g|x) - \widetilde{\mathcal{W}}_h(g,x)\right|
=O_p\!\left(\frac{(T-1)^{-1}\sqrt{(T-1)h^{-1}v_{\mathcal{G}}}}{c_w/2}\right)
=O_p\!\left(c_w^{-1}\sqrt{\frac{v_{\mathcal{G}}}{(T-1)h}}\right),
\]
which is the claimed bound.
\end{proof}


\subsection{Justifying Assumption \ref{cover1}} \label{justify_entropy}
Here we interpret the entropy condition in Assumption \ref{cover1} in a more general way. We follow the argument of Chapter 11 in \cite{kosorok2008introduction} for the functional class related to non-i.i.d. data.

First, we illustrate the case that the stochastic process, $$\{D_{t}\}_{t=-\infty}^{\infty}\overset{\text{def}}{=}\{Y_{t},W_{t},X_{t-1}\}_{t=-\infty}^{\infty}, $$
is assumed to be stationary. In this case, we have for all $t\in\{1,2,\dots,n\}$,
\begin{align*}
	h_{t}(\cdot;G)  =h(\cdot;G)
	\mathcal{\text{ and }H}  =\{h(\cdot;G)=\widehat{\mathcal{W}}_{t}(G)-\bar{\mathcal{W}}_{t}(G):G\in\mathcal{G}\}.
\end{align*}
For an $n$-dimensional non-negative vector $\alpha_{n}=\{\alpha_{n,1},\alpha_{n,2},\dots,\alpha_{n,n}\}$,
define $Q_{\alpha_{n}}$ as a discrete measure with
probability mass $\frac{\alpha_{n,t}}{\sum_{t=1}^{n}\alpha_{n,t}}$ on the value
$D_{t}$. Recall for a function $f$, its $L_{r}(Q)-$norm is denoted
by $||f||_{Q,r}\overset{\text{def}}{=}\left[\int_{v}|f(v)|^{r}dQ(v)\right]^{1/r}$.
Thus, given a sample $\{D_{t}\}_{t=1}^{n}$ and $h(\cdot;G)$, we have $||h(\cdot;G)||_{Q_{\alpha_{n}},2}=\left[\frac{1}{\sum_{t=1}^{n}\alpha_{n,t}}\sum_{t}\alpha_{n,t}h(D_{t};G)^{2}\right]^{1/2}.$

In the stationary case, we define the \emph{restricted} function class $\mathbf{H_{n}}=\{h_{1:n}:h_{1:n}\in\mathcal{\mathcal{H}}\times\mathcal{\mathcal{H}}\times\dots\times\mathcal{\mathcal{H}},h_{1}=h_{2}=\dots=h_{n}\}$
and the envelope function $\overline{H}{}_{n}=(H_{1},H_{2},\cdots,H_{n})^{\prime}$.
Furthermore, let $\mathcal{Q}$ denote the class of all discrete probability measures
on the domain of the random vector $D_{t}=\{Y_{t},W_{t},X_{t-1}\}$.
Recall for an $n$-dimensional vector $v=\left\{ v_{1},\dots,v_{n}\right\} $,
its $l_{2}-$norm is denoted by $|v|_{2}\overset{\text{def}}{=}\left(\sum_{i=1}^{n}v_{i}^{2}\right)^{1/2}$.
Then, for any $\alpha_{n}\in\mathbb{R}_{+}^{n}$ and $\tilde{\alpha}_{n,t}=\frac{\sqrt{\alpha_{n,t}}}{\sqrt{\sum_{t}\alpha_{n,t}}}$, we have
\[
\mathcal{N}(\delta|\tilde{\alpha}_{n}\circ\overline{H}{}_{n}|_{2},\tilde{\alpha}_{n}\circ\mathbf{H}_{n},|.|_{2})=\mathcal{N}(\delta||H||_{Q_{\alpha_{n}},2},\mathcal{H},||.||_{Q_{\alpha_{n}},2})\leq\sup_{Q\in\mathcal{Q}}\mathcal{N}(\delta||H||_{Q,2},\mathcal{H},||.||_{Q,2}).
\]

In light of this relationship in the stationary case, we generalize the setup. Let $\mathcal{K}$ be a subset of $\{1,\cdots,n\}$, and its dimension is $K=|\mathcal{K}|$. Let $\alpha_{n,K}$ denote the $K$-dimensional sub-vector of $\alpha_n$ corresponding
to the index set $\mathcal{K}.$ Recall that $H_{t}$ denotes the envelope function of $\mathcal{\mathcal{H}}_{t}$, 
the functional class corresponding to $\{h_{t}(.,G),G\in\mathcal{G}\}$, and $\mathbf{H_{n}}=\mathcal{\mathcal{H}}_{1}\times \mathcal{\mathcal{H}}_{2}\times \dots \times \mathcal{H}_{n}.$ 
Then, for the subset $\mathcal{K}$, we similarly define $\mathbf{H_{n,K}}$ as the corresponding functional
class, and the vector of its envelope functions as $\overline{H}{}_{n,K}$.

Assumption~\ref{cover1} should be viewed as a high-level restriction
that rules out an accumulation of complexity across time. To
illustrate when such a restriction is plausible, consider the
following structural condition: for any fixed $K$, the covering number
of the product class can effectively be reduced to its $K$-dimensional
components, i.e.,
\begin{align*}
	\mathcal{N}(\delta|\tilde{\alpha}_{n}\circ\overline{H}{}_{n}|_{2},\mathbf{\tilde{\alpha}_{n}\circ H_{n},}|.|_{2}) & \leq\max_{\mathcal{K}\subseteq\{1:n\}}\mathcal{N}(\delta|\tilde{\alpha}_{n,K}\circ\overline{H}{}_{n,K}|_{2},\tilde{\alpha}_{n,K}\mathbf{\circ H_{n,K},}|.|_{2}),\\
	& \leq \max_{\mathcal{K}\subseteq \{1:n\}}\prod_{t\in\mathcal{K}}\mathcal{N}(\delta|\tilde{\alpha}_{n,t}\cdot {H}_{t}|_{2},\tilde{\alpha}_{n,t}\cdot \mathcal{\mathcal{H}}_{t},|.|_{2}),\\
	& \leq\sup_{Q\in\mathcal{Q}}\max_{t\in\mathcal{K}}\max_{\mathcal{K}\subseteq\{1:n\}}\mathcal{N}(\delta\|H_{t}\|_{Q,2},\mathcal{\mathcal{H}}_{t},\|.\|_{Q,2})^{K}.
\end{align*}

This structural condition is itself a restriction rather than a
universal property: in general, the covering number of a
product/triangular array class grows with $n$ or with $|\mathcal{K}|$,
and reducing it to a one-dimensional bound is a strong assumption that
requires additional structure on the dependence.
The special case $|\mathcal{K}|=1$ in particular is illustrative
rather than automatic: under this case, the first inequality of
\eqref{cover} in Assumption \ref{cover1} corresponds to verifying the
one-dimensional covering number $\mathcal{N}(\delta\|H_{t}\|_{Q,2},\mathcal{\mathcal{H}}_{t},\|.\|_{Q,2})$,
which we examine below for a specific class.

To provide an example for the second inequality of \eqref{cover}, we let $\IF(X_{t-1} \in G)=\IF( X_{t-1}^{\top}\theta \leq 0)$ for some $\theta \in \Theta$, where $\Theta$ is a compact set in $\mathbb{R}^d$. Without loss of generality, we assume that $\Theta$ is the unit ball in $\mathbb{R}^d$, i.e., for all $\theta\in \Theta$: $|\theta |_2\leq 1$.
For $w=0$ and 1, define $S_{t,w} \defeq Y_t(w)\IF(W_t = w)/\Pr(W_t=w|X_{t-1})$, where $Y_t(w)$ is the abbreviation of $Y_t(W_{t-1},w)$. Under Assumption \ref{ass bounded y}, $|Y_t(1)|, |Y_t(0)|\leq M/2$, so we have $\|H_t\|_{Q,r} \leq M+ M/\kappa \overset{\text{def.}}{=} M'$, and
\begin{eqnarray*}
	h^{\theta}_{ t}& = &\E(Y_t(1)\IF(X_{t-1}^{\top}\theta \leq 0)|X_{t-1})+ \E(Y_t(0)\IF(X_{t-1}^{\top}\theta >0)|X_{t-1})\\&&- S_{t,1}\IF(X_{t-1}^{\top}\theta \leq 0) -  S_{t,0}\IF(X_{t-1}^{\top}\theta > 0).
\end{eqnarray*}
The corresponding functional class can be written as
\begin{eqnarray*}
	\mathcal{H}_t = \{h: (y_t,w_t, x_{t-1}) \to f_{1,1}^{\theta} + f_{1,0}^{\theta} + f_{0,1}^{\theta}+f_{0,0}^{\theta}, \theta \in \Theta\},
\end{eqnarray*}
where $f_{1,1}^{\theta}$ (resp. $f_{1,0}^{\theta}, f_{0,1}^{\theta}, $ and $f_{0,0}^{\theta} $) corresponds to $\E(Y_t(1)\IF(X_{t-1}^{\top}\theta \leq 0)|X_{t-1})$ (resp.\\ $\E(Y_t(0)\IF(X_{t-1}^{\top}\theta>0)|X_{t-1}),-S_{t,1}\IF(X_{t-1}^{\top}\theta \leq 0)$, and $- S_{t,0}\IF(X_{t-1}^{\top}\theta > 0)$). Let the corresponding functional class be denoted by $\mathcal{F}_{1,1}$ (resp. $\mathcal{F}_{1,0},\mathcal{F}_{0,1},$ and $\mathcal{F}_{0,0})$. For all finitely discrete norms $Q$ and any positive $\vps$,
we know that
\begin{equation}
	\small
	\sup_Q\mathcal{N}(\vps, \mathcal{H}_t, \|.\|_{Q,r}) \leq \sup_Q\mathcal{N}(\vps/4, \mathcal{F}_{1,1}, \|.\|_{Q,r})\mathcal{N}(\vps/4, \mathcal{F}_{1,0}, \|.\|_{Q,r})\mathcal{N}(\vps/4, \mathcal{F}_{0,1}, \|.\|_{Q,r})\mathcal{N}(\vps/4, \mathcal{F}_{0,0},\|.\|_{Q,r}).
\end{equation}
We look at the covering number of the respective functional class. According to Lemma 9.8 of \cite{kosorok2008introduction}, the subgraph of the function $\IF (X_{t-1}^{\top}\theta \leq  0)$ is of VC dimension less than $d+2$ since the class $\{x \in \mathbb{R}^d, x^{\top}\theta \leq  0,\theta \in \Theta\}$ is of VC dimension less than $d+2$ (see the proof of Lemma 9.6 of \cite{kosorok2008introduction}). Therefore, we have $\sup_Q\mathcal{N}(\vps/4, \mathcal{F}_{0,1},\|.\|_{Q,r}) \vee \mathcal{N}(\vps/4, \mathcal{F}_{0,0}, \|.\|_{Q,r}) \lesssim (4/(\vps M'))^{d+2}$.

Moreover, we impose the following {Lipschitz} condition on functions $f^{\theta}_{1,1}$ and $f^{\theta}_{1,0}$: For any distinct points $\theta, \theta' \in \Theta$ and a positive constant $M_d$, it holds that $||f^{\theta}_{1,1}-f^{\theta'}_{1,1}||_{Q,r}\leq M_d |\theta - \theta'|_r$ (a similar equality holds for $f^{\theta}_{1,0})$. Then, it falls within the type II class defined in \cite{andrews1994empirical}, so according to the derivation of (A.2) in \cite{andrews1994empirical} we have
\begin{equation}
	\sup_Q\mathcal{N}(\vps M', \mathcal{F}_{1,1}, \|.\|_{Q,r})\leq \sup_{Q} \mathcal{N}(\vps M'/M_d, \Theta, \|.\|_{Q,r}),
\end{equation}
where the latter is the covering number of an Euclidean ball under the norm $\|.\|_{Q,r}$. Thus, according to Equation (5.9) in \cite{wainwright2019high}, $\sup_{Q} \mathcal{N}(\vps M'/M_d, \Theta, \|.\|_{Q,r}) \lesssim (1+ 2M_d/\vps M')^{d}$. Combining the above results, we have $\sup_Q \mathcal{N} (\varepsilon M', \mathcal{H}_t, \|.\|_{Q,r})\lesssim (4/(\vps M'))^{2(d+2)} (1+ 2M_d/(\vps M'))^{2d}$. Finally, with some rearrangement and redefinition of constant terms, we can obtain \eqref{cover}.

Beyond this specific example, verifying Assumption~\ref{cover1} in
more general non-stationary or dependent settings would typically
require either explicit dependence-adjusted entropy bounds  or restrictions on the temporal complexity
of $\mathbf{H}_n$. We therefore impose Assumption~\ref{cover1} as a
primitive structural condition.

\subsection{Proof of Theorem \ref{thm:mds_bound} } \label{proof_MDS_t}
We first show the following lemma. 

\begin{lemma} \label{bound}
Under Assumptions \ref{ass:continuous_Markov} to \ref{ass bounded y}, and \ref{cover1} to \ref{norm}, we have
\beqq \E [|\E{_n} h|_{\mathbf{H_{n}}}]\lesssim M \sqrt{v/n}. \eeqq
\end{lemma}
It shall be noted that the result of the lemma above is of the maximal inequality type and has a standard $\sqrt{n}^{-1}$ rate. The complexity of the function class $v$ also plays a role. This is in line with other results in the literature, such as \cite{kitagawa2018should}.
\begin{proof}
	 $h_t$ denotes a function belonging to the functional class $\mathcal{H}_t$, and  $h = \left\{ h_{1},h_{2},\dots,h_{n}\right\}$. $J_k$ is a cover of the functional class $\mathbf{H_n}$ with radius $2^{-k}M$ with respect to the $\rho_{2,n}(.)$ norm, and $k = 1,\cdots, \overline{K}$. We set $2^{-\overline{K}}\asymp \sqrt{n}^{-1}$, then $\overline{K} \asymp \log(n)$. Recall that $M$ is the constant defined in Assumption \ref{ass bounded y}, which implies $\max_t|h_t|\leq M$. 
	 We define $h^* = \argmax_{h\in \mathbf{H_n}} \E_n h_{.}$. Let $h^{(k)} = \min_{h \in J_k}\rho_{2,n}(h, h^*)$ and $h^{(0)} = (0,\cdots, 0)\in \mathbb{R}^n$, then $\rho_{2,n}(h^{(k)}, h^*) \leq 2^{-k}M$ holds by the definition of $J_k$, and
	\begin{equation} \label{eq: net_width}
		\rho_{2,n}(h^{(k-1)}, h^{(k)})\leq \rho_{2,n}(h^{(k-1)}, h^*) + \rho_{2,n}(h^{(k)}, h^*) \leq 3\cdot 2^{-k}M.
	\end{equation}
	By a standard chaining argument, we express any partial sum of  $h \in \mathbf{H_n}$ as a telescoping sum,
	\begin{equation}
		\sum^n_{t=1} h_t  \leq  |\sum^n_{t=1}h^{(0)}_{t}| + |\sum^{\overline{K}}_{k=1} \sum^n_{t=1}(h_t^{(k)} - h_t^{(k-1)})|+|\sum^n_{t=1}(h_t^{(\overline{K})} - h_t^{*})|.
	\end{equation}
	The inequality $|\sum_t a_t|\leq \sum_t |a_t|\leq |\sum_t a_t^2|^{1/2}\sqrt{n}$ can be applied to the third term. Notice that, by the definition of the $h^{(\overline{K})}$,
	\begin{equation}
		|\sum^n_{t=1}(h_t^{(\overline{K})} - h_t^{*})| \leq |(\sum^n_{t=1}(h_t^{(\overline{K})} - h_t^{*})^{2})^{1/2}|\sqrt{n}\leq {n}2^{-\overline{K}}M.
	\end{equation}
	Thus,
	\begin{eqnarray} \label{chaining}
		 \E(|\E{_n} h|_{\mathbf{H_{n}}}) \leq \sum_k^{\overline{K}-1}\E \max_{f \in J_k , g\in J_{k-1}, \rho_{2,n}(f,g) \leq 3\cdot 2^{-k}\cdot M} |\E{_{n}}(f-g)|+  2^{-\overline{K}}M.
	\end{eqnarray}
Apply Lemma \ref{max} and Assumption \ref{norm} to \eqref{chaining}. The maximal inequality \eqref{Maximal_Freedman} of Lemma \ref{max} is reproduced here:
\begin{equation*}
	\E(\max_{a \in \mathcal{A}} |M_a|)  \lesssim A \log(1+|\mathcal{A}|)+\sqrt{B}\sqrt{\log(1+|\mathcal{A}|)},
\end{equation*}
where for the first term of \eqref{chaining}, we have $\mathcal{A}=\{f-g:f \in J_k , g\in J_{k-1}, \rho_{2,n}(f,g) \leq 3\cdot 2^{-k}\cdot M\}$,  $|\mathcal{A}|=|J_{k}||J_{k-1}| \leq 2 \mathcal{N}^2(2^{-k}M, \mathbf{H_n}, \rho_{2,n}(.))\lesssim_p  2\max_t \sup_Q\mathcal{N}^2(2^{-k}M, \mathcal{H}_t, \|.\|_{Q,2})$, and $A\leq 3M$.
$B$ in \eqref{Maximal_Freedman} is an upper bound of the sum of conditional variances of an MDS. By Assumption \ref{norm}, we have  $B=  \sum_t \E[(f_t-g_t)^2| \mathcal{F}_{t-1}]\leq nL^2 \rho_{2,n}(f,g)^2  \leq  nL^2 (3\cdot 2^{-k}M)^2$ for any pair $(f,g)$ satisfying $f-g\in \mathcal{A}$.
	
	Therefore,  by  Lemma \ref{max}, 
	\begin{eqnarray*}
	n\E(|\E{_n} h|_{\mathbf{H_{n}}}) 
		&& \lesssim \sum^{\overline{K}}_{k=1} (L*3* 2^{-k} M \sqrt{n}) \sqrt{\log (1+ 2\max_t\sup_Q\mathcal{N}^2(2^{-k}M, \mathcal{H}_t, \|.\|_{Q,2}))}\\
		&& + 3* M \sum^{\overline{K}}_{k=1}\log (1+ 2 * \max_t\sup_Q\mathcal{N}^2(2^{-k}M, \mathcal{H}_t, \|.\|_{Q,2}))+o_p(\sqrt{n})\\
		&&\lesssim 6\sqrt{n}\int^1_{0} M \sqrt{\log ( 2^{1/2} \max_t\sup_Q\mathcal{N}(2^{-k}M, \mathcal{H}_t, \|.\|_{Q,2}))}d\varepsilon.
	\end{eqnarray*}
		By Assumption \ref{cover1}, $\max_t \log \sup_Q\mathcal{N}(\varepsilon M, \mathcal{H}_t,\|.\|_{Q,2}) \lesssim  \log(K)+ \log (v +1)+ (v+1)(\log4+ 1)+(crv)\log(\frac{2}{\varepsilon M})$. Thus, the integral in the last row is finite by a standard argument for bracketing numbers (see, e.g., the comment following Theorem 19.4 in \citeauthor{van2000asymptotic}, \citeyear{van2000asymptotic}). Then, we have
	$\E(|\E{_n} h|_{\mathbf{H_{n}}}) \lesssim M \sqrt{v/n}.$
\end{proof}
The next lemma concerns the tail probability bound. It states that, under certain regularity conditions,  $|\E{_n} h|_{\mathbf{H_{n}}}$ is very close to $ \E(|\E{_n} h|_{\mathbf{H_{n}}})$.
\begin{lemma}\label{cont}
Under Assumptions \ref{ass:continuous_Markov} to \ref{ass bounded y}, and \ref{cover1} to \ref{norm},
\beqq
|\E{_n} h|_{\mathbf{H_{n}}} - \E(|\E{_n} h|_{\mathbf{H_{n}}}) \lesssim_p M c_n\sqrt{v/n},
\eeqq
\end{lemma}
where $c_n$ is an arbitrarily slowly growing sequence. 
\begin{proof}
Similar to the above derivation, for a positive constant $\eta_{k}$, with $\sum_k\eta_{k}\leq 1$,
	\begin{eqnarray*}
		&&{\Pr(n^{-1}\sum^n_{t=1} h_t\geq x)}
		\leq  \Pr( n^{-1} |\sum^{\overline{K}}_{k=1} \sum^n_{t=1}h_t^{(k)} - h_t^{(k-1)}|\geq x- \sqrt{n}^{-1}2^{-\overline{K}}M )\\
		&\leq &\sum^{\overline{K}}_{k=1} \Pr(  |n^{-1} \sum^n_{t=1}h_t^{(k)} - h_t^{(k-1)}|\geq \eta_k(x- \sqrt{n}^{-1}2^{-\overline{K}}M) )\\
		&\leq& \sum^{\overline{K}}_{k=1}\exp\{\log \max_t\sup_Q\mathcal{N}^2(2^{-k}M, \mathcal{H}_t, \|.\|_{Q,2}) -\eta_{k}^2(nx- \sqrt{n}2^{-\overline{K}}M) ^2/[2\{(nx- \sqrt{n}2^{-\overline{K}}M)\\&&+ 2((3\cdot 2^{-k}\cdot M)^2n)\}] \}\\
		&\leq & \sum^{\overline{K}}_{k=1}\exp(  \log(K)+ \log (v +1)+ (v+1)(\log4+ 1)+(2v)\log(\frac{2}{2^{-k}M})
		\\&&-\eta_{k}^2(nx- \sqrt{n}2^{-\overline{K}}M) )^2/(2((nx- \sqrt{n}2^{-\overline{K}}M)+ 2((3\cdot 2^{-k}\cdot M)^2n)) ),
	\end{eqnarray*}
where the above derivation is due to the tail probability in Lemma \ref{free}.
	We pick $\eta_k$ and $x$ to ensure the right-hand side converges to zero and $\sum_k\eta_{k} \leq 1$. 
	
	We take  $ b_k=  \log(\overline{K})+ \log (v +1)+ (v+1)(\log4+ 1)+(2v)\log(\frac{2}{2^{-k}M})$, $a_k =2^{-1} (nx- \sqrt{n}2^{-\overline{K}}M) )^2/((nx- \sqrt{n}2^{-\overline{K}}M)+ 2((3*2^{-k}*M)^2n))$. We pick $\eta_k \geq \sqrt{a_k/b_k}$, so that $b_k \leq \eta_k^2 a_k$.
	We also need to choose $x$ to ensure that $\sum_k \eta_k \leq 1$ and $\sum_k \exp(b_k - \eta_k ^2 a_k) \to 0$.	We pick $ c_n \sqrt{v/n}\lesssim x$, and $\eta_k = c_n' \sqrt{b_k/a_k}$, with two slowly growing functions $c_n$ and $c_n'$ such that $c_n'\ll c_n$. We set $x = \E(|\E{_n} h|_{\mathbf{H_{n}}})  + c_n \sqrt{v/n} $. The result then follows.
\end{proof}
Finally, Theorem \ref{thm:mds_bound} follows by combining Lemma \ref{bound}, Lemma \ref{cont}, and $n=T-1$.

\subsection{Proof of Theorem \ref{mean_bound}} \label{Proof_of_mean_bound}
In this proof, we will show the concentration of $I=\frac{1}{T-1}\sum_{t=1}^{T-1}\overline{S}_t(G)+\frac{1}{T-1}\sum_{t=1}^{T-1}\tilde{S}_t(G)$. Under Assumptions \ref{cover1} and \ref{D.4} and with similar arguments as those for Lemma \ref{cont}, the first term of $I$ satisfies:
\begin{eqnarray}
\mbox{sup}_{G\in \mathcal{G}}\left |\frac{1}{T-1}\sum_{t=1}^{T-1}\overline{S}_t(G) \right|\lesssim_p M \sqrt{v}/{\sqrt{T-1}}.
\end{eqnarray}
The concentration rate of the second term of $I$, $\frac{1}{T-1}\sum_{t=1}^{T-1}\tilde{S}_t(G)$, is shown through the following two steps:

(i) For  a finite function class $\mathcal{G}$ with $|\mathcal{G}|= \tilde{M}<\infty$ and  under Assumptions  \ref{D.1}- \ref{D.3}, 
\begin{equation} \label{eq:finite_ex_bound}
\sup_{G\in \mathcal{G}}\left|\frac{1}{T-1} \sum_{t=1}^{T-1} \tilde{S}_{t}(G)\right| \lesssim_p \frac{c_T [\log (\tilde{M}) 2e\gamma ]^{1/\gamma}\sup_G\Phi_{\phi_{\tilde{v}}}(\tilde{S}_{.}(G)) }{\sqrt{T-1}}
\end{equation}
holds with probability $1- \exp(-c_T^\gamma)$,  where $c_T$ is a large enough constant, and $\Phi_{\phi_{\tilde{v}}}(\tilde{S}_{.}(G))$ is defined in Assumption  \ref{D.3}.

(ii) Next, we extend \eqref{eq:finite_ex_bound} to obtain the main result of Theorem \ref{mean_bound}: Let $\mathcal{G}$ be a function class with infinite elements, and its complexity is subject to Assumption \ref{D.4}. Under Assumptions  \ref{D.1}- \ref{D.4}, we have
\begin{equation*}
\sup_{G\in \mathcal{G}} \left|\frac{1}{T-1} \sum_{t=1}^{T-1} \tilde{S}_{t}(G)\right| \lesssim_p \frac{c_T [V\log (T) 2e\gamma ]^{1/\gamma}\sup_G \Phi_{\phi_{\tilde{v}}}(\tilde{S}_{.}(G))}{\sqrt{T-1}}.
\end{equation*}

 \paragraph{Step (i).}
	 By Assumption \ref{ass:continuous_Markov} and  \ref{D.2}(i), we have $\E(S_{t}(G)| \mathcal{F}_{t-2})=\E(S_{t}(G)| X_{t-2})$.  By Assumption \ref{D.1}, we have
  $X_t =g_t(\varepsilon_t,\varepsilon_{t-1},\cdots).$
Define $X_{t-2,l}^*$ as a version of the random variable $X_{t-2}$, in which $\varepsilon_{t-2-l}$ is replaced by $\varepsilon_{t-2-l}^*$:
By definition of $\tilde{S}(G)$, we have $$\Phi_{\phi_{\tilde{v}}}(\tilde{S}_{.}(G))= \sup_{q\geq 2}(\sum_{l\geq 0} \max_t\|\E(S_{t}(G)| X_{t-2})  - \E(S_{t}(G)| X_{t-2,l}^*)\|_{q})/q^{\tilde{v}} .$$ 
	%
	Then, by Assumption \ref{D.3}, we have $\sup_{G}\Phi_{\phi_{\tilde{v}}}(\tilde{S}_{.}(G)) < \infty$. For $\gamma = 1/(1+2\tilde{v})$ and a finite $\mathcal{G}$ with $|\mathcal{G}|=\tilde{M}$, the bound \eqref{eq:finite_ex_bound} follows by 	 Theorem 3 of \cite{wu2016performance}:
	\begin{equation}
		\Pr(\sup_{G\in \mathcal{G}}  \sum_{t=1}^{T-1} \tilde{S}_{t}(G) \geq x) \leq \tilde{M}\exp\left[-\left(\frac{x}{\sqrt{T-1}\sup_G\Phi_{\phi_{\tilde{v}}}(\tilde{S}_{.}(G))}\right)^\gamma\frac{1}{2e\gamma} \right].
	\end{equation}
	%
	Specifically, let $x = c_T [\log (\tilde{M}) 2e \gamma]^{1/\gamma}\Phi_{\phi_{\tilde{v}}}(\tilde{S}_{.}(G)) \sqrt{T-1}$, where $c_T$ is a sufficiently large constant,
	then
	\begin{equation}
		\sup_{G\in \mathcal{G}}\left|\frac{1}{T-1} \sum_{t=1}^{T-1} \tilde{S}_{t}(G)\right|\lesssim c_T [\log (\tilde{M}) 2e\gamma]^{1/\gamma} \sup_{G\in \mathcal{G}}\Phi_{\phi_{\tilde{v}}}(\tilde{S}_{.}(G)) /\sqrt{T-1}
	\end{equation}
	holds with probability $1- \exp(-c_T^\gamma)$.
	\paragraph{Step (ii).}
		We now consider the case where $\mathcal{G}$ is not finite. We define $\mathcal{G}^{(1)\delta}$ to be  a $\delta \max_t\mbox{sup}_Q\|\tilde{F}_{t}\|_{Q,2}$-net within $\tilde{\mathbf{F}}_n$ w.r.t.\ $\mathcal{G}$.
 {We denote $\tilde{S}_{t} (\pi(G))$ as the closest component to $\tilde{S}_{t}(G)$ in the net $\mathcal{G}^{(1)\delta}$.}
	Then,
	\begin{eqnarray*}
		\sup_{G} \left| \frac{1}{T-1} \sum_t \tilde{S}_{t}(G)\right|
		&\leq& \sup_{G\in \mathcal{G}}\left| \frac{1}{T-1} \sum_{t=1}^T\left[ \tilde{S}_{t}(G) - \tilde{S}_{t} (\pi(G))\right] \right|+ \sup_{G\in \mathcal{G}^{(1)\delta}} |\frac{1}{T-1}  \sum_{t=1}^{T-1}( \tilde{S}_{t}(G))  | \\
		&\lesssim_p & \delta \mbox{max}_t\mbox{sup}_Q\|\tilde{F}_{t}\|_{Q,2}+   \sup_{G\in \mathcal{G}^{(1)\delta}} |\frac{1}{T-1} \sum_{t=1}^{T-1}( \tilde{S}_{t}(G))  | \\
		&\lesssim_p& \delta 
  \mbox{max}_t\mbox{sup}_Q     \|\tilde{F}_{t}\|_{Q,2}+  c_T [V\log (1/\delta)2e\gamma]^{1/\gamma}  \sup_{G\in \mathcal{G}}\Phi_{\phi_{\tilde{v}}}(\tilde{S}_{.}(G))/\sqrt{T-1},
	\end{eqnarray*}
	where $V$ and $\delta$ following the first $\lesssim_p$ are the constants in the first statement of Assumption \ref{D.4}. Recall that  $\sup_{G\in \mathcal{G}}\Phi_{\phi_{\tilde{v}}}(\tilde{S}_{.}(G))$ is finite by Assumption \ref{D.3}.
	By setting $\delta = \frac{1}{T}$, we obtain
	\begin{equation}
		\sup_{G \in \mathcal{G}} \frac{1}{T-1} \sum_{t=1} \tilde{S}_{t}(G) \lesssim_p c_T [V\log (T) 2e\gamma ]^{1/\gamma} \sup_{G\in \mathcal{G}}\Phi_{\phi_{\tilde{v}}}(\tilde{S}_{.}(G)) /\sqrt{T-1}.
	\end{equation}

\subsection{Proof of Lemma \ref{lem:lower_RT(G)}} \label{app:proof_uncon_con_bound}
\begin{proof} We have
\begin{align}
        R_{T}(G) & =\int R_{T}(G|x)dF_{X_{T-1}} (x)\nonumber \\
        & =\int_{x\in A(x^{obs},G)}R_{T}(G|x)dF_{X_{T-1}}(x)+\int_{x\not\in A(x^{obs},G)}R_{T}(G|x)dF_{X_{T-1}}(x)\nonumber \\
        & \geq R_{T}(G|x^{obs})\cdot p_{T-1}(x^{obs},G)+0 =R_{T}(G|x^{obs})\cdot p_{T-1}(x^{obs},G).\label{eq:unc_con_bound_con_dis}
    \end{align}

The first inequality follows from the definition of $A(x^{\prime},G)$ and $R_{T}(G|x)$ being non-negative (by Assumption \ref{ass:correct_specify}). Then, Assumption \ref{lower_p_dis_con} yields $  R_{T}(G|x^{obs})  \leq  \frac{1}{\underline{p}} R_{T}(G)$.
\end{proof}

\subsection{Accounting for the Lucas critique} \label{MDP_B}
Here we solve the VAR reduced form of the three-equation New Keynesian model discussed in Section \ref{sec:lucas critique}. 

Recall that $\tilde{Y}_{t}:=\left( \begin{array}{c}
	x_t\\
	\pi_t
\end{array}\right)$ and $d_{t}:=\left(\begin{array}{c}
	v_{t}\\
	\varepsilon_{t}
\end{array}\right)$. Rearranging \eqref{eq:three_equation} yields
\bea
\left(\begin{array}{c}
	\E_{t}x_{t+1}\\
\E_{t}\pi_{t+1}
\end{array}\right)=\left(\begin{array}{cc}
	\kappa/(\sigma\beta)+1 & \delta/\sigma-1/(\sigma\beta)\\
	-\kappa/\beta & 1/\beta
\end{array}\right)\left(\begin{array}{c}
	x_{t}\\
	\pi_{t}
\end{array}\right)+\left(\begin{array}{c}
	v_{t}/\sigma+\varepsilon_{t}/(\sigma\beta)\\
	-\varepsilon_{t}/\beta
\end{array}\right).\label{eq:Keynesian_causal}\eea

Define
\[
N=\left(\begin{array}{cc}
	\kappa/(\sigma\beta)+1 & \delta/\sigma-1/(\sigma\beta)\\
	-\kappa/\beta & 1/\beta
\end{array}\right),\quad
C=-\left(\begin{array}{cc}
1/\sigma & 1/\sigma\beta\\
0 & -1/\beta
\end{array}\right).
\]
By assuming that $N$ is invertible, we can define
$A=N^{-1}$. Recall that  $d_{t}$ is assumed to be an AR(1) process:  $d_{t}=Fd_{t-1}+\eta_{t}$,
where
$
F=\left(\begin{array}{cc}
	\rho & 0\\
	0 & \gamma
\end{array}\right).
$
Then, equation \eqref{eq:Keynesian_causal} can be written as
$\tilde{Y}_{t}=A\E{_{t}}\tilde{Y}_{t+1}+ACd_{t}.$ Solving forward, we obtain $\tilde{Y}_{t}=\lim_{L\to\infty}A^{L}\E{_{t}}(\tilde{Y}_{t+L})+\sum_{l\geq0}A^{l+1}C\E_{t}\left(d_{t+l}\right).$

{ By assuming  $||A||<1$}, we have $\lim_{L\to\infty}A^{L}\E{_{t}}(\tilde{Y}_{t+L})\to_{a.s.}0.$
Thus, $\tilde{Y}_{t}$ can be solved as
\begin{equation}
	\tilde{Y}_{t}=M(\rho)d_{t},\label{eq:Yt_dt}
\end{equation}
where $M(\rho):=\sum_{l\geq 0} A^{l+1}CF^{l}=\sum_{l\geq 0} A^{l+1}C\left(\begin{array}{cc}
	\rho & 0\\
	0 & \gamma
\end{array}\right)^l$. {Assuming that $M(\rho)$ is invertible, by iterating  \eqref{eq:Yt_dt},
  we can solve the
VAR reduced form:
$$ \tilde{Y}_{t}=M\left(\rho\right)F\left[M\left(\rho\right)\right]^{-1}\tilde{Y}_{t-1}+M\left(\rho\right)\eta_{t}.$$
}

\subsection{Proof of Theorem \ref{thm:e_propensity}} \label{sec:proof_e_propensity}

	Under policy $G$, we use the following notation: $\widetilde{\mathcal{W}}(G)$ is defined in \eqref{2intermediate_welfare}; $\widehat{\mathcal{W}}(G)$  represents the estimated welfare defined in \eqref{eq:unconditional_sample}; $\widehat{\mathcal{W}}^{\hat{e}}(G)$  represents the estimated welfare, with the estimated propensity score $\hat{e}(\cdot)$; and
\beaa 
\widehat{\mathcal{W}}^{\hat{e}}(G)
=\frac{1}{T-1}\sum_{t=1}^{T-1} \left[\frac{Y_{t}W_{t}}{\hat{e}_t(X_{t-1})}1(X_{t-1}\in G)+ \frac{Y_{t}(1-W_{t})}{1- \hat{e}_t(X_{t-1})}1(X_{t-1}\notin G)\right].
\eeaa
	Recall that $ G_*$ is the optimal policy defined in \eqref{maxG}.	Let $\hat{G}^{\hat{e}}$ be the optimal policy estimated using the
	estimated propensity score $\hat{e}(\cdot)$, 
	\begin{equation}
		\hat{G}^{\hat{e}}\in \underset{G\in\mathcal{G}}{\text{argmax
		}}\widehat{\mathcal{W}}^{\hat{e}}(G).\label{eq:wp_e_hat}
	\end{equation}
	
	Recall $\tau_{t}=\frac{Y_{t}W_{t}}{e_t(X_{t-1})}- \frac{Y_{t}(1-W_{t})}{1- e_t(X_{t-1})}$ and $\hat{\tau}_{t}=\frac{Y_{t}W_{t}}{\hat{e}_t(X_{t-1})}- \frac{Y_{t}(1-W_{t})}{1- \hat{e}_t(X_{t-1})}$. Similar to (A.29) in the supplementary material for \citet{kitagawa2018should}, we have
	\begin{align}
		& \widetilde{\mathcal{W}}( G_*)-\widetilde{\mathcal{W}}(\hat{G}^{\hat{e}})\nonumber \\
		&
		=\widetilde{\mathcal{W}}( G_*)-\widetilde{\mathcal{W}}(\hat{G}^{\hat{e}})+\left[\widehat{\mathcal{W}}^{\hat{e}}( G_*)-\widehat{\mathcal{W}}^{\hat{e}}( G_*)\right]+\left[\widehat{\mathcal{W}}(\hat{G}^{\hat{e}})-\widehat{\mathcal{W}}(\hat{G}^{\hat{e}})\right]+\left[\widehat{\mathcal{W}}( G_*)-\widehat{\mathcal{W}}( G_*)\right]\nonumber
		\\
		&
		\leq\widetilde{\mathcal{W}}( G_*)-\widetilde{\mathcal{W}}(\hat{G}^{\hat{e}})+\left[\widehat{\mathcal{W}}^{\hat{e}}(\hat{G}^{\hat{e}})-\widehat{\mathcal{W}}^{\hat{e}}( G_*)\right]+\left[\widehat{\mathcal{W}}(\hat{G}^{\hat{e}})-\widehat{\mathcal{W}}(\hat{G}^{\hat{e}})\right]+\left[\widehat{\mathcal{W}}( G_*)-\widehat{\mathcal{W}}( G_*)\right]\nonumber
		\\
		&
		=\left[\widehat{\mathcal{W}}( G_*)-\widehat{\mathcal{W}}^{\hat{e}}( G_*)-\widehat{\mathcal{W}}(\hat{G}^{\hat{e}})+\widehat{\mathcal{W}}^{\hat{e}}(\hat{G}^{\hat{e}})\right]
			+\left[\widetilde{\mathcal{W}}( G_*)-\widetilde{\mathcal{W}}(\hat{G}^{\hat{e}})-\widehat{\mathcal{W}}( G_*)+\widehat{\mathcal{W}}(\hat{G}^{\hat{e}})\right]\nonumber
		\\
		& =I^{\hat{e}}+II^{\hat{e}},\label{eq:decompo W}
	\end{align}
	where the first inequality comes from
	$\widehat{\mathcal{W}}^{\hat{e}}(\hat{G}^{\hat{e}})\geq\widehat{\mathcal{W}}^{\hat{e}}( G_*)$,
	which is implied by the definition \eqref{eq:wp_e_hat}.
	
	For $II^{\hat{e}}$, we know $II^{\hat{e}}\leq2\sup_{G\in\mathcal{G}}|\widetilde{\mathcal{W}}(G)-\widehat{\mathcal{W}}(G)|$. Similar arguments to Section \ref{sec:uncon_con} can then be used to bound it. For $I^{\hat{e}}$, note that for any $G\in\mathcal{G}$, 
	\bea \label{eq:tau_welfare}
	\widehat{\mathcal{W}}(G)
    =\frac{1}{T-1}\sum_{t=1}^{T-1} \left[\tau_t 1(X_{t-1}\in G)+ \frac{Y_{t}(1-W_{t})}{1- e_t(X_{t-1})} \right]. 
	\eea
	Similarly,
	\bea \label{eq:hat_tau_welfare}
	\widehat{\mathcal{W}}^{\hat{e}}(G) 
    =\frac{1}{T-1}\sum_{t=1}^{T-1} \left[\hat{\tau}_t 1(X_{t-1}\in G)+ \frac{Y_{t}(1-W_{t})}{1- \hat{e}_t(X_{t-1})} \right].
	\eea
	
	Combining \eqref{eq:tau_welfare} and \eqref{eq:hat_tau_welfare} with $I^{\hat{e}}$,
	\beaa
			&\widehat{\mathcal{W}}( G_*)-\widehat{\mathcal{W}}(\hat{G}^{\hat{e}})&=\frac{1}{T-1}\sum_{t=1}^{T-1}\tau_{t}\left[\mathbf{1}\{X_{t-1}\in G_*\}-\mathbf{1}\{X_{t-1}\in\hat{G}^{\hat{e}}\}\right] \\
		&\widehat{\mathcal{W}}^{\hat{e}}( G_*)-\widehat{\mathcal{W}}^{\hat{e}}(\hat{G}^{\hat{e}})&=\frac{1}{T-1}\sum_{t=1}^{T-1}\hat{\tau}_{t}\left[
		\mathbf{1}\{X_{t-1}\in G_*\}-\mathbf{1}\{X_{t-1}\in\hat{G}^{\hat{e}}\}\right].
	\eeaa
	Then,
		\begin{align*}
		I^{\hat{e}} & 
		=\frac{1}{T-1}\sum_{t=1}^{T-1}\left[\left(\tau_{t}-\hat{\tau}_{t}\right)\cdot1\{X_{t-1}\in G_*\}-\left(\tau_{t}-\hat{\tau}_{t}\right)\cdot1\{X_{t-1}\in\hat{G}^{\hat{e}}\}\right]\\
		&
		=\frac{1}{T-1}\sum_{t=1}^{T-1}\left[\left(\tau_{t}-\hat{\tau}_{t}\right)\left(1\{X_{t-1}\in G_*\}-1\{X_{t-1}\in\hat{G}^{\hat{e}}\}\right)\right]\leq\frac{1}{T-1}\sum_{t=1}^{T-1}|\tau_{t}-\hat{\tau}_{t}|.
	\end{align*}
		Finally, we have that the rate of convergence is bounded by the accuracy of propensity score estimation and the bound with known propensity scores,
	\begin{align*}
		\E{_{P_{T}}}[\mathcal{W}_T(G_*)-\mathcal{W}_T(\hat{G}^{\hat{e}})] & \leq
		\E{_{P_{T}}}\left[\frac{1}{T-1}\sum_{t=1}^{T-1}|\tau_{t}-\hat{\tau}_{t}|\right]
		+2\E{_{P_{T}}}\left[\sup_{G\in\mathcal{G}}|\widetilde{\mathcal{W}}(G)-\widehat{\mathcal{W}}(G)|\right].
	\end{align*}
	The statement of Theorem \ref{thm:e_propensity} follows from \eqref{eq:ass phi} and Theorem \ref{thm:mds_mean_bound_unconditional}.

\

\subsection{Link to Markov decision problems} \label{MDP_A}

\textcolor{red}{}
In this section, we show the connection between our T-EWM setup and models of the Markov Decision Process (MDP).
For MDP, we adapt the notation of \cite{kallenberg2016markov} (LK hereafter), an online set  of lecture notes by Lodewijk
Kallenberg. {As described in LK, the MDP is the set of models for making decisions for dependent data. An MDP typically has components $\{[p_{ij}(a)]_{i,j}, r_{i}^{t}(a),W_{t-1}\}$. In period $t$, $W_{t-1}$ is the state and $a$ is the action. The agent chooses their decision according to a policy (a map from the state $W_{t-1}$  to an action $a$). They then receive a reward $r_{i}^{t}(a)$. The reward function depends on the transition probabilities of a Markov process, which are determined by the action $a$. Thus their action affects the reward via its effect on the transition probability matrix.  The optimal policy is estimated by optimizing an aggregated reward function.
In this subsection, we show a formal link between our T-EWM framework and an MDP. In particular, we show that the MDP's reward function corresponds to our welfare function, and the optimal mapping between states and actions corresponds to the T-EWM policy in our framework. }

In the following equations,
the left-hand sides are the notations for the MDP in LK, and the right-hand sides are
notations for T-EWM from this paper.
We consider the model of Section \ref{one-p}. Firstly, we link the transition probability with the propensity score: For $i,j,a\in\{0,1\}$ at time $t$,
\bea
p_{ij}^{t}(a) & =\text{Pr}(W_{t}=j|W_{t-1}=i;\text{choosing }W_{t}=a).
\eea
The left-hand side is the Markov transition probability
between states $i$ and $j$ under policy $a$. The right-hand side
is a propensity score under policy $a$:
the probability $W_{t}=j$, conditional on $W_{t-1}=i$, given
$W_{t}=a$. Note that, in this simple model, the state at time $t$
is the previous treatment $W_{t-1}$, and the current policy and the
next-period state are both $W_{t}$. We assume that after time $T-1$, the planner implements a deterministic policy, so the probability only takes values in $\{0,1\}$, i.e.,
\bea
p_{ij}^{t}(a) & =1\text{ if }j=a, \quad
p_{ij}^{t}(a) & =0\text{ if }j\ne a.\label{eq:M_pscore}
\eea
Secondly, we connect the reward function with the expected conditional
counterfactual outcome. We denote the reward associated with action $a$ for state $i$ at time $t$ as
\begin{align}
r_{i}^{t}(a) & =\E\left[Y_{t}(a)|W_{t-1}=i\right].\label{eq:reward}
\end{align}
 The left-hand side is the reward in state $i$ under action $a$. The right-hand side is the conditional
expected counterfactual outcome of $Y_{t}(a)$, (recall $a\in\{0,1\}$)
conditional on $W_{t-1}=i$.

Thirdly, we link the expected reward function and the expected unconditional  counterfactual outcome,
\begin{align*}
 \sum_{i}\beta_{i}r_{i}^{t}(a)=r^{t}(a)  & =\E\left[Y_{t}(a)\right].
\end{align*}
The left-hand side is the expected reward for action
$a$, with $\beta_{i}$ as the initial probability of state $i$.
The right-hand side is the unconditional expected counterfactual outcome.

Finally, we show the link between the total expected reward over a finite horizon (of length 2) and the finite-period welfare function
{\small \begin{align}
  v_{i}^{T:T+1}(R)=\E{_{i,R}}\left[\sum_{k=T}^{T+1}r_{i}^{k}(W_{k})\right]\ \nonumber 
 & =\E\bigg[ Y_{T}(1)p_{i1}^{T}\left(g_{1}(W_{T-1})\right)+Y_{T}(0)p_{i0}^{T}\left(g_{1}(W_{T-1})\right)\nonumber \\
 & +Y_{T+1}(1)p_{i1}^{T+1}\left(g_{2}(W_{T})\right)+Y_{T+1}(0)p_{i0}^{T+1}\left(g_{2}(W_{T})\right)|W_{T-1}=i\bigg]\nonumber \\
 & =\E\left[ Y_{T}(1)g_{1}(W_{T-1})+Y_{T}(0)\left[1-g_{1}(W_{T-1})\right]|W_{T-1}=i\right] \nonumber \\
 & +\E\left[ Y_{T+1}(1)g_{2}\left(W_{T}\right)+Y_{T+1}(0)\left[1-g_{2}\left(W_{T}\right)\right]|W_{T}=g_{1}(i)\right]. \label{eq:two-period_welfare_re}
\end{align}}
The left-hand side is the total expected reward over the
planning horizon from $T$ to $T+1$ under the policy $R=(g_{1},g_{2})$,
with the initial state $i$. The last equality follows from (\ref{eq:M_pscore})
and the exclusion condition.

Comparing \eqref{eq:two-period_welfare_re} with \eqref{eq:multi_welfare}, we can view the population conditional welfare of T-EWM with a finite-period target as the value function of a finite-horizon MDP with a nonstationary solution. According to LK, in this case, the policy is usually obtained by using a backward induction algorithm.


\subsection{Simulation} \label{simul}
In this subsection, we illustrate the accuracy of our method through a simple simulation exercise. We consider the following model
\begin{align}
	Y_{t}& =W_{t}\cdot\mu(Y_{t-1},Z_{t-1})+\phi Y_{t-1}+\varepsilon_{t},\nonumber \\
	\mu(Y_{t-1},Z_{t-1}) & =1(Y_{t-1}<B_{1})\cdot1(Z_{t-1}<B_{2})-1(Y_{t-1}>B_{1}\vee Z_{t-1}>B_{2}),\label{eq:simu-dgp}
\end{align}
where $\mu$ is a function determining the direction of the treatment
effect. The treatment effect
at time $t$ is positive if both $Y_{t-1}<B_{1}$ and $Z_{t-1}<B_{2}$, and is negative otherwise. The optimal
treatment rule is therefore $G_{*}=\{(Y_{t-1},Z_{t-1}):Y_{t-1}<B_{1}\text{ and }Z_{t-1}<B_{2}\}$. We set $\varepsilon_{t}\overset{\text{i.i.d.}}{\sim}N(0,1)$,
$Z_{t-1}\overset{\text{i.i.d.}}{\sim}N(0,1)$, $\phi=0.5$, $B_{1}=2.5,$ and
$B_{2}=0.52$ (approximately the 70\% quantile of the standard
normal distribution). The propensity score $e_{t}(Y_{t-1},Z_{t-1})$ is set to $0.5$.
Our goal is to estimate $G_{*}$.  We
consider the quadrant treatment rules defined as \begin{equation}
	\mathcal{G}\equiv\left\{ \begin{array}{c}
		\left((y_{t-1},z_{t-1}):s_{1}(y_{t-1}-b_{1})>0\;\&\;s_{2}(z_{t-1}-b_{2})>0\right),\\
		s_{1},s_{2}\in\{-1,1\},b_{1},b_{2}\in \mathbb{R}
	\end{array}\right\} .\label{eq:quadrant  treatment}
\end{equation}
It is immediate that $G_{\text{FB}}^*\in \mathcal{G}$. Therefore, we can directly estimate the unconditional treatment rule as described in Section \ref{sec:uncon_con}. Figure \ref{fig:n100 and n1000} illustrates the estimated
bounds and true bounds for sample sizes $n=100$ and $n=1000$. In each case, we draw 100 Monte Carlo samples.
\begin{figure}[H]
	\centering
	\begin{minipage}{0.35\textwidth} 
		\includegraphics[width=70mm]{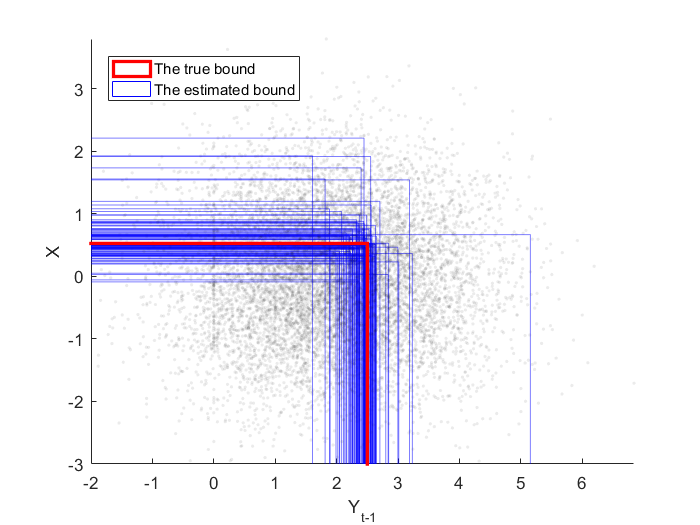}
		\centering
		(a) $n=100$
	\end{minipage}
	\hfil
	\begin{minipage}{0.35\textwidth}
		\includegraphics[width=70mm]{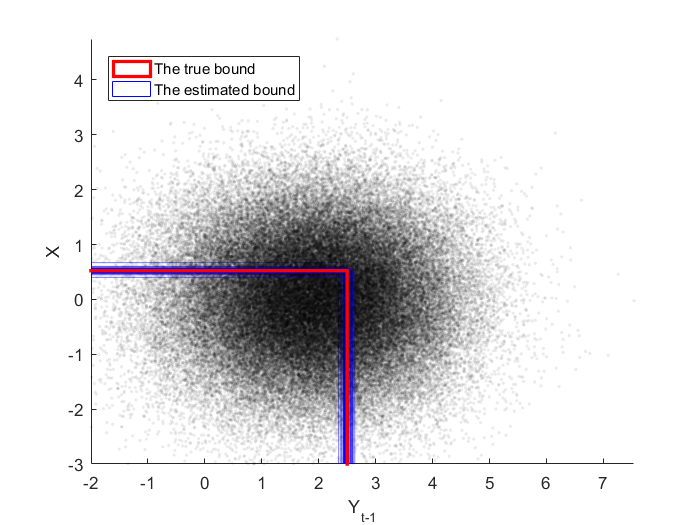}
		\centering
		(b) $n=1000$
		
	\end{minipage}
	\caption{\label{fig:n100 and n1000}The estimated bounds for $n=100$ and $n=1000$.}
\end{figure}

The blue lines are estimated bounds, and the red lines are the
true bounds. For n=100, the majority of the blue lines are close to the red line. As the sample size increases from $100$ to $1000$,
the blue lines become tightly concentrated around the red line. The results in Table \ref{tab:bound_simu} confirm this.
This table presents the Monte Carlo averages $(\hat{\mu}_{B_{1}},\hat{\mu}_{B_{2}})$,
variances $(\hat{\sigma}_{B_{1}}^{2},\hat{\sigma}_{B_{2}}^{2})$, and
MSE of estimated $B_{1}$ and $B_{2}$. We multiply the variances and MSEs by the sample size $n$. The sample sizes are $n=100$, $500$, $1000$,
and 2000. The number of Monte Carlo replications is $500$.

\begin{table}[h]
	\caption{\label{tab:bound_simu}Simulation results for $B_{1}$ and $B_{2}$}
	
	\centering{}{\small{}}%
	
	\begin{tabular}{cr@{\extracolsep{0pt}.}lr@{\extracolsep{0pt}.}lr@{\extracolsep{0pt}.}lr@{\extracolsep{0pt}.}lr@{\extracolsep{0pt}.}lr@{\extracolsep{0pt}.}lr@{\extracolsep{0pt}.}l}
		\toprule
		& \multicolumn{4}{c}{$B_{1}$} & \multicolumn{2}{c}{} & \multicolumn{2}{c}{} &
		\multicolumn{6}{c}{$B_{2}$}\tabularnewline
		\midrule
		$n$ & \multicolumn{2}{c}{$\hat{\mu}_{B_{1}}$} & \multicolumn{2}{c}{$ n\cdot \hat{\sigma}_{B_{1}}^{2}$} &
		\multicolumn{2}{c}{$n\cdot$MSE$_{B_{1}}$} & \multicolumn{2}{c}{} & \multicolumn{2}{c}{$\hat{\mu}_{B_{2}}$} &
		\multicolumn{2}{c}{$n\cdot\hat{\sigma}_{B_{2}}^{2}$} & \multicolumn{2}{c}{$n\cdot$MSE$_{B_{2}}$}\tabularnewline
		\cmidrule{1-7} \cmidrule{2-7} \cmidrule{4-7} \cmidrule{6-7} \cmidrule{10-15} \cmidrule{12-15} \cmidrule{14-15}
		100 & 2&4688 & 14&4331 & 14&5016 & \multicolumn{2}{c}{} & 0&6589 & 18&9382 & 20&7080\tabularnewline
		500 & 2&4919 & 2&3881 & 2&4162 & \multicolumn{2}{c}{} & 0&5433 & 3&3775 & 3&5490\tabularnewline
		1000 & 2&4924 & 1&4676 & 1&5224 & \multicolumn{2}{c}{} & 0&5310 & 1&2620 & 1&3027\tabularnewline
		2000 & 2&4958 & 0&5981 & 0&6327 & \multicolumn{2}{c}{} & 0&5267 & 0&7672 & 0&7767\tabularnewline
		\bottomrule
	\end{tabular}{\small\par}
\end{table}

As the sample size increases, both the $\hat{\mu}_{B_{1}}$ and $\hat{\mu}_{B_{2}}$
converge to their true values, 2.5 and 0.52, respectively. The
variances and MSEs shrink, even after multiplying by the sample size $n$, which
suggests that the convergence rate in this case is faster than $\frac{1}{\sqrt{n}}$. 

\subsection{Additional figures and results for the empirical application} \label{app_app}

\subsubsection{Time series plots of the raw data and estimated propensity scores}

\begin{figure}[H]
	\caption{\label{fig:Cases-and-deaths}Weekly cases and deaths from 4/2020 to
		1/2022}
	
	\centering
	
	\includegraphics[scale=0.5]{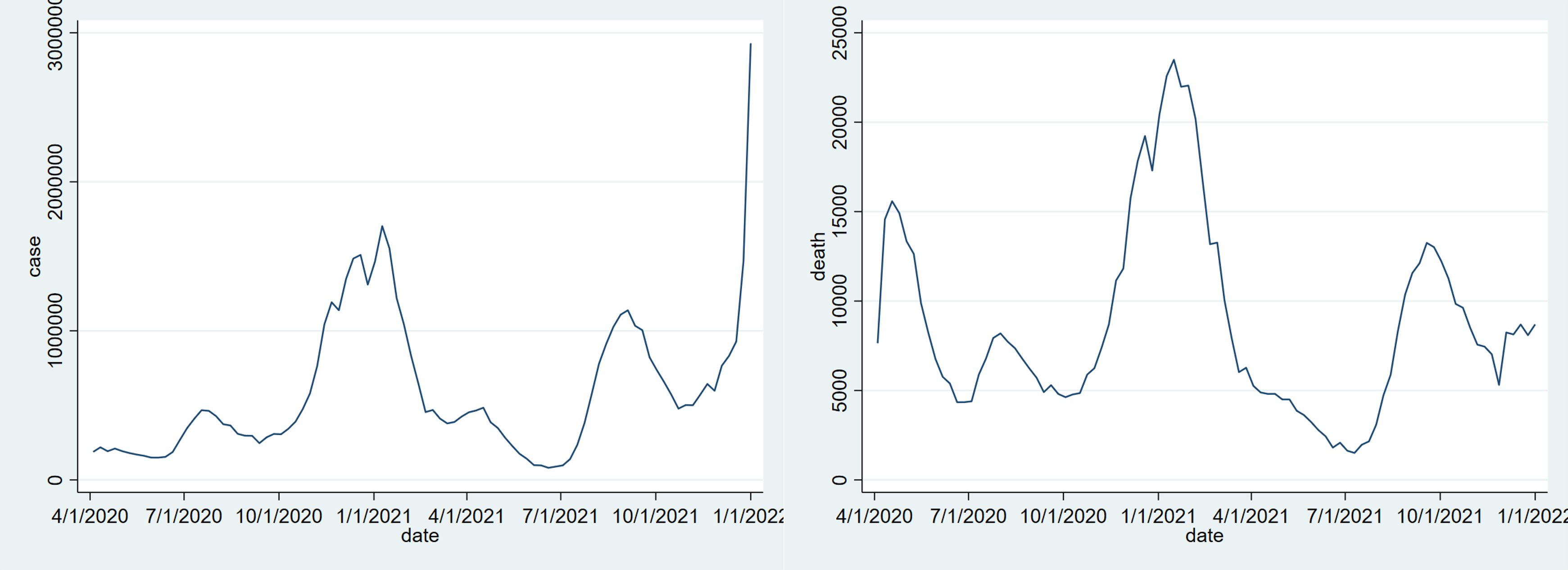}
\end{figure}

\begin{figure}[H]
	\caption{\label{fig:Restriction-econ}Restriction level and economic condition
		from 4/2020 to 1/2022}
	
	\centering
	
	\includegraphics[scale=0.5]{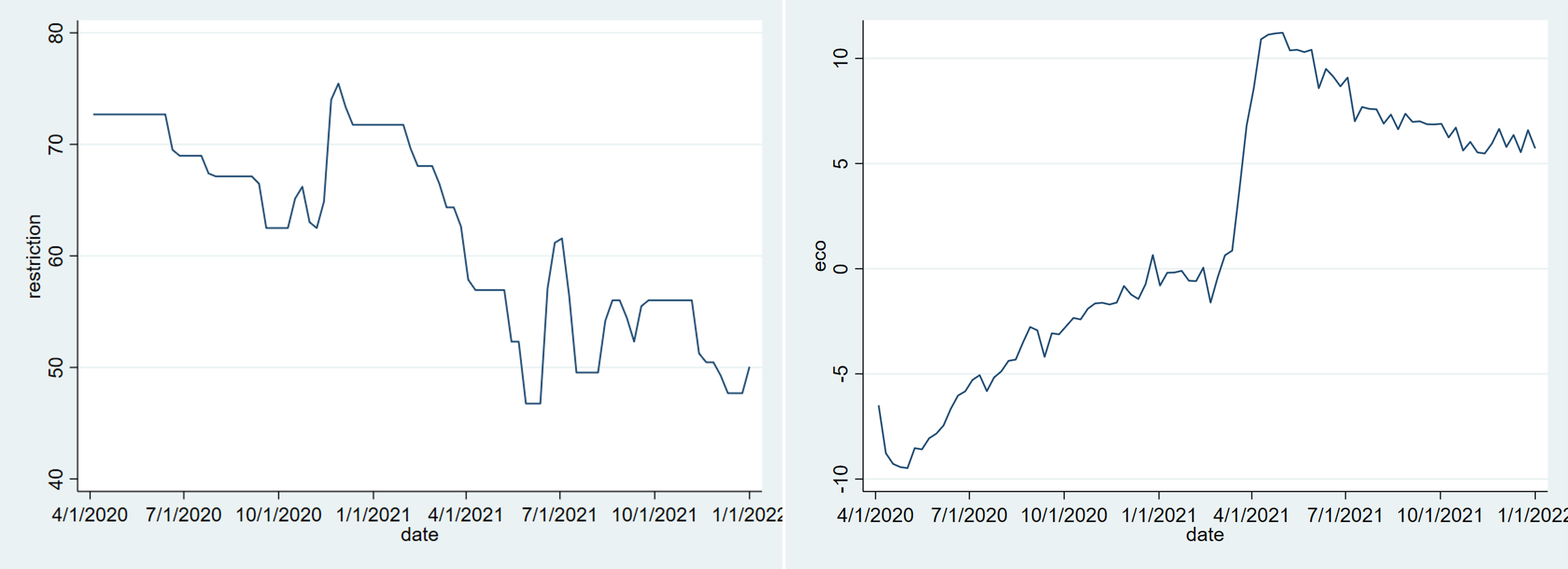}
\end{figure}

%
%

Figure \ref{fig:PS_app} presents the propensity score estimated by \eqref{eq:p_score_app}, in which the set of covariates $X_{t-1}$ is given by \eqref{eq:app_X} (the case of Markov order $q=1$).
%
%

\begin{figure}[h]
	\centering
	\begin{subfigure}[b]{0.4\textwidth} 
		\includegraphics[width=\textwidth]{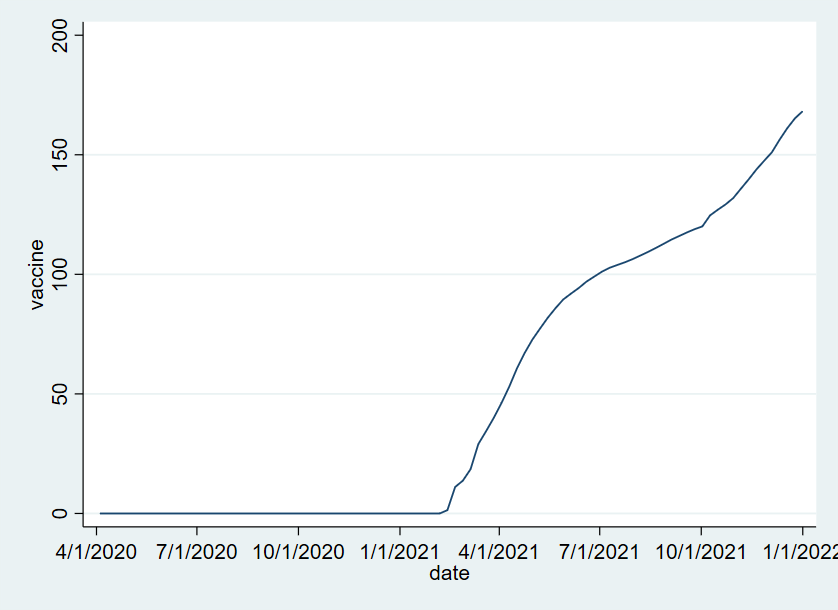} 
		\caption{Vaccine coverage from 4/20 to 1/22} 
		\label{fig:Vaccine} 
	\end{subfigure}
	\begin{subfigure}[b]{0.4\textwidth} 
		\includegraphics[width=\textwidth]{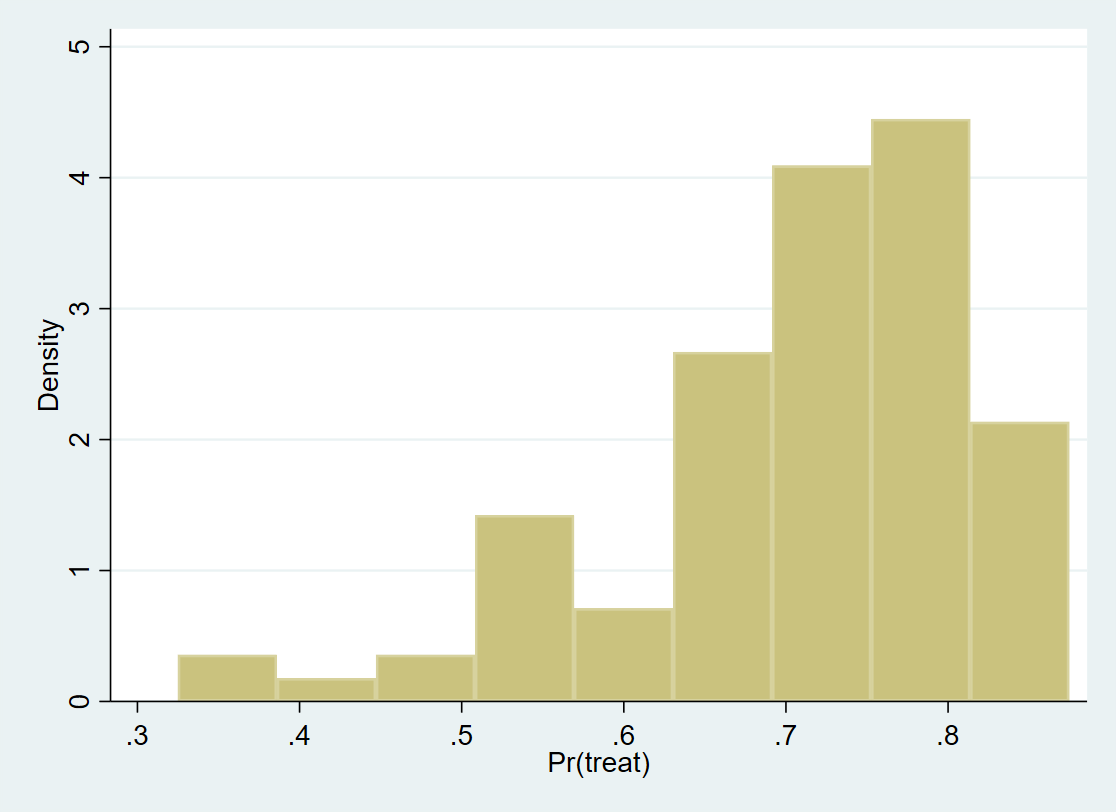} 
		\caption{Estimated PS for $q=1$} 
		\label{fig:PS_app} 
	\end{subfigure}
	\caption{Vaccine coverage and estimated propensity scores} 
\end{figure}

\subsubsection{Policy choices based on an increased number of variables}

\begin{figure}[H]
	\caption{\label{fig:Policy_k=00003D3} Estimated optimal policies based on\\ 
		$X_{T-1}^{P}=\left(\text{change in deaths}_{T-1},\text{restriction stringency}_{T-1},\text{vaccine coverage}_{T-1}\right)$}
	
	\centering
	
	\includegraphics[scale=0.65]{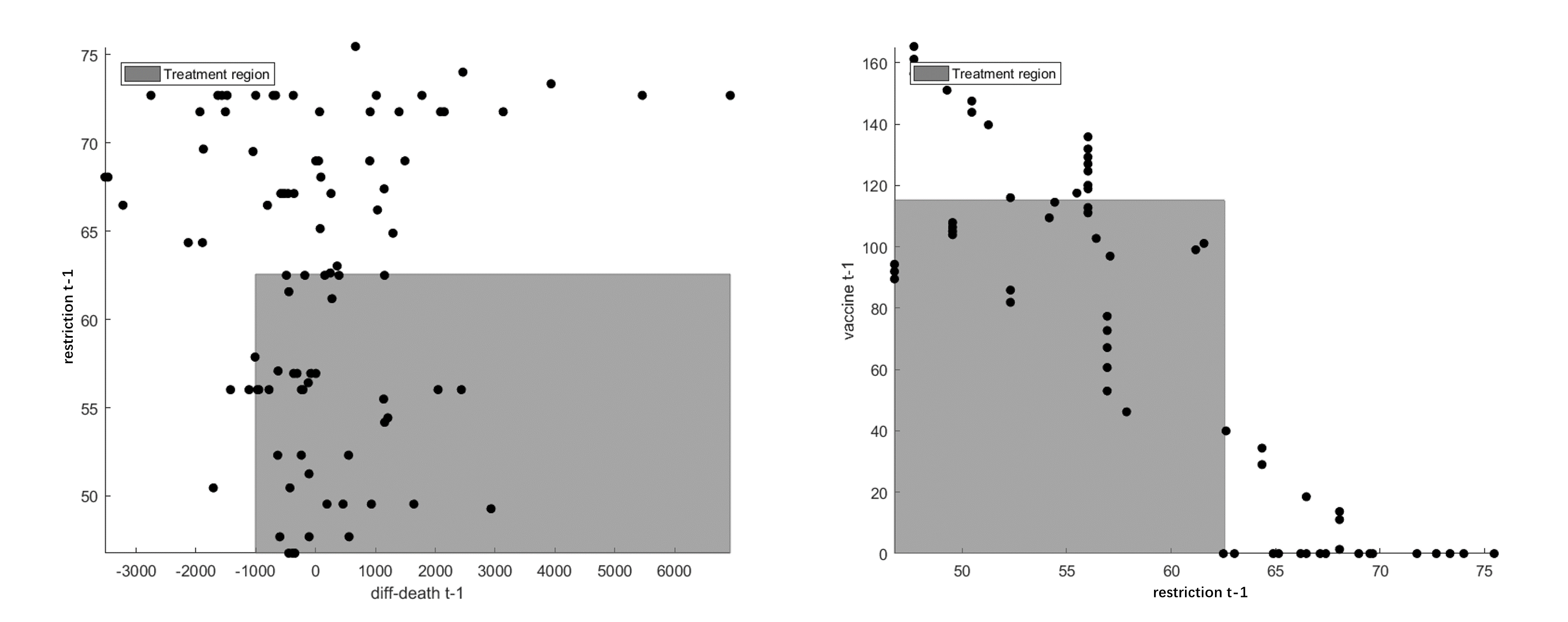}
	
	\noindent\begin{minipage}[t]{1\columnwidth}%
		{\footnotesize{}\raggedleft In the left panel, the $x$-axis is the
			change in deaths at week $T-1$, and the $y$-axis is the stringency of restrictions
			at week $T-1$; in the right panel, the $x$-axis is the stringency of restrictions
			at week $T-1$, and the $y$-axis is the vaccine coverage at week
			$T-1$.}%
	\end{minipage}
\end{figure}

\begin{figure}[H]
	\caption{\label{fig:Policy_k=00003D4} Optimal policy based on \\$X_{T-1}^{P}=(\text{change in deaths}_{T-1},\text{restriction stringency}_{T-1},\text{vaccine coverage}_{T-1},$
		$\text{change in cases}_{T-1})$}
	
	\centering
	
	\includegraphics[scale=0.65]{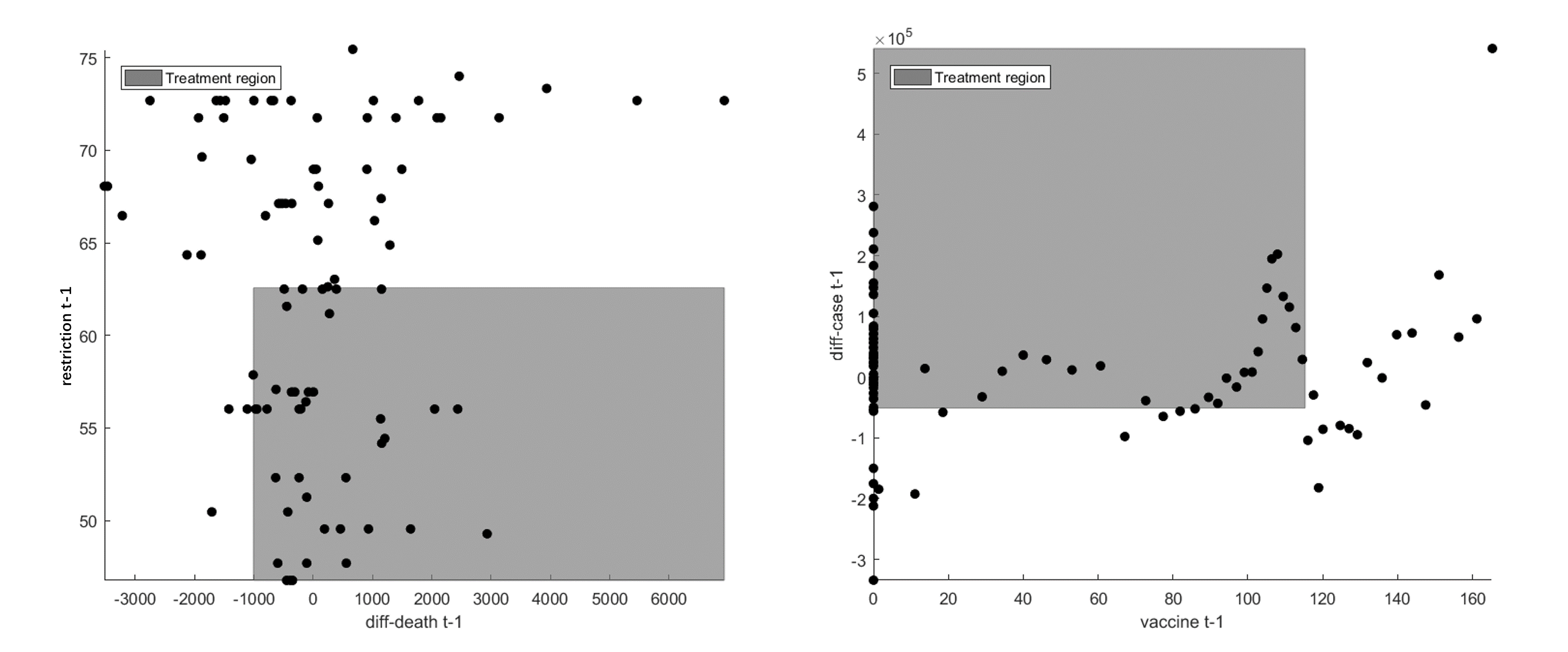}
	
	\noindent\begin{minipage}[t]{1\columnwidth}%
		{\footnotesize{}\raggedleft In the left panel, the $x$-axis is the
			change in deaths at week $T-1$, and the $y$-axis is the stringency of restrictions
			at week $T-1$; in the right panel, the $x$-axis is the vaccine
			coverage at week $T-1$, and the $y$-axis is the change in cases at week
			$T-1$.}%
	\end{minipage}
\end{figure}

\subsubsection{Algorithm details of the T-EWM decision tree} \label{app:emp_tree_algorithm}

The dataset used in the empirical application contains 92 observations, a relatively small size. If a minimum node size is not set, the decision tree may produce nodes with only one or two observations. In these instances, the policy recommendations derived from the tree, based on empirical welfare, become highly variable and can  sometimes be difficult to interpret.

Thus, we set the minimum size for each node to four. During the tree-growing process, if a split determined by the optimal policy variable results in any node having fewer than four observations, the algorithm will disregard this variable. It will then identify a sub-optimal variable from the remaining set of policy variables and execute the split. This process continues until a split is achieved where each resulting node contains at least four observations. If no policy variable can produce a split resulting in nodes with the minimum required size, the algorithm will cease splitting at this node and move to the next. Both trees presented in Figure \ref{T-EWM tree} of the main text and Figure \ref{T-EWM tree2} below are generated using this algorithm.

\subsubsection{Additional results from the T-EWM decision tree} \label{app:emp_tree_result}

In Remark \ref{rem:higher/inf Markov} at the end of Section \ref{discrete_bound}, we extend our theoretical framework to accommodate higher-order Markovian structures. In this subsection, we revisit the empirical application, setting the Markov order $q=2$ under the alternative Assumption \ref{ass:toy_example}{*}, which is introduced in Appendix \ref{app_finite_markov}.
           
For $q=2$, we need to re-estimate the propensity score function. Recall that for the case of $q=1$, the set of covariates of the propensity score is given by \eqref{eq:app_X}: 
\begin{align*}
	\ensuremath{}\ensuremath{X_{t}=} & (\text{cases}_{t},\text{deaths}_{t},\text{change in cases}_{t},\text{change in deaths}_{t},\nonumber \\
	& \text{restriction stringency}_{t},\text{vaccine coverage}_{t},\text{economic conditions}_{t}).
\end{align*}
Now, we shall use both $X_{t-1}$ and $X_{t-2}$ as the covariates for the propensity score. The histograms of the estimated propensity scores for the observed data are presented in Figure \ref{fig:q1_q2}: panel (a) for $q=1$ (replicated from Figure \ref{fig:PS_app}), and panel (b) for $q=2$.
\begin{figure}[h]
	\begin{minipage}{0.35\textwidth} 
		\includegraphics[width=75mm]{pictures/covid-19_app/PS}
		\centering
	{\footnotesize	(a) $q=1$ and $e_t=e_t(X_{t-1})$}
	\end{minipage}
	\hfil
	\begin{minipage}{0.35\textwidth}
		\includegraphics[width=75mm]{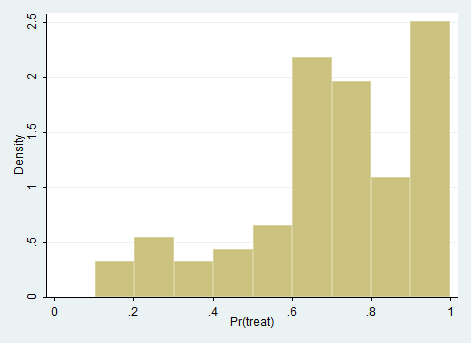}
		\centering
	{\footnotesize	(b)  $q=2$ and $e_t=e_t(X_{t-1},X_{t-2})$}
		
	\end{minipage}
	\caption{\label{fig:q1_q2}The estimated PS for $q=1$ and $q=2$.}
\end{figure}
For Figure \ref{fig:q1_q2}b, we have observed that a non-trivial number of estimated propensity scores are close to 1. This is not necessarily evidence of a violation of the overlap assumption. Overfitting may be the main cause of this pattern, as an increased number of regressors can improve the fit of the predicted values to the binary dependent variables. (Currently, the sample size is 92, and the number of regressors is 14.)

The estimated T-EWM decision tree, as presented in panel (b) of Figure \ref{T-EWM tree2}, is obtained by censoring the estimated propensity scores at 0.025 and 0.975. The set of policy variables is represented by a 14-dimensional vector, i.e.,  $X_{T-1}^{P}=(X_{T-1},X_{T-2})$, where $X_{t}$ is given by \eqref{eq:app_X}.
\begin{figure}[h]
	\begin{minipage}{0.35\textwidth} 
		\includegraphics[width=75mm]{pictures/covid-19_app/tree_q1_p7}
		\centering
		{\footnotesize(a) $q=1$ and $X_{T-1}^{P}=X_{T-1}$}
	\end{minipage}
	\hfil
	\begin{minipage}{0.35\textwidth}
		\includegraphics[width=80mm]{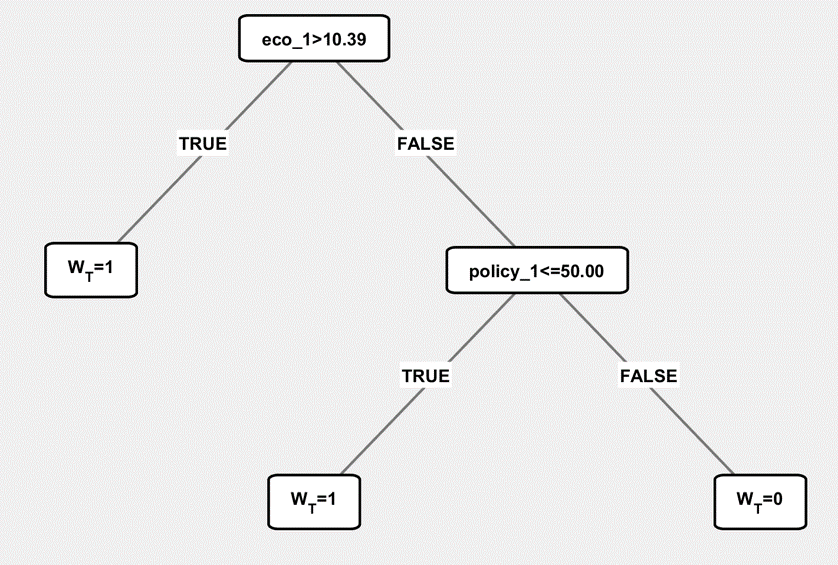}
		\centering
	{\footnotesize(b)   $q=2$ and $X_{T-1}^{P}=(X_{T-1},X_{T-2})$}
		
	\end{minipage}
	\caption{\label{T-EWM tree2}T-EWM decision trees for $q=1$ and $q=2$.}
\end{figure}
Figure \ref{T-EWM tree} in the main text is reproduced here in panel (a). We observe that with a higher Markov order and an enlarged set of policy variables, the treatment region recommended by the T-EWM tree has changed. Both trees have chosen the economic condition as the variable for the first split, but the thresholds selected are slightly different. In the left branch of the second level of panel (b), the algorithm stops splitting the node since it cannot find a variable that can produce two leaves containing at least four observations each. Nevertheless, both trees provide policy recommendations that are reasonable to interpret.
\newpage

\section{Online Appendix: Other Results and Extensions} \label{App_2}

\subsection{On Assumption \ref{ass:correct_specify}} \label{sec:nonpara_conti}


In Section \ref{sec:uncon_con}, we have shown that we can bound conditional regret by  unconditional regret if the unconditional first-best policy is feasible. 
In this Section, we present examples where this method is applicable, and examples where it is not. More examples can be found in Appendix \ref{sec:con_uncon}. When Assumption \ref{ass:correct_specify} does not hold, we proceed to the nonparametric method discussed in Section \ref{sec:conti_kernel}. \\

\subsubsection{Examples} \label{sec:motivation_kernel}

In this subsection, we discuss the relationship between the optimal policy solutions in terms of conditional and unconditional welfare.
This relationship is not straightforward. In some cases, unconditional welfare does bound conditional welfare, and the methods and results in Section \ref{sec:uncon_con} directly apply.
However, in other cases, when the first-best policy is not feasible, we shall use the kernel estimator. This motivates us to present the kernel estimator as an important alternative to the method of Section \ref{sec:uncon_con}.

\begin{exmp}\label{exmp:5.1}
Observing  $X_{T-1}=\left(Y_{T-1}, W_{T-1},Z_{T-1}\right)^{\prime}\in\mathbb{R}\times\{0,1\}\times\mathbb{R}^{2}$ at time $T-1$, the planner chooses $W_T$ based on the last two continuous variables. The feasible policy class is rectangles in $\mathbb{R}^{2}$:
$$\mathcal{G}=\left\{ z\in[a_{1},a_{2}]\times[b_{1},b_{2}]:a_{1},a_{2},b_{1},b_{2}\in\mathbb{R}\right\}.$$
The corresponding unconditional problem is $\underset{G\in\mathcal{G}}{\mbox{max }}\mathcal{W}_{T}(G).$ Suppose the planner is interested in maximizing the welfare conditional on $Z_{T-1}=z:=(z^{(1)},z^{(2)})$. The conditional problem is to find
\beq \label{eq:condition_first}
 \arg \max_{G\in\mathcal{G}}\mathcal{W}_{T}(G|Z_{T-1}=z).
\eeq

We illustrate how the policy solutions can differ between conditional and unconditional welfare functions.
In Figures \ref{fig:fb_equi} and \ref{fig:nofb_noequi}, the \textit{shaded area} represents the region where the conditional average treatment effect is positive. The first best unconditional policy assigns $W_T = 1$ to any value of $Z_{T-1}$ inside this region, and $W_T = 0$ to any point outside it. This policy is also the solution to  \eqref{eq:condition_first}. The \textit{red rectangle} is the best feasible (i.e., rectangular) unconditional policy. This policy assigns $W_T = 1$ to any realization of $Z_{T-1}$ inside the rectangle. The conditional policy concerns what policy to assign only at a particular value of $Z_{T-1}$ corresponding to its realized value in the data (blue point in the right-hand side panel of Figure \ref{fig:fb_equi}). If the best feasible unconditional policy (red rectangle) agrees with the first best unconditional policy (shared area), then the policy choice informed by the unconditional policy is guaranteed to be optimal in terms of the conditional policy at any conditioning value of $Z_{T-1}$.
\begin{figure}[H]\label{fig:1}
        \caption{\label{fig:fb_equi} $G_{\text{FB}}^{*}\in\mathcal{G}$}
        \centering
        \includegraphics[clip,scale=0.5]{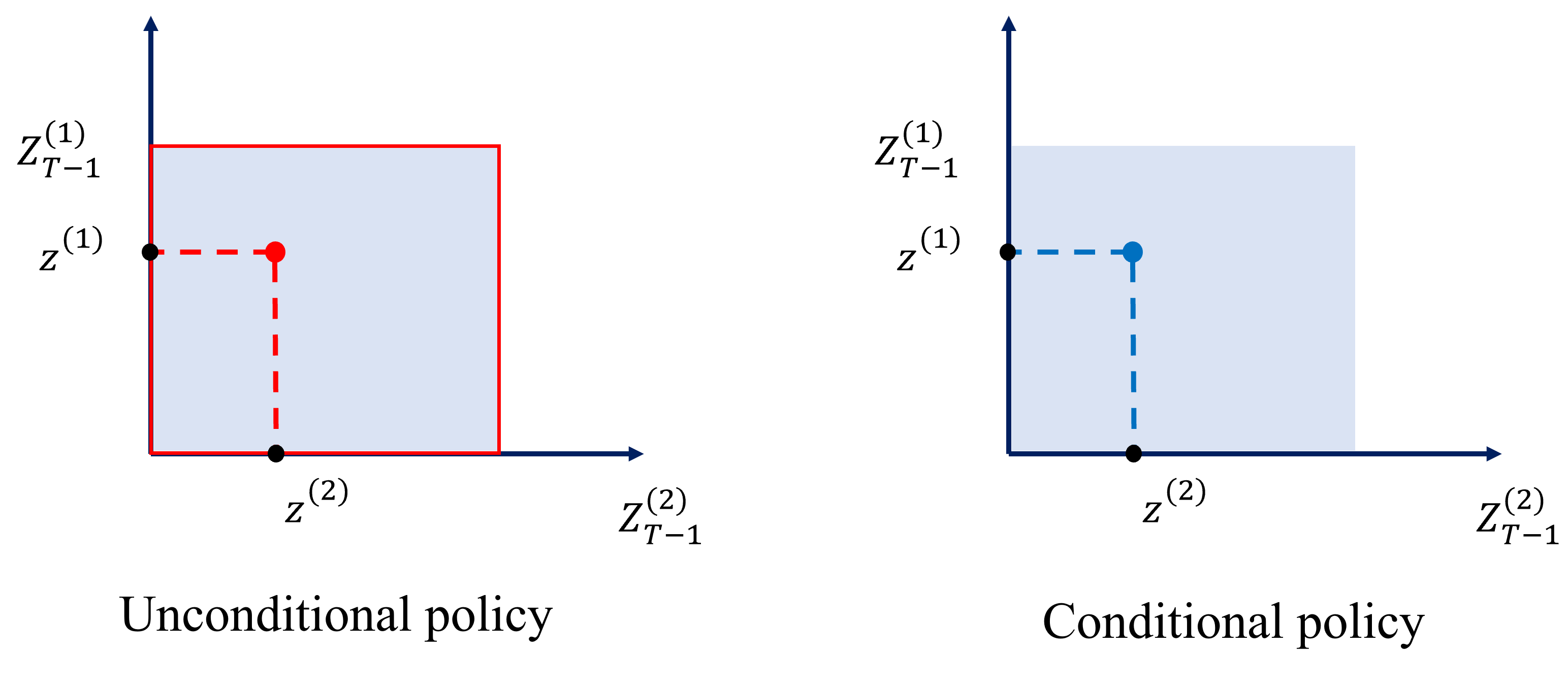}\\
\end{figure}
\begin{figure}[H]
        \caption{\label{fig:nofb_noequi} $G_{\text{FB}}^{*}\protect\notin\mathcal{G}$
                }
        \centering
        \includegraphics[clip,scale=0.5]{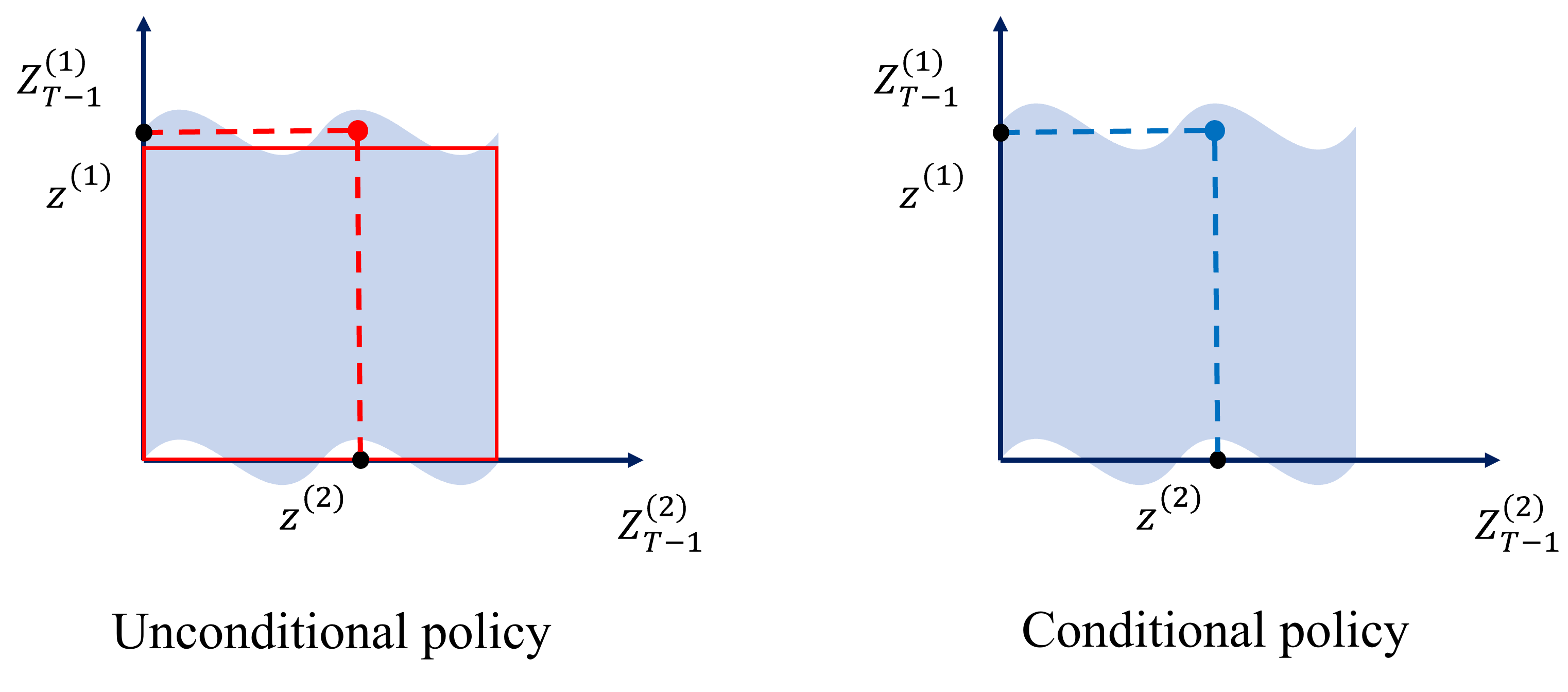}\\
\end{figure}
 \end{exmp}

Figure \ref{fig:nofb_noequi} shows a case  where the first best unconditional policy is not contained in the class of feasible policies: it is not possible to implement the policy choice that coincides with the shaded area. In this case, the policy chosen by the unconditionally optimal feasible policy (red rectangle in the left-hand side panel) does not coincide with the optimal policy choice of the conditional policy. The highlighted point $(z^{(1)}, z^{(2)})$ lies outside the red rectangle, so the best feasible unconditional policy would set $W_t = 0$. However, it lies within the shaded region, so the conditional average treatment effect given $Z_{T-1}=(z^{(1)}, z^{(2)})$ is positive, and the optimal policy conditional on $Z_{T-1}=(z^{(1)}, z^{(2)})$ is to set $W_t = 1$.

These two examples show the importance of the first best unconditional policy: when it is included in the set of feasible unconditional policies, the solution to the conditional problem corresponds to the solution to the unconditional problem. When it is not included, we do not have this correspondence.
 In Appendix \ref{simple}, we show that the feasibility of the first best solution to the unconditional problem is a sufficient condition. A sufficient and necessary condition is given by Assumption \ref{ass:correct_specify}.

\subsection{The relationship between the conditional and unconditional cases} \label{sec:con_uncon}

In this section, we present further examples to illustrate that $G_{\text{FB}}^{*}\in\mathcal{G}$ is a sufficient but not necessary condition for equivalence between the unconditional and conditional problems. Finally, we extend Example \ref{exmp:5.1} to show how Assumption \ref{ass:correct_specify} ensures this equivalence.

\subsubsection{$G_{\text{FB}}^{*}\in\mathcal{G}$ is sufficient: a discrete and a continuous case} \label{simple}

Let $X_{t-1}=W_{t-1}\in\{0,1\}$, and $\mathcal{G}$ be a subclass
of the power set of $\{0,1\}$, $\mathcal{P}=\left\{ \emptyset,\{0\},\{1\},\{0,1\}\right\} $.

For compactness, suppress $W_{T-1}$ in $Y_{T}(W_{T-1},1)$. Unconditional
welfare can be written as:
\begin{align}
\mathcal{W}_{T}(G)= & \mathsf{E}\left[ Y_{T}(1)\mathbf{1}(W_{T-1}\in G)+Y_{T}(0)\mathbf{1}(W_{T-1}\notin G)\right]\nonumber \\
= & \mathsf{E}\left[ \left[Y_{T}(1)-Y_{T}(0)\right]\mathbf{1}(W_{T-1}\in G)+Y_{T}(0)\right] \nonumber \\
= & \mathsf{E}\left[\tau(W_{T-1})\mathbf{1}(W_{T-1}\in G)\right]+\mathsf{E}\left[Y_{T}(0)\right],\label{eq:uncond_disc}
\end{align}

where $\tau(w_{T-1})=\mathsf{E}\left[Y_{T}(1)-Y_{T}(0)|W_{T-1}=w_{T-1}\right]$,
and the last equality follows from the law of iterated expectations.

The first best unconditional policy is
\begin{equation}
G_{\text{FB}}^{*}\equiv\{w_{T-1}\in\{0,1\}:\tau(w_{T-1})\geq0\}.\label{eq:def_fb}
\end{equation}

By the assumption that $G_{\text{FB}}^{*}\in\mathcal{G}$
\begin{equation}
G_{\text{FB}}^{*}=\mbox{argmax}_{G\in\mathcal{G}}\mathcal{W}_{T}(G).\label{eq:uncon_solution}
\end{equation}

The planner's conditional objective function can be written as
\begin{align}
\mathcal{W}_{T}(G|W_{T-1}) & =\mathsf{E}\left[ Y_{T}(1)\mathbf{1}(W_{T-1}\in G)+Y_{T}(0)\mathbf{1}(W_{T-1}\notin G)|W_{T-1}\right] \nonumber \\
 & =\mathsf{E}\left[ \left[Y_{T}(1)-Y_{T}(0)\right]\mathbf{1}(W_{T-1}\in G)+Y_{T}(0)|W_{T-1}\right] \nonumber \\
 & =\mathsf{E}\left[ \left[Y_{T}(1)-Y_{T}(0)\right]\mathbf{1}(W_{T-1}\in G)|W_{T-1}\right] +\mathsf{E}\left[Y_{T}(0)|W_{T-1}\right]\nonumber \\
 & =\mathsf{E}\left[\tau(W_{T-1})\mathbf{1}(W_{T-1}\in G)|W_{T-1}\right]+\mathsf{E}\left[Y_{T}(0)|W_{T-1}\right].\label{eq:tau_con}
\end{align}

To check whether $G_{\text{FB}}^{*}$ is optimal in the conditional
problem, we need to study this problem w.r.t. $\mathcal{P}$
\begin{align}
 & \underset{G\in\mathcal{\mathcal{P}}}{\mbox{max }}\mathcal{W}_{T}(G|W_{T-1}=w_{T-1})\nonumber \\
= & \underset{G\in\mathcal{\mathcal{P}}}{\mbox{max }}\mathsf{E}\left[\tau(W_{T-1})\mathbf{1}(W_{T-1}\in G)|W_{T-1}=w_{T-1}\right]\nonumber \\
= & \underset{G\in\mathcal{\mathcal{P}}}{\mbox{max }}\tau(w_{T-1})\mathbf{1}(w_{T-1}\in G)\nonumber \\
= & \tau(w_{T-1})\mathbf{1}(w_{T-1}\in G_{\text{FB}}^{*}),\label{eq:con_max}
\end{align}


where the first equality follows from (\ref{eq:tau_con}), and the last one follows from the definition of $G_{\text{FB}}^{*}$ in (\ref{eq:def_fb}). The equivalence follows by combining (\ref{eq:uncon_solution})
and (\ref{eq:con_max}).\\

We now turn to the continuous conditioning variable case.

The planner's unconditional welfare function can be rewritten as (suppressing $W_{T-1}$ in $Y_{T}(W_{T-1},1)$)
\begin{align*}
\mathcal{W}_{T}(G)= & \mathsf{E}\left[ Y_{T}(1)\mathbf{1}(X_{T-1}\in G)+Y_{T}(0)\mathbf{1}(X_{T-1}\notin G)\right] \\
= & \mathsf{E}\left[ \left[Y_{T}(1)-Y_{T}(0)\right]\mathbf{1}(X_{T-1}\in G)+Y_{T}(0)\right] \\
= & \mathsf{E}\left[\tau(X_{T-1})\mathbf{1}(X_{T-1}\in G)\right]+\mathsf{E}\left[Y_{T}(0)\right],
\end{align*}

where $\tau(x_{T-1})=\mathsf{E}\left[Y_{T}(1)-Y_{t}(0)|X_{T-1}=x_{T-1}\right]$,
and the last equality follows from the law of iterated expectations.

The first best policy is
\begin{equation}
G_{\text{FB}}^{*}\equiv\{x_{T-1}\in\mathbb{R}^{2}:\tau(x_{T-1})\geq0\}.\label{eq:un_fb_con}
\end{equation}

(For simplicity, we assume that $G_{\text{FB}}^{*}$ is measurable,
i.e., $G_{\text{FB}}^{*}\in\mathfrak{B}(\mathbb{R}^{2})\subset\mathcal{P}(\mathbb{R}^{2})$.
This means that we don't have to deal with the outer probability and expectation.)
By assumption $G_{\text{FB}}^{*}\in\mathcal{G}$, 
\begin{equation}
G_{\text{FB}}^{*}\in \mbox{argmax}_{G\in\mathcal{G}}\mathcal{W}_{T}(G).\label{eq:uncon_solution_gen}
\end{equation}

We now introduce some notations. Let $X_{t-1}=[X_{t-1}^{(1)},X_{t-1}^{(2)}]^{\prime}\in\mathbb{R}^{2}$.
$X_{t-1}^{(1)}$ and $X_{t-1}^{(2)}$ can be continuous or discrete,
e.g., one of them can be the treatment $W_{t-1}.$
Let $\mathcal{G}$ be a class of subsets of $\mathbb{R}^{2}$, and
$\mathcal{G}^{(1)}$ a class of subsets of $\mathbb{R}.$ The planner's conditional objective function can be written as:
\begin{align}
 & \mathcal{W}_{T}(G^{(1)}|X_{T-1}^{(2)})\nonumber \\
= & \mathsf{E}\left[ Y_{T}(1)\mathbf{1}(X_{T-1}^{(1)}\in G^{(1)})+Y_{T}(0)\mathbf{1}(X_{T-1}^{(1)}\notin G^{(1)})|X_{T-1}^{(2)}\right] \nonumber \\
= & \mathsf{E}\left[ \left[Y_{T}(1)-Y_{T}(0)\right]\mathbf{1}(X_{T-1}^{(1)}\in G^{(1)})+Y_{T}(0)|X_{T-1}^{(2)}\right] \nonumber \\
= & \mathsf{E}\left[ \left[Y_{T}(1)-Y_{T}(0)\right]\mathbf{1}(X_{T-1}^{(1)}\in G^{(1)})|X_{T-1}^{(2)}\right] +\mathsf{E}\left[Y_{T}(0)|X_{T-1}^{(2)}\right]\nonumber \\
= & \mathsf{E}\left[ \mathsf{E}\left[\tau(X_{T-1})\mathbf{1}(X_{T-1}^{(1)}\in G^{(1)})|X_{T-1}\right]|X_{T-1}^{(2)}\right] +\mathsf{E}\left[Y_{T}(0)|X_{T-1}^{(2)}\right],\label{eq:con_condi_decomp}
\end{align}

where the last equality follows from the law of iterated expectations.

We also assume that the conditional first-best policy $G_{\text{CFB}}^{*}$
is measurable, i.e., $G_{\text{CFB}}^{*}(X_{T-1}^{(2)}=x_{T-1}^{(2)})\in\mathfrak{B}(\mathbb{R})$,
for every $x_{T-1}^{(2)}\in\mathbb{R}.$ Note that $\tau(x_{t-1})=\tau(x_{t-1}^{(1)},x_{t-1}^{(2)}).$
Following the last row of (\ref{eq:con_condi_decomp}), the optimal
conditional policy is defined to be
\begin{align*}
 & \text{argmax}_{G^{(1)}\in\mathfrak{B}(\mathbb{R})}\mathcal{W}_{T}(G^{(1)}|X_{T-1}^{(2)}=x_{T-1}^{(2)})\\
= & \text{argmax}_{G^{(1)}\in\mathfrak{B}(\mathbb{R})}\mathsf{E}\left\{ \mathsf{E}\left[\tau(X_{T-1})\mathbf{1}(X_{T-1}^{(1)}\in G^{(1)})|X_{T-1}\right]|X_{T-1}^{(2)}=x_{T-1}^{(2)})\right\} \\
= & \text{argmax}_{G^{(1)}\in\mathfrak{B}(\mathbb{R})}\mathsf{E}\left\{ \mathsf{E}\left[\tau(X_{T-1}^{(1)},x_{T-1}^{(2)})\mathbf{1}(X_{T-1}^{(1)}\in G^{(1)})|X_{T-1}^{(1)}\right]\right\} \\
= & \text{argmax}_{G^{(1)}\in\mathfrak{B}(\mathbb{R})}\mathsf{E}\left[\tau(X_{T-1}^{(1)},x_{T-1}^{(2)})\mathbf{1}(X_{T-1}^{(1)}\in G^{(1)})\right].
\end{align*}

The optimal policy conditional on $X_{T-1}^{(2)}=x_{T-1}^{(2)}$
is
\begin{equation}
G_{\text{CFB}}^{*}(X_{T-1}^{(2)}=x_{T-1}^{(2)})=\{x_{T-1}^{(1)}\in\mathbb{R}:\tau(x_{T-1}^{(1)},x_{T-1}^{(2)})\geq0\}.\label{eq:cond_fb_con}
\end{equation}

Comparing (\ref{eq:un_fb_con}) and (\ref{eq:cond_fb_con}), we see
$G_{\text{CFB}}^{*}(X_{T-1}^{(2)}=x_{T-1}^{(2)})$ is given by the intersection
of $G_{\text{FB}}^{*}$ with the line $X_{T-1}^{(2)}=x_{T-1}^{(2)}$.

\subsubsection{$G_{\text{FB}}^{*}\in\mathcal{G}$ is not a necessary condition \label{subsec: relax}}
Here we show that $G_{\text{FB}}^{*}\in\mathcal{G}$ is sufficient but not necessary for correspondence between the optimal conditional policy and the first-best unconditional policy.

\begin{exmp}
\textbf{Univariate covariates}

We set $X_{t-1}=W_{t-1}\in\{0,1\}$ and $\mathcal{G}=\left\{ \emptyset,\{1\}\right\} \subset\mathcal{P}=\left\{ \emptyset,\{0\},\{1\},\{0,1\}\right\} .$
Suppose
\[
\tau(1)=\tau(0)=1>0.
\]

For the unconditional problem (\ref{eq:uncond_disc}), the
first-best policy is then
\[
G_{\text{FB}}^{*}=\{0,1\}.
\]

Note that $G_{\text{FB}}^{*}\notin\mathcal{G}$, so the solution to the
unconditional problem is
\[
G_{*}\equiv\underset{G\in\mathcal{G}}{\mbox{argmax }}\mathcal{W}_{T}(G)=\{1\}.
\]

Consider the conditional problem for $W_{T-1}=1$.
\begin{align}
 & \underset{G\in\mathcal{\mathcal{P}}}{\mbox{max }}\mathcal{W}_{T}(G|W_{T-1}=1)\nonumber \\
= & \underset{G\in\mathcal{\mathcal{P}}}{\mbox{max }}\mathsf{E}\left[\tau(W_{T-1})\mathbf{1}(W_{T-1}\in G)|W_{T-1}=1\right]\nonumber \\
= & \underset{G\in\mathcal{\mathcal{P}}}{\mbox{max }}\tau(1)\mathbf{1}(1\in G)\nonumber \\
= & \tau(1)\mathbf{1}(1\in G_{*}).\label{eq:con_max-1}
\end{align}

Therefore, the solution to the unconditional problem is also the solution to the
conditional problem. (This is only the case for $W_{T-1}=1.$)\\
\end{exmp}
\begin{exmp}
\textbf{ Two-dimensional discrete covariates}

Set $X_{t-1}=\left(W_{t-1},Z_{t-1}\right)^{\prime}\in\{0,1\}\times\{i\}_{i=0}^{10}$$.$
Suppose
\[
G_{\text{FB}}^{*}=\{0,1\}\times\{1,3,5,7,9\}.
\]

and
\[
\mathcal{G}=\left\{ \begin{array}{c}
\left\{ \left(w,z\right):w\in\{0,1\},z\in\{i\}_{i=0}^{10}\text{, and }z\in[0,a)\right\} ,\\
a\in\mathbb{R}^{+}
\end{array}\right\} .
\]

For example, if $G\in\mathcal{G}$ and $a=4.5$, $(0,4)\in G$,
so $(0,0)$, (0,1), (0,2), and (0,3) are also in $G$.

Note that $G_{\text{FB}}^{*}\notin\mathcal{G}$. Suppose the best feasible unconditional policy is
\begin{align} \label{two-non-same}
G_{*} & \equiv\underset{G\in\mathcal{G}}{\mbox{argmax }}\mathcal{W}_{T}(G)=\{0,1\}\times\{0,1,2,3,4,5\},
\end{align}
We can always construct a data-generating process with a certain type of conditional treatment effect $\tau$, such that \eqref{two-non-same} is the best feasible unconditional policy. For example, let $\tau(x_{t-1})=\tau(w_{t-1},z_{t-1})$, set $\text{Pr}(Z_{t-1}=i)=1/10$, $Z_{t-1}\perp W_{t-1}$, and for any
$w\in\{0,1\}$ assume

\[
\tau(w,z)=\begin{cases}
2 & z\in\{1,3,5\}\\
0.1 & z\in\{7,9\}\\
-1.5 & z\text{ is even.}
\end{cases}
\]

For example, a policy $G_{a}$ with $a=5.5$ includes $\{w,0\}$, $\{w,2\}$ and $\{w,4\}$, which
has a welfare cost of $3\times-1.5$.

The conditional problem is
\begin{align*}
\underset{G^{z}\in\mathcal{G}^{z}}{\mbox{max }}\mathcal{W}_{T}(G^{z}|W_{T-1}=w)
\end{align*}

where $\mathcal{G}^{z}=\left\{ z:z\in\{i\}_{i=0}^{10}\text{, and }z\in[0,a),a\in\mathbb{R}^{+}\right\} $.
Here
\begin{align*}
  \underset{G^{z}\in\mathcal{G}^{z}}{\mbox{argmax }}\mathcal{W}_{T}(G^{z}|W_{T-1}=w)
= \{0,1,2,3,4,5\},
\end{align*}

which is the intersection of $G_{*}$ with $\text{Supp}(Z_{T-1})$.
We have the conclusion.

\end{exmp}
\subsubsection{An illustration of Assumption \ref{ass:correct_specify}}

Recall Assumption \ref{ass:correct_specify}
\[ \arg\sup_{G\in \mathcal{G}} \mathcal{W}_{T}(G)\subset \arg\sup_{G\in \mathcal{G}} \mathcal{W}_{T}(G|x). \]
We extend Example \ref{exmp:5.1} to show why Assumption \ref{ass:correct_specify} ensures equivalence between the unconditional and conditional problems.

\textbf{Example \ref{exmp:5.1}, continued.}
\begin{figure}[H]
	\caption{ \label{fig:3} $G_{\text{FB}}^{*}\protect\notin\mathcal{G}$
		but Assumption \ref{ass:correct_specify} is satisfied}
		\centering
	\includegraphics[clip,scale=0.5]{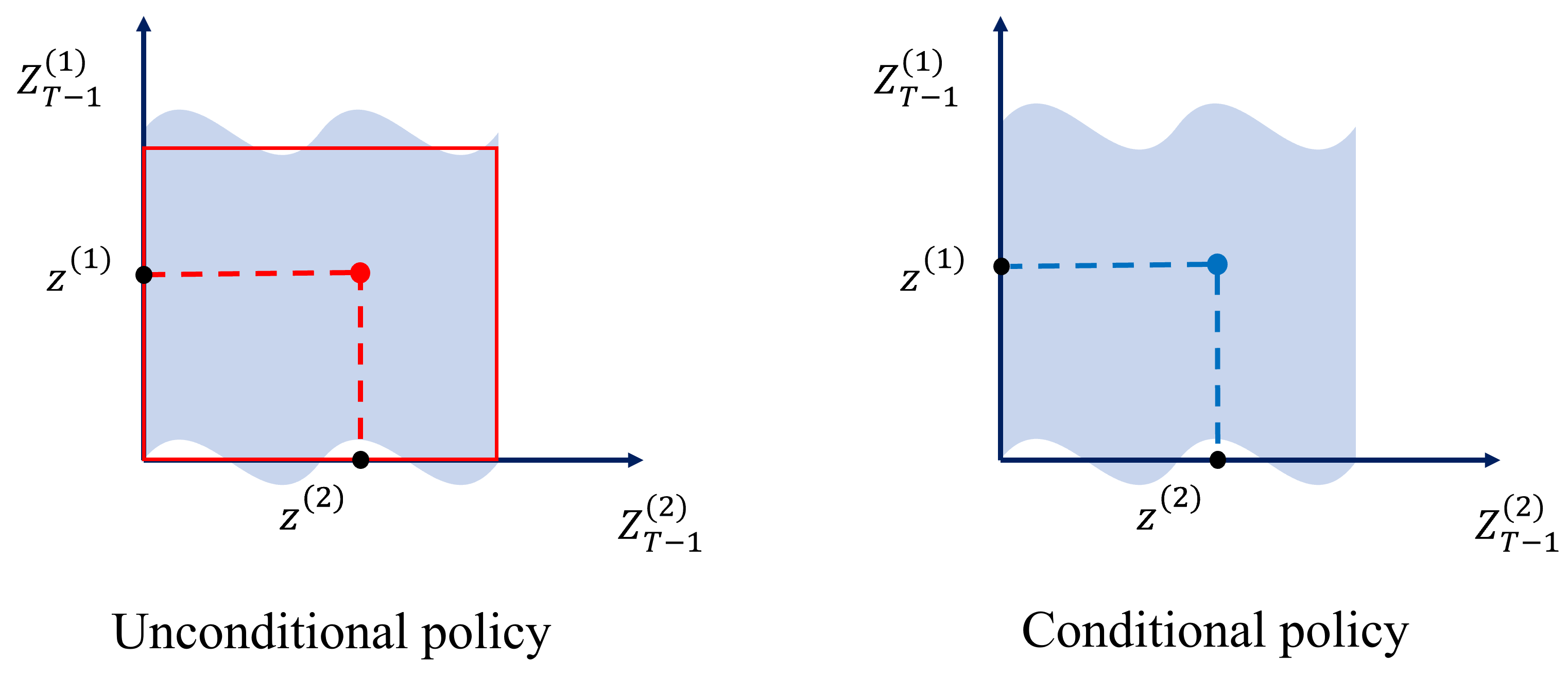}
\end{figure}

Figure \ref{fig:3} shows a case where the solutions coincide even though the first best unconditional policy is not available. The red square differs from the shaded area (the first best), but the red point is inside the red square, and the blue point is inside the shaded area. The social planner will set $W_T = 1$ in both cases. Hence, the feasibility of the first best solution is sufficient but not necessary for conditional and unconditional welfare to coincide.

Thus, there exist situations where the first-best solution is not feasible, but we can still achieve correspondence. This example confirms the validity of Assumption \ref{ass:correct_specify}.

\subsection{Multi-period welfare with continuous covariates}
\label{appendmulti}
In this section, we extend the results in the previous sections to a multi-period setup.  We focus on a simple offline decision problem with deterministic treatment rules.
Namely, we do not update the policy after time $T$, but we allow the welfare function to include the updated realized  observations. A social planner is faced with a finite multi-period
welfare target and a set of continuous policy variables is available. Without loss
of generality, we focus on the case of a two-period policy assignment.

The planner's decision-making procedure can be described as follows. Based
on a sample collected from time 0 to $T-1$, the planner chooses (or estimates)
a \textit{decision rule} that will be implemented on $T$ and $T+1$.
The rule is characterized by two sets defined on $\mathcal{X}$, and
we write them as $G_{1:2}=\{G_{1},G_{2}\}\in\mathcal{X}^{2}$ . At
the beginning of time $T$, the planner makes the decision by $W_{T}=1(X_{T-1}\in G_{1})$,
then she/he will observe the outcome $Y_{T}$ as well as $X_{T}=(Y_{T},W_{T},Z_{T})$
at the end of time $T$. At the beginning of time $T+1$, the planner will
make the decision by $W_{T+1}=1(X_{T}\in G_{2})$. Since we focus
on an offline problem, $G_{2}$ is chosen (estimated) at the end of
$T-1$ and implemented at $T+1$ after  $X_{T}$ is revealed.
Also, the class of subsets of $\mathcal{X}^{2}$ where $G_{1:2}$
is chosen from, i.e., $\mathcal{G}_{1:2}$, is assumed to be of polynomial classification and  with finite
VC dimension.


Recall the notations:
\begin{align*}
S_{t}(G) & =Y_{t}(W_{t-1},1)1(X_{t-1}\in G)+Y_{t}(W_{t-1},0)1(X_{t-1}\notin G),\\
\mathcal{W}_{t}(G|X_{t-1}) &= \E\left[S_{t}(G)|X_{t-1}\right],\\
\widehat{\mathcal{W}}_{t}(G)&=  \frac{Y_{t}W_{t}}{e_{t}(X_{t-1})}1(X_{t-1}\in G)+\frac{Y_{t}\left(1-W_{t}\right)}{1-e_{t}(X_{t-1})}1(X_{t-1}\notin G).
\end{align*}
Let us define\begin{align*}
\mathcal{\bar{W}}_{t}(G) & =\mathsf{E}\left[S_{t}(G)|\mathcal{F}_{t-1}\right].
\end{align*}

We maintain Assumption \ref{ass:continuous_Markov}, the Markovian condition of order 1. Then, the
conditional two-period welfare function is defined as \begin{align} \label{wel_mul_conti}
&\mathcal{W}_{T:T+1}(G_{1:2}|\mathcal{F}_{T-1})  \nonumber\\
  =&\mathcal{W}_{T}(G_{1}|X_{T-1})+\mathsf{E}\left[S_{T+1}(G_{2})|\mathcal{F}_{T-1}\right]\nonumber\\
  =&\mathcal{W}_{T}(G_{1}|X_{T-1})+\mathsf{E}\left[\mathsf{E}\left[S_{T+1}(G_{2})|\mathcal{F}_{T}\right]|\mathcal{F}_{T-1}\right]\nonumber\\
  =&\mathcal{W}_{T}(G_{1}|X_{T-1})+\mathsf{E}\left[ \mathsf{E}\left[S_{T+1}(G_{2})|X_{T}\right]|X_{T-1}\right] \nonumber\\
 =&\mathcal{W}_{T}(G_{1}|X_{T-1})+\mathsf{E}\left[ \mathsf{E}\left[S_{T+1}(G_{2})|Y_{T}(W_{T-1},1(X_{T-1}\in G_{1})),W_{T}=1(X_{T-1}\in G_{1})\right],Z_t|X_{T-1}\right],
\end{align}

where the third equality follows from Assumption \ref{ass:continuous_Markov}.
It shall be noted that in this case, the welfare function is conditional on the information available up to time $T-1$, and we fixed the conditioning value of
$X_{T-1}$ at the observed one.





\subsubsection{Direct estimation of the conditional welfare function. }
With the defined two-period welfare function on hand, we can discuss how to estimate the welfare function.
Similar to that discussed
in Section \ref{sec:uncon_con} for the single-period welfare target, we can either directly estimate the conditional welfare
functions or the regret bounds by the unconditional welfare function.
In this subsection, we directly estimate the two-period conditional welfare function proposed.
Without loss
of generality, we start with the case $X_{t}=\left(Y_{t},W_{t}\right)\in\mathbb{R}\times\{0,1\}$,
i.e., $Y_{t}$ is the only continuous variable in $X_{t}$. It can
be easily extended to the case with other continuous variables $X_{t}=\left(Y_{t},W_{t},Z_{t}\right)\in\mathbb{R}\times\{0,1\}\times\mathbb{R}^{k}$.

Again, we apply the abbreviation: $(\cdot|x)=(\cdot|X_{T-1}=x)$,
where $x=(y,w).$ Now, we have
\begin{align*}
\mathcal{W}_{T:T+1}(G_{1:2}|x) & =\mathcal{W}_{T}(G_{1}|x)+\mathsf{E}\left(\mathcal{W}_{T+1}(G_{2}|X_{T},W_{T}=1(X_{T-1}\in G_{1}))|X_{T-1}=x\right).
\end{align*}

Let $\mathcal{G}_{1:2}$ be the class of feasible policies, which
is a subclass of the class of all the measurable functions defined
on $\mathcal{X}\times\mathcal{X}\to\{0,1\}^{2}$. Conditional on $X_{T-1}=x$,
we define
\[
G_{1:2}^{*}\in\text{argmax}_{G_{1:2}\in\mathcal{G}_{1:2}}\mathcal{W}_{T:T+1}(G_{1:2}|x).
\]
Note $G_{1:2}^{*}$ can depend on $x$, but we suppress this dependence
in the notation.


Similarly to \eqref{eq:sample_kernel}, for any policies $G_{1}$ and $G_{2}$, define
\begin{align*}
\widehat{\mathcal{W}}(G_{1:2}|x) & =\frac{\sum_{t=1}^{T-1}\mathbf{1}\left(W_{t-1}=w\right)K_{h}(Y_{t-1},y)\widehat{\mathcal{W}}_{t}(G_{1})}{\sum_{t=1}^{T-1}\mathbf{1}\left(W_{t-1}=w\right)K_{h}(Y_{t-1},y)}\\
 & +\frac{\sum_{t=1}^{T-2}K_{h}(Y_{t-1},y)\mathbf{1}\left(W_{t-1}=w\right)\mathbf{1}\left(W_{t}=1(X_{t-1}\in G_{1})\right)\widehat{\mathcal{W}}_{t+1}(G_{2})}{\sum_{t=1}^{T-2}K_{h}(Y_{t-1},y)\mathbf{1}\left(W_{t-1}=w\right)\mathbf{1}\left(W_{t}=1(X_{t-1}\in G_{1})\right)},
\end{align*}

where $K_{h}(a,b)=\frac{1}{h}K(\frac{a-b}{h})$ with $K(\cdot)$ assumed
to be a bounded kernel function with a bounded support. 


Let $\mathcal{G}^{2}$ denote the class of feasible unconditional
decision sets, which is a class of subsets of $\mathcal{X}\times\mathcal{X}$,
and
\[
\hat{G}_{1:2}\in\text{argmax}_{G_{1:2}\in\mathcal{G}_{1:2}}\widehat{\mathcal{W}}(G_{1:2}|x).
\]

 Define $\mathcal{{EW}}_{t+1}(G_{1:2}|x)  = \mathsf{E}\left[\mathsf{E}\left[S_{T+1}(G_{2})|Y_{T}(W_{T-1},1(X_{T-1}\in G_{1})),W_{T}=1(X_{T-1}\in G_{1})\right]|X_{T-1}=x\right] $ and   $\mathcal{{EW}}_{t+1}(G_{1:2}|X_{T-1})  = \mathsf{E}\left[ \mathsf{E}\left[S_{T+1}(G_{2})|Y_{T}(W_{T-1},1(X_{T-1}\in G_{1})),W_{T}=1(X_{T-1}\in G_{1})\right]|X_{T-1}\right] $.
First of all, $x$ is a vector of values consisting
of $w$ and $y$.
To construct an MDS, we shall define an intermediate counterpart,
\begin{align*}
\bar{\mathcal{W}}_{h}(G_{1:2}|x) & =\frac{\sum_{t=1}^{T-1}\mathbf{1}\left(W_{t-1}=w\right)K_{h}(Y_{t-1},y)\mathcal{W}_{t}(G_{1}|x)}{\sum_{t=1}^{T-1}\mathbf{1}\left(W_{t-1}=w\right)K_{h}(Y_{t-1},y)}\\
 & +\frac{\sum_{t=1}^{T-2}K_{h}(Y_{t-1},y)\mathbf{1}\left(W_{t-1}=w\right)\mathbf{1}\left(W_{t}=1(X_{t-1}\in G_{1})\right)\mathcal{EW}_{t+1}(G_{1:2}|x)}{\sum_{t=1}^{T-2}K_{h}(Y_{t-1},y)\mathbf{1}\left(W_{t-1}=w\right)\mathbf{1}\left(W_{t}=1(X_{t-1}\in G_{1})\right)}.
\end{align*}

Note that 
\begin{align*}
\E(K_{h}(Y_{t-1},y)\mathbf{1}\left(W_{t-1}=w\right)\mathbf{1}\left(W_{t}=1(X_{t-1}\in G_{1})\right){\widehat{\mathcal{W}}}_{t+1}(G_{2})|\mathcal{F}_{t-1})\\
=K_{h}(Y_{t-1},y)\mathbf{1}\left(W_{t-1}=w\right)\mathbf{1}\left(W_{t}=1(X_{t-1}\in G_{1})\right)\E({\widehat{\mathcal{W}}}_{t+1}(G_{2})|X_{t-1}),
\end{align*}
so $K_{h}(Y_{t-1},y)\mathbf{1}\left(W_{t-1}=w\right)\mathbf{1}\left(W_{t}=1(X_{t-1}\in G_{1})\right)(\widehat{\mathcal{W}}_{t+1}(G_{2})-\mathcal{EW}_{t+1}(G_{2}|X_{T-1}))$
is an MDS with respect to $\mathcal{F}_{t-1}$. Then with similar steps to those in the proof of Theorem \ref{thm: kernel}, we can achieve the closeness between 
 $K_{h}(Y_{t-1},y)\mathbf{1}(W_{t-1}=w)\mathbf{1}(W_{t}=1(X_{t-1}\in G_{1}))\mathcal{EW}_{t+1}(G_{2}|X_{T-1})$ and  $K_{h}(Y_{t-1},y)\mathbf{1}(W_{t-1}=w)\mathbf{1}(W_{t}=1(X_{t-1}\in G_{1}))\mathcal{EW}_{t+1}(G_{2}|X_{T-1}=x)$
by a bias term of order $O_p(h^2)$.
Thus, under assumptions analogous to those in Theorem \ref{thm: kernel}, we can achieve the regret bound of the rate $\sqrt{(T-1)h}^{-1}+{(T-1)}^{-1}+h^{2}$.

\subsubsection{Bounding the conditional regret by the unconditional one}

Similar to Section \ref{sec:uncon_con}, under the correct specification assumption,
we can bound the conditional regret by unconditional regret.
For the multivariate case, the unconditional welfare is defined as
\begin{equation}
\mathcal{W}_{T:T+1}(G_{1:2})=\mathcal{W}_{T}(G_{1})+\E\{\E\left[S_{T+1}(G_{2})|W_{T}=1(X_{T-1}\in G_{1})\right]\}.\label{eq: wel_uncon_conti_mul}
\end{equation}
Note that we have \\
$\E\{\E\left[S_{T+1}(G_{2})|W_{T}=\mathbf{1}(X_{t-1}\in G_{1})\right]\}= \E\left[S_{T+1}(G_{2})\mathbf{1}(W_{T}=\mathbf{1}(X_{t-1}\in G_{1}))\right]/\Pr(W_{T}=1(X_{t-1}\in G_{1}))$. Note that slightly different from the one-period welfare function,  in this unconditional welfare, the second part is still
conditioning on $W_{T}=1(X_{t-1}\in G_{1})$ since the treatment $W_{T}$
is determined by the planner's policy $G_{1}$.

The optimal unconditional policy within the class $\mathcal{G}_{1:2}$
is defined as
\begin{equation} \label{best_G_mul}
G_{1:2}^{*}\in\text{argmax}_{G_{1:2}\in\mathcal{G}_{1:2}}\mathcal{W}_{T:T+1}(G_{1:2}).
\end{equation}

To bound the conditional regret with the unconditional one, we also
need to impose the following assumption.
\begin{assumption}
\label{assu:correct_specification}Let $\mathcal{W}_{T:T+1}(G_{1:2}|x)$
be the conditional welfare defined in \eqref{wel_mul_conti},
\[
\text{argsup}_{G_{1:2}\in\mathcal{G}_{1:2}}\mathcal{W}_{T:T+1}(G_{1:2})\subset\text{argsup}_{G_{1:2}\in\mathcal{G}_{1:2}}\mathcal{W}_{T:T+1}(G_{1:2}|x).
\]
\end{assumption}
\begin{prop}
Under Assumptions \ref{ass:continuous_Markov} and \ref{assu:correct_specification},
\[
G_{1:2}^{*}\in\text{argsup}_{G_{1:2}\in\mathcal{G}_{1:2}}\mathcal{W}_{T:T+1}(G_{1:2}|X_{T-1}).
\]
\end{prop}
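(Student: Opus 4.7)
The plan is to follow the structure of the proof of Proposition \ref{first_b_equal}, extending it almost verbatim to the two-period setting. The proposition is essentially a direct consequence of the correct-specification assumption, once the conditional welfare \eqref{wel_mul_conti} is properly reduced using the Markov structure.

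First, I would use Assumption \ref{ass:continuous_Markov} to justify the representation \eqref{wel_mul_conti}. The Markovian exogeneity $Y_t(w_{0:t})\perp X_{0:t-1}\mid X_{t-1}$ and $W_t\perp X_{0:t-1}\mid X_{t-1}$ imply that the full history $\mathcal{F}_{T-1}$ can be collapsed to the single lag $X_{T-1}$ in every conditional expectation appearing in the two-period welfare, i.e.
\[
\mathsf{E}[S_T(G_1)\mid \mathcal{F}_{T-1}]=\mathsf{E}[S_T(G_1)\mid X_{T-1}],
\]
and, by iterated expectations combined with the analogous reduction at time $T+1$,
\[
\mathsf{E}[S_{T+1}(G_2)\mid \mathcal{F}_{T-1}]=\mathsf{E}\!\left\{\mathsf{E}[S_{T+1}(G_2)\mid X_T]\mid X_{T-1}\right\}.
\]
This step validates \eqref{wel_mul_conti} and, by integrating over the marginal distribution of $X_{T-1}$, relates $\mathcal{W}_{T:T+1}(G_{1:2}|X_{T-1})$ to the unconditional $\mathcal{W}_{T:T+1}(G_{1:2})$ defined in \eqref{eq: wel_uncon_conti_mul}.

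Second, I would apply Assumption \ref{assu:correct_specification} directly. By the definition \eqref{best_G_mul}, $G_{1:2}^{*}\in \arg\sup_{G_{1:2}\in\mathcal{G}_{1:2}}\mathcal{W}_{T:T+1}(G_{1:2})$. The assumption states that, for every $x\in\mathcal{X}$,
\[
\arg\sup_{G_{1:2}\in\mathcal{G}_{1:2}}\mathcal{W}_{T:T+1}(G_{1:2})\subset\arg\sup_{G_{1:2}\in\mathcal{G}_{1:2}}\mathcal{W}_{T:T+1}(G_{1:2}|x).
\]
Specializing this inclusion to $x=X_{T-1}$ (the planner's realized information set) yields $G_{1:2}^{*}\in\arg\sup_{G_{1:2}\in\mathcal{G}_{1:2}}\mathcal{W}_{T:T+1}(G_{1:2}|X_{T-1})$, which is the desired conclusion.

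There is no substantive obstacle in this proof: it is essentially tautological once Assumption \ref{assu:correct_specification} is imposed, and the Markov reduction plays only the supporting role of ensuring the conditional welfare is a well-defined function of $X_{T-1}$ alone. The genuine content — analogous to the one-period case where a sufficient condition is $G^{*}_{FB}\in\mathcal{G}$ — would lie in formulating and verifying a primitive sufficient condition for Assumption \ref{assu:correct_specification} in the two-period setting (e.g., that the multi-period first-best rule belongs to $\mathcal{G}_{1:2}$, where the first-best is defined via backward induction through the inner expectation $\mathsf{E}[S_{T+1}(G_2)\mid X_T]$). Such a sufficient-condition discussion, however, is not needed for the proposition statement itself.
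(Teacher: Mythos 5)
Your proposal is correct and matches the paper's treatment: the paper states this proposition without a separate proof, noting (as for the one-period analogue, Proposition \ref{first_b_equal}) that it follows directly from the Markov reduction of $\mathcal{F}_{T-1}$ to $X_{T-1}$ and the correct-specification inclusion evaluated at the realized $x=X_{T-1}$. Your observation that the substantive content lies in sufficient conditions for Assumption \ref{assu:correct_specification} rather than in the proposition itself is also consistent with how the paper handles the one-period case.
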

The conditional regret is 
\begin{align*}
R_{T:T+1}(G_{1:2}|x) & :=\mathcal{W}_{T:T+1}(G_{1:2}^{*}|x)-\mathcal{W}_{T:T+1}(G_{1:2}|x).
\end{align*}
Similarly, unconditional regret can be expressed as an integral of
conditional regret. Thus, the unconditional welfare and regret are
\begin{align*}
\mathcal{W}_{T:T+1}(G_{1:2}) & =\int\mathcal{W}_{T:T+1}(G_{1:2}|x)dF_{X_{T-1}}(x),\\
R_{T:T+1}(G_{1:2}) & =\mathcal{W}_{T:T+1}(G_{1:2}^{*}|x)-\mathcal{W}_{T:T+1}(G_{1:2})\\
 & =\int R_{T:T+1}(G_{1:2}|x)dF_{X_{T-1}}(x).
\end{align*}

For $x^{\prime}\in\mathcal{X}$
\begin{align*}
A(x^{\prime},G_{1:2}) & =\{x:x\in\mathcal{X}\text{ and }R_{T:T+1}(G_{1:2}|x)\geq R_{T:T+1}(x^{\prime},G_{1:2})\},\\
p_{T-1}(x^{\prime},G_{1:2}) & =\text{Pr}(X_{T-1}\in A(x^{\prime},G_{1:2}))=\int_{x\in A(x^{\prime},G_{1:2})}dF_{X_{T-1}}(x).
\end{align*}

Now, we impose the following assumption.
\begin{assumption} \label{lower_p_dis_con_mul}
For $x^{obs}\in\mathcal{X}$  and any $G_{1:2}\in\mathcal{G}_{1:2}$, there exists a positive constant $\underline{p}$, such that 
\begin{equation}
p_{T-1}(x^{obs},G_{1:2})\geq\underline{p}>0\label{eq:lower_p_dis_con}.
\end{equation}
\end{assumption}

\begin{lemma} \label{lem:lower_RT(G)_mul}
Under Assumption \ref{lower_p_dis_con_mul}
\[
R_{T:T+1}(G_{1:2}|x^{obs})\leq\frac{1}{\underline{p}}R_{T:T+1}(G_{1:2}).
\]
\end{lemma}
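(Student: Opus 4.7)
The plan is to mirror the proof of Lemma~\ref{lem:lower_RT(G)} in the single-period case, since the lemma concerns a purely integral-theoretic property of conditional versus unconditional regret and does not depend on the number of periods as long as conditional regret is non-negative at every point of $\mathcal{X}$.

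First I would decompose the unconditional regret using the law of total probability:
\begin{align*}
R_{T:T+1}(G_{1:2}) &= \int R_{T:T+1}(G_{1:2}|x)\,dF_{X_{T-1}}(x) \\
&= \int_{x\in A(x^{obs},G_{1:2})} R_{T:T+1}(G_{1:2}|x)\,dF_{X_{T-1}}(x) + \int_{x\notin A(x^{obs},G_{1:2})} R_{T:T+1}(G_{1:2}|x)\,dF_{X_{T-1}}(x).
\end{align*}
On the first integral the definition of $A(x^{obs},G_{1:2})$ gives $R_{T:T+1}(G_{1:2}|x)\geq R_{T:T+1}(G_{1:2}|x^{obs})$, so that integral is at least $R_{T:T+1}(G_{1:2}|x^{obs}) \cdot p_{T-1}(x^{obs},G_{1:2})$. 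For the second integral I would invoke Assumption~\ref{assu:correct_specification}: under correct specification (together with the Markov properties of Assumption~\ref{ass:continuous_Markov}), $G_{1:2}^{*}$ defined in~\eqref{best_G_mul} also maximizes the conditional welfare $\mathcal{W}_{T:T+1}(G_{1:2}|x)$ for every $x\in\mathcal{X}$, which forces $R_{T:T+1}(G_{1:2}|x)\geq 0$ pointwise and hence makes the second integral non-negative.

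Combining these two pieces yields
\begin{equation*}
R_{T:T+1}(G_{1:2}) \geq R_{T:T+1}(G_{1:2}|x^{obs}) \cdot p_{T-1}(x^{obs},G_{1:2}),
\end{equation*}
and then Assumption~\ref{lower_p_dis_con_mul} provides $p_{T-1}(x^{obs},G_{1:2})\geq \underline{p}>0$, so dividing through gives the claim $R_{T:T+1}(G_{1:2}|x^{obs}) \leq \underline{p}^{-1} R_{T:T+1}(G_{1:2})$.

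There is no real obstacle here; the proof is genuinely parallel to that of Lemma~\ref{lem:lower_RT(G)}. The only subtle point to highlight is the pointwise non-negativity of conditional regret on the complement of $A(x^{obs},G_{1:2})$, which in the single-period case was justified by the feasibility of the first-best policy, while here it follows from the correct-specification assumption (Assumption~\ref{assu:correct_specification}) in combination with the Markov structure. Once that observation is in place, the chain of inequalities above closes the proof.
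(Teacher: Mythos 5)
Your proof is correct and follows essentially the same route as the paper: the paper proves the single-period Lemma \ref{lem:lower_RT(G)} by exactly this decomposition of the unconditional regret integral over $A(x^{obs},G)$ and its complement, and leaves the multi-period version as the obvious analogue. Your explicit remark that pointwise non-negativity of the conditional regret now rests on Assumption \ref{assu:correct_specification} (via the proposition guaranteeing $G_{1:2}^{*}$ maximizes the conditional welfare at every $x$) is the right justification and matches the role the feasibility/correct-specification condition plays in the single-period argument.
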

Now under Assumptions \ref{lower_p_dis_con_mul} and Lemma \ref{lem:lower_RT(G)_mul}, we can specify the sample analogue,
\begin{align}
 & \widehat{\mathcal{W}}(G_{1:2})=\frac{1}{T}\sum_{t=1}^{T-1}\widehat{\mathcal{W}}_{t}(G_{1})\nonumber  +\frac{1}{T(G_{1})}\sum_{t=1}^{T-2}\mathbf{1}\left(W_{t}=\mathbf{1}(X_{t-1}\in G_{1})\right)S_{t+1}(G_2),\label{eq:mul_cont_uncon_sample}
\end{align}

where $T(G_{1})$ is a random number defined as $T(G_{1})=\#\{1\leq t\leq T-1:W_{t}=\mathbf{1}(X_{t-1}\in G_{1})\}$.
We define the estimated welfare policy, 
\[
\hat{G}_{1:2}\in\text{argmax}_{G_{1:2}\in\mathcal{G}_{1:2}}\widehat{\mathcal{W}}(G_{1:2}).
\]

To prove the bound, we can proceed with similar  steps as in the proof of Theorem \ref{thm:mds_mean_bound_unconditional}.
Namely, we define a conditional welfare function to form an MDS as follows,
\begin{align*}
\mathcal{\bar{W}}(G_{1:2}) & =\frac{1}{T}\sum_{t=1}^{T-1}\mathcal{W}_{t}(G_{1}|\mathcal{F}_{t-2})\\
 & +\frac{1}{\E(T(G_{1}))}\sum_{t=1}^{T-1}\mathsf{E}\{ \left[\mathbf{1}\left(W_{t}=\mathbf{1}(X_{t-1}\in G_{1})\right)S_{t+1}(G_{2})|W_{t}=\mathbf{1}(X_{t-1}\in G_{1})\right]|\mathcal{F}_{t-2}\}.
\end{align*}
Then the unconditional population counterpart is as follows,
\begin{align}
\mathcal{\widetilde{\mathcal{W}}}(G_{1:2}) & =\frac{1}{T}\sum_{t=1}^{T-1}\mathcal{W}_{t}(G_{1})\nonumber \\
 & +\frac{1}{\E(T(G_{1}))}\sum_{t=1}^{T-1}\E(\mathbf{1}\left[W_{t}=\mathbf{1}(X_{t-1}\in G_{1})\right]S_{t+1}(G_{2})).\label{eq:mul_cont_uncon_tlilde}
\end{align}
Now, similar to Assumption \ref{equiv_W_uc}, we impose\\
{\bf Assumption} \ref{equiv_W_uc}'. For $G_{1:2}^{*}$ defined in \eqref{best_G_mul} and any $G_{1:2}\in\mathcal{G}_{1:2}$,
there exists some constant $c$, such that
\[
\mathcal{W}_{T:T+1}(G_{1:2}^{*})-\mathcal{W}_{T:T+1}(G_{1:2})\leq c\left(\mathcal{\widetilde{\mathcal{W}}}(G_{1:2}^{*})-\mathcal{\widetilde{\mathcal{W}}}(G_{1:2})\right).
\]

Then, the conditional regret can be bounded by
\begin{align*}
\mathcal{W}_{T:T+1}(G_{1:2}^{*}|x^{obs})-\mathcal{W}_{T:T+1}(G_{1:2}|x^{obs}) & \leq\frac{1}{\underline{p}}\left[\mathcal{W}_{T:T+1}(G_{1:2}^{*})-\mathcal{W}_{T:T+1}(G_{1:2})\right]\\
 & \leq\frac{c}{\underline{p}}\left[\mathcal{\widetilde{\mathcal{W}}}(G_{1:2}^{*})-\mathcal{\widetilde{\mathcal{W}}}(G_{1:2})\right]\\
 & \leq\frac{c}{\underline{p}}\sup_{G_{1:2}\in\mathcal{\mathcal{G}}_{1:2}}\left[\widehat{\mathcal{W}}(G_{1:2})-\mathcal{\widetilde{\mathcal{W}}}(G_{1:2})\right].
\end{align*}
And
\begin{align*}
\widehat{\mathcal{W}}(G_{1:2})-\mathcal{\widetilde{\mathcal{W}}}(G_{1:2}) & =\left[\mathcal{\bar{W}}(G_{1:2})-\mathcal{\widetilde{\mathcal{W}}}(G_{1:2})\right]+\left[\widehat{\mathcal{W}}(G_{1:2})-\mathcal{\bar{W}}(G_{1:2})\right]\\
 & =I+II
\end{align*}

Similar to Section \ref{continuous}, we can derive  upper bounds for $I$ and $II$.

\end{document}


\onehalfspacing
	\date{}
\appendix
\numberwithin{equation}{section}
\section{Supplemental  Appendix A}

\subsection{Proof of Proposition \ref{prop:simple_mds}} \label{app:proof_simple_mds}
\begin{proof}
We have 
\begin{align}
 & \mathsf{E}\left[\widehat{\mathcal{W}}_{t}(g|w)-\bar{\mathcal{W}}_{t}(g|w)|\mathcal{F}_{t-1}\right]\nonumber \\
 & =\mathsf{E}\left[1(W_{t-1}=w)\left[\frac{Y_{t}W_{t}g(W_{t-1})}{e_{t}(W_{t-1})}+\frac{Y_{t}(1-W_{t})\{1-g(W_{t-1})\}}{1-e_{t}(W_{t-1})}\right]|\mathcal{F}_{t-1}\right]\nonumber \\
 & -\mathsf{E}\left[1(W_{t-1}=w)\mathsf{E}\{Y_{t}(W_{0:t-1},1)g(W_{t-1})+Y_{t}(W_{0:t-1},0)(1-g(W_{t-1}))|W_{t-1}=w\}|\mathcal{F}_{t-1}\right]\nonumber \\
 & =:I_{t}-II_{t},\label{eq:MDS_check}
\end{align}

Under Assumption \ref{s_bounded_y}, it remains to show $I_{t}-II_{t}=0$.
By the Law of iterated expectation,
\begin{align}
I_{t} & =\mathsf{E}\left[1(W_{t-1}=w)\left[\frac{Y_{t}W_{t}g(W_{t-1})}{e_{t}(W_{t-1})}+\frac{Y_{t}(1-W_{t})\{1-g(W_{t-1})\}}{1-e_{t}(W_{t-1})}\right]|\mathcal{F}_{t-1}\right]\nonumber \\
 & =1(W_{t-1}=w)\left\{ \frac{g(W_{t-1})}{e_{t}(W_{t-1})}\mathsf{E}\left[Y_{t}W_{t}|\mathcal{F}_{t-1}\right]+\frac{1-g(W_{t-1})}{1-e_{t}(W_{t-1})}\mathsf{E}\left[Y_{t}(1-W_{t})|\mathcal{F}_{t-1}\right]\right\} .\label{eq:MDS_hat}
\end{align}
Note that 
\begin{align}
\mathsf{E}\left[Y_{t}W_{t}|\mathcal{F}_{t-1}\right] & =\mathsf{E}\left[Y_{t}(W_{0:t-1},W_{t})W_{t}|\mathcal{F}_{t-1}\right]\nonumber  =\mathsf{E}\left[Y_{t}(W_{t-1},W_{t})W_{t}|\mathcal{F}_{t-1}\right]\nonumber \\
 & =\mathsf{E}\left[Y_{t}(W_{t-1},1)W_{t}|\mathcal{F}_{t-1}\right]\nonumber  =\mathsf{E}\left[Y_{t}(W_{t-1},1)|\mathcal{F}_{t-1}\right]\mathsf{E}\left[W_{t}|\mathcal{F}_{t-1}\right]\nonumber \\
 & =\mathsf{E}\left[Y_{t}(W_{t-1},1)|\mathcal{F}_{t-1}\right]e_{t}(W_{t-1}),\label{eq:MDS_hat_y1}
\end{align}
where the second equality follows from Assumption \ref{ass:toy_example}(i),
the fourth equality follows from Assumption \ref{unconf}, and the
last equality follows from $\mathsf{E}\left[W_{t}|\mathcal{F}_{t-1}\right]=\mathsf{E}\left[W_{t}|W_{t-1}\right]=e_{t}(W_{t-1})$
by the second statement of Assumption \ref{ass:toy_example}(ii). Applying the same arguments,
we have 
\begin{equation}
\mathsf{E}\left[Y_{t}(1-W_{t})|\mathcal{F}_{t-1}\right]=\mathsf{E}\left[Y_{t}(W_{t-1},0)|\mathcal{F}_{t-1}\right]\left[1-e_{t}(W_{t-1})\right].\label{eq:MDS_hat_y2}
\end{equation}
Combining (\ref{eq:MDS_hat}), (\ref{eq:MDS_hat_y1}),  (\ref{eq:MDS_hat_y2}), and the Law of iterated expectation, we obtain
\[
I_{t}=1(W_{t-1}=w)\mathsf{E}\left[Y_{t}(W_{t-1},1)g(W_{t-1})+Y_{t}(W_{t-1},0)(1-g(W_{t-1}))|\mathcal{F}_{t-1}\right].
\]
On the other hand, 
\begin{align*}
II_{t} & =\mathsf{E}\left[1(W_{t-1}=w)\mathsf{E}\{Y_{t}(W_{0:t-1},1)g(W_{t-1})+Y_{t}(W_{0:t-1},0)(1-g(W_{t-1}))|W_{t-1}=w\}|\mathcal{F}_{t-1}\right]\\
 & =1(W_{t-1}=w)\mathsf{E}\left[Y_{t}(W_{t-1},1)g(W_{t-1})+Y_{t}(W_{t-1},0)(1-g(W_{t-1}))|W_{t-1}=w\right]\\
 & =1(W_{t-1}=w)\mathsf{E}\left[Y_{t}(W_{t-1},1)g(W_{t-1})+Y_{t}(W_{t-1},0)(1-g(W_{t-1}))|W_{t-1}\right]\\
 & =1(W_{t-1}=w)\mathsf{E}\left[Y_{t}(W_{t-1},1)g(W_{t-1})+Y_{t}(W_{t-1},0)(1-g(W_{t-1}))|\mathcal{F}_{t-1}\right],
\end{align*}
where the second equality follows from $\sigma\left(W_{t-1}\right)\subseteq\mathcal{F}_{t-1}$,
the third equality follows from the definition of the indicator function,
and the last equality follows from (\ref{simple_cond_exp}), which
is an implication of Assumption \ref{ass:toy_example}. Consequently,
we have $I_{t}-II_{t}=0$.\end{proof}

\subsection{Proof of Theorem \ref{thm:discrete_bound} } \label{simple_mds_proof}
\subsubsection{Preliminary Lemmas}
The following two lemmas will be used in the proofs of the main results.
\begin{lemma}[Freedman's inequality]\label{free} Let $\xi_{a,i}$
	be a martingale difference sequence indexed by $a \in \mathcal{A}$, $i=1, \dots, n$, $\mathcal{F}_{i}$ be the filtration,
 $V_{a}=\sum_{i=1}^{n}\E(\xi_{a,i}^{2}|\mathcal{F}_{i-1})$, and
	$M_{a}=\sum_{i=1}^{n}\xi_{a,i}$. {For positive numbers $A$ and $B$}, we have
	\begin{equation} \label{freedman}
		\Pr(\max_{a\in\mathcal{A}}|M_{a}|\geq z)\leq\sum_{i=1}^{n} \Pr(\max_{a\in\mathcal{A}}\xi_{a,i}\geq
		{A})+2\Pr(\max_{a\in\mathcal{A}}V_{a}\geq {B})+2|\mathcal{A}|e^{-z^{2}/{(2zA+2B)}}.
	\end{equation}
\end{lemma}
The proof can be found in \cite{freedman1975tail}. Next, let $a \lesssim b$ indicate that there is a positive constant $C$, such that $a\leq C\cdot b$.

\begin{lemma}[Maximal inequality based on Freedman's inequality]\label{max}
Let $\{\xi_{a,i}\}_{i=1}^{n}$ be a martingale difference sequence with
respect to a filtration $\{\mathcal{F}_{i}\}_{i=0}^{n}$ for each
$a\in\mathcal{A}$, where $\mathcal{A}$ is a finite index set. Set
$M_{a}=\sum_{i=1}^{n}\xi_{a,i}$ and $V_{a}=\sum_{i=1}^{n}\E(\xi_{a,i}^{2}\mid\mathcal{F}_{i-1})$.
Suppose that, for some positive constants $A$ and $B$,
$\max_{a\in\mathcal{A}}|\xi_{a,i}|\leq A$ for all $i$ and
$\max_{a\in\mathcal{A}}V_{a}\leq B$ almost surely. Then
\begin{equation} \label{Maximal_Freedman}
\E\!\left[\max_{a\in\mathcal{A}}|M_{a}|\right]
\lesssim A\log(1+|\mathcal{A}|)+\sqrt{B}\sqrt{\log(1+|\mathcal{A}|)}.
\end{equation}
\end{lemma}
\begin{proof}
This follows from Lemma~19.33 of \citet{van2000asymptotic} and
Lemma~\ref{free}. From Freedman's inequality, we have for each
$a\in\mathcal{A}$ and any $z>0$,
\begin{equation}\label{eq:two-regime}
\Pr\!\left(|M_a| \geq z\right)\leq
\begin{cases}
2\exp(-z/(4A)) & \text{if } z\geq B/A,\\
2\exp(-z^{2}/(4B)) & \text{if } z<B/A.
\end{cases}
\end{equation}
We decompose $M_a = C_a + D_a$ where
\[
C_a = M_a\mathbf{1}\{|M_a|>B/A\},\qquad
D_a = M_a\mathbf{1}\{|M_a|\leq B/A\}.
\]

By the same argument in \citet{van2000asymptotic}, 
\[
\mathrm{P}\left(\left|C_{a}\right|>z\right)\leq2\exp\left(\frac{-z}{4A}\right),\quad\mathrm{P}\left(\left|D_{a}\right|>z\right)\leq2\exp\left(\frac{-z^{2}}{4B}\right).
\]

Setting $\phi_{p}(x)=\exp\left(x^{p}\right)-1\text{ for }p=1,2$,
we have
\[
\mathsf{E}\phi_{1}\left(\left|C_{a}/12A\right|\right) \leq 1,\qquad\mathsf{E}\,\psi_{2}\!\left(\left|D_{a}/\sqrt{12B}\right|\right)\leq 1.
\]

Moreover, using $\phi^{-1}_{1}\left(u\right)=\log\left(1+u\right)$
and $\phi^{-1}_{2}\left(u\right)=\sqrt{\log\left(1+u\right)}$ and
rearranging yields
\[
\mathsf{E}\max_{a\in\mathcal{A}}\left|C_{a}\right|\leq12A\cdot\log\left(|\mathcal{A}|+1\right),\qquad\mathsf{E}\max_{a\in\mathcal{A}}|D_{a}|\le\sqrt{12B}\cdot\sqrt{\log(|\mathcal{A}|+1)}.
\]
Finally, with the triangle inequality $|M_{a}|\le|C_{a}|+|D_{a}|$,
we have
\begin{align*}
\mathsf{E}\max_{a\in\mathcal{A}}|M_{a}| & \le\mathsf{E}\max_{a\in\mathcal{A}}|C_{a}|+\mathsf{E}\max_{a\in\mathcal{A}}|D_{a}|\lesssim A\log(1+|\mathcal{A}|)+\sqrt{B}\sqrt{\log(1+|\mathcal{A}|)}.
\end{align*}
\end{proof}

\subsubsection{Proof of Theorem \ref{thm:discrete_bound}}
\begin{proof}
Define $\tilde{p}_w \defeq \frac{1}{T-1}\sum_{t=1}^{T-1}\Pr(W_{t-1}=w|\mathcal{F}_{t-2})$.
 Writing
$\frac{T(w)}{T-1}=\left(\frac{T(w)}{T-1}-\tilde{p}_w\right)+\tilde{p}_w$,
\begin{align}
\sup_{g:\{0,1\}\to\{0,1\}}\left|\widehat{\mathcal{W}}(g|w)-\bar{\mathcal{W}}(g|w)\right|
&=\sup_{g}\left(\frac{T(w)}{T-1}\right)^{-1}(T-1)^{-1}\left|\sum_{t=1}^{T-1}\left[\widehat{\mathcal{W}}_t(g|w)-\bar{\mathcal{W}}_t(g|w)\right]\right|.
\label{mds_decomp}
\end{align}
Throughout this section, we set $\widehat W(g|w)=0$ on the event $T(w)=0$. Since this event has exponentially small probability under strict overlap, this convention does not affect the rate. Define
\[
\frac{T(w)}{T-1}-\tilde{p}_w
=\frac{1}{T-1}\sum_{t=1}^{T-1}\xi_t,\qquad
\xi_t\defeq\mathbf{1}(W_{t-1}=w)-\Pr(W_{t-1}=w|\mathcal{F}_{t-2}),
\]
which is an average of a martingale difference sequence with respect
to $\{\mathcal{F}_{t-2}\}$. Since $|\xi_t|\leq1$, applying Freedman's
inequality (Lemma~\ref{free}) with $|\mathcal{A}|=1$, $A=1$, and
$B=\sum_{t}e(w|\mathcal{F}_{t-2})[1-e(w|\mathcal{F}_{t-2})]<T-1$,
\begin{align}\label{conv_weight}
\Pr\!\left(\left|\frac{1}{T-1}\sum_{t=1}^{T-1}\xi_t\right|\geq z\right)
&\leq 2\exp\!\left[\frac{-z^2(T-1)^2}{2(T-1)z+2(T-1)}\right],
\end{align}
which holds for every fixed $z>0$ and tends to zero as $T\to\infty$.
(The first term of \eqref{freedman} vanishes since
$|\xi_t|\leq1-\kappa<1=A$ by Assumption~\ref{bound1}.)
Define the event
\[
\mathcal{E}_T=\left\{\left|\frac{T(w)}{T-1}-\tilde{p}_w\right|\leq\kappa/2\right\}.
\]
By \eqref{conv_weight} with $z=\kappa/2$, $\Pr(\mathcal{E}_T^c)\leq 2\exp(-c(T-1))$
for some $c>0$. On $\mathcal{E}_T$, since $\tilde{p}_w\geq\kappa$ by
Assumption~\ref{bound1},
\begin{equation}\label{kappa_bound}
\left(\frac{T(w)}{T-1}\right)^{-1}
\leq\left(\tilde{p}_w-\left|\frac{T(w)}{T-1}-\tilde{p}_w\right|\right)^{-1}
\leq(\kappa-\kappa/2)^{-1}=\frac{2}{\kappa}.
\end{equation}

 Let
$\xi_{g,t}\defeq\widehat{\mathcal{W}}_t(g|w)-\bar{\mathcal{W}}_t(g|w)$.
By Proposition~\ref{prop:simple_mds}, $\{\xi_{g,t}\}$ is a martingale
difference sequence with respect to $\{\mathcal{F}_{t-1}\}$, i.e.,
$\E(\xi_{g,t}|\mathcal{F}_{t-1})=0$. With
$\mathcal{G}\equiv\{g:\{0,1\}\to\{0,1\}\}$ (a finite set), Assumptions
\ref{bound1} and \ref{s_bounded_y} give a constant $C_A$ with
$\sup_{g\in\mathcal{G}}|\xi_{g,t}|\leq M+M/\kappa<C_A$ for all $t$,
and a constant $C_B$ (depending only on $M,\kappa$) with
$V_g=\sum_{t=1}^{T-1}\E(\xi_{g,t}^2|\mathcal{F}_{t-1})<C_B(T-1)$.
Applying Lemma~\ref{max} with $A=C_A$ and $B=C_B(T-1)$,
\begin{align}\label{e_bound}
\E\!\left(\sup_{g\in\mathcal{G}}\left|\sum_{t=1}^{T-1}\xi_{g,t}\right|\right)
&\lesssim C_A\log(1+|\mathcal{G}|)+\sqrt{C_B(T-1)}\sqrt{\log(1+|\mathcal{G}|)}\nonumber\\
&\lesssim\sqrt{(T-1)\,C_B\log(1+|\mathcal{G}|)}.
\end{align}

 On $\mathcal{E}_T$, combining
\eqref{mds_decomp}, \eqref{kappa_bound}, and \eqref{e_bound},
\[
\E\!\left[\sup_{g\in\mathcal{G}}\left|\widehat{\mathcal{W}}(g|w)-\bar{\mathcal{W}}(g|w)\right|\mathbf{1}_{\mathcal{E}_T}\right]
\leq\frac{2}{\kappa}\frac{1}{T-1}\,\E\!\left(\sup_{g}\left|\sum_{t=1}^{T-1}\xi_{g,t}\right|\right)
\leq\frac{C}{\sqrt{T-1}},
\]
where $C$ depends only on $\kappa$, $|\mathcal{G}|$, and $M$. On the
complement $\mathcal{E}_T^c$, the boundedness
$|\widehat{\mathcal{W}}_t(g|w)|\leq M/\kappa$ and $|\bar{\mathcal{W}}_t(g|w)|\leq M$
give the deterministic bound
$\sup_g|\widehat{\mathcal{W}}(g|w)-\bar{\mathcal{W}}(g|w)|\leq 2M/\kappa$, so
\[
\E\!\left[\sup_{g}\left|\widehat{\mathcal{W}}(g|w)-\bar{\mathcal{W}}(g|w)\right|\mathbf{1}_{\mathcal{E}_T^c}\right]
\leq\frac{2M}{\kappa}\Pr(\mathcal{E}_T^c)
\leq\frac{4M}{\kappa}\exp(-c(T-1)),
\]
which is $o((T-1)^{-1/2})$. Adding the two contributions yields
\[
\E\!\left[
\sup_{g:\{0,1\}\to\{0,1\}}
\left|\widehat{\mathcal{W}}(g|w)-\bar{\mathcal{W}}(g|w)\right|
\right]
\leq\frac{C}{\sqrt{T-1}}.
\]
Consequently, since $\sup_g\E|X_g|\leq\E\sup_g|X_g|$, we have
\[
\sup_{g:\{0,1\}\to\{0,1\}}
\E\!\left|\widehat{\mathcal{W}}(g|w)-\bar{\mathcal{W}}(g|w)\right|
\leq\frac{C}{\sqrt{T-1}},
\]
which is the claimed bound \eqref{simp_rate}.
\end{proof}

\subsection{Higher/infinite Markov order} \label{app_multi_markov}
\subsubsection{The $q$-th ($q > 1$) Markov order}\label{app_finite_markov}
We modify Assumption \ref{ass:toy_example} to:\\
\textbf{Assumption \ref{ass:toy_example}{*}} ($q$-th order Markov
properties). For an integer $q>1$, the time series of potential outcomes
and observable variables satisfy the following conditions:

(i) $q$-th order Markovian exclusion: for $t=q+1,q+2,\dots,T$ and
for arbitrary treatment paths $(w_{0:t-q-1},w_{t-q:t})$ and $(w_{0:t-q-1}^{\prime},w_{t-q:t})$,
where $w_{0:t-q-1}\neq w_{0:t-q-1}^{\prime}$, 
\[
Y_{t}(w_{0:t-q-1},w_{t-q:t})=Y_{t}(w_{0:t-q-1}^{\prime},w_{t-q:t}):=Y_{t}(w_{t-q:t})
\]
 holds with probability one.

(ii) $q$-th order Markovian exogeneity: for $t=q,q+1,\dots,T$ and
any treatment path $w_{0:t}$, 
\[
Y_{t}(w_{0:t})\perp X_{0:t-1}|W_{t-q:t-1},
\]
 and for $t=q,q+1,\dots,T-1$, 
\[
W_{t}\perp X_{0:t-1}|W_{t-q:t-1}.
\]

Under these modified assumptions, the policy function $g$ is now
defined on $\{0,1\}^{q}$, mapping to $\{0,1\}$. For a vector $w\in\{0,1\}^{q}$,
the propensity score can be defined as $e_{t}(w)=\Pr(W_{t}=1|W_{t-q:t-1}=w)$.
Furthermore, we can redefine:{\footnotesize
\begin{align}
\mathcal{W}_{T}(g|w) & =\mathsf{E}\{Y_{T}(W_{T-q:T-1},1)g(W_{T-q:T-1})+Y_{T}(W_{T-q:T-1},0)(1-g(W_{T-q:T-1}))|W_{T-q:T-1}=w\},\nonumber \\
\widehat{\mathcal{W}}(g|w) & =\frac{1}{T(w)}\sum_{q\leq t\leq T-1:W_{t-q:t-1}=w}\left[\frac{Y_{t}W_{t}g(W_{t-q:t-1})}{e_{t}(W_{t-q:t-1})}+\frac{Y_{t}(1-W_{t})\{1-g(W_{t-q:t-1})\}}{1-e_{t}(W_{t-q:t-1})}\right],\nonumber \\
\bar{\mathcal{W}}(g|w) & =\frac{1}{T(w)}\sum_{q\leq t\leq T-1:W_{t-q:t-1}=w}\mathsf{E}\left[ Y_{t}(W_{t-q:t-1},1)g(W_{t-q:t-1})+Y_{t}(W_{t-q:t-1},0)\left[1-g(W_{t-q:t-1})\right]|W_{t-q:t-1}\right] .\label{eq:redef.welfare}
\end{align}
}These welfare functions will also be used in the subsequent Subsection \ref{rem: Inf_order}.

Moreover,
Assumption \ref{unconf} (sequential unconfoundedness) can be modified
to: for any $t=1,2,\dots,T-1$ and $w\in\{0,1\}$, 
\[
Y_{t}(W_{t-q:t-1},w)\perp W_{t}|X_{0:t-1}.
\]
Then, a convergence rate of $\frac{1}{\sqrt{T-q}}$ can be established
by similar arguments used for the proof of Theorem \ref{thm:discrete_bound}
(cf. Appendix \ref{simple_mds_proof}).

\subsubsection{Infinite Markov order} \label{rem: Inf_order}

Under the infinite Markov order, the current variables might depend on historical values that precede the first observations of the sample time series. To accommodate this concept, we change the notation from $X_{0:t}$ to $X_{-\infty:t}$ in this subsection.

If we allow all the historical treatments to affect the current
outcome, we can drop Assumption \ref{ass:toy_example}(i) completely.
For a concise and consistent discussion of the policy function, we maintain the first statement of Assumption \ref{ass:toy_example}(ii) and adjust it to accommodate the case of infinite Markov order:
\begin{equation}
Y_{t}(w_{-\infty:t})\perp X_{-\infty:t-1}|W_{-\infty:t-1}.\label{eq:infinite_A2.1(ii)}
\end{equation}
It means that for simplicity, we continue to exclude the past outcomes,
$Y_{-\infty:t-1}$, from the variables influencing the current outcome
$Y_{t}$. As a result, we can omit $Y_{-\infty:t-1}$ from the arguments
of the policy function $g$, thereby defining $g$ as a function mapping
from $\{0,1\}^{\infty}$ to $\{0,1\}$.

If the Markovian process has an infinite order, a valid empirical
welfare function should take the form of $\widehat{\mathcal{W}}(g|w_{-\infty:T-1})$,
which is infeasible. In practice, the planner can only use a truncated
treatment path to construct the empirical welfare. Let the truncation
be implemented at the period $T-m\in\{0:T-1\}$, and the policy is
given by 
\begin{equation} \label{eq:g_hat_inf}
\hat{g}\in\mbox{argmax}_{g:\{w_{T-m:T-1}\}\to\{0,1\}}\widehat{\mathcal{W}}(g|w_{T-m:T-1}), \end{equation}
i.e., the optimizer is based on the truncated Markov order $m$.

Let us (re)define the optimal policy as $g^*\in\mbox{argmax}_{g:\{w_{-\infty:T-1}\}\to\{0,1\}}\mathcal{W}_T(g|w_{-\infty:T-1}).$ Then, the population-level regret can be decomposed as:
\begin{align}
 & \mathcal{W}_{T}(g^{*}|w_{-\infty:T-1})-\mathcal{W}_{T}(\hat{g}|w_{-\infty:T-1})\nonumber  =\mathcal{W}_{T}(g^{*}|w_{T-m:T-1})-\mathcal{W}_{T}(\hat{g}|w_{T-m:T-1})\nonumber \\
 & +\mathcal{W}_{T}(g^{*}|w_{-\infty:T-1})-\mathcal{W}_{T}(g^{*}|w_{T-m:T-1})+\mathcal{W}_{T}(\hat{g}|w_{T-m:T-1})-\mathcal{W}_{T}(\hat{g}|w_{-\infty:T-1})\nonumber \\
 & \leq\mathcal{W}_{T}(g^{*}|w_{T-m:T-1})-\mathcal{W}_{T}(\hat{g}|w_{T-m:T-1})+2\sup_{g:\{w_{-\infty:T-1}\}\to\{0,1\}}|\mathcal{W}_{T}(g|w_{-\infty:T-1})-\mathcal{W}_{T}(g|w_{T-m:T-1})|\nonumber \\
 & \leq2c\sup_{g:\{w_{T-m:T-1}\}\to\{0,1\}}|\bar{\mathcal{W}}(g|w_{T-m:T-1})-\widehat{\mathcal{W}}(g|w_{T-m:T-1})|+2\cdot\text{w-bias}_{\infty}\left(m\right),\label{eq:infinite_decomp1}
\end{align}
where $\text{w-bias}_{\infty}\left(m\right):=\sup_{g:\{w_{-\infty:T-1}\}\to\{0,1\}}|\mathcal{W}_{T}(g|w_{-\infty:T-1})-\mathcal{W}_{T}(g|w_{T-m:T-1})|$. The last inequality follows by imposing an assumption similar
to Assumption \ref{equiv_W}. For the second term of the last row, we should have $\lim_{m\to\infty}\text{w-bias}_{\infty}\left(m\right)=0$
under mild regularity conditions. Thus, we can focus on the conditions required
for the convergence of $|\bar{\mathcal{W}}(g|w_{T-m:T-1})-\widehat{\mathcal{W}}(g|w_{T-m:T-1})|$.

Note that in the case of infinite Markov order, we must drop Assumption
\ref{ass:toy_example}(i) completely. As a result, $\bar{\mathcal{W}}(g|w_{T-m:T-1})-\widehat{\mathcal{W}}(g|w_{T-m:T-1})$
is no longer an average of an MDS since conditioning on $W_{t-m:t-1}$
is no longer equivalent to conditioning on $\mathcal{F}_{t-1}$ for
any $m<\infty$.

To address this issue, we define, for a treatment path $w_{T-m:T-1}\in\{0,1\}^{m}$,
{\scriptsize
\begin{equation}
\bar{\mathcal{W}}(g|\mathcal{F}_{t-1}) :=\frac{1}{T(w_{T-m:T-1})}\sum_{\substack{m\leq t\leq T-1:\\W_{t-m:t-1}=w_{T-m:T-1}}}\mathsf{E}\left[ Y_{t}(W_{-\infty:t-1},1)g(W_{t-m:t-1})+Y_{t}(W_{-\infty:t-1},0)\left[1-g(W_{t-m:t-1})\right]|\mathcal{F}_{t-1}\right].\label{eq:wel_bar_F_m}
\end{equation}
}Now, we can further decompose the first term of \eqref{eq:infinite_decomp1} into:{\footnotesize
\begin{align}
 & \sup_{g:\{w_{T-m:T-1}\}\to\{0,1\}}|\bar{\mathcal{W}}(g|w_{T-m:T-1})-\widehat{\mathcal{W}}(g|w_{T-m:T-1})|\nonumber \\
 & \leq\sup_{g:\{w_{T-m:T-1}\}\to\{0,1\}}|\bar{\mathcal{W}}(g|\mathcal{F}_{t-1})-\widehat{\mathcal{W}}(g|w_{T-m:T-1})|+\sup_{g:\{w_{T-m:T-1}\}\to\{0,1\}}|\bar{\mathcal{W}}(g|\mathcal{F}_{t-1})-\bar{\mathcal{W}}(g|w_{T-m:T-1})|\nonumber \\
 & =\sup_{g:\{w_{T-m:T-1}\}\to\{0,1\}}|\bar{\mathcal{W}}(g|\mathcal{F}_{t-1})-\widehat{\mathcal{W}}(g|w_{T-m:T-1})|+\overline{\text{w-bias}}_{\infty}\left(m\right),\label{eq:infinite_decomp2}
\end{align}
} where $\overline{\text{w-bias}}_{\infty}\left(m\right):=\sup_{g:\{w_{T-m:T-1}\}\to\{0,1\}}|\bar{\mathcal{W}}(g|\mathcal{F}_{t-1})-\bar{\mathcal{W}}(g|w_{T-m:T-1})|$.
Summarizing (\ref{eq:infinite_decomp1}) and (\ref{eq:infinite_decomp2}),
we obtain
\begin{align*}
 \mathcal{W}_{T}(g^{*}|w_{-\infty:T-1})-\mathcal{W}_{T}(\hat{g}|w_{-\infty:T-1})& \leq 2c\sup_{g:\{w_{m:T-1}\}\to\{0,1\}}|\bar{\mathcal{W}}(g|\mathcal{F}_{t-1})-\widehat{\mathcal{W}}(g|w_{T-m:T-1})|\\
 & +2\cdot \widetilde{\text{w-bias}}_{\infty}\left(m\right),
\end{align*}
where \begin{equation} \label{eq:inf_markov_bias}
\widetilde{\text{w-bias}}_{\infty}\left(m\right):=
 c\cdot\overline{\text{w-bias}}_{\infty}\left(m\right)+\text{w-bias}_{\infty}\left(m\right),
\end{equation}
In the following, we discuss the scenarios under which this bias term converges to zero. Let us strengthen Assumption \ref{unconf} to: for $w\in\{0,1\}$,
\begin{equation}
Y_{t}(W_{-\infty:t-1},w)\perp W_{t}|X_{-\infty:t-1}.\label{eq:infinite_unconf}
\end{equation}
As mentioned above, the first statement of Assumption \ref{ass:toy_example}(ii)
is maintained and modified for simplicity. The second statement of
Assumption \ref{ass:toy_example}(ii) is modified to
\begin{equation}
W_{t}\perp X_{-\infty:t-1}|W_{t-m:t-1}.\label{eq:ps_order_m}
\end{equation}
This requires that conditioning on $W_{t-m:t-1}$, the truncated history,
$X_{-\infty:t-m-1}$, does not have direct influences on $W_{t}$.
In other words, the propensity score has limited overlaps: $e_{t}(\mathcal{F}_{t-1})=\Pr(W_{t}=1|\mathcal{F}_{t-1})=\Pr(W_{t}=1|W_{t-m:t-1})$.


To summarize. First, under conditions (\ref{eq:infinite_unconf}) 
and (\ref{eq:ps_order_m}), the difference $\bar{\mathcal{W}}(g|\mathcal{F}_{t-1})-\widehat{\mathcal{W}}(g|w_{T-m:T-1})$
is an average of an MDS. For a given $m$, it converges at $\frac{1}{\sqrt{T-m}}$
rate. Second, under additional regularity conditions of decay temporal dependence, we have $\lim_{m\to\infty}$ $\text{w-bias}_{\infty}\left(m\right)=0$. 
Third, under the condition (\ref{eq:infinite_A2.1(ii)}), we have
 {\small\begin{align*}
		&\bar{\mathcal{W}}(g|\mathcal{F}_{t-1})=\frac{1}{T(w_{T-m:T-1})} \\
		 &\sum_{\substack{m\leq t\leq T-1:\\W_{t-m:t-1}=w_{T-m:T-1}}}\mathsf{E}\left[ Y_{t}(W_{-\infty:t-1},1)g(W_{t-m:t-1})+Y_{t}(W_{-\infty:t-1},0)\left[1-g(W_{t-m:t-1})\right]|W_{-\infty:t-1}\right].
	\end{align*}}
This equality leads to $\text{plim}_{m\to\infty}\overline{\text{w-bias}}_{\infty}\left(m\right)=0$, given that  $T$, $T(w_{T-m:T-1})$, and $m$ diverge at appropriate rates, as well as some other additional regularity conditions of decay temporal dependence are met. Consequently, $\text{plim}_{m\to\infty}\widetilde{  \text{w-bias}}_{\infty}\left(m\right)=0$.


\subsubsection{Example \ref{Example_inf_Markov} continued} \label{App:exmple_2}

Now we verify whether the model specified in Example 2 satisfies the
adjusted assumptions discussed in Appendix \ref{rem: Inf_order}. The assumptions adjusted for the case of infinite Markov
order are summarized below:
\begin{enumerate}
	
\item The condition \eqref{eq:infinite_unconf} of sequence unconfoundedness, $Y_{t}(W_{-\infty:t-1},w)\perp W_{t}|X_{-\infty:t-1}$, which represents the adjusted Assumption \ref{unconf}.

\item The condition \eqref{eq:ps_order_m}, $W_{t}\perp X_{-\infty:t-1}|W_{t-m:t-1}$, which
represents the second statement of the adjusted Assumption \ref{ass:toy_example}(ii).

\item The modified Assumption \ref{equiv_W} of the invariance of the welfare ordering, $\mathcal{W}_T(g^*|w_{T-m:T-1})- \mathcal{W}_T(g|w_{T-m:T-1}) \leq c\big[\bar{\mathcal{W}}(g^*|w_{T-m:T-1})-\bar{\mathcal{W}}(g|w_{T-m:T-1})\big]$.
\item The condition \eqref{eq:infinite_A2.1(ii)}, $Y_{t}(w_{-\infty:t})\perp X_{-\infty:t-1}|W_{-\infty:t-1}$,
which represents the first statement of the adjusted Assumption \ref{ass:toy_example}(ii).
\end{enumerate}
 Firstly, the condition \eqref{eq:infinite_unconf} is guaranteed by the independence between
$\varepsilon_{t}$ and $W_{t}$, conditional on $X_{-\infty:t-1}$. Secondly, the condition \eqref{eq:ps_order_m} is satisfied for any $m\geq 1$, as $W_{t}$ is an AR(1) process
and its distribution depends solely on $W_{t-1}$, as presented by \eqref{eq:Wt}.

Thirdly, regarding the invariance of welfare ordering, it holds that 
{\footnotesize\begin{align*}
&\mathcal{W}_{T}(g^{*}|w_{T-m:T-1})-\mathcal{W}_{T}(g|w_{T-m:T-1}) \\ &=\mathrm{E}\left[\alpha+\beta_{0}g^{*}+\sum_{i=1}^{\infty}\beta_{i}W_{T-i}+\sum_{i=0}^{\infty}\gamma_{i}\varepsilon_{T-i}|W_{T-m:T-1}=w_{T-m:T-1}\right]\\
 & -\mathrm{E}\left[\alpha+\beta_{0}g+\sum_{i=1}^{\infty}\beta_{i}W_{T-i}+\sum_{i=0}^{\infty}\gamma_{i}\varepsilon_{T-i}|W_{T-m:T-1}=w_{T-m:T-1}\right] =\beta_{0}\left(g^{*}-g\right),
\end{align*}}
 and {\footnotesize
\begin{align*}
 & \bar{\mathcal{W}}(g^{*}|w_{T-m:T-1})-\bar{\mathcal{W}}(g|w_{T-m:T-1})\\
 & =\frac{1}{T(w_{T-m:T-1})}\sum_{m\leq t\leq T-1:W_{t-m:t-1}=w_{T-m:T-1}}\mathrm{E}\left[\alpha+\beta_{0}g^{*}+\sum_{i=1}^{\infty}\beta_{i}W_{t-i}+\sum_{i=0}^{\infty}\gamma_{i}\varepsilon_{t-i}|W_{t-m:t-1}=w_{T-m:T-1}\right]\\
 & -\frac{1}{T(w_{T-m:T-1})}\sum_{m\leq t\leq T-1:W_{t-m:t-1}=w_{T-m:T-1}}\mathsf{\mathrm{E}}\left[\alpha+\beta_{0}g+\sum_{i=1}^{\infty}\beta_{i}W_{t-i}+\sum_{i=0}^{\infty}\gamma_{i}\varepsilon_{t-i}|W_{t-m:t-1}=w_{T-m:T-1}\right]\\
 & =\frac{1}{T(w_{T-m:T-1})}\sum_{m\leq t\leq T-1:W_{t-m:t-1}=w_{T-m:T-1}}\beta_{0}\left(g^{*}-g\right) =\beta_{0}\left(g^{*}-g\right),
\end{align*}}
where the second equality follows from the fact that the sequence $\{Y_{t},W_{t}\}$, which is generated from \eqref{eq:Yt_inf} and \eqref{eq:Wt} to \eqref{eq:Vt_2}, remains stationary.
Consequently, Assumption \ref{equiv_W} holds with $c=1$.

Finally, as mentioned at the beginning of Appendix \ref{rem: Inf_order}, the purpose of the condition \eqref{eq:infinite_A2.1(ii)}, which  excludes the effects of past values of $Y_t$, is to ensure a simple and consistent discussion throughout Section \ref{model_example}. Although it may not be satisfied by the outcome equation \eqref{eq:Yt_inf}, the linear structure of \eqref{eq:Yt_inf} ensures that under additional assumptions, our  T-EWM approach for the case of infinite Markov order, as presented in Appendix \ref{rem: Inf_order}, remains valid. 

 To see this point, note that in Appendix \ref{rem: Inf_order}, the condition \eqref{eq:infinite_A2.1(ii)} is only used for showing
$\text{plim}_{m\to\infty}\overline{\text{w-bias}}_{\infty}\left(m\right)=0$.
Recall that $\overline{\text{w-bias}}_{\infty}\left(m\right):=\sup_{g:\{w_{T-m:T-1}\}\to\{0,1\}}|\bar{\mathcal{W}}(g;m|\mathcal{F}_{t-1})-\bar{\mathcal{W}}(g|w_{T-m:T-1})|$.
In Example 2, this difference becomes 
{\footnotesize\begin{align*}
	& \bar{\mathcal{W}}(g;m|\mathcal{F}_{t-1})-\bar{\mathcal{W}}(g|w_{T-m:T-1})\\
	& =\frac{1}{T(w_{T-m:T-1})}\sum_{m\leq t\leq T-1:W_{t-m:t-1}=w_{T-m:T-1}}\mathsf{E}\left[\alpha+\beta_{0}g+\sum_{i=1}^{\infty}\beta_{i}W_{t-i}+\sum_{i=0}^{\infty}\gamma_{i}\varepsilon_{t-i}|\mathcal{F}_{t-1}\right]\\
	& -\frac{1}{T(w_{T-m:T-1})}\sum_{m\leq t\leq T-1:W_{t-m:t-1}=w_{T-m:T-1}}\mathsf{E}\left[\alpha+\beta_{0}g+\sum_{i=1}^{\infty}\beta_{i}W_{t-i}+\sum_{i=0}^{\infty}\gamma_{i}\varepsilon_{t-i}|W_{t-m:t-1}=w_{T-m:T-1}\right]\\
	& =\frac{1}{T(w_{T-m:T-1})}\sum_{m\leq t\leq T-1:W_{t-m:t-1}=w_{T-m:T-1}}\left[\sum_{i=m+1}^{\infty}\beta_{i}W_{t-i}+\sum_{i=1}^{\infty}\gamma_{i}\varepsilon_{t-i}\right]\\
	& -\frac{1}{T(w_{T-m:T-1})}\sum_{m\leq t\leq T-1:W_{t-m:t-1}=w_{T-m:T-1}}\mathsf{E}\left(\sum_{i=m+1}^{\infty}\beta_{i}W_{t-i}+\sum_{i=1}^{\infty}\gamma_{i}\varepsilon_{t-i}|W_{t-m:t-1}=w_{T-m:T-1}\right)\\
	& =\frac{1}{T(w_{T-m:T-1})}\sum_{m\leq t\leq T-1:W_{t-m:t-1}=w_{T-m:T-1}}\left[\sum_{i=m+1}^{\infty}\beta_{i}W_{t-i}-\mathsf{E}\left(\sum_{i=m+1}^{\infty}\beta_{i}W_{t-i}|W_{t-m:t-1}=w_{T-m:T-1}\right)\right]\\
	& +\frac{1}{T(w_{T-m:T-1})}\sum_{m\leq t\leq T-1:W_{t-m:t-1}=w_{T-m:T-1}}\left[\sum_{i=1}^{\infty}\gamma_{i}\varepsilon_{t-i}-\mathsf{E}\left(\sum_{i=1}^{\infty}\gamma_{i}\varepsilon_{t-i}|W_{t-m:t-1}=w_{T-m:T-1}\right)\right]\\
	& =:I_{m}+II_{m},
\end{align*}}
where the first term of the second equality follows from that the sum is over only those observations that satisfy $W_{t-m:t-1}=w_{T-m:T-1}$, and thus the path $W_{t-m:t-1}$ is fixed on $w_{T-m:T-1}$ within both $\bar{\mathcal{W}}(g;m|\mathcal{F}_{t-1})$ and $\bar{\mathcal{W}}(g|w_{T-m:T-1})$. 

Given that $\sum_{i=1}^{\infty}\left|\beta_{i}\right|<\infty$
holds and  $T$, $T(w_{T-m:T-1})$, and $m$ diverge at appropriate rates, we shall
have $\text{plim}_{m\to\infty}I_{m}=0$. For $II_{m}$, note that
it is an average of an MA($\infty$) process, centered around
its conditional mean given $W_{t-m:t-1}=w_{T-m:T-1}$. This should
converge to zero if 
$\max_{l\geq 1} \sum_{i\geq l }|\gamma_i|\leq k\cdot l^{-\alpha}$ holds for some positive $k$ and $\alpha$,
and the noise $\varepsilon_{t}$ is i.i.d.\ with zero mean and has a finite second
moment. 

\subsection{Proof of Theorem \ref{thm: kernel}} \label{max_kernel}
For simplicity, we maintain Assumption \ref{ass:continuous_Markov}, and one of its implications is equation \eqref{equiv_markov}.

In addition to \eqref{eq:sample_kernel} and \eqref{eq:kernel_bar}, we define
{\small \beaa
	\widetilde{\mathcal{W}}_{h}(g, x) 
	& =& \frac{\sum_{t=1}^{T-1}K_h(X_{t-1},x)\E\big[\frac{Y_{t}W_{t}}{e_{t}(X_{t-1})}g(X_{t-1})+\frac{Y_{t}(1-W_{t})}{1-e_{t}(X_{t-1})}[1-g(X_{t-1})]|\mathcal{F}_{t-1}\big]}{\sum_{t=1}^{T-1}K_h(X_{t-1},x)}\\
	&=&\frac{\sum_{t=1}^{T-1}K_h(X_{t-1},x)\mathcal{W}_t(g|\mathcal{F}_{t-1})}{\sum_{t=1}^{T-1}K_h(X_{t-1},x)},
\eeaa}
where $K(\cdot)$ is a bounded kernel with a bounded support, $K_h(a,b):=\frac{1}{h}K(\frac{a-b}{h})$. The second equality follows from Assumption \ref{unconf_c}.

Our strategy is to show for any $x\in \mathcal{X}$ and any $g:\mathcal{X}\to \{0,1\}$, such that
\begin{align*}
	&\mathcal{W}_{T}(g|x)-\mathcal{W}_{T}(\hat{g}_x|x)  \\
	&\leq c\big[\bar{\mathcal{W}}_{h}(g|x)-\bar{\mathcal{W}}_{h}(\hat{g}_x|x)\big]\\
	& \leq c\big[	\widetilde{\mathcal{W}}_{h}(g,x)-\widetilde{\mathcal{W}}_{h}(\hat{g}_x,x)\big]+O_{p}(h^{2})+O_{p}(c_{w}^{-1}(\sqrt{(T-1)h})^{-1})\\
	& \leq \sup_{g:\mathcal{X} \to \{0,1\}} 2c|	\widetilde{\mathcal{W}}_{h}(g,x)-\widehat{\mathcal{W}}(g|x) |+O_{p}(h^{2})+O_{p}(c_{w}^{-1}(\sqrt{(T-1)h})^{-1})\\
	&=O_{p}(h^{2})+O_{p}(c_{w}^{-1}(\sqrt{(T-1)h})^{-1}),  
\end{align*}
where the first inequality follows from Assumption \ref{equiv_W_c}. The second inequality follows from Lemma \ref{lem:kernel_bias} below. The third inequality follows from similar arguments to \eqref{ineq_s}. The last equality follows from Lemma \ref{lem:kernel_variance} stated below.

\begin{assumption}[Kernel]\label{kernel}
$K(\cdot)$ is a symmetric bounded kernel function with bounded support and
$\int K(v)\,dv = 1$. Let $K_h(u,x):=h^{-1}K((u-x)/h)$. There exist
constants $c_w, c_m, c'_m > 0$ such that for all $x\in\mathcal{X}$,
all $t$, and almost surely:
\begin{itemize}
\item[(i)] $\E[K_h(X_{t-1},x)\mid\mathcal{F}_{t-2}] \geq c_w$;
\item[(ii)] $\E[K_h(X_{t-1},x)^2\mid\mathcal{F}_{t-2}] \leq c_m\, h^{-1}$;
\item[(iii)] $\E[K_h(X_{t-1},x)^4\mid\mathcal{F}_{t-2}] \leq c'_m\, h^{-3}$.
\end{itemize}
The bandwidth satisfies $h\to 0$ and $(T-1)h\to\infty$ as $T\to\infty$.
\end{assumption}

\begin{lemma} \label{lem:kernel_bias}
Under Assumption \ref{kernel}, for any  $g:\mathcal{X}\to \{0,1\}$ and $x \in \mathcal{X}$, 
\[
\widetilde{\mathcal{W}}_{h}(g, x) -\bar{\mathcal{W}}_{h}(g|x)=O_{p}(h^{2})+O_{p}\left(c_{w}^{-1}(\sqrt{(T-1)h})^{-1}\right).
\]
\end{lemma}
\begin{proof}
Under Assumption \ref{kernel}, 
\begin{align*}
	&	\bigg[\widetilde{\mathcal{W}}_{h}(g, x) -\bar{\mathcal{W}}_{h}(g|x)\bigg]\bigg[\frac{1}{T-1}\sum_{t=1}^{T-1}K_h(X_{t-1},x)\bigg]  \\ &=\frac{1}{T-1}\sum_{t=1}^{T-1}K_h(X_{t-1},x)(\mathcal{W}_t(g|X_{t-1})-\mathcal{W}_t(g|x))\\
	& =\frac{1}{T-1}\sum_{t=1}^{T-1}\bigg\{K_h(X_{t-1},x)(\mathcal{W}_t(g|X_{t-1})-\mathcal{W}_t(g|x))\\
	&-\E[K_h(X_{t-1},x)(\mathcal{W}_t(g|X_{t-1})-\mathcal{W}_t(g|x))|\mathcal{F}_{t-2}]\bigg\}\\
	& +\frac{1}{T-1}\sum_{t=1}^{T-1}\bigg\{\E[K_h(X_{t-1},x)(\mathcal{W}_t(g|X_{t-1})-\mathcal{W}_t(g|x))|\mathcal{F}_{t-2}]\bigg\}.
	\end{align*}
Rearranging the equation, we have
\beaa
\widetilde{\mathcal{W}}_{h}(g, x) -\bar{\mathcal{W}}_{h}(g|x)=O_{p}(c_{w}^{-1}(\sqrt{(T-1)h})^{-1})+O_{p}(h^{2}),
\eeaa
where the first term on the right-hand side follows from similar arguments in  the proof for Theorem \ref{thm:mds_bound} and the second term follows from the standard result concerning the bias of the kernel estimator.
\end{proof}

\begin{lemma} \label{lem:kernel_variance}
Suppose $(T-1)h\to\infty$. Under Assumption \ref{kernel} and the
assumption that $\mathcal{G}$ has finite VC dimension $v_{\mathcal{G}}$,
for each fixed $x\in\mathcal{X}$,
\begin{equation}
\sup_{g\in\mathcal{G}}
\left|\widehat{\mathcal{W}}(g|x) - \widetilde{\mathcal{W}}_h(g, x)\right|
= O_p\!\left(c_w^{-1}\sqrt{\frac{v_{\mathcal{G}}}{(T-1)h}}\right).
\end{equation}
\end{lemma}
\begin{proof}
Fix $x\in\mathcal{X}$. We have
\[
\widehat{\mathcal{W}}(g|x) - \widetilde{\mathcal{W}}_h(g,x)
=\frac{\sum_{t=1}^{T-1} K_h(X_{t-1},x)\,\xi_t(g)}{\sum_{t=1}^{T-1} K_h(X_{t-1},x)},
\qquad
\xi_t(g):=\widehat{\mathcal{W}}_t(g)-\mathcal{W}_t(g|\mathcal{F}_{t-1}),
\]
and bound the numerator and denominator separately.

Define $Q_T(x):=\sum_{t=1}^{T-1}K_h(X_{t-1},x)^2$. The increments
$K_h(X_{t-1},x)^2-\E[K_h(X_{t-1},x)^2|\mathcal{F}_{t-2}]$ form a
martingale difference sequence bounded by $h^{-2}\|K\|_\infty^2$ with
conditional second moment $\lesssim h^{-3}$ (since $K$ is bounded with
bounded support). Since $x$ is fixed, Lemma~\ref{max} gives
\[
\left|Q_T(x)-\sum_{t=1}^{T-1}\E[K_h(X_{t-1},x)^2|\mathcal{F}_{t-2}]\right|
=O_p\!\left(h^{-2}+\sqrt{(T-1)h^{-3}}\right)=o_p\!\left((T-1)/h\right),
\]
using $(T-1)h\to\infty$, which makes both $h^{-2}$ and
$\sqrt{(T-1)h^{-3}}$ negligible relative to $(T-1)/h$. Since
$\sum_t\E[K_h(X_{t-1},x)^2|\mathcal{F}_{t-2}]\leq (T-1)c_m h^{-1}$
(Assumption~\ref{kernel}), the event
$\mathcal{A}_T:=\{Q_T(x)\leq 2c_m(T-1)h^{-1}\}$ satisfies
$\Pr(\mathcal{A}_T)\to1$.

For fixed $g$,
$\zeta_t(g):=K_h(X_{t-1},x)\,\xi_t(g)$ is a martingale difference
sequence with respect to $\{\mathcal{F}_{t-1}\}$, with increments
$|\zeta_t(g)|\leq h^{-1}\|K\|_\infty(M/\kappa)\lesssim h^{-1}$ and
predictable quadratic variation
$\sum_t\E[\zeta_t(g)^2|\mathcal{F}_{t-1}]\leq C_\xi\, Q_T(x)$,
where $C_\xi$ bounds $\E[\xi_t(g)^2|\mathcal{F}_{t-1}]$ under
Assumptions \ref{ass bounded y} and \ref{bound_c}. On
$\mathcal{A}_T$ the predictable quadratic variation is at most
$B:=2C_\xi c_m(T-1)h^{-1}$. Applying Freedman's inequality
(Lemma~\ref{free}) on $\mathcal{A}_T$ and adding $\Pr(\mathcal{A}_T^c)=o(1)$
gives the fixed-$g$ bound with increment bound $A\asymp h^{-1}$ and
$B\asymp (T-1)h^{-1}$. Since $\mathcal{G}$ has finite VC dimension
$v_{\mathcal{G}}$, a standard VC-type maximal inequality for martingale
arrays then yields the uniform bound over $g$,
\[
\sup_{g\in\mathcal{G}}
\left|\sum_{t=1}^{T-1} K_h(X_{t-1},x)\,\xi_t(g)\right|
=O_p\!\left(h^{-1}v_{\mathcal{G}}+\sqrt{\frac{(T-1)\,v_{\mathcal{G}}}{h}}\right)
=O_p\!\left(\sqrt{\frac{(T-1)\,v_{\mathcal{G}}}{h}}\right),
\]
where the second equality uses $(T-1)h\to\infty$, which makes the linear
term $h^{-1}v_{\mathcal{G}}$ negligible relative to
$\sqrt{(T-1)v_{\mathcal{G}}/h}$.

The summands
$\eta_t(x):=K_h(X_{t-1},x)-\E(K_h(X_{t-1},x)|\mathcal{F}_{t-2})$
form a martingale difference sequence with $|\eta_t(x)|\leq 2h^{-1}\|K\|_\infty$
and $\E[\eta_t(x)^2|\mathcal{F}_{t-2}]\leq h^{-1}c_m$. Since $x$ is fixed,
Lemma~\ref{max} with $|\mathcal{A}|=1$ gives
\[
\left|(T-1)^{-1}\sum_{t=1}^{T-1}\eta_t(x)\right|
=O_p\!\left(\frac{1}{(T-1)h}+\frac{1}{\sqrt{(T-1)h}}\right)=o_p(1),
\]
under $(T-1)h\to\infty$. By Assumption~\ref{kernel},
$\E(K_h(X_{t-1},x)|\mathcal{F}_{t-2})\geq c_w$ uniformly in $t$, so
\[
(T-1)^{-1}\sum_{t=1}^{T-1} K_h(X_{t-1},x)\geq c_w/2
\]
with probability approaching one.

On the intersection of $\mathcal{A}_T$ and
the denominator event (which has probability approaching one), dividing
the numerator bound by the denominator's lower bound,
\[
\sup_{g\in\mathcal{G}}
\left|\widehat{\mathcal{W}}(g|x) - \widetilde{\mathcal{W}}_h(g,x)\right|
=O_p\!\left(\frac{(T-1)^{-1}\sqrt{(T-1)h^{-1}v_{\mathcal{G}}}}{c_w/2}\right)
=O_p\!\left(c_w^{-1}\sqrt{\frac{v_{\mathcal{G}}}{(T-1)h}}\right),
\]
which is the claimed bound.
\end{proof}


\subsection{Justifying Assumption \ref{cover1}} \label{justify_entropy}
Here we interpret the entropy condition in Assumption \ref{cover1} in a more general way. We follow the argument of Chapter 11 in \cite{kosorok2008introduction} for the functional class related to non-i.i.d. data.

First, we illustrate the case that the stochastic process, $$\{D_{t}\}_{t=-\infty}^{\infty}\overset{\text{def}}{=}\{Y_{t},W_{t},X_{t-1}\}_{t=-\infty}^{\infty}, $$
is assumed to be stationary. In this case, we have for all $t\in\{1,2,\dots,n\}$,
\begin{align*}
	h_{t}(\cdot;G)  =h(\cdot;G)
	\mathcal{\text{ and }H}  =\{h(\cdot;G)=\widehat{\mathcal{W}}_{t}(G)-\bar{\mathcal{W}}_{t}(G):G\in\mathcal{G}\}.
\end{align*}
For an $n$-dimensional non-negative vector $\alpha_{n}=\{\alpha_{n,1},\alpha_{n,2},\dots,\alpha_{n,n}\}$,
define $Q_{\alpha_{n}}$ as a discrete measure with
probability mass $\frac{\alpha_{n,t}}{\sum_{t=1}^{n}\alpha_{n,t}}$ on the value
$D_{t}$. Recall for a function $f$, its $L_{r}(Q)-$norm is denoted
by $||f||_{Q,r}\overset{\text{def}}{=}\left[\int_{v}|f(v)|^{r}dQ(v)\right]^{1/r}$.
Thus, given a sample $\{D_{t}\}_{t=1}^{n}$ and $h(\cdot;G)$, we have $||h(\cdot;G)||_{Q_{\alpha_{n}},2}=\left[\frac{1}{\sum_{t=1}^{n}\alpha_{n,t}}\sum_{t}\alpha_{n,t}h(D_{t};G)^{2}\right]^{1/2}.$

In the stationary case, we define the \emph{restricted} function class $\mathbf{H_{n}}=\{h_{1:n}:h_{1:n}\in\mathcal{\mathcal{H}}\times\mathcal{\mathcal{H}}\times\dots\times\mathcal{\mathcal{H}},h_{1}=h_{2}=\dots=h_{n}\}$
and the envelope function $\overline{H}{}_{n}=(H_{1},H_{2},\cdots,H_{n})^{\prime}$.
Furthermore, let $\mathcal{Q}$ denote the class of all discrete probability measures
on the domain of the random vector $D_{t}=\{Y_{t},W_{t},X_{t-1}\}$.
Recall for an $n$-dimensional vector $v=\left\{ v_{1},\dots,v_{n}\right\} $,
its $l_{2}-$norm is denoted by $|v|_{2}\overset{\text{def}}{=}\left(\sum_{i=1}^{n}v_{i}^{2}\right)^{1/2}$.
Then, for any $\alpha_{n}\in\mathbb{R}_{+}^{n}$ and $\tilde{\alpha}_{n,t}=\frac{\sqrt{\alpha_{n,t}}}{\sqrt{\sum_{t}\alpha_{n,t}}}$, we have
\[
\mathcal{N}(\delta|\tilde{\alpha}_{n}\circ\overline{H}{}_{n}|_{2},\tilde{\alpha}_{n}\circ\mathbf{H}_{n},|.|_{2})=\mathcal{N}(\delta||H||_{Q_{\alpha_{n}},2},\mathcal{H},||.||_{Q_{\alpha_{n}},2})\leq\sup_{Q\in\mathcal{Q}}\mathcal{N}(\delta||H||_{Q,2},\mathcal{H},||.||_{Q,2}).
\]

In light of this relationship in the stationary case, we generalize the setup. Let $\mathcal{K}$ be a subset of $\{1,\cdots,n\}$, and its dimension is $K=|\mathcal{K}|$. Let $\alpha_{n,K}$ denote the $K$-dimensional sub-vector of $\alpha_n$ corresponding
to the index set $\mathcal{K}.$ Recall that $H_{t}$ denotes the envelope function of $\mathcal{\mathcal{H}}_{t}$, 
the functional class corresponding to $\{h_{t}(.,G),G\in\mathcal{G}\}$, and $\mathbf{H_{n}}=\mathcal{\mathcal{H}}_{1}\times \mathcal{\mathcal{H}}_{2}\times \dots \times \mathcal{H}_{n}.$ 
Then, for the subset $\mathcal{K}$, we similarly define $\mathbf{H_{n,K}}$ as the corresponding functional
class, and the vector of its envelope functions as $\overline{H}{}_{n,K}$.

Assumption~\ref{cover1} should be viewed as a high-level restriction
that rules out an accumulation of complexity across time. To
illustrate when such a restriction is plausible, consider the
following structural condition: for any fixed $K$, the covering number
of the product class can effectively be reduced to its $K$-dimensional
components, i.e.,
\begin{align*}
	\mathcal{N}(\delta|\tilde{\alpha}_{n}\circ\overline{H}{}_{n}|_{2},\mathbf{\tilde{\alpha}_{n}\circ H_{n},}|.|_{2}) & \leq\max_{\mathcal{K}\subseteq\{1:n\}}\mathcal{N}(\delta|\tilde{\alpha}_{n,K}\circ\overline{H}{}_{n,K}|_{2},\tilde{\alpha}_{n,K}\mathbf{\circ H_{n,K},}|.|_{2}),\\
	& \leq \max_{\mathcal{K}\subseteq \{1:n\}}\prod_{t\in\mathcal{K}}\mathcal{N}(\delta|\tilde{\alpha}_{n,t}\cdot {H}_{t}|_{2},\tilde{\alpha}_{n,t}\cdot \mathcal{\mathcal{H}}_{t},|.|_{2}),\\
	& \leq\sup_{Q\in\mathcal{Q}}\max_{t\in\mathcal{K}}\max_{\mathcal{K}\subseteq\{1:n\}}\mathcal{N}(\delta\|H_{t}\|_{Q,2},\mathcal{\mathcal{H}}_{t},\|.\|_{Q,2})^{K}.
\end{align*}

This structural condition is itself a restriction rather than a
universal property: in general, the covering number of a
product/triangular array class grows with $n$ or with $|\mathcal{K}|$,
and reducing it to a one-dimensional bound is a strong assumption that
requires additional structure on the dependence.
The special case $|\mathcal{K}|=1$ in particular is illustrative
rather than automatic: under this case, the first inequality of
\eqref{cover} in Assumption \ref{cover1} corresponds to verifying the
one-dimensional covering number $\mathcal{N}(\delta\|H_{t}\|_{Q,2},\mathcal{\mathcal{H}}_{t},\|.\|_{Q,2})$,
which we examine below for a specific class.

To provide an example for the second inequality of \eqref{cover}, we let $\IF(X_{t-1} \in G)=\IF( X_{t-1}^{\top}\theta \leq 0)$ for some $\theta \in \Theta$, where $\Theta$ is a compact set in $\mathbb{R}^d$. Without loss of generality, we assume that $\Theta$ is the unit ball in $\mathbb{R}^d$, i.e., for all $\theta\in \Theta$: $|\theta |_2\leq 1$.
For $w=0$ and 1, define $S_{t,w} \defeq Y_t(w)\IF(W_t = w)/\Pr(W_t=w|X_{t-1})$, where $Y_t(w)$ is the abbreviation of $Y_t(W_{t-1},w)$. Under Assumption \ref{ass bounded y}, $|Y_t(1)|, |Y_t(0)|\leq M/2$, so we have $\|H_t\|_{Q,r} \leq M+ M/\kappa \overset{\text{def.}}{=} M'$, and
\begin{eqnarray*}
	h^{\theta}_{ t}& = &\E(Y_t(1)\IF(X_{t-1}^{\top}\theta \leq 0)|X_{t-1})+ \E(Y_t(0)\IF(X_{t-1}^{\top}\theta >0)|X_{t-1})\\&&- S_{t,1}\IF(X_{t-1}^{\top}\theta \leq 0) -  S_{t,0}\IF(X_{t-1}^{\top}\theta > 0).
\end{eqnarray*}
The corresponding functional class can be written as
\begin{eqnarray*}
	\mathcal{H}_t = \{h: (y_t,w_t, x_{t-1}) \to f_{1,1}^{\theta} + f_{1,0}^{\theta} + f_{0,1}^{\theta}+f_{0,0}^{\theta}, \theta \in \Theta\},
\end{eqnarray*}
where $f_{1,1}^{\theta}$ (resp. $f_{1,0}^{\theta}, f_{0,1}^{\theta}, $ and $f_{0,0}^{\theta} $) corresponds to $\E(Y_t(1)\IF(X_{t-1}^{\top}\theta \leq 0)|X_{t-1})$ (resp.\\ $\E(Y_t(0)\IF(X_{t-1}^{\top}\theta>0)|X_{t-1}),-S_{t,1}\IF(X_{t-1}^{\top}\theta \leq 0)$, and $- S_{t,0}\IF(X_{t-1}^{\top}\theta > 0)$). Let the corresponding functional class be denoted by $\mathcal{F}_{1,1}$ (resp. $\mathcal{F}_{1,0},\mathcal{F}_{0,1},$ and $\mathcal{F}_{0,0})$. For all finitely discrete norms $Q$ and any positive $\vps$,
we know that
\begin{equation}
	\small
	\sup_Q\mathcal{N}(\vps, \mathcal{H}_t, \|.\|_{Q,r}) \leq \sup_Q\mathcal{N}(\vps/4, \mathcal{F}_{1,1}, \|.\|_{Q,r})\mathcal{N}(\vps/4, \mathcal{F}_{1,0}, \|.\|_{Q,r})\mathcal{N}(\vps/4, \mathcal{F}_{0,1}, \|.\|_{Q,r})\mathcal{N}(\vps/4, \mathcal{F}_{0,0},\|.\|_{Q,r}).
\end{equation}
We look at the covering number of the respective functional class. According to Lemma 9.8 of \cite{kosorok2008introduction}, the subgraph of the function $\IF (X_{t-1}^{\top}\theta \leq  0)$ is of VC dimension less than $d+2$ since the class $\{x \in \mathbb{R}^d, x^{\top}\theta \leq  0,\theta \in \Theta\}$ is of VC dimension less than $d+2$ (see the proof of Lemma 9.6 of \cite{kosorok2008introduction}). Therefore, we have $\sup_Q\mathcal{N}(\vps/4, \mathcal{F}_{0,1},\|.\|_{Q,r}) \vee \mathcal{N}(\vps/4, \mathcal{F}_{0,0}, \|.\|_{Q,r}) \lesssim (4/(\vps M'))^{d+2}$.

Moreover, we impose the following {Lipschitz} condition on functions $f^{\theta}_{1,1}$ and $f^{\theta}_{1,0}$: For any distinct points $\theta, \theta' \in \Theta$ and a positive constant $M_d$, it holds that $||f^{\theta}_{1,1}-f^{\theta'}_{1,1}||_{Q,r}\leq M_d |\theta - \theta'|_r$ (a similar equality holds for $f^{\theta}_{1,0})$. Then, it falls within the type II class defined in \cite{andrews1994empirical}, so according to the derivation of (A.2) in \cite{andrews1994empirical} we have
\begin{equation}
	\sup_Q\mathcal{N}(\vps M', \mathcal{F}_{1,1}, \|.\|_{Q,r})\leq \sup_{Q} \mathcal{N}(\vps M'/M_d, \Theta, \|.\|_{Q,r}),
\end{equation}
where the latter is the covering number of an Euclidean ball under the norm $\|.\|_{Q,r}$. Thus, according to Equation (5.9) in \cite{wainwright2019high}, $\sup_{Q} \mathcal{N}(\vps M'/M_d, \Theta, \|.\|_{Q,r}) \lesssim (1+ 2M_d/\vps M')^{d}$. Combining the above results, we have $\sup_Q \mathcal{N} (\varepsilon M', \mathcal{H}_t, \|.\|_{Q,r})\lesssim (4/(\vps M'))^{2(d+2)} (1+ 2M_d/(\vps M'))^{2d}$. Finally, with some rearrangement and redefinition of constant terms, we can obtain \eqref{cover}.

Beyond this specific example, verifying Assumption~\ref{cover1} in
more general non-stationary or dependent settings would typically
require either explicit dependence-adjusted entropy bounds  or restrictions on the temporal complexity
of $\mathbf{H}_n$. We therefore impose Assumption~\ref{cover1} as a
primitive structural condition.

\subsection{Proof of Theorem \ref{thm:mds_bound} } \label{proof_MDS_t}
We first show the following lemma. 

\begin{lemma} \label{bound}
Under Assumptions \ref{ass:continuous_Markov} to \ref{ass bounded y}, and \ref{cover1} to \ref{norm}, we have
\beqq \E [|\E{_n} h|_{\mathbf{H_{n}}}]\lesssim M \sqrt{v/n}. \eeqq
\end{lemma}
It shall be noted that the result of the lemma above is of the maximal inequality type and has a standard $\sqrt{n}^{-1}$ rate. The complexity of the function class $v$ also plays a role. This is in line with other results in the literature, such as \cite{kitagawa2018should}.
\begin{proof}
	 $h_t$ denotes a function belonging to the functional class $\mathcal{H}_t$, and  $h = \left\{ h_{1},h_{2},\dots,h_{n}\right\}$. $J_k$ is a cover of the functional class $\mathbf{H_n}$ with radius $2^{-k}M$ with respect to the $\rho_{2,n}(.)$ norm, and $k = 1,\cdots, \overline{K}$. We set $2^{-\overline{K}}\asymp \sqrt{n}^{-1}$, then $\overline{K} \asymp \log(n)$. Recall that $M$ is the constant defined in Assumption \ref{ass bounded y}, which implies $\max_t|h_t|\leq M$. 
	 We define $h^* = \argmax_{h\in \mathbf{H_n}} \E_n h_{.}$. Let $h^{(k)} = \min_{h \in J_k}\rho_{2,n}(h, h^*)$ and $h^{(0)} = (0,\cdots, 0)\in \mathbb{R}^n$, then $\rho_{2,n}(h^{(k)}, h^*) \leq 2^{-k}M$ holds by the definition of $J_k$, and
	\begin{equation} \label{eq: net_width}
		\rho_{2,n}(h^{(k-1)}, h^{(k)})\leq \rho_{2,n}(h^{(k-1)}, h^*) + \rho_{2,n}(h^{(k)}, h^*) \leq 3\cdot 2^{-k}M.
	\end{equation}
	By a standard chaining argument, we express any partial sum of  $h \in \mathbf{H_n}$ as a telescoping sum,
	\begin{equation}
		\sum^n_{t=1} h_t  \leq  |\sum^n_{t=1}h^{(0)}_{t}| + |\sum^{\overline{K}}_{k=1} \sum^n_{t=1}(h_t^{(k)} - h_t^{(k-1)})|+|\sum^n_{t=1}(h_t^{(\overline{K})} - h_t^{*})|.
	\end{equation}
	The inequality $|\sum_t a_t|\leq \sum_t |a_t|\leq |\sum_t a_t^2|^{1/2}\sqrt{n}$ can be applied to the third term. Notice that, by the definition of the $h^{(\overline{K})}$,
	\begin{equation}
		|\sum^n_{t=1}(h_t^{(\overline{K})} - h_t^{*})| \leq |(\sum^n_{t=1}(h_t^{(\overline{K})} - h_t^{*})^{2})^{1/2}|\sqrt{n}\leq {n}2^{-\overline{K}}M.
	\end{equation}
	Thus,
	\begin{eqnarray} \label{chaining}
		 \E(|\E{_n} h|_{\mathbf{H_{n}}}) \leq \sum_k^{\overline{K}-1}\E \max_{f \in J_k , g\in J_{k-1}, \rho_{2,n}(f,g) \leq 3\cdot 2^{-k}\cdot M} |\E{_{n}}(f-g)|+  2^{-\overline{K}}M.
	\end{eqnarray}
Apply Lemma \ref{max} and Assumption \ref{norm} to \eqref{chaining}. The maximal inequality \eqref{Maximal_Freedman} of Lemma \ref{max} is reproduced here:
\begin{equation*}
	\E(\max_{a \in \mathcal{A}} |M_a|)  \lesssim A \log(1+|\mathcal{A}|)+\sqrt{B}\sqrt{\log(1+|\mathcal{A}|)},
\end{equation*}
where for the first term of \eqref{chaining}, we have $\mathcal{A}=\{f-g:f \in J_k , g\in J_{k-1}, \rho_{2,n}(f,g) \leq 3\cdot 2^{-k}\cdot M\}$,  $|\mathcal{A}|=|J_{k}||J_{k-1}| \leq 2 \mathcal{N}^2(2^{-k}M, \mathbf{H_n}, \rho_{2,n}(.))\lesssim_p  2\max_t \sup_Q\mathcal{N}^2(2^{-k}M, \mathcal{H}_t, \|.\|_{Q,2})$, and $A\leq 3M$.
$B$ in \eqref{Maximal_Freedman} is an upper bound of the sum of conditional variances of an MDS. By Assumption \ref{norm}, we have  $B=  \sum_t \E[(f_t-g_t)^2| \mathcal{F}_{t-1}]\leq nL^2 \rho_{2,n}(f,g)^2  \leq  nL^2 (3\cdot 2^{-k}M)^2$ for any pair $(f,g)$ satisfying $f-g\in \mathcal{A}$.
	
	Therefore,  by  Lemma \ref{max}, 
	\begin{eqnarray*}
	n\E(|\E{_n} h|_{\mathbf{H_{n}}}) 
		&& \lesssim \sum^{\overline{K}}_{k=1} (L*3* 2^{-k} M \sqrt{n}) \sqrt{\log (1+ 2\max_t\sup_Q\mathcal{N}^2(2^{-k}M, \mathcal{H}_t, \|.\|_{Q,2}))}\\
		&& + 3* M \sum^{\overline{K}}_{k=1}\log (1+ 2 * \max_t\sup_Q\mathcal{N}^2(2^{-k}M, \mathcal{H}_t, \|.\|_{Q,2}))+o_p(\sqrt{n})\\
		&&\lesssim 6\sqrt{n}\int^1_{0} M \sqrt{\log ( 2^{1/2} \max_t\sup_Q\mathcal{N}(2^{-k}M, \mathcal{H}_t, \|.\|_{Q,2}))}d\varepsilon.
	\end{eqnarray*}
		By Assumption \ref{cover1}, $\max_t \log \sup_Q\mathcal{N}(\varepsilon M, \mathcal{H}_t,\|.\|_{Q,2}) \lesssim  \log(K)+ \log (v +1)+ (v+1)(\log4+ 1)+(crv)\log(\frac{2}{\varepsilon M})$. Thus, the integral in the last row is finite by a standard argument for bracketing numbers (see, e.g., the comment following Theorem 19.4 in \citeauthor{van2000asymptotic}, \citeyear{van2000asymptotic}). Then, we have
	$\E(|\E{_n} h|_{\mathbf{H_{n}}}) \lesssim M \sqrt{v/n}.$
\end{proof}
The next lemma concerns the tail probability bound. It states that, under certain regularity conditions,  $|\E{_n} h|_{\mathbf{H_{n}}}$ is very close to $ \E(|\E{_n} h|_{\mathbf{H_{n}}})$.
\begin{lemma}\label{cont}
Under Assumptions \ref{ass:continuous_Markov} to \ref{ass bounded y}, and \ref{cover1} to \ref{norm},
\beqq
|\E{_n} h|_{\mathbf{H_{n}}} - \E(|\E{_n} h|_{\mathbf{H_{n}}}) \lesssim_p M c_n\sqrt{v/n},
\eeqq
\end{lemma}
where $c_n$ is an arbitrarily slowly growing sequence. 
\begin{proof}
Similar to the above derivation, for a positive constant $\eta_{k}$, with $\sum_k\eta_{k}\leq 1$,
	\begin{eqnarray*}
		&&{\Pr(n^{-1}\sum^n_{t=1} h_t\geq x)}
		\leq  \Pr( n^{-1} |\sum^{\overline{K}}_{k=1} \sum^n_{t=1}h_t^{(k)} - h_t^{(k-1)}|\geq x- \sqrt{n}^{-1}2^{-\overline{K}}M )\\
		&\leq &\sum^{\overline{K}}_{k=1} \Pr(  |n^{-1} \sum^n_{t=1}h_t^{(k)} - h_t^{(k-1)}|\geq \eta_k(x- \sqrt{n}^{-1}2^{-\overline{K}}M) )\\
		&\leq& \sum^{\overline{K}}_{k=1}\exp\{\log \max_t\sup_Q\mathcal{N}^2(2^{-k}M, \mathcal{H}_t, \|.\|_{Q,2}) -\eta_{k}^2(nx- \sqrt{n}2^{-\overline{K}}M) ^2/[2\{(nx- \sqrt{n}2^{-\overline{K}}M)\\&&+ 2((3\cdot 2^{-k}\cdot M)^2n)\}] \}\\
		&\leq & \sum^{\overline{K}}_{k=1}\exp(  \log(K)+ \log (v +1)+ (v+1)(\log4+ 1)+(2v)\log(\frac{2}{2^{-k}M})
		\\&&-\eta_{k}^2(nx- \sqrt{n}2^{-\overline{K}}M) )^2/(2((nx- \sqrt{n}2^{-\overline{K}}M)+ 2((3\cdot 2^{-k}\cdot M)^2n)) ),
	\end{eqnarray*}
where the above derivation is due to the tail probability in Lemma \ref{free}.
	We pick $\eta_k$ and $x$ to ensure the right-hand side converges to zero and $\sum_k\eta_{k} \leq 1$. 
	
	We take  $ b_k=  \log(\overline{K})+ \log (v +1)+ (v+1)(\log4+ 1)+(2v)\log(\frac{2}{2^{-k}M})$, $a_k =2^{-1} (nx- \sqrt{n}2^{-\overline{K}}M) )^2/((nx- \sqrt{n}2^{-\overline{K}}M)+ 2((3*2^{-k}*M)^2n))$. We pick $\eta_k \geq \sqrt{a_k/b_k}$, so that $b_k \leq \eta_k^2 a_k$.
	We also need to choose $x$ to ensure that $\sum_k \eta_k \leq 1$ and $\sum_k \exp(b_k - \eta_k ^2 a_k) \to 0$.	We pick $ c_n \sqrt{v/n}\lesssim x$, and $\eta_k = c_n' \sqrt{b_k/a_k}$, with two slowly growing functions $c_n$ and $c_n'$ such that $c_n'\ll c_n$. We set $x = \E(|\E{_n} h|_{\mathbf{H_{n}}})  + c_n \sqrt{v/n} $. The result then follows.
\end{proof}
Finally, Theorem \ref{thm:mds_bound} follows by combining Lemma \ref{bound}, Lemma \ref{cont}, and $n=T-1$.

\subsection{Proof of Theorem \ref{mean_bound}} \label{Proof_of_mean_bound}
In this proof, we will show the concentration of $I=\frac{1}{T-1}\sum_{t=1}^{T-1}\overline{S}_t(G)+\frac{1}{T-1}\sum_{t=1}^{T-1}\tilde{S}_t(G)$. Under Assumptions \ref{cover1} and \ref{D.4} and with similar arguments as those for Lemma \ref{cont}, the first term of $I$ satisfies:
\begin{eqnarray}
\mbox{sup}_{G\in \mathcal{G}}\left |\frac{1}{T-1}\sum_{t=1}^{T-1}\overline{S}_t(G) \right|\lesssim_p M \sqrt{v}/{\sqrt{T-1}}.
\end{eqnarray}
The concentration rate of the second term of $I$, $\frac{1}{T-1}\sum_{t=1}^{T-1}\tilde{S}_t(G)$, is shown through the following two steps:

(i) For  a finite function class $\mathcal{G}$ with $|\mathcal{G}|= \tilde{M}<\infty$ and  under Assumptions  \ref{D.1}- \ref{D.3}, 
\begin{equation} \label{eq:finite_ex_bound}
\sup_{G\in \mathcal{G}}\left|\frac{1}{T-1} \sum_{t=1}^{T-1} \tilde{S}_{t}(G)\right| \lesssim_p \frac{c_T [\log (\tilde{M}) 2e\gamma ]^{1/\gamma}\sup_G\Phi_{\phi_{\tilde{v}}}(\tilde{S}_{.}(G)) }{\sqrt{T-1}}
\end{equation}
holds with probability $1- \exp(-c_T^\gamma)$,  where $c_T$ is a large enough constant, and $\Phi_{\phi_{\tilde{v}}}(\tilde{S}_{.}(G))$ is defined in Assumption  \ref{D.3}.

(ii) Next, we extend \eqref{eq:finite_ex_bound} to obtain the main result of Theorem \ref{mean_bound}: Let $\mathcal{G}$ be a function class with infinite elements, and its complexity is subject to Assumption \ref{D.4}. Under Assumptions  \ref{D.1}- \ref{D.4}, we have
\begin{equation*}
\sup_{G\in \mathcal{G}} \left|\frac{1}{T-1} \sum_{t=1}^{T-1} \tilde{S}_{t}(G)\right| \lesssim_p \frac{c_T [V\log (T) 2e\gamma ]^{1/\gamma}\sup_G \Phi_{\phi_{\tilde{v}}}(\tilde{S}_{.}(G))}{\sqrt{T-1}}.
\end{equation*}

 \paragraph{Step (i).}
	 By Assumption \ref{ass:continuous_Markov} and  \ref{D.2}(i), we have $\E(S_{t}(G)| \mathcal{F}_{t-2})=\E(S_{t}(G)| X_{t-2})$.  By Assumption \ref{D.1}, we have
  $X_t =g_t(\varepsilon_t,\varepsilon_{t-1},\cdots).$
Define $X_{t-2,l}^*$ as a version of the random variable $X_{t-2}$, in which $\varepsilon_{t-2-l}$ is replaced by $\varepsilon_{t-2-l}^*$:
By definition of $\tilde{S}(G)$, we have $$\Phi_{\phi_{\tilde{v}}}(\tilde{S}_{.}(G))= \sup_{q\geq 2}(\sum_{l\geq 0} \max_t\|\E(S_{t}(G)| X_{t-2})  - \E(S_{t}(G)| X_{t-2,l}^*)\|_{q})/q^{\tilde{v}} .$$ 
	%
	Then, by Assumption \ref{D.3}, we have $\sup_{G}\Phi_{\phi_{\tilde{v}}}(\tilde{S}_{.}(G)) < \infty$. For $\gamma = 1/(1+2\tilde{v})$ and a finite $\mathcal{G}$ with $|\mathcal{G}|=\tilde{M}$, the bound \eqref{eq:finite_ex_bound} follows by 	 Theorem 3 of \cite{wu2016performance}:
	\begin{equation}
		\Pr(\sup_{G\in \mathcal{G}}  \sum_{t=1}^{T-1} \tilde{S}_{t}(G) \geq x) \leq \tilde{M}\exp\left[-\left(\frac{x}{\sqrt{T-1}\sup_G\Phi_{\phi_{\tilde{v}}}(\tilde{S}_{.}(G))}\right)^\gamma\frac{1}{2e\gamma} \right].
	\end{equation}
	%
	Specifically, let $x = c_T [\log (\tilde{M}) 2e \gamma]^{1/\gamma}\Phi_{\phi_{\tilde{v}}}(\tilde{S}_{.}(G)) \sqrt{T-1}$, where $c_T$ is a sufficiently large constant,
	then
	\begin{equation}
		\sup_{G\in \mathcal{G}}\left|\frac{1}{T-1} \sum_{t=1}^{T-1} \tilde{S}_{t}(G)\right|\lesssim c_T [\log (\tilde{M}) 2e\gamma]^{1/\gamma} \sup_{G\in \mathcal{G}}\Phi_{\phi_{\tilde{v}}}(\tilde{S}_{.}(G)) /\sqrt{T-1}
	\end{equation}
	holds with probability $1- \exp(-c_T^\gamma)$.
	\paragraph{Step (ii).}
		We now consider the case where $\mathcal{G}$ is not finite. We define $\mathcal{G}^{(1)\delta}$ to be  a $\delta \max_t\mbox{sup}_Q\|\tilde{F}_{t}\|_{Q,2}$-net within $\tilde{\mathbf{F}}_n$ w.r.t.\ $\mathcal{G}$.
 {We denote $\tilde{S}_{t} (\pi(G))$ as the closest component to $\tilde{S}_{t}(G)$ in the net $\mathcal{G}^{(1)\delta}$.}
	Then,
	\begin{eqnarray*}
		\sup_{G} \left| \frac{1}{T-1} \sum_t \tilde{S}_{t}(G)\right|
		&\leq& \sup_{G\in \mathcal{G}}\left| \frac{1}{T-1} \sum_{t=1}^T\left[ \tilde{S}_{t}(G) - \tilde{S}_{t} (\pi(G))\right] \right|+ \sup_{G\in \mathcal{G}^{(1)\delta}} |\frac{1}{T-1}  \sum_{t=1}^{T-1}( \tilde{S}_{t}(G))  | \\
		&\lesssim_p & \delta \mbox{max}_t\mbox{sup}_Q\|\tilde{F}_{t}\|_{Q,2}+   \sup_{G\in \mathcal{G}^{(1)\delta}} |\frac{1}{T-1} \sum_{t=1}^{T-1}( \tilde{S}_{t}(G))  | \\
		&\lesssim_p& \delta 
  \mbox{max}_t\mbox{sup}_Q     \|\tilde{F}_{t}\|_{Q,2}+  c_T [V\log (1/\delta)2e\gamma]^{1/\gamma}  \sup_{G\in \mathcal{G}}\Phi_{\phi_{\tilde{v}}}(\tilde{S}_{.}(G))/\sqrt{T-1},
	\end{eqnarray*}
	where $V$ and $\delta$ following the first $\lesssim_p$ are the constants in the first statement of Assumption \ref{D.4}. Recall that  $\sup_{G\in \mathcal{G}}\Phi_{\phi_{\tilde{v}}}(\tilde{S}_{.}(G))$ is finite by Assumption \ref{D.3}.
	By setting $\delta = \frac{1}{T}$, we obtain
	\begin{equation}
		\sup_{G \in \mathcal{G}} \frac{1}{T-1} \sum_{t=1} \tilde{S}_{t}(G) \lesssim_p c_T [V\log (T) 2e\gamma ]^{1/\gamma} \sup_{G\in \mathcal{G}}\Phi_{\phi_{\tilde{v}}}(\tilde{S}_{.}(G)) /\sqrt{T-1}.
	\end{equation}

\subsection{Proof of Lemma \ref{lem:lower_RT(G)}} \label{app:proof_uncon_con_bound}
\begin{proof} We have
\begin{align}
        R_{T}(G) & =\int R_{T}(G|x)dF_{X_{T-1}} (x)\nonumber \\
        & =\int_{x\in A(x^{obs},G)}R_{T}(G|x)dF_{X_{T-1}}(x)+\int_{x\not\in A(x^{obs},G)}R_{T}(G|x)dF_{X_{T-1}}(x)\nonumber \\
        & \geq R_{T}(G|x^{obs})\cdot p_{T-1}(x^{obs},G)+0 =R_{T}(G|x^{obs})\cdot p_{T-1}(x^{obs},G).\label{eq:unc_con_bound_con_dis}
    \end{align}

The first inequality follows from the definition of $A(x^{\prime},G)$ and $R_{T}(G|x)$ being non-negative (by Assumption \ref{ass:correct_specify}). Then, Assumption \ref{lower_p_dis_con} yields $  R_{T}(G|x^{obs})  \leq  \frac{1}{\underline{p}} R_{T}(G)$.
\end{proof}

\subsection{Accounting for the Lucas critique} \label{MDP_B}
Here we solve the VAR reduced form of the three-equation New Keynesian model discussed in Section \ref{sec:lucas critique}. 

Recall that $\tilde{Y}_{t}:=\left( \begin{array}{c}
	x_t\\
	\pi_t
\end{array}\right)$ and $d_{t}:=\left(\begin{array}{c}
	v_{t}\\
	\varepsilon_{t}
\end{array}\right)$. Rearranging \eqref{eq:three_equation} yields
\bea
\left(\begin{array}{c}
	\E_{t}x_{t+1}\\
\E_{t}\pi_{t+1}
\end{array}\right)=\left(\begin{array}{cc}
	\kappa/(\sigma\beta)+1 & \delta/\sigma-1/(\sigma\beta)\\
	-\kappa/\beta & 1/\beta
\end{array}\right)\left(\begin{array}{c}
	x_{t}\\
	\pi_{t}
\end{array}\right)+\left(\begin{array}{c}
	v_{t}/\sigma+\varepsilon_{t}/(\sigma\beta)\\
	-\varepsilon_{t}/\beta
\end{array}\right).\label{eq:Keynesian_causal}\eea

Define
\[
N=\left(\begin{array}{cc}
	\kappa/(\sigma\beta)+1 & \delta/\sigma-1/(\sigma\beta)\\
	-\kappa/\beta & 1/\beta
\end{array}\right),\quad
C=-\left(\begin{array}{cc}
1/\sigma & 1/\sigma\beta\\
0 & -1/\beta
\end{array}\right).
\]
By assuming that $N$ is invertible, we can define
$A=N^{-1}$. Recall that  $d_{t}$ is assumed to be an AR(1) process:  $d_{t}=Fd_{t-1}+\eta_{t}$,
where
$
F=\left(\begin{array}{cc}
	\rho & 0\\
	0 & \gamma
\end{array}\right).
$
Then, equation \eqref{eq:Keynesian_causal} can be written as
$\tilde{Y}_{t}=A\E{_{t}}\tilde{Y}_{t+1}+ACd_{t}.$ Solving forward, we obtain $\tilde{Y}_{t}=\lim_{L\to\infty}A^{L}\E{_{t}}(\tilde{Y}_{t+L})+\sum_{l\geq0}A^{l+1}C\E_{t}\left(d_{t+l}\right).$

{ By assuming  $||A||<1$}, we have $\lim_{L\to\infty}A^{L}\E{_{t}}(\tilde{Y}_{t+L})\to_{a.s.}0.$
Thus, $\tilde{Y}_{t}$ can be solved as
\begin{equation}
	\tilde{Y}_{t}=M(\rho)d_{t},\label{eq:Yt_dt}
\end{equation}
where $M(\rho):=\sum_{l\geq 0} A^{l+1}CF^{l}=\sum_{l\geq 0} A^{l+1}C\left(\begin{array}{cc}
	\rho & 0\\
	0 & \gamma
\end{array}\right)^l$. {Assuming that $M(\rho)$ is invertible, by iterating  \eqref{eq:Yt_dt},
  we can solve the
VAR reduced form:
$$ \tilde{Y}_{t}=M\left(\rho\right)F\left[M\left(\rho\right)\right]^{-1}\tilde{Y}_{t-1}+M\left(\rho\right)\eta_{t}.$$
}

\subsection{Proof of Theorem \ref{thm:e_propensity}} \label{sec:proof_e_propensity}

	Under policy $G$, we use the following notation: $\widetilde{\mathcal{W}}(G)$ is defined in \eqref{2intermediate_welfare}; $\widehat{\mathcal{W}}(G)$  represents the estimated welfare defined in \eqref{eq:unconditional_sample}; $\widehat{\mathcal{W}}^{\hat{e}}(G)$  represents the estimated welfare, with the estimated propensity score $\hat{e}(\cdot)$; and
\beaa 
\widehat{\mathcal{W}}^{\hat{e}}(G)
=\frac{1}{T-1}\sum_{t=1}^{T-1} \left[\frac{Y_{t}W_{t}}{\hat{e}_t(X_{t-1})}1(X_{t-1}\in G)+ \frac{Y_{t}(1-W_{t})}{1- \hat{e}_t(X_{t-1})}1(X_{t-1}\notin G)\right].
\eeaa
	Recall that $ G_*$ is the optimal policy defined in \eqref{maxG}.	Let $\hat{G}^{\hat{e}}$ be the optimal policy estimated using the
	estimated propensity score $\hat{e}(\cdot)$, 
	\begin{equation}
		\hat{G}^{\hat{e}}\in \underset{G\in\mathcal{G}}{\text{argmax
		}}\widehat{\mathcal{W}}^{\hat{e}}(G).\label{eq:wp_e_hat}
	\end{equation}
	
	Recall $\tau_{t}=\frac{Y_{t}W_{t}}{e_t(X_{t-1})}- \frac{Y_{t}(1-W_{t})}{1- e_t(X_{t-1})}$ and $\hat{\tau}_{t}=\frac{Y_{t}W_{t}}{\hat{e}_t(X_{t-1})}- \frac{Y_{t}(1-W_{t})}{1- \hat{e}_t(X_{t-1})}$. Similar to (A.29) in the supplementary material for \citet{kitagawa2018should}, we have
	\begin{align}
		& \widetilde{\mathcal{W}}( G_*)-\widetilde{\mathcal{W}}(\hat{G}^{\hat{e}})\nonumber \\
		&
		=\widetilde{\mathcal{W}}( G_*)-\widetilde{\mathcal{W}}(\hat{G}^{\hat{e}})+\left[\widehat{\mathcal{W}}^{\hat{e}}( G_*)-\widehat{\mathcal{W}}^{\hat{e}}( G_*)\right]+\left[\widehat{\mathcal{W}}(\hat{G}^{\hat{e}})-\widehat{\mathcal{W}}(\hat{G}^{\hat{e}})\right]+\left[\widehat{\mathcal{W}}( G_*)-\widehat{\mathcal{W}}( G_*)\right]\nonumber
		\\
		&
		\leq\widetilde{\mathcal{W}}( G_*)-\widetilde{\mathcal{W}}(\hat{G}^{\hat{e}})+\left[\widehat{\mathcal{W}}^{\hat{e}}(\hat{G}^{\hat{e}})-\widehat{\mathcal{W}}^{\hat{e}}( G_*)\right]+\left[\widehat{\mathcal{W}}(\hat{G}^{\hat{e}})-\widehat{\mathcal{W}}(\hat{G}^{\hat{e}})\right]+\left[\widehat{\mathcal{W}}( G_*)-\widehat{\mathcal{W}}( G_*)\right]\nonumber
		\\
		&
		=\left[\widehat{\mathcal{W}}( G_*)-\widehat{\mathcal{W}}^{\hat{e}}( G_*)-\widehat{\mathcal{W}}(\hat{G}^{\hat{e}})+\widehat{\mathcal{W}}^{\hat{e}}(\hat{G}^{\hat{e}})\right]
			+\left[\widetilde{\mathcal{W}}( G_*)-\widetilde{\mathcal{W}}(\hat{G}^{\hat{e}})-\widehat{\mathcal{W}}( G_*)+\widehat{\mathcal{W}}(\hat{G}^{\hat{e}})\right]\nonumber
		\\
		& =I^{\hat{e}}+II^{\hat{e}},\label{eq:decompo W}
	\end{align}
	where the first inequality comes from
	$\widehat{\mathcal{W}}^{\hat{e}}(\hat{G}^{\hat{e}})\geq\widehat{\mathcal{W}}^{\hat{e}}( G_*)$,
	which is implied by the definition \eqref{eq:wp_e_hat}.
	
	For $II^{\hat{e}}$, we know $II^{\hat{e}}\leq2\sup_{G\in\mathcal{G}}|\widetilde{\mathcal{W}}(G)-\widehat{\mathcal{W}}(G)|$. Similar arguments to Section \ref{sec:uncon_con} can then be used to bound it. For $I^{\hat{e}}$, note that for any $G\in\mathcal{G}$, 
	\bea \label{eq:tau_welfare}
	\widehat{\mathcal{W}}(G)
    =\frac{1}{T-1}\sum_{t=1}^{T-1} \left[\tau_t 1(X_{t-1}\in G)+ \frac{Y_{t}(1-W_{t})}{1- e_t(X_{t-1})} \right]. 
	\eea
	Similarly,
	\bea \label{eq:hat_tau_welfare}
	\widehat{\mathcal{W}}^{\hat{e}}(G) 
    =\frac{1}{T-1}\sum_{t=1}^{T-1} \left[\hat{\tau}_t 1(X_{t-1}\in G)+ \frac{Y_{t}(1-W_{t})}{1- \hat{e}_t(X_{t-1})} \right].
	\eea
	
	Combining \eqref{eq:tau_welfare} and \eqref{eq:hat_tau_welfare} with $I^{\hat{e}}$,
	\beaa
			&\widehat{\mathcal{W}}( G_*)-\widehat{\mathcal{W}}(\hat{G}^{\hat{e}})&=\frac{1}{T-1}\sum_{t=1}^{T-1}\tau_{t}\left[\mathbf{1}\{X_{t-1}\in G_*\}-\mathbf{1}\{X_{t-1}\in\hat{G}^{\hat{e}}\}\right] \\
		&\widehat{\mathcal{W}}^{\hat{e}}( G_*)-\widehat{\mathcal{W}}^{\hat{e}}(\hat{G}^{\hat{e}})&=\frac{1}{T-1}\sum_{t=1}^{T-1}\hat{\tau}_{t}\left[
		\mathbf{1}\{X_{t-1}\in G_*\}-\mathbf{1}\{X_{t-1}\in\hat{G}^{\hat{e}}\}\right].
	\eeaa
	Then,
		\begin{align*}
		I^{\hat{e}} & 
		=\frac{1}{T-1}\sum_{t=1}^{T-1}\left[\left(\tau_{t}-\hat{\tau}_{t}\right)\cdot1\{X_{t-1}\in G_*\}-\left(\tau_{t}-\hat{\tau}_{t}\right)\cdot1\{X_{t-1}\in\hat{G}^{\hat{e}}\}\right]\\
		&
		=\frac{1}{T-1}\sum_{t=1}^{T-1}\left[\left(\tau_{t}-\hat{\tau}_{t}\right)\left(1\{X_{t-1}\in G_*\}-1\{X_{t-1}\in\hat{G}^{\hat{e}}\}\right)\right]\leq\frac{1}{T-1}\sum_{t=1}^{T-1}|\tau_{t}-\hat{\tau}_{t}|.
	\end{align*}
		Finally, we have that the rate of convergence is bounded by the accuracy of propensity score estimation and the bound with known propensity scores,
	\begin{align*}
		\E{_{P_{T}}}[\mathcal{W}_T(G_*)-\mathcal{W}_T(\hat{G}^{\hat{e}})] & \leq
		\E{_{P_{T}}}\left[\frac{1}{T-1}\sum_{t=1}^{T-1}|\tau_{t}-\hat{\tau}_{t}|\right]
		+2\E{_{P_{T}}}\left[\sup_{G\in\mathcal{G}}|\widetilde{\mathcal{W}}(G)-\widehat{\mathcal{W}}(G)|\right].
	\end{align*}
	The statement of Theorem \ref{thm:e_propensity} follows from \eqref{eq:ass phi} and Theorem \ref{thm:mds_mean_bound_unconditional}.

\

\subsection{Link to Markov decision problems} \label{MDP_A}

\textcolor{red}{}
In this section, we show the connection between our T-EWM setup and models of the Markov Decision Process (MDP).
For MDP, we adapt the notation of \cite{kallenberg2016markov} (LK hereafter), an online set  of lecture notes by Lodewijk
Kallenberg. {As described in LK, the MDP is the set of models for making decisions for dependent data. An MDP typically has components $\{[p_{ij}(a)]_{i,j}, r_{i}^{t}(a),W_{t-1}\}$. In period $t$, $W_{t-1}$ is the state and $a$ is the action. The agent chooses their decision according to a policy (a map from the state $W_{t-1}$  to an action $a$). They then receive a reward $r_{i}^{t}(a)$. The reward function depends on the transition probabilities of a Markov process, which are determined by the action $a$. Thus their action affects the reward via its effect on the transition probability matrix.  The optimal policy is estimated by optimizing an aggregated reward function.
In this subsection, we show a formal link between our T-EWM framework and an MDP. In particular, we show that the MDP's reward function corresponds to our welfare function, and the optimal mapping between states and actions corresponds to the T-EWM policy in our framework. }

In the following equations,
the left-hand sides are the notations for the MDP in LK, and the right-hand sides are
notations for T-EWM from this paper.
We consider the model of Section \ref{one-p}. Firstly, we link the transition probability with the propensity score: For $i,j,a\in\{0,1\}$ at time $t$,
\bea
p_{ij}^{t}(a) & =\text{Pr}(W_{t}=j|W_{t-1}=i;\text{choosing }W_{t}=a).
\eea
The left-hand side is the Markov transition probability
between states $i$ and $j$ under policy $a$. The right-hand side
is a propensity score under policy $a$:
the probability $W_{t}=j$, conditional on $W_{t-1}=i$, given
$W_{t}=a$. Note that, in this simple model, the state at time $t$
is the previous treatment $W_{t-1}$, and the current policy and the
next-period state are both $W_{t}$. We assume that after time $T-1$, the planner implements a deterministic policy, so the probability only takes values in $\{0,1\}$, i.e.,
\bea
p_{ij}^{t}(a) & =1\text{ if }j=a, \quad
p_{ij}^{t}(a) & =0\text{ if }j\ne a.\label{eq:M_pscore}
\eea
Secondly, we connect the reward function with the expected conditional
counterfactual outcome. We denote the reward associated with action $a$ for state $i$ at time $t$ as
\begin{align}
r_{i}^{t}(a) & =\E\left[Y_{t}(a)|W_{t-1}=i\right].\label{eq:reward}
\end{align}
 The left-hand side is the reward in state $i$ under action $a$. The right-hand side is the conditional
expected counterfactual outcome of $Y_{t}(a)$, (recall $a\in\{0,1\}$)
conditional on $W_{t-1}=i$.

Thirdly, we link the expected reward function and the expected unconditional  counterfactual outcome,
\begin{align*}
 \sum_{i}\beta_{i}r_{i}^{t}(a)=r^{t}(a)  & =\E\left[Y_{t}(a)\right].
\end{align*}
The left-hand side is the expected reward for action
$a$, with $\beta_{i}$ as the initial probability of state $i$.
The right-hand side is the unconditional expected counterfactual outcome.

Finally, we show the link between the total expected reward over a finite horizon (of length 2) and the finite-period welfare function
{\small \begin{align}
  v_{i}^{T:T+1}(R)=\E{_{i,R}}\left[\sum_{k=T}^{T+1}r_{i}^{k}(W_{k})\right]\ \nonumber 
 & =\E\bigg[ Y_{T}(1)p_{i1}^{T}\left(g_{1}(W_{T-1})\right)+Y_{T}(0)p_{i0}^{T}\left(g_{1}(W_{T-1})\right)\nonumber \\
 & +Y_{T+1}(1)p_{i1}^{T+1}\left(g_{2}(W_{T})\right)+Y_{T+1}(0)p_{i0}^{T+1}\left(g_{2}(W_{T})\right)|W_{T-1}=i\bigg]\nonumber \\
 & =\E\left[ Y_{T}(1)g_{1}(W_{T-1})+Y_{T}(0)\left[1-g_{1}(W_{T-1})\right]|W_{T-1}=i\right] \nonumber \\
 & +\E\left[ Y_{T+1}(1)g_{2}\left(W_{T}\right)+Y_{T+1}(0)\left[1-g_{2}\left(W_{T}\right)\right]|W_{T}=g_{1}(i)\right]. \label{eq:two-period_welfare_re}
\end{align}}
The left-hand side is the total expected reward over the
planning horizon from $T$ to $T+1$ under the policy $R=(g_{1},g_{2})$,
with the initial state $i$. The last equality follows from (\ref{eq:M_pscore})
and the exclusion condition.

Comparing \eqref{eq:two-period_welfare_re} with \eqref{eq:multi_welfare}, we can view the population conditional welfare of T-EWM with a finite-period target as the value function of a finite-horizon MDP with a nonstationary solution. According to LK, in this case, the policy is usually obtained by using a backward induction algorithm.


\subsection{Simulation} \label{simul}
In this subsection, we illustrate the accuracy of our method through a simple simulation exercise. We consider the following model
\begin{align}
	Y_{t}& =W_{t}\cdot\mu(Y_{t-1},Z_{t-1})+\phi Y_{t-1}+\varepsilon_{t},\nonumber \\
	\mu(Y_{t-1},Z_{t-1}) & =1(Y_{t-1}<B_{1})\cdot1(Z_{t-1}<B_{2})-1(Y_{t-1}>B_{1}\vee Z_{t-1}>B_{2}),\label{eq:simu-dgp}
\end{align}
where $\mu$ is a function determining the direction of the treatment
effect. The treatment effect
at time $t$ is positive if both $Y_{t-1}<B_{1}$ and $Z_{t-1}<B_{2}$, and is negative otherwise. The optimal
treatment rule is therefore $G_{*}=\{(Y_{t-1},Z_{t-1}):Y_{t-1}<B_{1}\text{ and }Z_{t-1}<B_{2}\}$. We set $\varepsilon_{t}\overset{\text{i.i.d.}}{\sim}N(0,1)$,
$Z_{t-1}\overset{\text{i.i.d.}}{\sim}N(0,1)$, $\phi=0.5$, $B_{1}=2.5,$ and
$B_{2}=0.52$ (approximately the 70\% quantile of the standard
normal distribution). The propensity score $e_{t}(Y_{t-1},Z_{t-1})$ is set to $0.5$.
Our goal is to estimate $G_{*}$.  We
consider the quadrant treatment rules defined as \begin{equation}
	\mathcal{G}\equiv\left\{ \begin{array}{c}
		\left((y_{t-1},z_{t-1}):s_{1}(y_{t-1}-b_{1})>0\;\&\;s_{2}(z_{t-1}-b_{2})>0\right),\\
		s_{1},s_{2}\in\{-1,1\},b_{1},b_{2}\in \mathbb{R}
	\end{array}\right\} .\label{eq:quadrant  treatment}
\end{equation}
It is immediate that $G_{\text{FB}}^*\in \mathcal{G}$. Therefore, we can directly estimate the unconditional treatment rule as described in Section \ref{sec:uncon_con}. Figure \ref{fig:n100 and n1000} illustrates the estimated
bounds and true bounds for sample sizes $n=100$ and $n=1000$. In each case, we draw 100 Monte Carlo samples.
\begin{figure}[H]
	\centering
	\begin{minipage}{0.35\textwidth} 
		\includegraphics[width=70mm]{pictures/n_100}
		\centering
		(a) $n=100$
	\end{minipage}
	\hfil
	\begin{minipage}{0.35\textwidth}
		\includegraphics[width=70mm]{pictures/n_1000}
		\centering
		(b) $n=1000$
		
	\end{minipage}
	\caption{\label{fig:n100 and n1000}The estimated bounds for $n=100$ and $n=1000$.}
\end{figure}

The blue lines are estimated bounds, and the red lines are the
true bounds. For n=100, the majority of the blue lines are close to the red line. As the sample size increases from $100$ to $1000$,
the blue lines become tightly concentrated around the red line. The results in Table \ref{tab:bound_simu} confirm this.
This table presents the Monte Carlo averages $(\hat{\mu}_{B_{1}},\hat{\mu}_{B_{2}})$,
variances $(\hat{\sigma}_{B_{1}}^{2},\hat{\sigma}_{B_{2}}^{2})$, and
MSE of estimated $B_{1}$ and $B_{2}$. We multiply the variances and MSEs by the sample size $n$. The sample sizes are $n=100$, $500$, $1000$,
and 2000. The number of Monte Carlo replications is $500$.

\begin{table}[h]
	\caption{\label{tab:bound_simu}Simulation results for $B_{1}$ and $B_{2}$}
	
	\centering{}{\small{}}%
	
	\begin{tabular}{cr@{\extracolsep{0pt}.}lr@{\extracolsep{0pt}.}lr@{\extracolsep{0pt}.}lr@{\extracolsep{0pt}.}lr@{\extracolsep{0pt}.}lr@{\extracolsep{0pt}.}lr@{\extracolsep{0pt}.}l}
		\toprule
		& \multicolumn{4}{c}{$B_{1}$} & \multicolumn{2}{c}{} & \multicolumn{2}{c}{} &
		\multicolumn{6}{c}{$B_{2}$}\tabularnewline
		\midrule
		$n$ & \multicolumn{2}{c}{$\hat{\mu}_{B_{1}}$} & \multicolumn{2}{c}{$ n\cdot \hat{\sigma}_{B_{1}}^{2}$} &
		\multicolumn{2}{c}{$n\cdot$MSE$_{B_{1}}$} & \multicolumn{2}{c}{} & \multicolumn{2}{c}{$\hat{\mu}_{B_{2}}$} &
		\multicolumn{2}{c}{$n\cdot\hat{\sigma}_{B_{2}}^{2}$} & \multicolumn{2}{c}{$n\cdot$MSE$_{B_{2}}$}\tabularnewline
		\cmidrule{1-7} \cmidrule{2-7} \cmidrule{4-7} \cmidrule{6-7} \cmidrule{10-15} \cmidrule{12-15} \cmidrule{14-15}
		100 & 2&4688 & 14&4331 & 14&5016 & \multicolumn{2}{c}{} & 0&6589 & 18&9382 & 20&7080\tabularnewline
		500 & 2&4919 & 2&3881 & 2&4162 & \multicolumn{2}{c}{} & 0&5433 & 3&3775 & 3&5490\tabularnewline
		1000 & 2&4924 & 1&4676 & 1&5224 & \multicolumn{2}{c}{} & 0&5310 & 1&2620 & 1&3027\tabularnewline
		2000 & 2&4958 & 0&5981 & 0&6327 & \multicolumn{2}{c}{} & 0&5267 & 0&7672 & 0&7767\tabularnewline
		\bottomrule
	\end{tabular}{\small\par}
\end{table}

As the sample size increases, both the $\hat{\mu}_{B_{1}}$ and $\hat{\mu}_{B_{2}}$
converge to their true values, 2.5 and 0.52, respectively. The
variances and MSEs shrink, even after multiplying by the sample size $n$, which
suggests that the convergence rate in this case is faster than $\frac{1}{\sqrt{n}}$. 

\subsection{Additional figures and results for the empirical application} \label{app_app}

\subsubsection{Time series plots of the raw data and estimated propensity scores}

\begin{figure}[H]
	\caption{\label{fig:Cases-and-deaths}Weekly cases and deaths from 4/2020 to
		1/2022}
	
	\centering
	
	\includegraphics[scale=0.5]{pictures/covid-19_app/case-death}
\end{figure}

\begin{figure}[H]
	\caption{\label{fig:Restriction-econ}Restriction level and economic condition
		from 4/2020 to 1/2022}
	
	\centering
	
	\includegraphics[scale=0.5]{pictures/covid-19_app/restrict-eco}
\end{figure}

%
%

Figure \ref{fig:PS_app} presents the propensity score estimated by \eqref{eq:p_score_app}, in which the set of covariates $X_{t-1}$ is given by \eqref{eq:app_X} (the case of Markov order $q=1$).
%
%

\begin{figure}[h]
	\centering
	\begin{subfigure}[b]{0.4\textwidth} 
		\includegraphics[width=\textwidth]{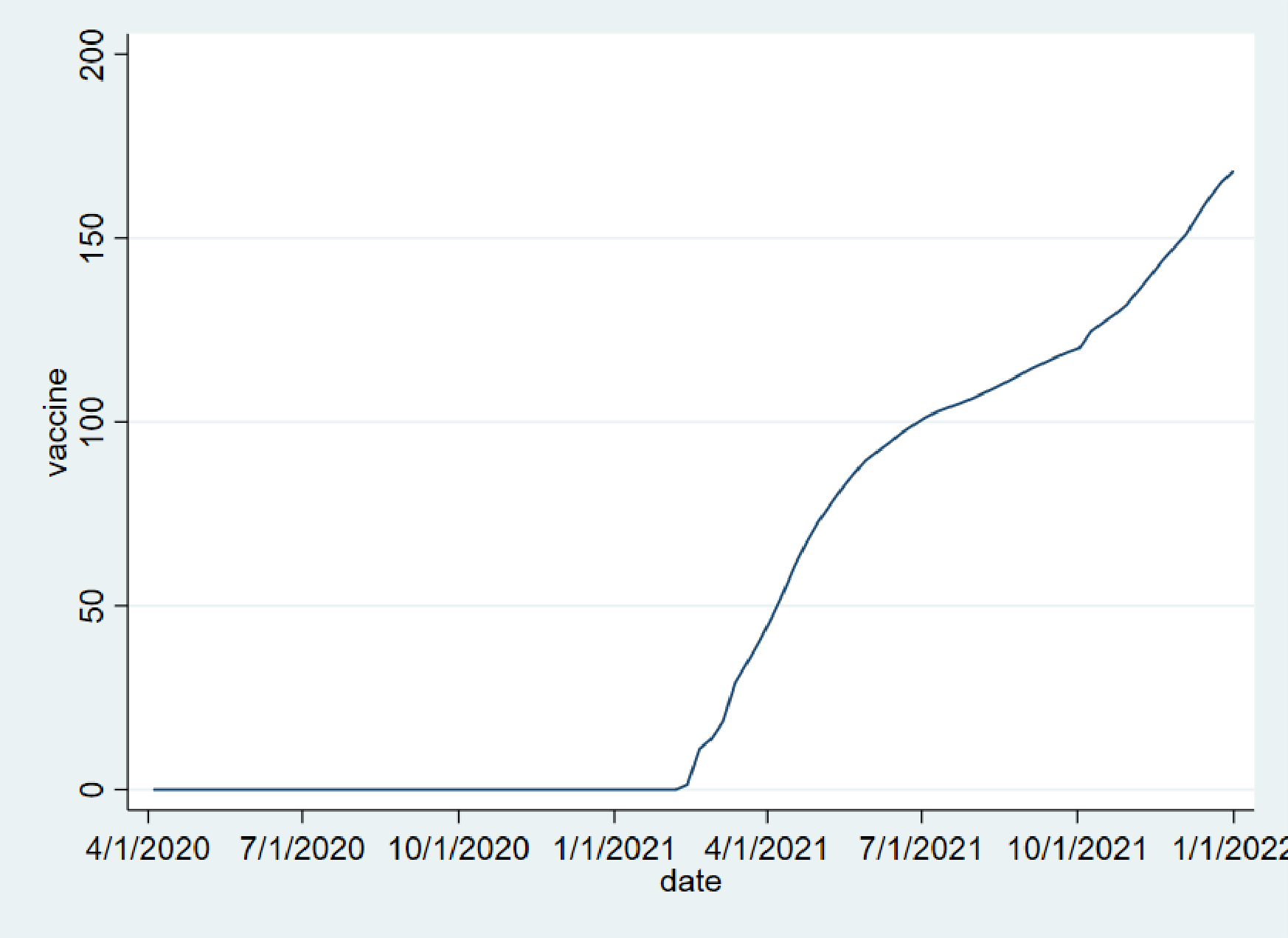} 
		\caption{Vaccine coverage from 4/20 to 1/22} 
		\label{fig:Vaccine} 
	\end{subfigure}
	\begin{subfigure}[b]{0.4\textwidth} 
		\includegraphics[width=\textwidth]{pictures/covid-19_app/PS} 
		\caption{Estimated PS for $q=1$} 
		\label{fig:PS_app} 
	\end{subfigure}
	\caption{Vaccine coverage and estimated propensity scores} 
\end{figure}

\subsubsection{Policy choices based on an increased number of variables}

\begin{figure}[H]
	\caption{\label{fig:Policy_k=00003D3} Estimated optimal policies based on\\ 
		$X_{T-1}^{P}=\left(\text{change in deaths}_{T-1},\text{restriction stringency}_{T-1},\text{vaccine coverage}_{T-1}\right)$}
	
	\centering
	
	\includegraphics[scale=0.65]{pictures/covid-19_app/3var}
	
	\noindent\begin{minipage}[t]{1\columnwidth}%
		{\footnotesize{}\raggedleft In the left panel, the $x$-axis is the
			change in deaths at week $T-1$, and the $y$-axis is the stringency of restrictions
			at week $T-1$; in the right panel, the $x$-axis is the stringency of restrictions
			at week $T-1$, and the $y$-axis is the vaccine coverage at week
			$T-1$.}%
	\end{minipage}
\end{figure}

\begin{figure}[H]
	\caption{\label{fig:Policy_k=00003D4} Optimal policy based on \\$X_{T-1}^{P}=(\text{change in deaths}_{T-1},\text{restriction stringency}_{T-1},\text{vaccine coverage}_{T-1},$
		$\text{change in cases}_{T-1})$}
	
	\centering
	
	\includegraphics[scale=0.65]{pictures/covid-19_app/4var}
	
	\noindent\begin{minipage}[t]{1\columnwidth}%
		{\footnotesize{}\raggedleft In the left panel, the $x$-axis is the
			change in deaths at week $T-1$, and the $y$-axis is the stringency of restrictions
			at week $T-1$; in the right panel, the $x$-axis is the vaccine
			coverage at week $T-1$, and the $y$-axis is the change in cases at week
			$T-1$.}%
	\end{minipage}
\end{figure}

\subsubsection{Algorithm details of the T-EWM decision tree} \label{app:emp_tree_algorithm}

The dataset used in the empirical application contains 92 observations, a relatively small size. If a minimum node size is not set, the decision tree may produce nodes with only one or two observations. In these instances, the policy recommendations derived from the tree, based on empirical welfare, become highly variable and can  sometimes be difficult to interpret.

Thus, we set the minimum size for each node to four. During the tree-growing process, if a split determined by the optimal policy variable results in any node having fewer than four observations, the algorithm will disregard this variable. It will then identify a sub-optimal variable from the remaining set of policy variables and execute the split. This process continues until a split is achieved where each resulting node contains at least four observations. If no policy variable can produce a split resulting in nodes with the minimum required size, the algorithm will cease splitting at this node and move to the next. Both trees presented in Figure \ref{T-EWM tree} of the main text and Figure \ref{T-EWM tree2} below are generated using this algorithm.

\subsubsection{Additional results from the T-EWM decision tree} \label{app:emp_tree_result}

In Remark \ref{rem:higher/inf Markov} at the end of Section \ref{discrete_bound}, we extend our theoretical framework to accommodate higher-order Markovian structures. In this subsection, we revisit the empirical application, setting the Markov order $q=2$ under the alternative Assumption \ref{ass:toy_example}{*}, which is introduced in Appendix \ref{app_finite_markov}.
           
For $q=2$, we need to re-estimate the propensity score function. Recall that for the case of $q=1$, the set of covariates of the propensity score is given by \eqref{eq:app_X}: 
\begin{align*}
	\ensuremath{}\ensuremath{X_{t}=} & (\text{cases}_{t},\text{deaths}_{t},\text{change in cases}_{t},\text{change in deaths}_{t},\nonumber \\
	& \text{restriction stringency}_{t},\text{vaccine coverage}_{t},\text{economic conditions}_{t}).
\end{align*}
Now, we shall use both $X_{t-1}$ and $X_{t-2}$ as the covariates for the propensity score. The histograms of the estimated propensity scores for the observed data are presented in Figure \ref{fig:q1_q2}: panel (a) for $q=1$ (replicated from Figure \ref{fig:PS_app}), and panel (b) for $q=2$.
\begin{figure}[h]
	\begin{minipage}{0.35\textwidth} 
		\includegraphics[width=75mm]{pictures/covid-19_app/PS}
		\centering
	{\footnotesize	(a) $q=1$ and $e_t=e_t(X_{t-1})$}
	\end{minipage}
	\hfil
	\begin{minipage}{0.35\textwidth}
		\includegraphics[width=75mm]{pictures/covid-19_app/PS2}
		\centering
	{\footnotesize	(b)  $q=2$ and $e_t=e_t(X_{t-1},X_{t-2})$}
		
	\end{minipage}
	\caption{\label{fig:q1_q2}The estimated PS for $q=1$ and $q=2$.}
\end{figure}
For Figure \ref{fig:q1_q2}b, we have observed that a non-trivial number of estimated propensity scores are close to 1. This is not necessarily evidence of a violation of the overlap assumption. Overfitting may be the main cause of this pattern, as an increased number of regressors can improve the fit of the predicted values to the binary dependent variables. (Currently, the sample size is 92, and the number of regressors is 14.)

The estimated T-EWM decision tree, as presented in panel (b) of Figure \ref{T-EWM tree2}, is obtained by censoring the estimated propensity scores at 0.025 and 0.975. The set of policy variables is represented by a 14-dimensional vector, i.e.,  $X_{T-1}^{P}=(X_{T-1},X_{T-2})$, where $X_{t}$ is given by \eqref{eq:app_X}.
\begin{figure}[h]
	\begin{minipage}{0.35\textwidth} 
		\includegraphics[width=75mm]{pictures/covid-19_app/tree_q1_p7}
		\centering
		{\footnotesize(a) $q=1$ and $X_{T-1}^{P}=X_{T-1}$}
	\end{minipage}
	\hfil
	\begin{minipage}{0.35\textwidth}
		\includegraphics[width=80mm]{pictures/covid-19_app/tree_q2_p7}
		\centering
	{\footnotesize(b)   $q=2$ and $X_{T-1}^{P}=(X_{T-1},X_{T-2})$}
		
	\end{minipage}
	\caption{\label{T-EWM tree2}T-EWM decision trees for $q=1$ and $q=2$.}
\end{figure}
Figure \ref{T-EWM tree} in the main text is reproduced here in panel (a). We observe that with a higher Markov order and an enlarged set of policy variables, the treatment region recommended by the T-EWM tree has changed. Both trees have chosen the economic condition as the variable for the first split, but the thresholds selected are slightly different. In the left branch of the second level of panel (b), the algorithm stops splitting the node since it cannot find a variable that can produce two leaves containing at least four observations each. Nevertheless, both trees provide policy recommendations that are reasonable to interpret.
\newpage

\bibliographystyle{abbrvnat}
\bibliography{biball}